\definecolor{shadecolor}{rgb}{0.95, 0.95, 0.86}
\renewcommand{\d}{{\mathrm d}}
\newcommand{\im}{\mathrm{i}}
\newcommand{\e}{\mathrm{e}}
\def\tr{\mathop{\mathrm{tr}}\limits}
\numberwithin{equation}{section}
\newtheorem{theo}{Theorem}[section]
\newtheorem{lem}[theo]{Lemma}
\newtheorem{rem}[theo]{Remark}
\newtheorem{problem}[theo]{Riemann-Hilbert Problem}
\newtheorem{prop}[theo]{Proposition} 
\newtheorem{cor}[theo]{Corollary} 
\newtheorem{definition}[theo]{Definition}
\newtheorem{conv}[theo]{Convention}
\begin{document}

\title[Ginibre meets Zakharov-Shabat]{The largest real eigenvalue in the real Ginibre ensemble and its relation to the Zakharov-Shabat system}

\author{Jinho Baik}
\address{Department of Mathematics, University of Michigan, 2074 East Hall, 530 Church Street, Ann Arbor, MI 48109-1043, United States}
\email{baik@umich.edu}

\author{Thomas Bothner}
\address{Department of Mathematics, University of Michigan, 2074 East Hall, 530 Church Street, Ann Arbor, MI 48109-1043, United States}
\email{bothner@umich.edu}

\keywords{Real Ginibre ensemble, extreme value statistics, Riemann-Hilbert problem, Zakharov-Shabat system, inverse scattering theory, Deift-Zhou nonlinear steepest descent method.}

\subjclass[2010]{Primary 60B20; Secondary 45M05, 60G70.}

\thanks{T.B. acknowledges support of the AMS and the Simons Foundation through a travel grant and is grateful to B. Rider, P. Deift,  and P. Miller for stimulating discussions about this project. J.B. is supported by the NSF grant DMS-1664692. Both authors initiated this work during the 2017 PCMI summer session on random matrices, funded in part by the NSF grant DMS-1441467. The authors would also like to thank P. Forrester for bringing \cite{F2} to their attention.}

\begin{abstract}
The real Ginibre ensemble consists of $n\times n$ real matrices ${\bf X}$ whose entries are i.i.d. standard normal random variables. In sharp contrast to the complex and quaternion Ginibre ensemble, real eigenvalues in the real Ginibre ensemble attain positive likelihood. In turn, the spectral radius $R_n=\max_{1\leq j\leq n}|z_j({\bf X})|$ of the eigenvalues $z_j({\bf X})\in\mathbb{C}$ of a real Ginibre matrix ${\bf X}$ follows a different limiting law (as $n\rightarrow\infty$) for $z_j({\bf X})\in\mathbb{R}$ than for $z_j({\bf X})\in\mathbb{C}\setminus\mathbb{R}$. Building on previous work by Rider, Sinclair \cite{RS} and Poplavskyi, Tribe, Zaboronski \cite{PTZ}, we show that the limiting distribution of $\max_{j:z_j\in\mathbb{R}}z_j({\bf X})$ admits a closed form expression in terms of a distinguished solution to an inverse scattering problem for the Zakharov-Shabat system.
As byproducts of our analysis we also obtain a new determinantal representation for the limiting distribution of $\max_{j:z_j\in\mathbb{R}}z_j({\bf X})$ and extend recent tail estimates in \cite{PTZ} via nonlinear steepest descent techniques.
\end{abstract}

\date{\today}
\maketitle
\section{Introduction and statement of results}\label{sec:11} This paper is foremost concerned with the derivation of an integrable system for the limiting distribution function
\begin{equation*}
	\lim_{n\rightarrow\infty}\mathbb{P}\left(\max_{j:z_j\in\mathbb{R}}z_j({\bf X})\leq\sqrt{n}+t\right),\ \ \ \ t\in\mathbb{R},
\end{equation*}
of the largest real eigenvalue of a random matrix ${\bf X}\in\mathbb{R}^{n\times n}$ chosen from the real Ginibre ensemble.
\begin{definition}[Ginibre \cite{G}, 1965] A random matrix ${\bf X}\in\mathbb{R}^{n\times n}$ is said to belong to the real Ginibre ensemble \textnormal{(GinOE)} if its entries are independently chosen with pdf's
\begin{equation*}
	\frac{1}{\sqrt{2\pi}}\e^{-\frac{1}{2}x_{jk}^2},\ \ \ \ 1\leq j,k\leq n.
\end{equation*}
Equivalently, the joint pdf of all the independent entries equals
\begin{equation*}
	f({\bf X})=\prod_{1\leq j,k\leq n}\frac{1}{\sqrt{2\pi}}\e^{-\frac{1}{2}x_{jk}^2}=(2\pi)^{-\frac{1}{2}n^2}\e^{-\frac{1}{2}\sum_{j,k=1}^nx_{jk}^2}=(2\pi)^{-\frac{1}{2}n^2}\e^{-\frac{1}{2}\textnormal{tr}({\bf X}{\bf X}^{\intercal})}.
\end{equation*}
\end{definition}
\noindent The GinOE displays certain similarities to the classical Gaussian Orthogonal Ensemble (GOE) but the presence of, both, real and complex eigenvalues introduces also new phenomena. For instance, on a global scale, Wigner's semicircle law in the GOE is replaced by the following circular law \cite{E0}: let
\begin{equation*}
	\mu_{\bf X}(s,t)=\frac{1}{n}\#\big\{1\leq j\leq n:\ \Re z_j({\bf X})\leq s,\ \Im z_j({\bf X})\leq t\big\},\ \ \ s,t\in\mathbb{R}
\end{equation*}
denote the empirical spectral distribution of the eigenvalues $\{z_j({\bf X})\}_{j=1}^n$ of a matrix ${\bf X}\in\textnormal{GinOE}$, then the random measure $\mu_{\bf X}/\sqrt{n}$  converges almost surely (as $n\rightarrow\infty$) to the uniform distribution on the unit disk, see Figure \ref{fig1} below. 
\begin{rem} The circular law is a universal limiting law: it holds true for any $n\times n$ random matrix ${\bf X}$ whose entries are i.i.d. complex random variables with mean zero and variance one, see \cite{TV} and references therein to the long and rich history of the circular law.
\end{rem}
\begin{figure}[tbh]
\includegraphics[width=0.24\textwidth]{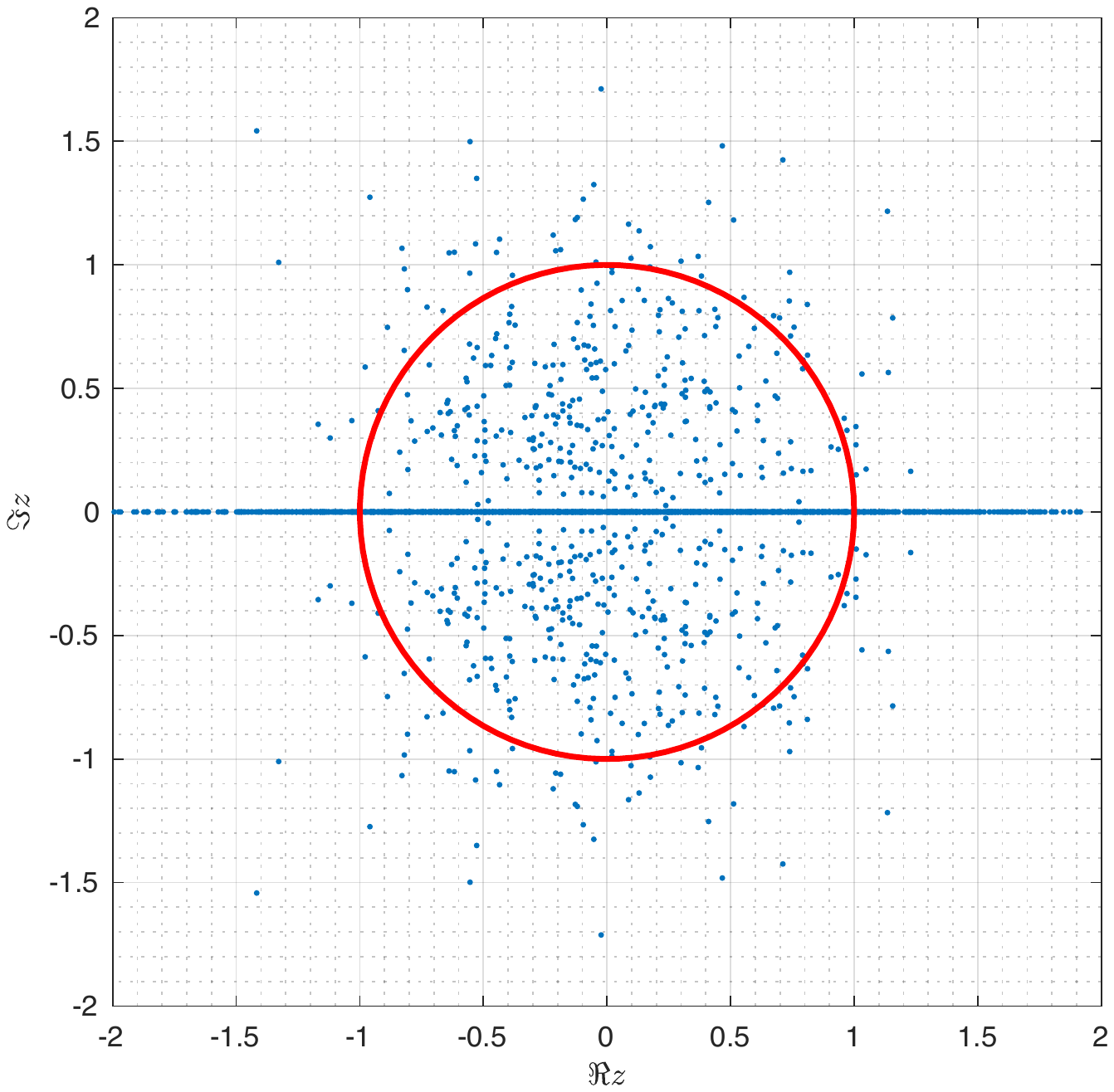}\hspace{0.18cm}
\includegraphics[width=0.24\textwidth]{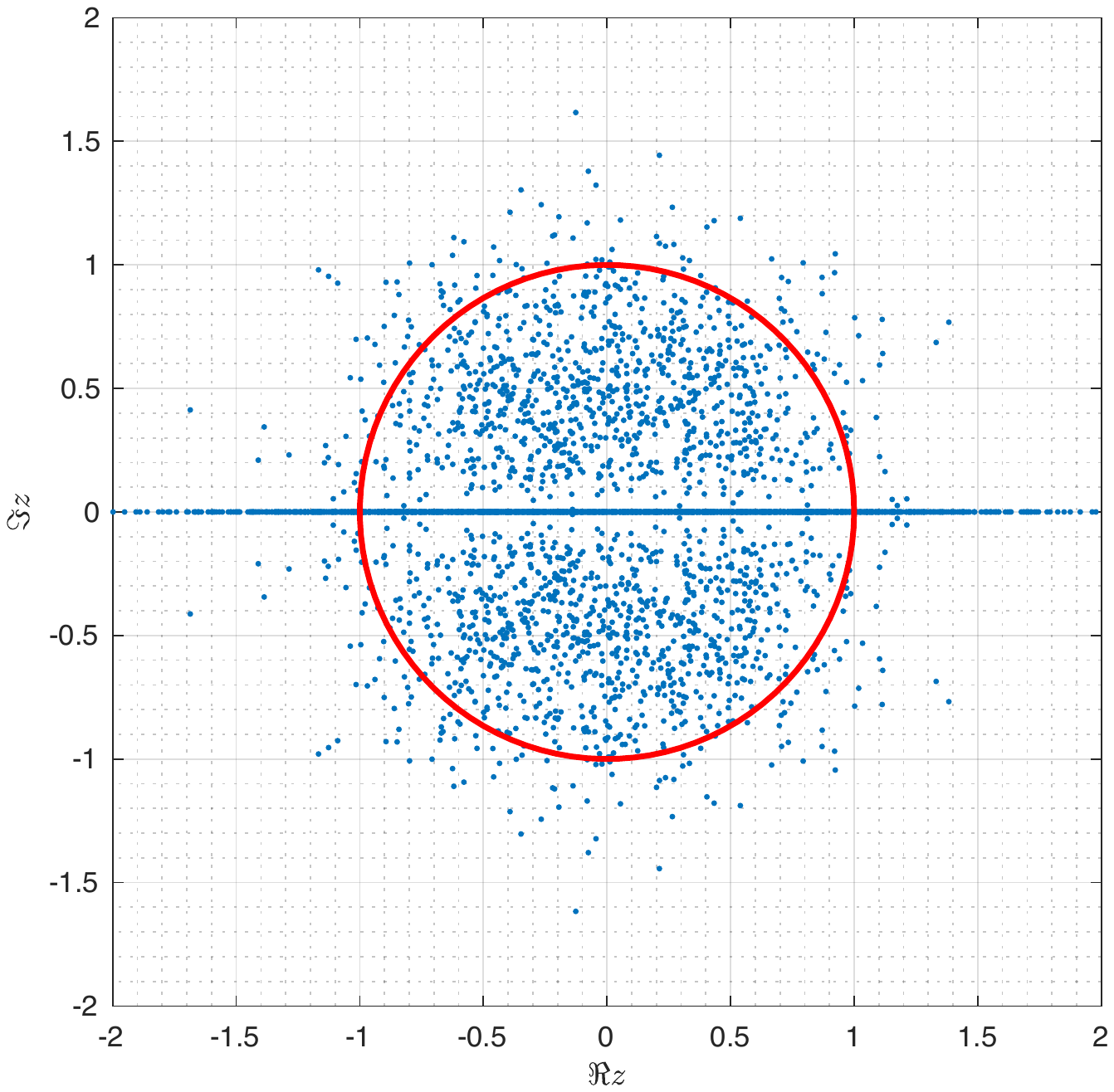}\hspace{0.18cm}\includegraphics[width=0.24\textwidth]{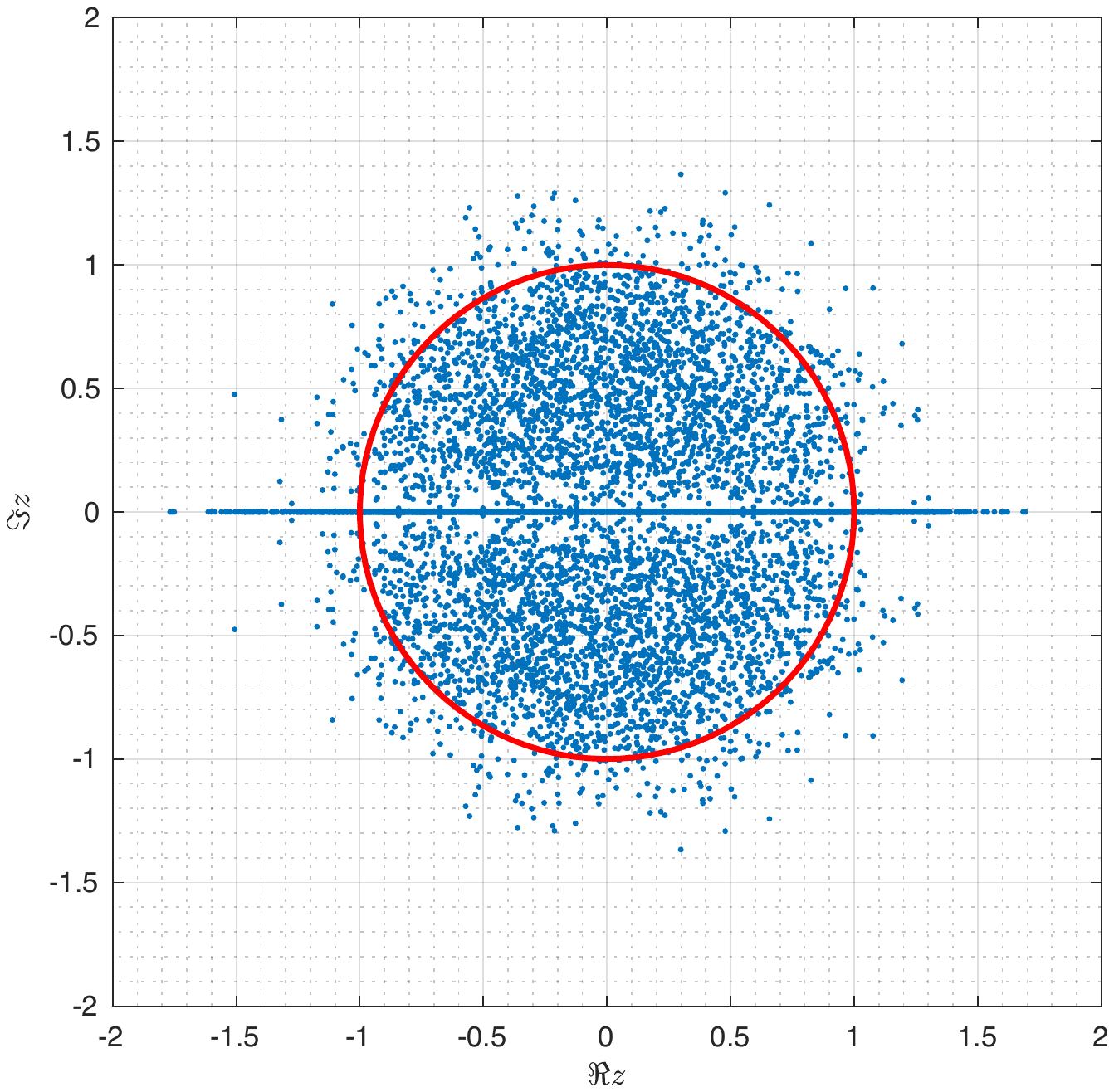}\hspace{0.18cm}\includegraphics[width=0.24\textwidth]{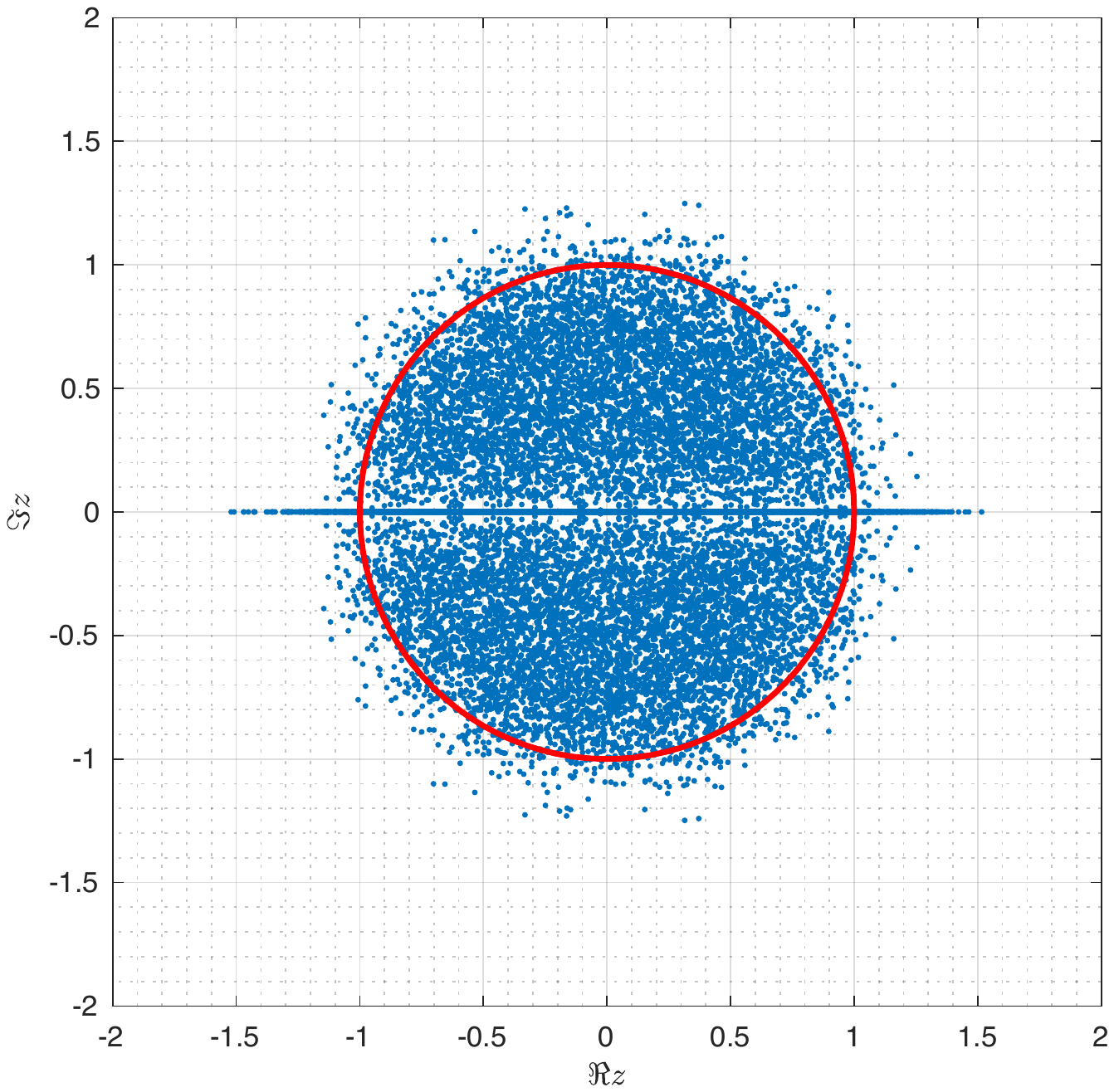}
\caption{The circular law for $1000$ real (rescaled) Ginibre matrices of varying dimensions $n\times n$ in comparison with the unit circle boundary. We plot $n=2,4,8,16$ from left to right. A saturn effect is clearly visible on the real line.}
\label{fig1}
\end{figure}
On a local scale, Figure \ref{fig1} indicates that fluctuations of the spectral radius $R_n=\max_{1\leq j\leq n}|z_j({\bf X})|$ around $\sqrt{n}$ behave differently depending on whether $z_j\in\mathbb{R}$ or $z_j\in\mathbb{C}\setminus\mathbb{R}$. And indeed, the above-mentioned saturn effect was quantified recently and the following central limit theorem derived.
\begin{theo}[Rider, Sinclair \cite{RS}, 2014; Poplavskyi, Tribe, Zaboronski \cite{PTZ}, 2017] Let $\{z_j({\bf X})\}_{j=1}^n$ denote the eigenvalues of a $n\times n$ random matrix ${\bf X}\in\textnormal{GinOE}$. Then,
\begin{equation*}
	\lim_{n\rightarrow\infty}\mathbb{P}\left(\max_{j:z_j\in\mathbb{C}\setminus\mathbb{R}}|z_j({\bf X})|\leq\sqrt{n}+\sqrt{\frac{\gamma_n}{4}}+\frac{t}{\sqrt{4\gamma_n}}\right)=\e^{-\frac{1}{2}\e^{-t}},\ \ \ \ t\in\mathbb{R}
\end{equation*}
with $\gamma_n=\ln(n/(2\pi(\ln n)^2))$. In addition,
\begin{equation}\label{e:1}
	\lim_{n\rightarrow\infty}\mathbb{P}\left(\max_{j:z_j\in\mathbb{R}}z_j({\bf X})\leq\sqrt{n}+t\right)=\sqrt{\det(1-T\chi_t\upharpoonright_{L^2(\mathbb{R})})\Gamma_t},\ \ \ \ \ t\in\mathbb{R},
\end{equation}
where $\chi_t$ is the operator of multiplication by $\chi_{(t,+\infty)}(x)$, the characteristic function of $(t,+\infty)\subset\mathbb{R}$, and $T:L^2(\mathbb{R})\rightarrow L^2(\mathbb{R})$ the trace-class integral operator with kernel
\begin{equation}\label{e:2}
	T(x,y)=\frac{1}{\pi}\int_0^{\infty}\e^{-(x+u)^2}\e^{-(y+u)^2}\,\d u.
\end{equation}
Moreover,
\begin{equation}\label{e:3}
	\Gamma_t=1-\int_t^{\infty}G(x)\big((1-T\chi_t\upharpoonright_{L^2(\mathbb{R})})^{-1}g\big)(x)\,\d x
\end{equation}
with $g(x)=\frac{1}{\sqrt{\pi}}\e^{-x^2}$ and $G(x)=\int_{-\infty}^xg(y)\,\d y$.
\end{theo}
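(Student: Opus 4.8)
The plan is to transport the finite-$n$ Pfaffian structure of the real spectrum of a GinOE matrix to the spectral edge $\sqrt n$ and then to collapse the resulting Fredholm Pfaffian onto the scalar object on the right-hand side of \eqref{e:1}. Recall from the work of Lehmann--Sommers, Edelman, Forrester--Nagao, Borodin--Sinclair and Sinclair that the real eigenvalues of ${\bf X}\in\textnormal{GinOE}$ form a Pfaffian point process: there is an explicit $2\times2$ matrix-valued correlation kernel
\begin{equation*}
	\mathcal{K}_n(x,y)=\begin{pmatrix} D_n(x,y) & S_n(x,y)\\[0.3em] -S_n(y,x) & I_n(x,y)\end{pmatrix},\qquad x,y\in\mathbb{R},
\end{equation*}
whose entries are assembled from skew-orthogonal Hermite polynomials, the weight $\e^{-x^2/2}$, incomplete Gamma functions and a sign function; here $D_n$ is antisymmetric, $I_n(x,y)+\tfrac12\textnormal{sgn}(x-y)$ is antisymmetric, $D_n$ is a derivative and $I_n$ an antiderivative of $S_n$, and $S_n$ additionally carries a single genuine rank-one term of the shape $c_n\,x^{n-1}\e^{-x^2/2}$ times a slowly varying function of $y$. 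Since $\{\max_{j:z_j\in\mathbb{R}}z_j\le s\}$ is exactly the event that $(s,+\infty)$ carries no real eigenvalue, its probability is the Fredholm Pfaffian $\textnormal{Pf}\bigl(\mathbb{J}-\mathcal{K}_n\chi_{(s,+\infty)}\bigr)$ with $\mathbb{J}$ the symplectic unit acting blockwise, and the square of this Pfaffian is a $2\times2$-block Fredholm determinant built from $S_n,D_n,I_n$.

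The second step is the edge scaling limit: put $s=\sqrt n+t$, substitute $x=\sqrt n+\xi$, $y=\sqrt n+\eta$, and compute the entrywise $n\to\infty$ limits of $\mathcal{K}_n$. Two analytic inputs do the work. First, a local central limit theorem --- equivalently, uniform asymptotics of the normalized incomplete Gamma function $\Gamma(n,z)/\Gamma(n)$ across its transition point $z=n$ --- produces a complementary error function; a short computation shows
\begin{equation*}
	\frac1\pi\int_0^\infty\e^{-(\xi+u)^2}\e^{-(\eta+u)^2}\,\d u=\frac1{2\sqrt{2\pi}}\,\e^{-\frac12(\xi-\eta)^2}\,\textnormal{erfc}\!\left(\tfrac{\xi+\eta}{\sqrt2}\right),
\end{equation*}
so the kernel $T$ of \eqref{e:2} is precisely the edge limit of the ``bulk'' part of $S_n$. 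Second, the elementary estimate $x^{n-1}\e^{-x^2/2}\sim n^{(n-1)/2}\e^{-n/2}\,\e^{-\xi^2}$ at $x=\sqrt n+\xi$ shows that the rank-one term of $S_n$ limits to the kernel $(\xi,\eta)\mapsto g(\xi)G(\eta)$ with $g$ and $G$ as in the theorem; hence $S_n\to T+g\otimes G$. Gaussian tail bounds upgrade all of these pointwise limits to trace-norm convergence of $\mathcal{K}_n\chi_{(s,+\infty)}$ on $L^2(\mathbb{R})$, which is what permits passing the Fredholm Pfaffian, and its square, to the limit.

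The third step is the collapse of the limiting $2\times2$-block Fredholm determinant to a scalar one. By Tracy--Widom-type manipulations, the interrelations among the three blocks --- $D_n$ being a derivative and $I_n$ an antiderivative of $S_n$, modulo the antisymmetric sign term --- force the block determinant, in the edge limit, to reduce to the single scalar Fredholm determinant $\det\bigl(1-(T+g\otimes G)\chi_t\upharpoonright_{L^2(\mathbb{R})}\bigr)$. Taking the positive square root gives $\lim_n\mathbb{P}(\max_{j:z_j\in\mathbb{R}}z_j\le\sqrt n+t)=\sqrt{\det(1-(T+g\otimes G)\chi_t)}$, and the Sherman--Morrison determinant identity applied to the rank-one perturbation yields $\det\bigl(1-(T+g\otimes G)\chi_t\bigr)=\det(1-T\chi_t)\,\Gamma_t$ with $\Gamma_t$ as in \eqref{e:3}; this is \eqref{e:1}. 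The companion statement for $\max_{j:z_j\in\mathbb{C}\setminus\mathbb{R}}|z_j|$ is of a different flavor and follows Rider--Sinclair \cite{RS}: the moduli of the non-real eigenvalues again form a Pfaffian process, but near the edge it is asymptotically determinantal with correlations on the scale $1/\sqrt{\gamma_n}$, so the extreme-value analysis reduces to that of roughly $n$ weakly dependent fluctuations and produces the Gumbel-type law $\e^{-\frac12\e^{-t}}$ after the indicated centering; this part is independent of the Zakharov--Shabat connection.

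The main obstacle is the second step. One needs asymptotics of the incomplete-Gamma and Hermite data that are uniform as $\xi,\eta\to+\infty$ and equipped with error bounds sharp enough to give trace-norm --- not merely pointwise --- convergence of $\mathcal{K}_n\chi_{(s,+\infty)}$; otherwise the Fredholm Pfaffian need not converge to the Fredholm Pfaffian of the limiting kernel. Equally delicate is the exact bookkeeping of the rank-one correction in $S_n$ (and of the corresponding term inside $I_n$): it is this term, rather than the bulk kernel $T$, that is responsible for the nontrivial factor $\Gamma_t$, so an inaccurate limit here would corrupt $\Gamma_t$ and hence the whole answer. Both points, together with the Tracy--Widom-type reduction, are carried out rigorously in \cite{RS} and \cite{PTZ}.
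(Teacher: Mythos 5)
The paper does not prove this theorem: it is quoted from \cite{RS} and \cite{PTZ}, and the paper's own contributions (Theorems \ref{main1}, \ref{main3}, Corollary \ref{main2}) all take \eqref{e:1}--\eqref{e:3} as given inputs. There is therefore no internal proof in the paper against which to compare your argument; the comparison must be made directly against the cited sources. Your sketch is a plausible reconstruction of the Pfaffian point-process route used there: the real spectrum of a GinOE matrix is a Pfaffian point process, the gap probability over $(\sqrt n+t,\infty)$ is a Fredholm Pfaffian, and the edge scaling of the $2\times 2$ matrix kernel, combined with Tracy--Widom-type block manipulations, collapses the squared Pfaffian to a scalar Fredholm determinant which the rank-one/Sherman--Morrison identity factors as $\det(1-T\chi_t)\Gamma_t$; your computation confirming that the kernel in \eqref{e:2} equals $\frac{1}{2\sqrt{2\pi}}\,\e^{-\frac12(\xi-\eta)^2}\textnormal{erfc}\bigl((\xi+\eta)/\sqrt2\bigr)$ is correct. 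The one place where you should verify the bookkeeping carefully against \cite{RS} and \cite{PTZ} is the attribution of the $g\otimes G$ piece to a rank-one $c_n\,x^{n-1}\e^{-x^2/2}$ summand of $S_n$: in the GOE analogue of Ferrari--Spohn \cite{FS} and \cite[Section $9.7$]{F}, whose converse factorization the paper reproduces in Section \ref{Spohn}, the $\Gamma_t$ factor emerges from the $S/D/I$-block reduction and the $\tfrac12\,\textnormal{sgn}$ term rather than from a rank-one correction sitting inside the scalar kernel itself, and an imprecise account of where $g\otimes G$ enters would corrupt the exact form of $\Gamma_t$ in \eqref{e:3}.
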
 
When compared to the GOE, \eqref{e:1} plays the analogue of the celebrated Tracy-Widom edge law for the largest eigenvalue $\lambda_{\max}$, cf. \cite{TW2}. Indeed we recall that in the GOE, as $n\rightarrow\infty$,
\begin{equation*}
	\lambda_{\max}\Rightarrow \sqrt{2n}+\frac{1}{\sqrt{2}\,n^{\frac{1}{6}}}F_1,
\end{equation*}
with the cdf
\begin{equation}\label{e:4}
	F_1(t)=\sqrt{\det(1-K\chi_t\upharpoonright_{L^2(\mathbb{R})})\widehat{\Gamma}_t},\ \ \ \ \ t\in\mathbb{R},
\end{equation}
where $K:L^2(\mathbb{R})\rightarrow L^2(\mathbb{R})$ is the integral operator with Airy kernel $K(x,y)=\int_0^{\infty}\textnormal{Ai}(x+u)\textnormal{Ai}(y+u)\,\d u$ in terms of the Airy function $\textnormal{Ai}(z)$, \cite{NIST}. Moreover, see e.g. \cite[Section $9.7$]{F},
\begin{equation}\label{e:5}
	\widehat{\Gamma}_t=1-\int_t^{\infty}A(x)\big((1-K\chi_t\upharpoonright_{L^2(\mathbb{R})})^{-1}\textnormal{Ai}\big)(x)\,\d x;\ \ \ \ \ \ \ A(x)=\int_{-\infty}^x\textnormal{Ai}(y)\,\d y.
\end{equation}
\subsection{An integrable system for \eqref{e:1}}
The formal similarities between \eqref{e:4}, \eqref{e:5} and \eqref{e:1}, \eqref{e:3} are quite obvious, still while the operator $K$ is of integrable type in the sense of \cite{IIKS}, i.e. has a kernel of the form
\begin{equation}\label{TWkernel}
	K(x,y)=\frac{{\bf f}^{\intercal}(x){\bf g}(y)}{x-y},\ \ \ \ {\bf f}(z)=\begin{bmatrix}\textnormal{Ai}(z)\\-\textnormal{Ai}'(z)\end{bmatrix},\ \ \ {\bf g}(z)=\begin{bmatrix}\textnormal{Ai}'(z)\\ \textnormal{Ai}(z)\end{bmatrix},\end{equation}
this is not true for $T$ with kernel \eqref{e:2}, see explicitly \cite[Section 4]{RS}. For this reason neither the standard Tracy-Widom method \cite{TW3} used in the derivation of an integrable system (a.k.a. a closed form expression) for the limiting distribution function \eqref{e:1} nor the Riemann-Hilbert problem based techniques of Borodin and Deift \cite{BD} are directly applicable. However, as we will show below, the situation with \eqref{e:2} is not too bad, since the operator $T\chi_t$ is of integrable type up to Fourier conjugation, see Proposition \ref{cool} below. This observation combined with certain additional manipulations for the Fredholm determinant and the factor $\Gamma_t$ in \eqref{e:1}, see Sections \ref{prelim}, \ref{contin} and \ref{contin2} below, yields an explicit integrable system for $F(t)$ and a subsequent closed form, Tracy-Widom like, formula. In fact we shall state the sought after closed form expression for the following generalization of \eqref{e:1} that contains a generating function parameter $\gamma\in[0,1]$,
\begin{equation}\label{gdef}
	F(t;\gamma):=\sqrt{\det(1-\gamma T\chi_t\upharpoonright_{L^2(\mathbb{R})})\Gamma_{t\gamma}},\ \  \textnormal{with}\ \ 
	\Gamma_{t\gamma}:=1-\gamma\int_t^{\infty}G(x)\big((1-\gamma T\chi_t\upharpoonright_{L^2(\mathbb{R})})^{-1}g\big)(x)\,\d x.
\end{equation}
The main result of our paper, see Theorem \ref{main1} below, is a closed form expression for $F(t;\gamma)$ in terms of a distinguished solution to an inverse scattering problem for the Zakharov-Shabat (ZS) system \cite{ZS,AC}. As it is standard in scattering theory, we shall formulate this inverse problem as a Riemann-Hilbert problem (RHP):
\begin{problem}\label{master0} For any $(x,\gamma)\in\mathbb{R}\times[0,1]$, determine ${\bf X}(z)={\bf X}(z;x,\gamma)\in\mathbb{C}^{2\times 2}$ such that
\begin{enumerate}
	\item[(1)] ${\bf X}(z)$ is analytic for $z\in\mathbb{C}\setminus\mathbb{R}$ and has a continuous extension on the closed upper and lower half-planes.
	\item[(2)] The limiting values ${\bf X}_{\pm}(z)=\lim_{\epsilon\downarrow 0}{\bf X}(z\pm\im\epsilon),z\in\mathbb{R}$ satisfy the jump condition
	\begin{equation}\label{djump}
		{\bf X}_+(z)={\bf X}_-(z)\begin{bmatrix}1-|r(z)|^2 & -\bar{r}(z)\e^{-2\im xz}\smallskip\\ r(z)\e^{2\im xz} & 1\end{bmatrix},\ \ z\in\mathbb{R}\ \ \ \ \ \textnormal{with}\ \ \ r(z)=r(z;\gamma)=-\im\sqrt{\gamma}\,\e^{-\frac{1}{4}z^2}.
	\end{equation}
	\item[(3)] As $z\rightarrow\infty$, we require the normalization
	\begin{equation*}
		{\bf X}(z)=\mathbb{I}+{\bf X}_1z^{-1}+{\bf X}_2z^{-2}+\mathcal{O}\big(z^{-3}\big);\ \ \ \ \ {\bf X}_i={\bf X}_i(x,\gamma)=\big[X_i^{jk}(x,\gamma)\big]_{j,k=1}^2.
	\end{equation*}
\end{enumerate}
\end{problem}
Note that $r(z;\gamma)\in\mathcal{S}(\mathbb{R})$, the Schwartz space on the line, but $\|r\|_{\infty}=\sup_{z\in\mathbb{R}}|r(z)|<1$ only for $\gamma\in[0,1)$. 
Hence $r(z;\gamma)$, the so-called reflection coefficient, does not belong to the standard Beals-Coifman class of reflection coefficients, cf. \cite{BC0,BDT}, in the case \eqref{e:1} most relevant to the GinOE. For this reason we will prove unique solvability of RHP \ref{master0} for all $(x,\gamma)\in\mathbb{R}\times[0,1]$ and thus also existence of the coefficients ${\bf X}_i(x,\gamma)$ directly in the sections below. We now present our main result.
\begin{theo}\label{main1}
For any $(t,\gamma)\in\mathbb{R}\times[0,1]$,
\begin{equation}\label{e:6}
	\big(F(t;\gamma)\big)^2=\exp\bigg[-\frac{1}{4}\int_t^{\infty}(x-t)\left|y\left(\frac{x}{2};\gamma\right)\right|^2\d x\bigg]\bigg\{\cosh\mu(t;\gamma)-\sqrt{\gamma}\sinh\mu(t;\gamma)\bigg\},
\end{equation}
using the abbreviation
\begin{equation*}
	\mu(t;\gamma):=-\frac{\im}{2}\int_t^{\infty}y\left(\frac{x}{2};\gamma\right)\,\d x,
\end{equation*}
and where $y=y(x;\gamma):\mathbb{R}\times[0,1]\rightarrow\im\mathbb{R}$ equals $y(x;\gamma):=2\im X_1^{12}(x,\gamma)$ in terms of the matrix coefficient ${\bf X}_1(x,\gamma)$ in condition (3) of RHP \ref{master0} above.
\end{theo}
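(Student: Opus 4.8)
The plan is to first fold the Fredholm determinant and the scalar factor $\Gamma_{t\gamma}$ of \eqref{gdef} into a \emph{single} Fredholm determinant, and then to run the integrable-operator/Riemann--Hilbert machinery announced above on that object. Using the elementary identity $\det(1-A)\big(1-\langle a,(1-A)^{-1}b\rangle\big)=\det\big(1-A-b\otimes a\big)$, valid for trace-class $A$ and $a,b\in L^2(\mathbb{R})$ with $b\otimes a$ the rank-one kernel $(x,y)\mapsto b(x)a(y)$, applied with $A=\gamma T\chi_t$, $b=\gamma g$ and $a=\chi_t G$, one gets
\begin{equation*}
	\big(F(t;\gamma)\big)^2=\det\Big(1-\gamma\,\chi_t\big(T+g\otimes G\big)\chi_t\upharpoonright_{L^2(\mathbb{R})}\Big),
\end{equation*}
i.e. $\big(F(t;\gamma)\big)^2$ is itself a Fredholm determinant on $L^2((t,\infty))$. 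This already yields the new determinantal representation mentioned in the abstract, and it is the quantity on which the remaining steps operate.

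Next I would invoke Proposition \ref{cool} and its proof: under a suitable Fourier transform $\mathcal{F}$ the Gaussian kernel \eqref{e:2} is mapped to an integrable kernel of IIKS type with scattering data built from $\e^{-\frac14z^2}$; the rank-one term $g\otimes G$ is mapped to a kernel built from $\e^{-\frac14z^2}$ and $z^{-1}$ (the transform of $G$ carries a simple pole at the origin, reflecting $G(+\infty)=1$); and, since $\chi_t$ is a multiplication operator, $\mathcal{F}\chi_t\mathcal{F}^{-1}$ is a Cauchy-type operator whose entire $t$-dependence is conjugation by an exponential phase in $z$. Consequently the $\mathcal{F}$-conjugate of the operator above is an integrable operator all of whose $t$-dependence sits in oscillatory factors, and the theory of \cite{IIKS} identifies its resolvent --- hence the determinant and all its $t$-derivatives --- with the solution of a $2\times2$ Riemann--Hilbert problem. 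Matching jump and normalisation, after the rescaling that normalises the phases to $\e^{\pm2\im xz}$ (which is responsible for the argument $\tfrac{x}{2}$ in $y(\tfrac{x}{2};\gamma)$) and after absorbing the origin-pole coming from $\mathcal{F}G$ into a constant left factor, this Riemann--Hilbert problem is exactly RHP \ref{master0} with $x=\tfrac{t}{2}$. Since $\|r(\cdot;\gamma)\|_\infty=\sqrt{\gamma}$ is only $<1$ for $\gamma<1$, the solvability of RHP \ref{master0}, and hence existence and smoothness of the coefficients ${\bf X}_i(x,\gamma)$, has to be established directly and uniformly in $\gamma\in[0,1]$; I would invoke the solvability statement preceding the theorem for this.

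With the Riemann--Hilbert representation in hand I would differentiate $\log\big(F(t;\gamma)\big)^2$ in $t$. By the standard Its--Izergin--Korepin--Slavnov computation the first logarithmic derivative is a bilinear expression in the boundary data of ${\bf X}$; differentiating once more and using the Zakharov--Shabat (Lax) equations satisfied by ${\bf X}$ in $x$ collapses it into two pieces. The first is a genuine second derivative equal to $-\tfrac14\big|y(\tfrac{t}{2};\gamma)\big|^2$ with $y=2\im X_1^{12}$; here a Schwarz-type symmetry of RHP \ref{master0}, forced by the pairing of $r$ with $\bar r$ in \eqref{djump}, shows $X_1^{12}(x,\gamma)\in\im\mathbb{R}$, so $y$ is purely imaginary and $|y|^2=-y^2$. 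Integrating this twice against the boundary conditions $\log\big(F(t;\gamma)\big)^2\to0$ and $\partial_t\log\big(F(t;\gamma)\big)^2\to0$ as $t\to+\infty$ (both hold because ${\bf X}\to\mathbb{I}$ there) produces the factor $\exp\big[-\tfrac14\int_t^\infty(x-t)|y(\tfrac{x}{2};\gamma)|^2\,\d x\big]$. The second piece is a perfect $t$-derivative; resolving it amounts to solving a $2\times2$ linear system in $t$ for $\Gamma_{t\gamma}$ and an auxiliary scalar companion, driven by $\mu'(t;\gamma)=\tfrac{\im}{2}y(\tfrac{t}{2};\gamma)$, whose fundamental matrix is $\cosh\mu(t;\gamma)\,\mathbb{I}-\sinh\mu(t;\gamma)\big(\begin{smallmatrix}0&1\\1&0\end{smallmatrix}\big)$, with $t\to+\infty$ data $(1,\sqrt{\gamma})$ --- the $\sqrt{\gamma}$ inherited from the normalisation $r(z;\gamma)=-\im\sqrt{\gamma}\,\e^{-\frac14z^2}$ --- yielding precisely the factor $\{\cosh\mu(t;\gamma)-\sqrt{\gamma}\sinh\mu(t;\gamma)\}$ of \eqref{e:6}.

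The hard part is concentrated in two places. First, the reflection coefficient saturates the Beals--Coifman bound at $\gamma=1$, exactly the case relevant to the GinOE, so the usual small-norm/Neumann-series arguments underlying both the solvability of RHP \ref{master0} and the validity of the IIKS correspondence break down there and must be replaced by a direct Fredholm-theoretic/vanishing-lemma argument uniform on the closed interval $\gamma\in[0,1]$. Second, and this is the obstacle specific to Theorem \ref{main1} rather than to the set-up, is the precise treatment of the factor $\Gamma_{t\gamma}$: one must follow the simple pole at $z=0$ produced by $\mathcal{F}G$ through the entire reduction, check that it does not interfere with the normalisation in condition (3) of RHP \ref{master0}, and identify its residue contribution with the $\sqrt{\gamma}$ in $\{\cosh\mu-\sqrt{\gamma}\sinh\mu\}$; pinning down the sign and the argument of $\mu$ is the main bookkeeping challenge. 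Everything else --- the determinantal reduction, the IIKS differentiation identities, and the two integrations --- is routine once these two points are settled.
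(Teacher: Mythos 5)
Your global strategy (Fourier conjugation, IIKS integrable operator, RHP \ref{master0}, Zakharov--Shabat Lax pair, then integrate twice) matches the paper, and your opening identity $\big(F(t;\gamma)\big)^2=\det\big(1-\gamma\chi_t(T+g\otimes G)\chi_t\big)$ is correct --- the paper uses precisely this rank-one trick, but only in Section~\ref{Spohn} to prove Theorem \ref{main3}, not here. Where you genuinely depart is in proposing to run the IIKS/Riemann--Hilbert machinery \emph{once}, directly on the combined operator, tracking the distributional pole of $\mathcal{F}G$ at $z=0$. That is a different plan than the paper's, and it carries a gap that is not mere bookkeeping. Here is the cleanest way to see it: the paper shows, via the Jacobi variation formula and the IIKS resolvent formula, that $\partial_t\ln\det(1-\gamma T\chi_t)=-\im Y_1^{22}(t,\gamma)$. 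If the $\mathcal{F}$-conjugate of your combined operator were ``exactly RHP \ref{master0} with $x=t/2$'' with all $t$-dependence living in oscillatory phases, the same IIKS computation would give $\partial_t\ln\big(F(t;\gamma)\big)^2=-\im Y_1^{22}$ as well --- contradicting $\big(F\big)^2=\det(1-\gamma T\chi_t)\,\Gamma_{t\gamma}$ with $\Gamma_{t\gamma}$ nonconstant. So the pole at $z=0$ cannot be absorbed into a ``constant left factor'' without changing the problem: either the RHP must acquire an additional residue/pole condition at the origin (a dressed RHP, whose IIKS-type differential identity is then different), or the left factor must itself depend on $t$ and contribute to $\partial_t$ --- and in either case the mechanism producing $\cosh\mu-\sqrt{\gamma}\sinh\mu$ is exactly the part you have not supplied.

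The paper sidesteps all of this by not combining the two factors. It applies the Fourier/IIKS/RHP route only to the pole-free piece $\det(1-\gamma T\chi_t)$ (Sections~\ref{prelim}--\ref{contin}), giving the exponential prefactor of \eqref{e:6}. For $\Gamma_{t\gamma}$ it then runs a separate, elementary Lax-pair argument (Section~\ref{contin2}): starting from \eqref{sweet:0}--\eqref{sweet:3} it introduces
\begin{equation*}
	A^{\gamma}(t)=\int_{-\infty}^t\big((1-\gamma T\chi_t)^{-1}g^{\gamma}\big)(x)\,\d x,\qquad
	B^{\gamma}(t)=\frac{1}{\sqrt{2\pi}}\int_{-\infty}^t\int_{\mathring\Omega}\e^{-\im x\lambda}\e^{-\frac18\lambda^2}\chi_{\mathbb{R}}(\lambda)F_2(\lambda)\,\d\lambda\,\d x,
\end{equation*}
uses the $t$-deformation equations \eqref{l:55} of the IIKS vectors to show $\tfrac{\d A^{\gamma}}{\d t}=\im Y_1^{12}B^{\gamma}+Y_1^{12}$, $\tfrac{\d B^{\gamma}}{\d t}=-\im Y_1^{21}A^{\gamma}$, and imposes the boundary data $A^{\gamma}(+\infty)=\sqrt{\gamma}$, $B^{\gamma}(+\infty)=0$ (\emph{not} $(1,\sqrt{\gamma})$ as you suggest; the $\sqrt{\gamma}$ comes from $g^{\gamma}=\sqrt{\gamma}\,g$ and $G(+\infty)=1$, consistent with --- but not identical to --- the $\sqrt{\gamma}$ in $r$). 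Solving that inhomogeneous linear system and integrating $\tfrac{\d}{\d t}u^{\gamma}=-Y_1^{12}A^{\gamma}$ gives $\Gamma_{t\gamma}=\cosh\mu-\sqrt{\gamma}\sinh\mu$ directly. If you wish to pursue your single-determinant route you would need to make the dressed RHP and its modified differential identity explicit; as written, the step that is supposed to produce the $\cosh/\sinh$ factor is the part you flag yourself as unresolved, so the proposal is incomplete precisely there.
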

%
%
%
%
%
%
%
Identity \eqref{e:6} is the analogue of the Tracy-Widom Painlev\'e-II formula for $F_1(t)$ in case $\gamma=1$, see \cite{TW2}. For $\gamma\in[0,1)$, our definition \eqref{gdef} is motivated by the generating function of the soft-edge scaled $(t,+\infty)$ gap probabilities for the superimposed, cf. \cite[Section $6.6$]{F}, orthogonal ensemble 
\begin{equation*}\label{SGOE}
	\textnormal{odd}\big(\textnormal{OE}_n(\e^{-x^2})\cup\textnormal{OE}_n(\e^{-x^2})\big).
\end{equation*} 
In this context, \eqref{e:6} is the direct analogue of \cite[$(9.150)$]{F}, modulo the replacement of the Painlev\'e-II transcendent with the above solution entry $y(x;\gamma)$ of RHP \ref{master0}.
\begin{rem} \label{rem:thin} Another possible motivation for the introduction of $\gamma$ in \eqref{gdef} could arise from studying a thinned version of the real eigenvalues in the GinOE, i.e. from analyzing the point process 
\begin{equation*}
	\mathfrak{X}^{\gamma}=\{z_j^{\gamma}({\bf X}):z_j^{\gamma}({\bf X})\in\mathbb{R}\}_{j=1}^{n(\gamma)},\ \ \ 1\leq n(\gamma)\leq n,
\end{equation*}
obtained from $\mathfrak{X}=\{z_j({\bf X})\in\mathbb{R}\}_{j=1}^n$ with ${\bf X}\in\textnormal{GinOE}$ by independently removing each real eigenvalue with likelihood $\gamma\in[0,1]$. For GOE, the largest eigenvalue distribution function after thinning admits a Painlev\'e closed form expression, see \cite[$(1.6)$]{BB}, in case of GinOE the corresponding result is unknown. It is also not immediately clear whether \eqref{gdef} has a probabilistic interpretation in the thinned superimposed orthogonal ensemble. We plan to address this question in a future publication.
\end{rem}
\begin{rem} We prove existence of $y(x;\gamma)$ for $(x,\gamma)\in\mathbb{R}\times[0,1]$ in Theorem \ref{solv:2} and continuity of $y(x;\gamma),x\in\mathbb{R}$ for any fixed $\gamma\in[0,1]$ in Lemma \ref{reg} and Corollary \ref{solv:1}. Moreover we show that $y(x;\gamma)$ for $(x,\gamma)\in\mathbb{R}\times[0,1]$ is purely imaginary and
\begin{equation}\label{easyesti}
	y(x;\gamma)=2\im\sqrt{\frac{\gamma}{\pi}}\,\e^{-4x^2}\left(1+\mathcal{O}\left(\e^{-4x^2}\right)\right),\ \ \ x\rightarrow+\infty,
\end{equation}
i.e. the right-hand side in \eqref{e:6} is well-defined.
\end{rem}
Returning to the afore-mentioned comparison between GinOE and GOE we see from \eqref{e:6} and \cite[$(53)$]{TW2} that, overall, the main difference in GinOE arises from the presence of the inverse scattering type RHP \ref{master0} and its solution entry $y(x;\gamma)$ instead of the more common Painlev\'e transcendents in the Gaussian invariant ensembles. For this reason we shall briefly review a few selected aspects of the integrability theory of RHP \ref{master0}, see \cite{AC,BC0,BDT,ZS} for more details.
\subsection{The Zakharov-Shabat system in a nutshell}\label{IST} Note that
\begin{equation*}
	{\bf\Psi}(z):={\bf X}(z)\e^{-\im xz\sigma_3},\ \ \ z\in\mathbb{C}\setminus\mathbb{R}
\end{equation*}
solves a RHP with an $x$-independent jump on $\mathbb{R}$, thus $\frac{\partial{\bf\Psi}}{\partial x}{\bf \Psi}^{-1}$ is an entire function. In fact, using condition (3) in RHP \ref{master0} and Liouville's theorem, we find
\begin{equation}\label{ZS:0}
	\frac{\partial{\bf\Psi}}{\partial x}=\left\{-\im z\sigma_3+2\im\begin{bmatrix}0 & X_1^{12}\smallskip\\ -X_1^{21} & 0\end{bmatrix}\right\}{\bf\Psi}.
\end{equation}
But since RHP \ref{master0} enjoys the symmetry
\begin{equation}\label{dNLSsymm}
	{\bf X}(z;x,\gamma)=\sigma_1\overline{{\bf X}(\bar{z};x,\gamma)}\sigma_1,\ \ \ \ z\in\mathbb{C}\setminus\mathbb{R};\ \ \ \ \sigma_1=\begin{bmatrix}0 & 1\\ 1 & 0\end{bmatrix},
\end{equation}
we learn that $X_i^{11}(x,\gamma)=\overline{X_i^{22}(x,\gamma)}$ as well as $X_i^{21}(x,\gamma)=\overline{X_i^{12}(x,\gamma)}$ and thus with $y=2\im X_1^{12}$ from \eqref{ZS:0},
\begin{equation}\label{ZS:1}
	\frac{\partial{\bf \Psi}}{\partial x}=\left\{-\im z\sigma_3+\begin{bmatrix}0 & y\\ \bar{y} & 0\end{bmatrix}\right\}{\bf\Psi}\equiv{\bf U}(z;x,\gamma){\bf \Psi}.
\end{equation}
This celebrated first order system, known as ZS-system, is directly related to several of the most interesting nonlinear evolution equations in $1+1$ dimensions which are solvable by the inverse scattering method. For instance, in order to solve the Cauchy problem for the defocusing nonlinear Schr\"odinger equation,
\begin{equation}\label{dNLS}
	\im y_t+y_{xx}-2|y|^2y=0,\ \ \ \ y(x,0)=y_0(x)\in\mathcal{S}(\mathbb{R});\ \ \ \ \ y=y(x,t):\mathbb{R}^2\rightarrow\mathbb{C},
\end{equation}
one first computes the reflection coefficient $r(z)\in\mathcal{S}(\mathbb{R})$ associated to the initial data $y_0$ through the direct scattering transform. A basic fact of the scattering theory for the Zhakarov-Shabat system \eqref{ZS:1} states that this transform, i.e. the map $y_0\rightarrow r$, is a bijection from $\mathcal{S}(\mathbb{R})$ onto $\mathcal{S}(\mathbb{R})\cap\{r:\,\|r\|_{\infty}=\sup_{z\in\mathbb{R}}|r(z)|<1\}$, cf. \cite{BC0}. Second, one considers RHP \ref{master0} above subject to the replacement
\begin{equation*}
	\e^{2\im xz}\rightarrow\e^{2\im(2tz^2+xz)},\ \ \ t\in\mathbb{R},
\end{equation*}
and provided this problem is solvable, its (unique) solution in turn leads to a solution of \eqref{dNLS} with $y(x,0)=y_0(x)$ via the formula $y(x,t)=2\im X_1^{12}(x,t)$. Thus, in order to solve \eqref{dNLS}, one must solve the $t$-modified RHP \ref{master0} (a.k.a. the inverse scattering transform) for the given reflection coefficient $r(z)$, determined under the aforementioned bijection $y_0\rightarrow r$. Returning now to our context, we see that \eqref{e:6} therefore depends on a distinguished solution $y(x;\gamma)$ of the inverse scattering transform for the Zakharov-Shabat system \eqref{ZS:1} subject to the reflection coefficient $r(z;\gamma)=-\im\sqrt{\gamma}\,\e^{-\frac{1}{4}z^2}$.
\begin{rem} As outlined above, the operator $T\chi_t$ is of integrable type once viewed in Fourier space. This idea was first used in the analysis of single- and multi-time processes in \cite{BCe,BC}. Specifically, loc. cit. showed that certain matrix Fredholm determinants are expressible as determinants of integrable matrix kernels and thus connected to RHPs. As a direct application of this technique, Bertola and Cafasso re-derived  for instance the Adler-van Moerbeke PDE for the joint distributions of the Airy-$2$ process by Riemann-Hilbert techniques. Our approach to the GinOE and \eqref{gdef} is clearly inspired by these works.
\end{rem}
\subsection{Tail asymptotics} The advantage of the exact formula \eqref{e:6} lies in the fact that $y=y(x;\gamma)$ admits a Riemann-Hilbert formulation as outlined in RHP \ref{master0}. Thus its large space/long time behavior can be systematically computed via nonlinear steepest descent techniques \cite{DZ} and this paths the way to large tail estimates for \eqref{gdef}. We summarize our second result.  
\begin{cor}\label{main2} Let $\gamma\in[0,1]$ and $F(t;\gamma)$ be defined as in \eqref{gdef}. Then, as $t\rightarrow+\infty$,
\begin{equation}\label{e:9}
	F(t;\gamma)=1-\frac{\gamma}{4}\textnormal{erfc}(t)+\mathcal{O}\left(\gamma^{\frac{3}{2}}t^{-1}\e^{-2t^2}\right),
\end{equation}
in terms of the complementary error function $\textnormal{erfc}(z)$, cf. \cite[$7.2.2$]{NIST}. On the other hand, as $t\rightarrow-\infty$,
\begin{equation}\label{e:10}
	F(t;\gamma)=\e^{\eta_1(\gamma)t}\eta_0(\gamma)\big(1+o(1)\big),\ \ \ \eta_1(\gamma)=\frac{1}{2\sqrt{2\pi}}\,\textnormal{Li}_{\frac{3}{2}}(\gamma),
\end{equation}
in terms of the polylogarithm $\textnormal{Li}_s(z)$, cf. \cite[$25.12.10$]{NIST}, and with a $t$-independent positive factor $\eta_0(\gamma)$.
\end{cor}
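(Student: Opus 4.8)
The plan is to derive both tail estimates directly from the exact formula \eqref{e:6}, using the Riemann-Hilbert representation of $y(x;\gamma)$ together with the Deift-Zhou nonlinear steepest descent method applied to RHP \ref{master0}. The right tail $t\to+\infty$ is the easy direction: here the jump matrix in \eqref{djump} is exponentially close to the identity since $r(z;\gamma)=-\im\sqrt{\gamma}\,\e^{-z^2/4}$ and $x=t/2\to+\infty$ forces the oscillatory factors $\e^{\pm 2\im xz}$ to localize the problem, so a small-norm argument gives ${\bf X}(z)=\mathbb{I}+\mathcal{O}(\sqrt{\gamma}\,\e^{-2t^2})$ uniformly and, more precisely, the estimate \eqref{easyesti} already recorded in the excerpt, $y(x;\gamma)=2\im\sqrt{\gamma/\pi}\,\e^{-4x^2}(1+\mathcal{O}(\e^{-4x^2}))$. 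Substituting this into $\mu(t;\gamma)=-\tfrac{\im}{2}\int_t^\infty y(x/2;\gamma)\,\d x$ gives $\mu(t;\gamma)=\sqrt{\gamma/\pi}\int_t^\infty \e^{-x^2}\d x\,(1+\cdots)=\tfrac{\sqrt\gamma}{2}\operatorname{erfc}(t)(1+\cdots)$, which is $\mathcal{O}(\sqrt\gamma\,t^{-1}\e^{-t^2})$. Expanding $\cosh\mu-\sqrt\gamma\sinh\mu=1-\sqrt\gamma\,\mu+\mathcal{O}(\mu^2)$ and noting $\sqrt\gamma\,\mu=\tfrac{\gamma}{2}\operatorname{erfc}(t)(1+\cdots)$, while the exponential prefactor $\exp[-\tfrac14\int_t^\infty(x-t)|y(x/2;\gamma)|^2\d x]=1+\mathcal{O}(\gamma\,\e^{-4t^2})$ contributes to lower order, one obtains $(F(t;\gamma))^2=1-\tfrac{\gamma}{2}\operatorname{erfc}(t)+\mathcal{O}(\gamma^{3/2}t^{-1}\e^{-2t^2})$, and taking the square root yields \eqref{e:9}. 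Bookkeeping the error terms carefully (keeping track of the $t^{-1}$ from integration by parts in the incomplete Gaussian integral) is the only mild subtlety here.

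The left tail $t\to-\infty$ is the substantial part and requires a genuine steepest descent analysis of RHP \ref{master0} in the regime $x=t/2\to-\infty$. First I would perform the standard $g$-function / factorization of the jump: write the jump matrix as a product of upper/lower triangular factors, open lenses around $\mathbb{R}$, and deform so that the oscillatory exponentials $\e^{\pm 2\im xz}$ become exponentially decaying on the deformed contours away from the stationary point $z=0$ (for the pure ZS phase $\phi(z)=2\im xz$ there is no interior stationary point, so the relevant contributions come from the local behavior of $r$ near $z=0$, i.e. from $r(0;\gamma)=-\im\sqrt\gamma$). The key point is that because $\|r\|_\infty=\sqrt\gamma<1$ for $\gamma\in[0,1)$ (and $=1$ at $\gamma=1$), a local parametrix built from the confluent hypergeometric / parabolic-cylinder type model problem attached to the point $z=0$ governs the leading asymptotics; matching it against the global parametrix (whose solution involves the scalar function $\exp[\tfrac{1}{2\pi\im}\int_{\mathbb{R}}\tfrac{\ln(1-|r(s)|^2)}{s-z}\d s]$) produces the leading term of $y(x;\gamma)$, which will be of order $|x|^{-1/2}$ times an oscillatory-free constant. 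Feeding this into $\mu(t;\gamma)$ and the prefactor in \eqref{e:6} and integrating from $t$ to $\infty$ converts the algebraic decay of $y$ into the linear exponential rate $\e^{\eta_1(\gamma)t}$ in \eqref{e:10}; the constant $\eta_1(\gamma)=\tfrac{1}{2\sqrt{2\pi}}\operatorname{Li}_{3/2}(\gamma)$ should emerge from evaluating $-\tfrac14\int^\infty(x-t)|y(x/2;\gamma)|^2\d x$ with $|y|^2\sim c(\gamma)|x|^{-1}$, where the expansion $\ln(1-\gamma\e^{-s^2/4})=-\sum_{k\ge1}\tfrac{\gamma^k}{k}\e^{-ks^2/4}$ integrated against the Gaussian weight is exactly what generates the polylogarithm $\operatorname{Li}_{3/2}$. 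The subleading $\cosh\mu-\sqrt\gamma\sinh\mu$ factor and the finite part of the prefactor combine into the $t$-independent constant $\eta_0(\gamma)>0$, whose positivity follows since $F(t;\gamma)\in(0,1]$ is a (generating-function-deformed) distribution-type quantity and the RHP is uniquely solvable.

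The main obstacle I anticipate is the steepest descent analysis at the endpoint case $\gamma=1$, where $\|r\|_\infty=1$ and $1-|r(z)|^2$ vanishes at $z=0$; the reflection coefficient then lies outside the Beals-Coifman class (as the excerpt emphasizes), so the usual global parametrix based on $\ln(1-|r|^2)\in L^1$ near $z=0$ degenerates and one must handle a local scale near $z=0$ more delicately — presumably by a dedicated local parametrix absorbing the mild singularity, or by first establishing \eqref{e:10} for $\gamma\in[0,1)$ and then passing to the limit $\gamma\uparrow1$ using continuity of $F(t;\gamma)$ in $\gamma$ (uniform on compact $t$-sets) together with monotone control of the tail. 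A secondary technical point is justifying that the formal asymptotic series produced by the steepest descent matching can be integrated term-by-term over the semi-infinite $x$-interval to pass from pointwise asymptotics of $y(x;\gamma)$ to the stated asymptotics of $F(t;\gamma)$; this needs the error bounds in the RHP analysis to be uniform in $x\le -M$, which the small-norm estimates on the ratio problem ${\bf R}(z)={\bf X}(z)(\text{parametrix})^{-1}$ do provide once the contours are chosen $x$-independently.
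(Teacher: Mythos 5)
Your right-tail argument is essentially the paper's: both start from the RHP-derived estimate $y(x;\gamma)\sim 2\im\sqrt{\gamma/\pi}\,\e^{-4x^2}$ as $x\to+\infty$, substitute into \eqref{e:6}, and expand $\cosh\mu-\sqrt\gamma\sinh\mu$ to first order in $\mu$. (A minor slip: $|y(x/2;\gamma)|^2=\mathcal{O}(\gamma\e^{-2x^2})$, so the exponential prefactor is $1+\mathcal{O}(\gamma t^{-2}\e^{-2t^2})$, not $1+\mathcal{O}(\gamma\e^{-4t^2})$; harmless since the $\textnormal{erfc}(t)$ term dominates.)

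The left tail has a genuine gap. You posit $|y(x;\gamma)|^2\sim c(\gamma)|x|^{-1}$ as $x\to-\infty$ and claim that feeding this into $-\frac14\int_t^\infty(x-t)|y(x/2;\gamma)|^2\,\d x$ produces the linear slope $\eta_1(\gamma)t$. That is inconsistent: $\int_t^\infty(x-t)\cdot c|x|^{-1}\,\d x$ grows like $c|t|\ln|t|$, not $|t|$. Linear growth requires $|y(\cdot;\gamma)|^2\in L^1(\mathbb{R})$, and then the slope is $\eta_1(\gamma)=\frac14\int_{\mathbb{R}}|y(s;\gamma)|^2\,\d s$. For the Schwartz reflection coefficient $r(z;\gamma)=-\im\sqrt\gamma\,\e^{-z^2/4}$ with $\|r\|_\infty<1$, the Beals--Coifman bijection gives $y(\cdot;\gamma)\in\mathcal{S}(\mathbb{R})$, so $y$ decays faster than any power as $x\to-\infty$ (see Figure \ref{figure111}); there is no algebraic tail. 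Consequently the parabolic-cylinder/confluent-hypergeometric local parametrix you invoke at $z=0$ is the wrong model problem --- that construction handles a stationary phase point colliding with a jump singularity, and as you yourself note there is no interior stationary point here. What the paper actually does is a $g$-function conjugation ${\bf S}={\bf T}\e^{g\sigma_3}$ with $g$ the Cauchy transform of $h(x;t,\gamma)=-\ln(1-\gamma\e^{-t^2x^2/2})$. Since $g\sigma_3$ is diagonal it does not affect $y=2\im X_1^{12}$ (which remains exponentially small), but the $z^{-1}$-expansion of $\e^{g\sigma_3}$ injects the total mass $\int_{\mathbb{R}}h(s;t,\gamma)\,\d s=\frac{\sqrt{2\pi}}{|t|}\textnormal{Li}_{3/2}(\gamma)$ directly into the identity $\frac{\partial}{\partial t}\ln\det(1-\gamma T\chi_t\upharpoonright_{L^2(\mathbb{R})})=-\im Y_1^{22}(t,\gamma)$; this --- equivalently the trace identity $\int_{\mathbb{R}}|y|^2\,\d s=-\frac1\pi\int_{\mathbb{R}}\ln(1-|r|^2)\,\d z$, which your $\sum_k\gamma^k/k$ expansion would indeed produce --- is where $\textnormal{Li}_{3/2}$ enters. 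After this conjugation and lens opening the jump is exponentially close to $\mathbb{I}$, but the lens contour necessarily collapses onto $\mathbb{R}$ as $t\to-\infty$ (or $\gamma\uparrow1$). That is the real technical obstacle: the standard small-norm framework does not apply on a shrinking contour, and the paper supplies dedicated Cauchy-operator bounds with a $\ln(\delta_{t\gamma}^{-1})$ loss (Appendix \ref{appA}, leading to Theorem \ref{collapse}) to cope. Your proposal does not anticipate the collapsing-contour issue and its resolution, which is the crux of the argument.
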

Estimate \eqref{e:9} for $\gamma=1$ is standard and known from \cite{FN}, say. The leading order in (1.14) for $\gamma = 1$ was obtained by Forrester \cite{F2} and also by Poplavskyi, Tribe and Zaboronski \cite{PTZ} using an interesting connection to coalescence processes. Here we employ nonlinear steepest descent techniques to confirm \eqref{e:9} for all $\gamma\in[0,1]$ and derive \eqref{e:10} for $\gamma\in[0,1)$.\smallskip

The paper \cite{F2} also discusses the thinned version of the real eigenvalues, or equivalently, the generating function for the probability of the number of eigenvalues. Namely, the discussions in the third paragraph of \cite[Section $4$]{F2} claim a leading order term which, after evaluating the integral \cite[$(4.1)$]{F2} explicitly, is the same as our $\eta_1(\gamma)$ with $\gamma=2\xi-\xi^2$ where $1-\xi$ denotes the removal probability of an eigenvalue. As mentioned above in Remark \ref{rem:thin} it is interesting to consider the relationship between our $F(t;\gamma)$ and the thinned version.

\begin{rem}\label{ASegur} As can be seen from \eqref{e:10}, the left tails of $F(t;\gamma)$ in the leading order decay to zero exponentially fast for all $\gamma\in(0,1]$. This is in sharp contrast to the GOE where $\ln F_1(t)\sim\frac{1}{24}t^3$ as $t\rightarrow-\infty$ and $\ln F_1(t;\gamma)\sim-\frac{2v}{3\pi}(-t)^{3/2}$ for $\gamma\in(0,1)$ fixed with $v=-\ln(1-\gamma)<+\infty$, cf. \cite{BB}. This means that the large negative $x$ behavior of $\Im(y(\frac{x}{2};\gamma))$ cannot be as sensitive to a small change in $\gamma\in(0,1]$ near $\gamma=1$ as the corresponding behavior for the Painlev\'e-II transcendent $u(x;\gamma)$, see Figure \ref{figure111} below for a visualization.
\end{rem}
\subsection{Numerical comparison} The closed form expression \eqref{e:6} is not as optimal for numerical purposes as a single Fredholm determinant formula, compare the discussions in \cite{B}. For this reason we derive a new determinantal formula for the limiting distribution of $\max_{j:z_j\in\mathbb{R}}z_j({\bf X})$ in our third result below. This identity is completely analogous to the Ferrari-Spohn formula \cite{FS} in the GOE.
\begin{theo}\label{main3}
Let $F(t)$ be defined as in \eqref{e:1}. Then
\begin{equation}\label{e:11}
	F(t)=\det(1-S\chi_t\upharpoonright_{L^2(\mathbb{R})}),
\end{equation}
where $S:L^2(\mathbb{R})\rightarrow L^2(\mathbb{R})$ is the integral operator with kernel
\begin{equation}\label{e:12}
	S(x,y)=\frac{1}{2\sqrt{\pi}}\e^{-\frac{1}{4}(x+y)^2}.
\end{equation}
\end{theo}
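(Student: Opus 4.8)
The plan is to transform the right-hand side of \eqref{e:1}, namely $\det(1-T\chi_t\!\upharpoonright_{L^2(\mathbb{R})})\Gamma_t$, into the perfect square $\det(1-S\chi_t\!\upharpoonright_{L^2(\mathbb{R})})^2$, following the circle of ideas behind the Ferrari--Spohn formula \cite{FS}. The starting point is that the kernel \eqref{e:2} has the Hankel factorization $T(x,y)=\int_0^\infty g(x+u)g(y+u)\,\d u$, with $g$ as in \eqref{e:3}. Conjugating $\chi_tT\chi_t$ by the unitary translation $L^2(t,\infty)\to L^2(\mathbb{R}_+)$, $f\mapsto f(t+\,\cdot\,)$, turns it into $\mathsf{A}^2$, where $\mathsf{A}$ is the self-adjoint, trace-class Hankel operator on $L^2(\mathbb{R}_+)$ with kernel $g(t+\xi+\eta)$; conjugating $\chi_tS\chi_t$ by that translation followed by a dilation of $L^2(\mathbb{R}_+)$ turns it into the very same operator $\mathsf{A}$, since $S(t+2\xi,t+2\eta)=\tfrac12 g(t+\xi+\eta)$. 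Hence both $\chi_tT\chi_t$ and $(\chi_tS\chi_t)^2$ are unitarily equivalent to $\mathsf{A}^2$; since $g>0$ makes $\mathsf{A}$ positive and one checks $\|\mathsf{A}\|<1$, the eigenvalues of $T\chi_t$ are the squares of those of $S\chi_t$, so
\begin{equation*}
	\det(1-T\chi_t)=\det(1-S\chi_t)\det(1+S\chi_t),\qquad\det(1-S\chi_t)=\det(1-\mathsf{A})>0 .
\end{equation*}
Since $F(t)^2=\det(1-T\chi_t)\Gamma_t$ by \eqref{e:1}, Theorem \ref{main3} is thus equivalent to the single scalar identity
\begin{equation*}
	\Gamma_t=\frac{\det(1-S\chi_t)}{\det(1+S\chi_t)}=\frac{\det(1-\mathsf{A})}{\det(1+\mathsf{A})}.
\end{equation*}

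For this identity I would differentiate in $t$. On the right, $\partial_t\mathsf{A}$ is the Hankel operator with kernel $g'(t+\xi+\eta)$, equivalently $\partial_t\mathsf{A}=\partial_\xi\circ\mathsf{A}$, and using $(1-\mathsf{A})^{-1}+(1+\mathsf{A})^{-1}=2(1-\mathsf{A}^2)^{-1}$ one gets at once
\begin{equation*}
	\partial_t\ln\frac{\det(1-\mathsf{A})}{\det(1+\mathsf{A})}=-\tr\!\big[\big((1-\mathsf{A})^{-1}+(1+\mathsf{A})^{-1}\big)\partial_t\mathsf{A}\big]=-2\,\tr\!\big[(1-\mathsf{A}^2)^{-1}\partial_t\mathsf{A}\big].
\end{equation*}
On the left, after the standard rewriting of \eqref{e:3} with the resolvent compressed to $L^2(t,\infty)$ and transported to $L^2(\mathbb{R}_+)$, $\Gamma_t=1-\langle\widetilde{G},(1-\mathsf{A}^2)^{-1}\widetilde{g}\rangle$ with $\widetilde{g}(\xi)=g(t+\xi)$ and $\widetilde{G}(\xi)=G(t+\xi)$. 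Differentiating produces, besides the resolvent term $-\langle\widetilde{G},(1-\mathsf{A}^2)^{-1}(\partial_t\mathsf{A}^2)(1-\mathsf{A}^2)^{-1}\widetilde{g}\rangle$, two extra contributions from $\partial_t\widetilde{g}=\partial_\xi\widetilde{g}$ and $\partial_t\widetilde{G}=\widetilde{g}$. The claim is that these assemble into exactly $-2\,\tr[(1-\mathsf{A}^2)^{-1}\partial_t\mathsf{A}]$; the tools are the antiderivative relation $\partial_\xi\widetilde{G}=\widetilde{g}$, the Hankel anticommutation identity $\partial_\xi\mathsf{A}+\mathsf{A}\partial_\xi=-|\widetilde{g}\rangle\langle\delta_0|$ (so $\partial_t\mathsf{A}=\tfrac12[\partial_\xi,\mathsf{A}]-\tfrac12|\widetilde{g}\rangle\langle\delta_0|$), the boundary values $\widetilde{g}(0)=g(t)$ and $\widetilde{G}(0)=G(t)$, and cyclicity of the trace. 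Finally one matches at $t\to+\infty$: there $\mathsf{A}\to0$, so $\Gamma_t\to1$ and $\det(1\mp S\chi_t)\to1$ and both logarithms vanish in the limit; since their derivatives agree for all $t$, the two sides coincide identically, and combined with the first paragraph this yields \eqref{e:11}, the positivity of $\det(1-S\chi_t)$ and of $F(t)$ fixing the sign.

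The main obstacle is the middle computation of the second paragraph: carrying out $\partial_t\Gamma_t$ from the defining formula \eqref{e:3} and verifying that the $\widetilde{g}$-, $\widetilde{G}$- and resolvent-contributions really cancel down to $-2\,\tr[(1-\mathsf{A}^2)^{-1}\partial_t\mathsf{A}]$. This is the Gaussian analogue of the computation underpinning \cite{FS}, with the Airy equation $\mathrm{Ai}''=x\,\mathrm{Ai}$ replaced by the elementary relations $g'(x)=-2xg(x)$ and $G'=g$; the delicate bookkeeping is with the boundary terms at $\xi=0$ (equivalently $x=t$) generated by the integrations by parts and with the trace-class estimates for the unbounded operator $\partial_\xi$, these being precisely the terms that pair off once the antiderivative structure of $G$ is invoked. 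A more conceptual route would be to factor the operator $1-\mathsf{A}^2-|\widetilde{g}\rangle\langle\widetilde{G}|$, whose determinant equals $F(t)^2$ by the first paragraph, directly into two factors each with determinant $\det(1-\mathsf{A})$; we expect such a factorization to exist but have not found one, so the differentiation argument seems the cleaner path.
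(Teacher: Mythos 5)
Your strategy matches the paper's: pass to the half-line, use that the translated $\chi_tT\chi_t$ equals $S_t^2$ (your Hankel operator $\mathsf{A}$, after the dilation, is the paper's $S_t$ with kernel $\frac{1}{\sqrt{\pi}}\e^{-(x+y+t)^2}$), reduce \eqref{e:11} to the scalar identity $\Gamma_t=\det(1-S_t)/\det(1+S_t)$, and prove it by matching logarithmic $t$-derivatives together with the normalization at $t\to+\infty$. The paper carries out the reduction via the Fredholm factorizations \eqref{FS:5}--\eqref{FS:7} and the derivative step via precisely the two lemmas you invoke: the integration-by-parts identity $\tr[(1-S_t^2)^{-1}\partial_t S_t]=-\tfrac12\langle\delta_0,(1-S_t^2)^{-1}S_t\delta_0\rangle$ (\cite[Lemma 2]{FS}) and the operator identity \eqref{FS:11}, the latter being your anticommutation relation $D\mathsf{A}+\mathsf{A}D=-\mathsf{A}\Delta_0$. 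So you have correctly identified the route.

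However, you leave the central verification --- that $\partial_t\ln\Gamma_t=-2\tr[(1-\mathsf{A}^2)^{-1}\partial_t\mathsf{A}]$ --- undone, and you acknowledge as much. This is a genuine gap, not a routine check. If you differentiate the raw form $\Gamma_t=1-\langle\widetilde{G},(1-\mathsf{A}^2)^{-1}\widetilde{g}\rangle$ from \eqref{e:3}, what you compute is $\partial_t\Gamma_t$, not $\partial_t\ln\Gamma_t$; the missing factor of $\Gamma_t$ in the denominator has to emerge from the rank-one projection $\Delta_0(1+S_t)^{-1}\chi_0=\Gamma_t\,\delta_0$, and the $\xi=0$ boundary terms from the integrations by parts have to pair off against the $\mathsf{A}\Delta_0$ piece. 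The paper makes this tractable by first rewriting $\Gamma_t=\langle\chi_0,(1+S_t)^{-1}\delta_0\rangle$ via $G(x+t)=((1-S_t)\chi_0)(x)$ and $g(x+t)=(S_t\delta_0)(x)$, and only then differentiating. One incidental error: ``$g>0$ makes $\mathsf{A}$ positive'' is not a valid deduction --- the Hankel operator with kernel $\e^{-(x+y)^2}$ on $L^2(0,\infty)$ is not positive semidefinite (for $a>0$ the $2\times2$ Gram matrix at $\{0,a\}$ with symbol $h(s)=\e^{-s^2}$ has determinant $\e^{-4a^2}-\e^{-2a^2}<0$; equivalently $\e^{-s^2}$ is not completely monotone) --- but this is harmless, since $\|\mathsf{A}\|\le1$, invertibility of $1\pm\mathsf{A}$ (Lemma \ref{reggood}), and the normalization at $+\infty$ are all that is required to fix the sign.
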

Identity \eqref{e:11} allows us to numerically simulate several statistical quantities of $\max_{j:z_j\in\mathbb{R}}z_j({\bf X})$ by implementing Bornemann's algorithm \cite{B} in MATLAB. In more detail we discretize the determinant \eqref{e:11} by the Nystr\"om method using a Gauss-Legendre quadrature rule with $m=50$ quadrature points. 
Once the values of the cdf are then numerically accessible, computing associated quantities such as moments is straightforward. We summarize a few values in Table \ref{tab:1} below.\smallskip
\begin{table}[h]
\caption{Some moments of the GinOE in comparison to GOE moments}\label{tab:1}
\begin{center}
\begin{tabular}{ l  c  c c c}
\toprule
ensemble & mean & variance& skewness& kurtosis\\[2pt] \midrule
GinOE \eqref{e:1}& -1.30319& 3.97536& -1.76969&5.14560\\[2pt] \midrule
GOE \eqref{e:4}& -1.20653& 1.60778& 0.29346&0.16524\\[2pt] \bottomrule
\end{tabular}
\end{center}
\end{table} 

In the upcoming figures we first plot the distribution function $F(t)$ of the largest real eigenvalue in the GinOE in comparison to $F_1(t)$, the cdf of the largest eigenvalue in the GOE, compare Figure \ref{figure1}. After that we compare the asymptotic expansions \eqref{e:9} and \eqref{e:10} to our numerical results in Figure \ref{figure11}.
\begin{center}
\begin{figure}[tbh]
\resizebox{0.448\textwidth}{!}{\includegraphics{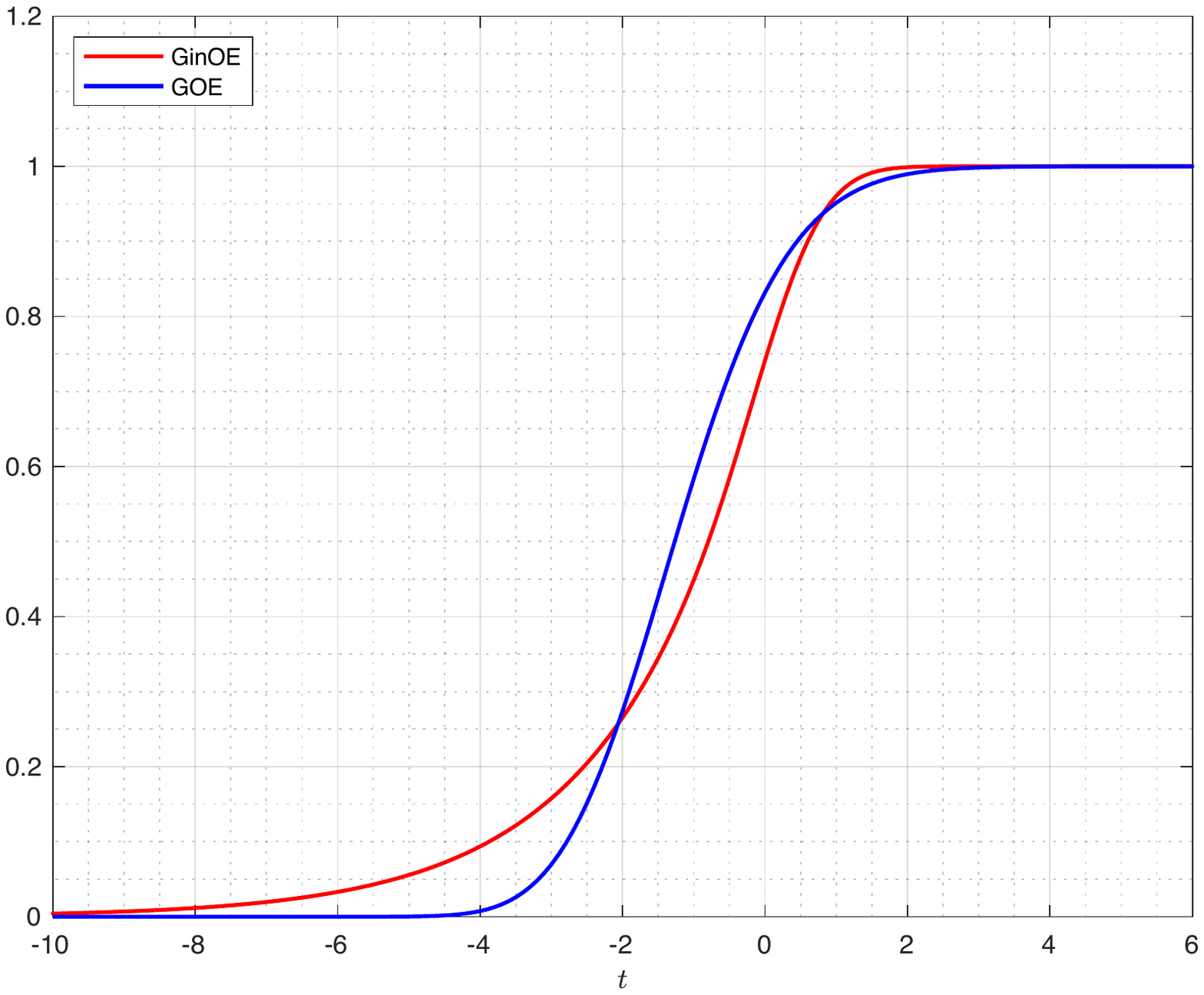}}\ \ \ \ \ \resizebox{0.455\textwidth}{!}{\includegraphics{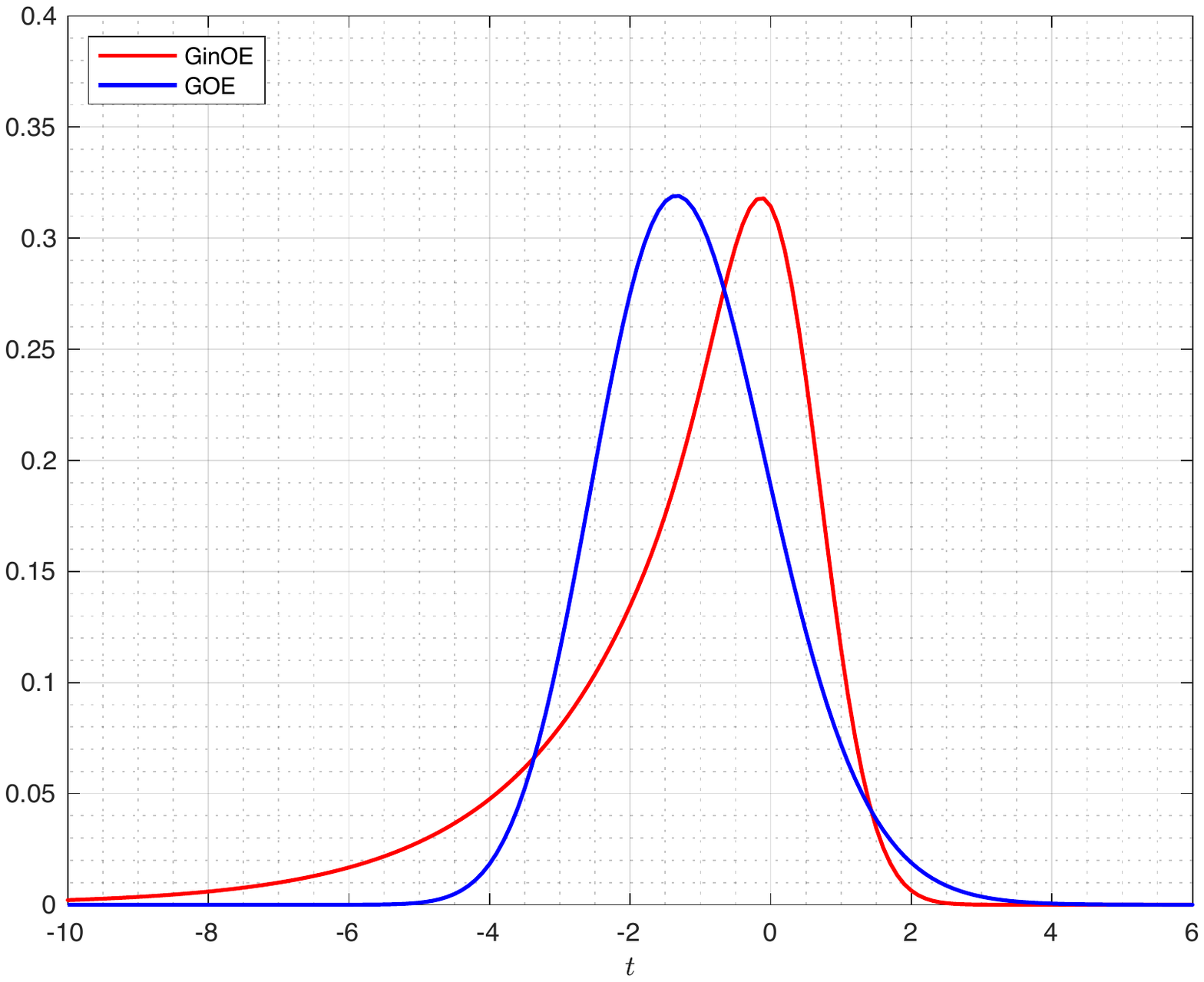}}
\caption{The distribution function $F(t)$ of the largest real GinOE eigenvalue in red versus GOE Tracy-Widom $F_1(t)$ in blue. The plots are generated in MATLAB with $m=50$ quadrature points using the Nystr\"om method with Gauss-Legendre quadrature. On the left cdfs, on the right pdfs.}
\label{figure1}
\end{figure}
\end{center} 
\begin{center}
\begin{figure}
\resizebox{0.451\textwidth}{!}{\includegraphics{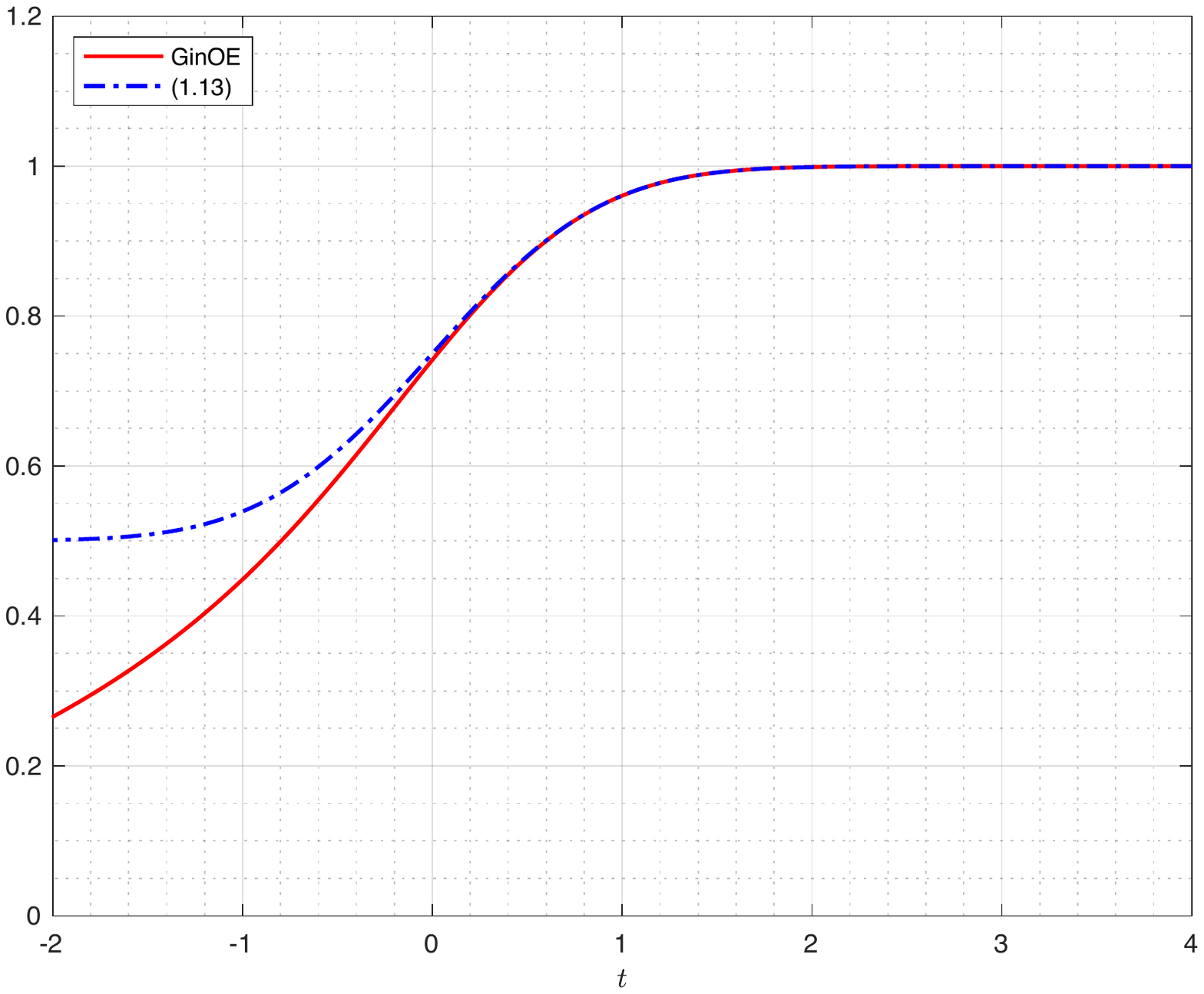}}\ \ \ \ \ \resizebox{0.455\textwidth}{!}{\includegraphics{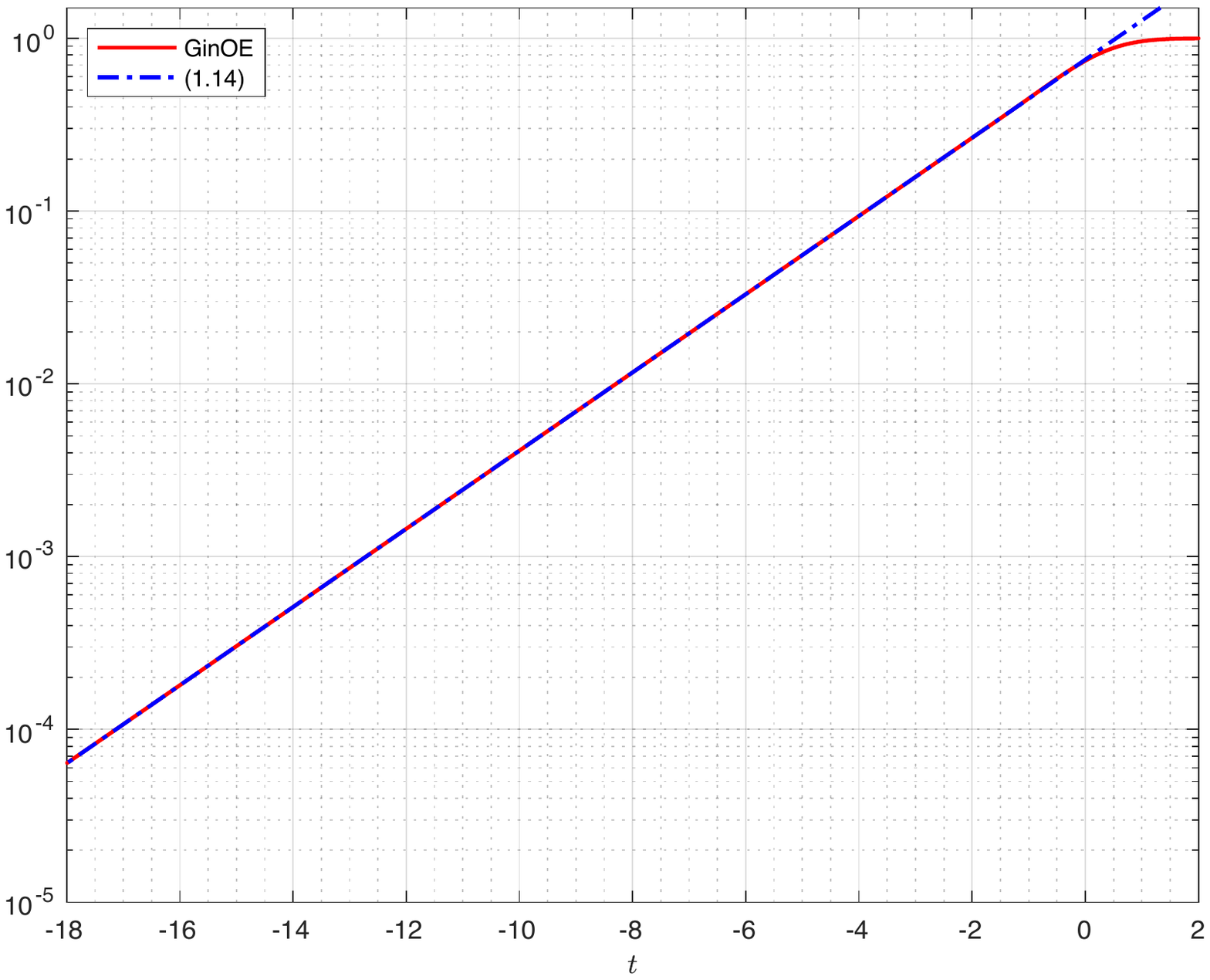}}
\caption{We double-check \eqref{e:9} on the left and \eqref{e:10} on the right (in a semilogarithmic plot) against the numerically computed values of $F(t;1)$ based on \eqref{e:11} with $\eta_0(1)=0.75277069$. Once more we have used the Nystr\"om method with a Gauss-Legendre quadrature rule and $m=50$ quadrature points.}
\label{figure11}
\end{figure}
\end{center}

A closed form computation of $\eta_0(\gamma)$ in \eqref{e:10} is beyond the methods developed in this paper. In \cite[$(2.26),(2.30)$]{F2}, Forrester derives a closed form series representation for $\eta_0(1)$, namely
\begin{equation*}
	\exp\left[\ln 2-\frac{1}{4}+\frac{1}{4\pi}\sum_{n=2}^{\infty}\frac{1}{n}\left(-\pi+\sum_{m=1}^{n-1}\frac{1}{\sqrt{m(n-m)}}\right)\right]\approx 1.06470738.
\end{equation*}	
However, after using a simple approximation for $\eta_0(1)$ obtained by numerically computing the ratio of $F(t;1)$ from \eqref{e:11} and $\e^{\eta_1(1)t}$ for large negative $t$, our result 
\begin{equation*}
	\eta_0(1)=0.75277069,
\end{equation*}
does not match \cite[$(2.26),(2.30)$]{F2}. This discrepancy needs to be further investigated. 
Finally, in the remaining Figure \ref{figure111} we showcase the qualitatively different asymptotic behaviors of $\Im\big(y(\frac{x}{2};\gamma)\big)$ on one hand and $u(x;\gamma)$ on the other, compare Remark \ref{ASegur}.
\begin{center}
\begin{figure}[tbh]
\resizebox{0.451\textwidth}{!}{\includegraphics{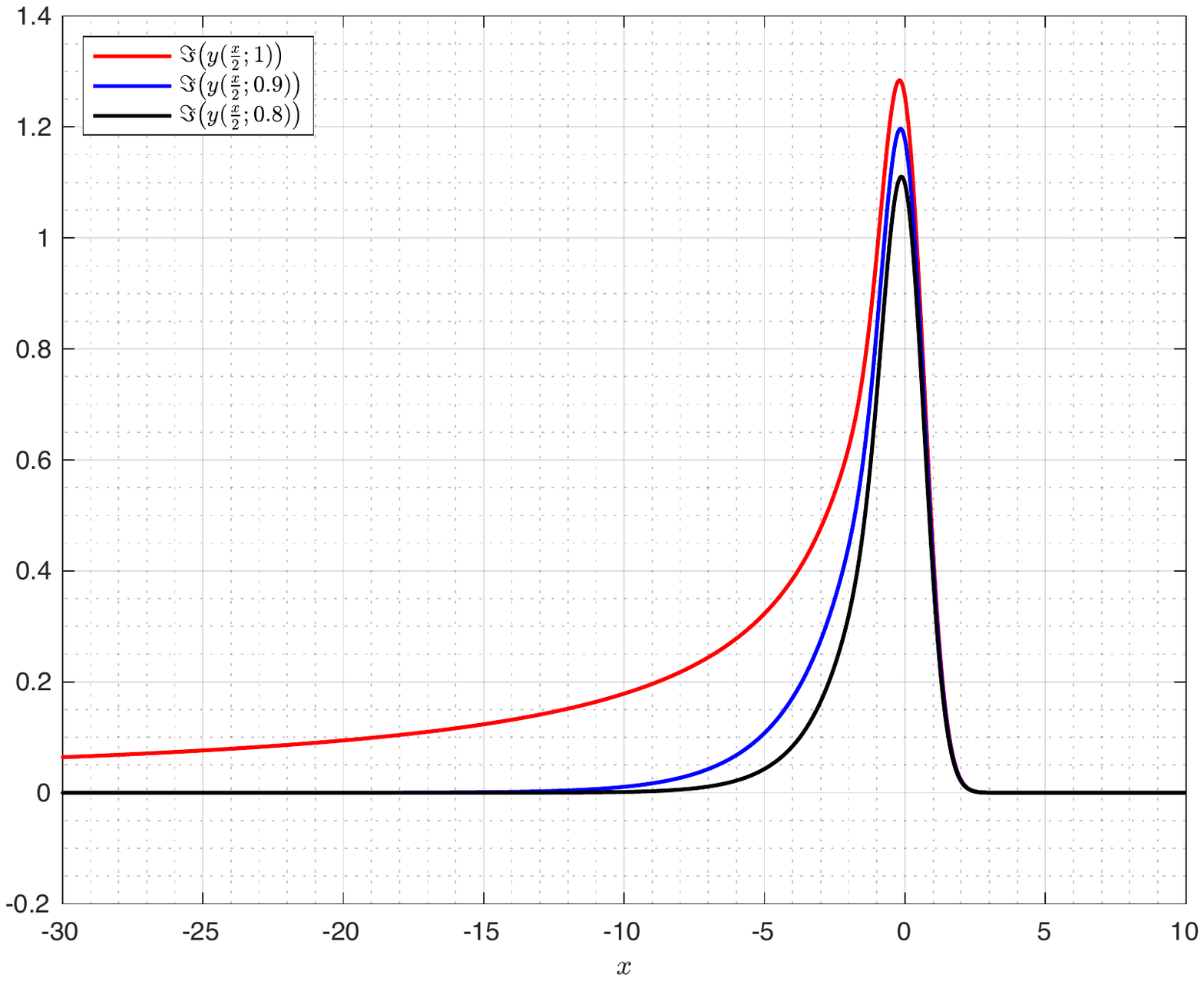}}\ \ \ \ \ \resizebox{0.451\textwidth}{!}{\includegraphics{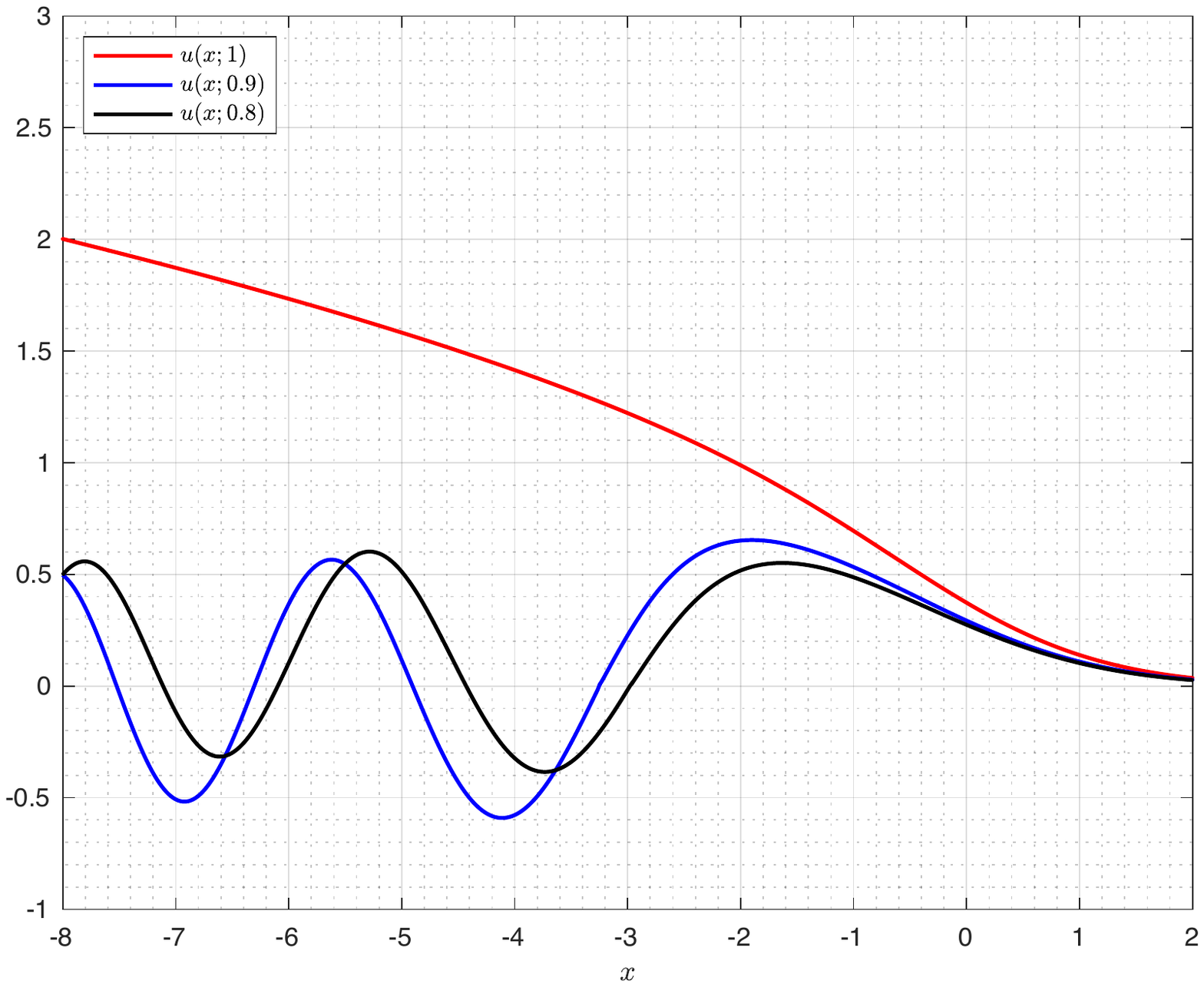}} 
\caption{We compare $\Im\big(y(\frac{x}{2};\gamma)\big)$ on the left to $u(x;\gamma)$ on the right for varying values of $\gamma$. While the solution entry $y(x;\gamma)$ to RHP \ref{master0} decays to zero as $x\rightarrow-\infty$ for all $\gamma\in(0,1]$, this is not true for the Painlev\'e-II transcendent.}
\label{figure111}
\end{figure}
\end{center} 

\subsection{Outline of paper} Towards the end of our introduction we now offer a short outline for the remaining sections of the paper. In section \ref{prelim} we first summarize a few basic properties of the operator $T$ on $L^2(\mathbb{R})$ with kernel \eqref{e:2} and show that $T\chi_t$ is indeed of integrable type \cite{IIKS}, up to Fourier conjugation. Still, instead of deriving an integrable system for $F(t;\gamma)$ at this point we employ further simplification steps in Section \ref{contin} that allow to match the thereby obtained RHP \ref{master} almost immediately with RHP \ref{master0}. This will in turn prove the first part of Theorem \ref{main1} in \eqref{l:41} below once combined with appropriate right tail estimates that we derive by nonlinear steepest descent arguments, compare Subsection \ref{right1}. These steps are then followed up in Section \ref{contin2} by an explicit evaluation of $\Gamma_{t\gamma}$ in \eqref{gdef} in terms of Riemann-Hilbert data, and the second part in \eqref{e:6} is then also proven. While carrying out the aforementioned steps we derive estimate \eqref{e:9} en route and complete the proof of Corollary \ref{main2} afterwards in Section \ref{left1}. The nonlinear steepest descent techniques for the left tail are standard except for the appearance of certain collapsing jump contours. For this reason we provide the necessary small norm estimates of the underlying (unbounded) Cauchy operators in Appendix \ref{appA}. The paper closes with the derivation of \eqref{e:11} in Section \ref{Spohn} which heavily relies on the proof technique presented in \cite{FS} for the corresponding GOE result.

\section{Preliminary steps}\label{prelim}

We begin with the following result which is standard for, say, the Airy operator (see for instance \cite[Lemma $6.15$]{BDS}) but which does not appear in the literature for $T\chi_t$, to the best of our knowledge.
\begin{lem}\label{reg} For every $t\in\mathbb{R}$, the self-adjoint operator $T\chi_t$ satisfies $0\leq T\chi_t\leq 1$ and thus $\|T\chi_t\|\leq 1$. Moreover, $1-\gamma T\chi_t$ is invertible on $L^2(\mathbb{R})$ for all $\gamma\in[0,1]$.
\end{lem}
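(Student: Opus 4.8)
\emph{Proof proposal.} The plan is to build on the manifestly nonnegative structure of the kernel \eqref{e:2}. Setting $\phi_u(x):=\frac{1}{\sqrt{\pi}}\e^{-(x+u)^2}$ one has $T(x,y)=\int_0^{\infty}\phi_u(x)\phi_u(y)\,\d u$, so that $\langle Tf,f\rangle=\int_0^{\infty}|\langle\phi_u,f\rangle|^2\,\d u\ge 0$ for every $f\in L^2(\mathbb{R})$; in particular $T\ge 0$, hence the self-adjoint operator $\chi_t T\chi_t$ is $\ge 0$ as well. Strictly speaking $T\chi_t$ itself is not self-adjoint on $L^2(\mathbb{R})$, but it shares its nonzero spectrum with $\chi_t T\chi_t$ through the elementary identity $\sigma(AB)\cup\{0\}=\sigma(BA)\cup\{0\}$ applied to $A=\chi_t$, $B=T\chi_t$, and on the level of Fredholm determinants $\det(1-\gamma T\chi_t\upharpoonright_{L^2(\mathbb{R})})=\det(1-\gamma\chi_t T\chi_t\upharpoonright_{L^2(\mathbb{R})})$. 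All spectral assertions in the statement are read in this sense, so it suffices to work with $\chi_t T\chi_t$.

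For the upper bound I would enlarge the range of the parameter $u$ to all of $\mathbb{R}$. Let $\widetilde{K}\colon L^2(\mathbb{R})\to L^2(\mathbb{R})$ be given by $(\widetilde{K}f)(u):=\int_{\mathbb{R}}\phi_u(x)f(x)\,\d x$. The substitution $x\mapsto -x$ exhibits $\widetilde{K}$ as the composition of the isometric reflection $f(x)\mapsto f(-x)$ with convolution against $G(s):=\frac{1}{\sqrt{\pi}}\e^{-s^2}$, whose Fourier transform $\widehat{G}(\xi)=\e^{-\xi^2/4}$ obeys $|\widehat{G}(\xi)|\le 1$ with equality only at $\xi=0$. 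Plancherel's theorem then gives $\|\widetilde{K}f\|_{L^2(\mathbb{R})}\le\|f\|_{L^2(\mathbb{R})}$, \emph{strictly} whenever $f\neq 0$. Since the integrand $|\langle\phi_u,\cdot\rangle|^2$ is nonnegative, extending the domain of integration only increases it, whence
\[
	\langle \chi_t T\chi_t f,f\rangle=\int_0^{\infty}\big|\langle\phi_u,\chi_t f\rangle\big|^2\,\d u\le\big\|\widetilde{K}\chi_t f\big\|_{L^2(\mathbb{R})}^2\le\|\chi_t f\|_{L^2(\mathbb{R})}^2\le\|f\|_{L^2(\mathbb{R})}^2.
\]
This yields $0\le\chi_t T\chi_t\le 1$, hence $0\le T\chi_t\le 1$ and $\|T\chi_t\|\le 1$.

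To get invertibility of $1-\gamma T\chi_t$ I would upgrade the endpoint bound. Since $T$ is trace class, $\chi_t T\chi_t$ is compact, so its spectrum is $\{0\}$ together with at most countably many eigenvalues accumulating only at $0$; each lies in $[0,1]$ by the previous step, and none equals $1$: if $\chi_t T\chi_t f=f$ then $f=\chi_t(T\chi_t f)$ is supported in $(t,+\infty)$, so $\chi_t f=f$, and the displayed chain collapses to $\|\widetilde{K}f\|_{L^2(\mathbb{R})}=\|f\|_{L^2(\mathbb{R})}$, forcing $f=0$. Hence $\sigma(\chi_t T\chi_t)\subset[0,1)$ and therefore $\sigma(T\chi_t)\subset\{0\}\cup[0,1)=[0,1)$. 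For $\gamma\in(0,1]$ one has $\gamma^{-1}\ge 1\notin\sigma(T\chi_t)$, so $1-\gamma T\chi_t$ is invertible; the case $\gamma=0$ is trivial.

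The only genuinely non-routine point is the upper bound $T\chi_t\le 1$: since $T\chi_t$ is not self-adjoint one cannot argue directly, and the device is to dominate $T$ by the full-line operator $\widetilde{K}^{\ast}\widetilde{K}$ — i.e.\ to integrate the rank-one pieces $\phi_u\otimes\phi_u$ over all of $\mathbb{R}$ rather than over $u>0$ — and then to recognise $\widetilde{K}$, up to the reflection, as a sub-unitary Gaussian Fourier multiplier. It is worth keeping in mind that it is precisely the \emph{strict} inequality $|\widehat{G}(\xi)|<1$ away from the origin, equivalently injectivity of $\widetilde{K}$, which rules out an eigenvalue at $1$ and hence delivers invertibility at the critical value $\gamma=1$; the remaining ingredients — compactness from the trace-class property and the nonzero-spectrum identity — are entirely standard and mirror the familiar argument for the Airy operator recalled before the statement.
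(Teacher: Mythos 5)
Your proposal is correct, and at its heart follows the same blueprint as the paper's Lemma~\ref{reg}: write $T$ as $\int_0^{\infty}\phi_u\otimes\phi_u\,\d u$, enlarge the $u$-integral from $(0,\infty)$ to all of $\mathbb{R}$, and recognise the resulting full-line operator as (a reflection of) convolution against a Gaussian, whose Fourier multiplier $\e^{-\xi^2/4}$ is bounded by one — this is exactly the chain \eqref{l:2}--\eqref{l:3}. Where you genuinely diverge is in ruling out the endpoint eigenvalue. The paper, after forcing equality, extracts $\int_t^{\infty}\e^{-(x+u)^2}f(x)\,\d x=0$ for $u<0$ and appeals to analyticity in $u$ to extend this to all $u$, whence $T\chi_tf=0$ and $f=0$. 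You instead use the strict multiplier bound $|\widehat{G}(\xi)|<1$ off $\xi=0$, so that $\|\widetilde{K}f\|=\|f\|$ already forces $\widehat{Rf}\equiv0$ and hence $f=0$ by Plancherel, with no analytic continuation step; combined with compactness and the identity $\sigma(AB)\cup\{0\}=\sigma(BA)\cup\{0\}$ this gives $\sigma(T\chi_t)\subset[0,1)$ directly. Both routes are short; yours is marginally more self-contained and isolates the key mechanism (strict sub-unitarity of the Gaussian convolution) more transparently. You also make explicit something the paper glosses over: $T\chi_t$ is not self-adjoint on $L^2(\mathbb{R})$, and the displayed expression in \eqref{l:2} for $\langle f,T\chi_tf\rangle$ is really the formula for $\langle\chi_tf,T\chi_tf\rangle=\langle f,\chi_tT\chi_tf\rangle$; your passage to $\chi_tT\chi_t$ and the determinant identity $\det(1-\gamma T\chi_t)=\det(1-\gamma\chi_tT\chi_t)$ quietly repairs this. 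Overall, a valid and slightly cleaner account of the same result.
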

\begin{proof} Recall the Gaussian integral 
\begin{equation}\label{l:1}
	\e^{-x^2}=\frac{1}{2\sqrt{\pi}}\int_{-\infty}^{\infty}\e^{-\frac{1}{4}y^2+\im xy}\,\d y,\ \ \ x\in\mathbb{R},
\end{equation}
and note that for $f\in L^2(\mathbb{R})$,
\begin{equation}\label{l:2}
	0\leq\langle f,T\chi_tf\rangle_{L^2(\mathbb{R})}=\frac{1}{\pi}\int_0^{\infty}\left|\int_{-\infty}^{\infty}\e^{-(x+u)^2}f_t(x)\,\d x\right|^2\d u\leq\frac{1}{\pi}\int_{-\infty}^{\infty}\left|\int_{-\infty}^{\infty}\e^{-(x+u)^2}f_t(x)\,\d x\right|^2\d u,
\end{equation}
where we abbreviate $f_t(x):=f(x)\chi_{(t,+\infty)}(x)$. Hence, with \eqref{l:1}, we compute
\begin{equation*}
	\int_{-\infty}^{\infty}\e^{-(x+u)^2}f_t(x)\,\d x=\frac{1}{\sqrt{2}}\int_{-\infty}^{\infty}\e^{-\frac{1}{4}y^2}\widehat{f}_t(-y)\e^{\im u y}\,\d y=\sqrt{\pi}\,\widehat{g}(-u),
\end{equation*}
where $\widehat{f}_t(y)=\frac{1}{\sqrt{2\pi}}\int_{-\infty}^{\infty}f_t(x)\e^{-\im yx}\,\d y$ is the Fourier transform of $f_t$ and $g(y):=\e^{-\frac{1}{4}y^2}\widehat{f}_t(-y)$. Thus together in \eqref{l:2},
\begin{eqnarray}
	0\leq\langle f,T\chi_t f\rangle_{L^2(\mathbb{R})}&\leq&\int_{-\infty}^{\infty}|\widehat{g}(-u)|^2\,\d u=\int_{-\infty}^{\infty}|g(y)|^2\,\d y=\int_{-\infty}^{\infty}\e^{-\frac{1}{2}y^2}|\widehat{f}_t(-y)|^2\,\d y\label{l:3}\\
	&\leq &\int_{-\infty}^{\infty}|\widehat{f}_t(-y)|^2\,\d y=\int_{-\infty}^{\infty}|f_t(y)|^2\,\d y\leq \int_{-\infty}^{\infty}|f(y)|^2\,\d y=\langle f,f\rangle_{L^2(\mathbb{R})},\nonumber
\end{eqnarray}
using Plancherel's theorem in the first and third equality. Hence $0\leq T\chi_t\leq 1$ and by self-adjointness also
\begin{equation*}
	 \|T\chi_t\|=\sup_{\|f\|_{L^2(\mathbb{R})}=1}|\langle f,T\chi_t f\rangle_{L^2(\mathbb{R})}|\leq 1.
\end{equation*}
For invertibility, we assume there is $f\in L^2(\mathbb{R})$, not identically zero, such that $\gamma T\chi_tf=f$ with $\gamma\in(0,1]$. By \eqref{l:3} we must therefore have equality
\begin{equation*}
	\langle f,T\chi_tf\rangle_{L^2(\mathbb{R})}=\gamma\int_{-\infty}^{\infty}|\widehat{g}(-u)|^2\,\d u,
\end{equation*}
but from \eqref{l:2} (without estimating) then also
\begin{equation*}
	\langle f,T\chi_t f\rangle_{L^2(\mathbb{R})}=\gamma\int_0^{\infty}|\widehat{g}(-u)|^2\,\d u.
\end{equation*}
So $\widehat{g}(u)=0$ for $u<0$, i.e.
\begin{equation}\label{l:4}
	\int_t^{\infty}\e^{-(x+u)^2}f(x)\,\d x=0\ \ \ \ \textnormal{for}\ \ u<0.
\end{equation}
Using analytic properties of the exponential, we then conclude that \eqref{l:4} must also hold for $u>0$ and therefore $f\equiv 0$, a contradiction.
\end{proof}
Our next steps will make use of a slight generalization of \eqref{l:1}, namely the following contour integral formula: for any smooth non self-intersecting contour $\Gamma$ oriented from $\infty\cdot\e^{\im\alpha}$ to $\infty\cdot\e^{\im\beta}$ with $\alpha\in(\frac{3\pi}{4},\frac{5\pi}{4})$ and $\beta\in(-\frac{\pi}{4},\frac{\pi}{4})$, see e.g. Figure \ref{fig2}, we have
\begin{equation}\label{l:5}
	\e^{-x^2}=\frac{1}{2\sqrt{\pi}}\int_{\Gamma}\e^{-\frac{1}{4}\lambda^2\pm\im x\lambda}\,\d\lambda,\ \ \ \ x\in\mathbb{R}.
\end{equation}
\begin{figure}[tbh]
\begin{tikzpicture}[xscale=0.6,yscale=0.6]
\draw [->] (-6,0) -- (6,0) node[below]{{\small $\Re\lambda$}};
\draw [->] (0,-4) -- (0,4) node[left]{{\small $\Im\lambda$}};
\draw [very thin, dashed, color=darkgray,-] (0,0) -- (3.5,3.5) node[right]{$\frac{\pi}{4}$};
\draw [very thin, dashed, color=darkgray,-] (0,0) -- (3.5,-3.5) node[right]{$-\frac{\pi}{4}$};
\draw [very thin, dashed, color=darkgray,-] (0,0) -- (-3.5,3.5) node[left]{$\frac{3\pi}{4}$};
\draw [very thin, dashed, color=darkgray,-] (0,0) -- (-3.5,-3.5) node[left]{$\frac{5\pi}{4}$};
\draw [fill=cyan, dashed,opacity=0.7] (0,0) -- (2.828427124,-2.828427124) arc (-45:45:4cm) -- (0,0);
\draw [fill=cyan, dashed,opacity=0.7] (0,0) -- (-2.828427124,2.828427124) arc (135:225:4cm) -- (0,0);
\draw [thick,color=red,decoration={markings, mark=at position 0.25 with {\arrow{>}}, mark=at position 0.75 with {\arrow{>}}}, postaction={decorate}] plot [smooth, tension=0.5] coordinates { (-5,1.25) (-4,1) (-3.5,0.8) (-3,0.55) (-2.5,0.2) (-2,-0.2) (-1.5,-0.5) (-1,-0.7) (-0.5,-0.7) (0,-0.5) (0.5,0) (1,0.5) (1.5,0.9) (2,1.25) (2.5,1.55) (3,1.76) (3.5,1.95) (4,2.1) (4.5,2.2)} node[right]{{\small $\Gamma$}};
\end{tikzpicture}
\caption{An admissible choice for the contour $\Gamma$ in \eqref{l:5}.}
\label{fig2}
\end{figure}
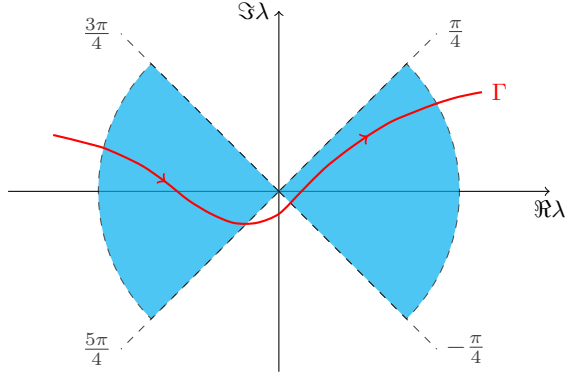

Now fix $(x,y)\in\mathbb{R}^2$ throughout and substitute \eqref{l:5} twice into \eqref{e:2}, once with the $(+)$ sign in \eqref{l:5} and once with the $(-)$ sign:
\begin{equation*}
	T(x,y)=\frac{1}{(2\pi)^2}\int_{\Gamma_{\lambda}}\int_{\Gamma_w}\e^{-\frac{1}{4}(\lambda^2+w^2)}\e^{-\im(x\lambda-yw)}\left[\int_0^{\infty}\e^{-\im u(\lambda-w)}\,\d u\right]\d w\,\d\lambda.
\end{equation*}
So provided we choose $(\lambda,w)\in\Gamma_{\lambda}\times\Gamma_w$ such that $\Im w>\Im\lambda$,
\begin{equation}\label{l:6}
	T(x,y)=\frac{1}{(2\pi)^2}\int_{\Gamma_{\lambda}}\int_{\Gamma_w}\frac{\e^{-\frac{1}{4}(\lambda^2+w^2)-\im(x\lambda-yw)}}{\im(\lambda-w)}\,\d w\,\d\lambda.
\end{equation}
Next we use residue theorem.
\begin{lem}\label{residue} Suppose $w\in\Gamma_w$ satisfies $\Im w>0$. Then for any $y,t\in\mathbb{R}:y\neq t$,
\begin{equation}\label{l:7}
	\frac{1}{2\pi\im}\int_{-\infty}^{\infty}\e^{\im(\mu-w)(y-t)}\frac{\d\mu}{\mu-w}=\chi_{(t,+\infty)}(y).
\end{equation}
\end{lem}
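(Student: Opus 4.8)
\emph{Proof strategy.} The plan is the standard residue computation: regard the integrand as a meromorphic function of $\mu$, and close the contour $\mathbb{R}$ by a large semicircle in whichever half-plane makes the exponential decay. First I would record the two elementary facts that drive everything. Since $\Im w>0$, the function $\mu\mapsto\e^{\im(\mu-w)(y-t)}(\mu-w)^{-1}$ is holomorphic in a neighbourhood of the closed lower half-plane and has exactly one pole in the open upper half-plane, a simple one at $\mu=w$, with
\begin{equation*}
	\res_{\mu=w}\frac{\e^{\im(\mu-w)(y-t)}}{\mu-w}=\lim_{\mu\to w}\e^{\im(\mu-w)(y-t)}=1.
\end{equation*}
Second, on $\mathbb{R}$ the integrand is bounded only by $\e^{\Im w(y-t)}|\mu-w|^{-1}$, so the integral in \eqref{l:7} is conditionally (not absolutely) convergent and is to be read as $\lim_{R\to\infty}\int_{-R}^R$; this is precisely the quantity the contour argument returns.

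\emph{The two cases.} Next I would split according to the sign of $y-t$, the hypothesis $y\neq t$ being exactly what guarantees exponential decay in one of the two half-planes. If $y>t$, then on the upper semicircular arc $C_R^+=\{R\e^{\im\theta}:\theta\in[0,\pi]\}$ one has $|\e^{\im\mu(y-t)}|=\e^{-R(y-t)\sin\theta}\leq 1$, so Jordan's lemma (applied with decaying factor $\e^{\im\mu(y-t)}$, $y-t>0$, and $|\mu-w|^{-1}\leq(R-|w|)^{-1}$) gives $\int_{C_R^+}\to 0$ as $R\to\infty$. Closing $[-R,R]$ with $C_R^+$ produces a positively oriented loop encircling $\mu=w$ once (for $R>|w|$), so the residue theorem together with the residue computed above yields $\frac{1}{2\pi\im}\int_{-R}^R=1-\frac{1}{2\pi\im}\int_{C_R^+}\to 1=\chi_{(t,+\infty)}(y)$. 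If instead $y<t$, the same estimate now applies on the lower arc $C_R^-=\{R\e^{\im\theta}:\theta\in[-\pi,0]\}$, where $|\e^{\im\mu(y-t)}|=\e^{-R(t-y)|\sin\theta|}\leq 1$, so $\int_{C_R^-}\to 0$; closing $[-R,R]$ with $C_R^-$ now produces a clockwise loop enclosing no pole, whence $\frac{1}{2\pi\im}\int_{-R}^R=-\frac{1}{2\pi\im}\int_{C_R^-}\to 0=\chi_{(t,+\infty)}(y)$. Since $y\neq t$ was the only standing hypothesis, this proves \eqref{l:7}.

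\emph{Where the care is needed.} There is no genuine obstacle here; the only points demanding attention are bookkeeping ones: (i) keeping track of the orientation of the closing arc (counterclockwise above, clockwise below), and (ii) noting that the integral is merely conditionally convergent, so that Jordan's lemma must be invoked in its oscillatory-integral form rather than via dominated convergence — equivalently, one could insert a harmless Gaussian regulator $\e^{-\epsilon\mu^2}$ into the integrand, run the residue argument verbatim, and let $\epsilon\downarrow 0$ at the end. Everything else is a one-line application of the residue theorem.
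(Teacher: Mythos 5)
Your proof is correct, and the core idea coincides with the paper's: apply the residue theorem to a closed contour obtained by augmenting $[-R,R]$, verify that the added piece contributes nothing as $R\to\infty$, and read off $1$ or $0$ according to whether the pole at $\mu=w$ is enclosed. The one genuine difference is the shape of the closing contour. You close with a large semicircle and invoke Jordan's lemma to handle the fact that $|\mu-w|^{-1}$ decays only like $R^{-1}$ (so the arc length does not beat the denominator without exploiting the oscillation of $\e^{\im\mu(y-t)}$); you correctly flag that this is a conditional-convergence issue. The paper instead closes with a rectangle $\Sigma_\pm(R)$ of side $R$: on the two vertical sides the integrand is $\mathcal{O}(R^{-1}(y-t)^{-1})$ after integrating the exponential in $\Im\mu$, and on the far horizontal side the exponential factor is $\e^{-R|y-t|}$, so every piece vanishes by elementary pointwise bounds with no appeal to Jordan's lemma. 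The rectangular contour thus buys a slightly more self-contained estimate at the cost of one extra side; your semicircular version is shorter if one is willing to cite Jordan's lemma. Both residue computations agree: you compute the residue of $\e^{\im(\mu-w)(y-t)}/(\mu-w)$ directly as $1$, while the paper computes the residue of $\e^{\im\mu(y-t)}/(\mu-w)$ as $\e^{\im w(y-t)}$, which cancels against the prefactor $\e^{-\im w(y-t)}$ already present in \eqref{l:7}; these are the same thing written two ways.
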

\begin{proof} The integral on the left-hand side in \eqref{l:7} is well-defined under the given assumptions and can be evaluated as follows: define the entire function $f(\mu):=\e^{\im\mu(y-t)},\mu\in\mathbb{C}$ and consider for $y>t$ a contour integral of $f$ along the closed rectangle $\Sigma_+(R)$ shown in Figure \ref{fig3} below where $R>|w|+1$ and $w\in\mathbb{C}:\Im w>0$ is fixed. By residue theorem,
\begin{equation}\label{l:8}
	\frac{1}{2\pi\im}\ointctrclockwise_{\Sigma_+(R)}f(\mu)\frac{\d\mu}{\mu-w}=\e^{\im w(y-t)},
\end{equation}
and since with some $C=C(w)>0$
\begin{equation*}
	\left|\int_{\Sigma_{1,3}}f(\mu)\frac{\d\mu}{\mu-w}\right|\leq \frac{C}{R(y-t)}\rightarrow 0,\ \ \ \ \textnormal{as well as}\ \ \ 
	\left|\int_{\Sigma_2}f(\mu)\frac{\d\mu}{\mu-w}\right|\leq C\e^{-R(y-t)}\rightarrow 0,\ \ \ \textnormal{as} \ \ R\rightarrow+\infty,
\end{equation*}
identity \eqref{l:7} follows for $y>t$ from \eqref{l:8} in the limit $R\rightarrow+\infty$.
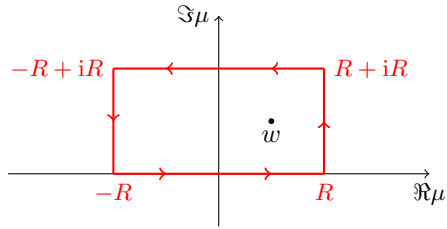
\begin{figure}[tbh]
\begin{tikzpicture}[xscale=0.7,yscale=0.7]
\draw [->] (-4,0) -- (4,0) node[below]{{\small $\Re\mu$}};
\draw [->] (0,-1) -- (0,3) node[left]{{\small $\Im\mu$}};
\draw[thick, color=red, decoration={markings, mark=at position 0.5 with {\arrow{>}}}, postaction={decorate}] (2,0) -- (2,2) node[right]{{\small $R+\im R$}};
\draw[thick, color=red, decoration={markings, mark=at position 0.25 with {\arrow{>}}, mark=at position 0.75 with {\arrow{>}}}, postaction={decorate}] (2,2) -- (-2,2) node[left]{{\small $-R+\im R$}};
\draw[thick, color=red, decoration={markings, mark=at position 0.5 with {\arrow{>}}}, postaction={decorate}] (-2,2) -- (-2,0) node[below]{{\small $-R$}};
\draw[thick, color=red, decoration={markings, mark=at position 0.25 with {\arrow{>}}, mark=at position 0.75 with {\arrow{>}}}, postaction={decorate}] (-2,0) -- (2,0) node[below]{{\small $R$}};
\draw[fill] (1,1) circle [radius=0.045];
\node [below] at (1,1) {$w$};
\end{tikzpicture}
\caption{Red contour $\Sigma_+(R)$ in the upper half plane consisting of two vertical pieces $\Sigma_{1,3}$ (left, right), top piece $\Sigma_2$ as well as $[-R,R]\subset\mathbb{R}$.}
\label{fig3}
\end{figure}

For $y<t$ we evaluate the contour integral \eqref{l:8} along an analogous contour $\Sigma_-(R)$ in the lower half-plane, however this time the residue theorem does not yield a non-vanishing contribution. Hence, \eqref{l:7} follows also for $y<t$ in the limit $R\rightarrow+\infty$.
\end{proof}
Let us now choose $\Gamma_{\lambda}=\mathbb{R}$ in \eqref{l:6} and $\Gamma_w\equiv\Gamma$ as any smooth non self-intersecting contour in the upper $w$ half-plane. Together with Lemma \ref{residue} we find
\begin{equation}\label{l:9}
	T(x,y)\chi_{(t,+\infty)}(y)=\int_{-\infty}^{\infty}\frac{\e^{-\im x\lambda}}{\sqrt{2\pi}}\underbrace{\left[\frac{1}{(2\pi)^2}\int_{\Gamma}\frac{\e^{-\frac{1}{4}(\lambda^2+w^2)-\im t(\mu-w)}}{(\lambda-w)(w-\mu)}\,\d w\right]}_{=:E(\lambda,\mu)}\frac{\e^{\im y\mu}}{\sqrt{2\pi}}\,\d\mu\,\d\lambda.
\end{equation}
Thus, provided we let $\mathcal{F}:L^1(\mathbb{R})\cap L^2(\mathbb{R})\rightarrow L^2(\mathbb{R})$ denote the standard Fourier transform, i.e.
\begin{equation*}
	\widehat{f}(x)\equiv(\mathcal{F}f)(x):=\frac{1}{\sqrt{2\pi}}\int_{-\infty}^{\infty}\e^{-\im x\lambda}f(\lambda)\,\d\lambda,\ \ \ x\in\mathbb{R},
\end{equation*}
which extends to a unitary operator on $L^2(\mathbb{R})$ by classical theory, then \eqref{l:9} shows that $T\chi_t$ on $L^2(\mathbb{R})$ is simply equal to the operator composition $\mathcal{F}E\mathcal{F}^{-1}$ where $E:L^2(\mathbb{R})\rightarrow L^2(\mathbb{R})$ is the integral operator on $L^2(\mathbb{R})$ with kernel $E(\lambda,\mu)$ given in \eqref{l:9}. In order to verify certain regularity properties of $E$ (and other operators to follow) it will be convenient to abide to the following convention.
\begin{conv}\label{id:1}
From now on we shall think of our operators $T\chi_t,E$ and $\mathcal{F}$ as acting, not on $L^2(\mathbb{R},\d x)$, but on an extended space $L^2(\Omega,|\d\lambda|)$ (where $\mathbb{R}\subset\Omega\subset\mathbb{C}$ for some oriented contour $\Omega$ to be specified below and $|\d\lambda|$ is the arc-length measure) and to have kernel
\begin{equation*}
	T_{\textnormal{ext}}(\lambda,\mu):=T(\lambda,\mu)\chi_{(t,+\infty)}(\mu)\chi_{\mathbb{R}}(\lambda),\ \ \ \ (\lambda,\mu)\in\Omega\times\Omega,
\end{equation*}
as well as $E_{\textnormal{ext}}(\lambda,\mu):=E(\lambda,\mu)\chi_{\mathbb{R}}(\lambda)\chi_{\mathbb{R}}(\mu)$ and 
$\mathcal{F}_{\textnormal{ext}}(\lambda,\mu):=\frac{1}{\sqrt{2\pi}}\e^{-\im\lambda\mu}\chi_{\mathbb{R}}(\lambda)\chi_{\mathbb{R}}(\mu)$.
\end{conv}
Provided we modify the distributional kernel of the identity in accordance with Convention \ref{id:1}, equation \eqref{l:9} thus establishes the operator identity,
\begin{equation}\label{Fid:1}
	 1-\gamma T\chi_t\upharpoonright_{L^2(\mathbb{R})}=1-\gamma T_{\textnormal{ext}}\upharpoonright_{L^2(\Omega)}=\mathcal{F}_{\textnormal{ext}}(1-\gamma E_{\textnormal{ext}}\upharpoonright_{L^2(\Omega)})\mathcal{F}_{\textnormal{ext}}^{-1},\ \ \ \ \gamma\in[0,1].
\end{equation}
The main motivation behind Convention \ref{id:1} comes from the following result.
\begin{lem}\label{tracecl1} The operator $E_{\textnormal{ext}}$ on $L^2(\Omega=\mathbb{R}\sqcup\Gamma,|\d\lambda|)$ with kernel given in Convention \ref{id:1} is trace-class.
\end{lem}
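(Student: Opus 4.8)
The statement to prove is that the integral operator $E_{\textnormal{ext}}$ on $L^2(\Omega=\mathbb{R}\sqcup\Gamma,|\d\lambda|)$, with kernel
$$
E_{\textnormal{ext}}(\lambda,\mu)=E(\lambda,\mu)\chi_{\mathbb{R}}(\lambda)\chi_{\mathbb{R}}(\mu),\qquad
E(\lambda,\mu)=\frac{1}{(2\pi)^2}\int_{\Gamma}\frac{\e^{-\frac14(\lambda^2+w^2)-\im t(\mu-w)}}{(\lambda-w)(w-\mu)}\,\d w,
$$
is trace-class. The plan is to exhibit $E_{\textnormal{ext}}$ as a composition of two Hilbert–Schmidt operators. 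Reading off the $w$-integral as a product of two Cauchy-type kernels against a common measure suggests the natural factorization $E_{\textnormal{ext}}=AB$, where (after choosing $\Gamma$ concretely so that it stays a bounded distance away from $\mathbb{R}$, e.g.\ the contour in Figure \ref{fig2} pushed slightly up so $\Im w\ge\delta>0$) one sets
$$
A:L^2(\Gamma)\to L^2(\mathbb{R}),\quad A(\lambda,w)=\frac{1}{2\pi}\,\frac{\e^{-\frac14\lambda^2}}{\lambda-w},\qquad
B:L^2(\mathbb{R})\to L^2(\Gamma),\quad B(w,\mu)=\frac{1}{2\pi}\,\frac{\e^{-\frac14 w^2+\im tw}\e^{-\im t\mu}}{w-\mu}.
$$
(Any bounded splitting of the $\e^{\pm\im tw}$ factors between $A$ and $B$ works; the point is that each kernel carries one Gaussian factor and one simple pole, with the pole never hitting the diagonal because $\mathbb{R}$ and $\Gamma$ are disjoint with positive separation.)

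Thus the first step is to fix an admissible $\Gamma$ with $\mathrm{dist}(\Gamma,\mathbb{R})=:\delta>0$ and $\Re(\frac14 w^2)\to+\infty$ along $\Gamma$ (valid because $\Gamma$ enters the sectors $|\arg w|<\pi/4$ and $|\arg w-\pi|<\pi/4$), so that $\e^{-\frac14 w^2+\im tw}$ decays like a Gaussian in $|w|$ along the contour. The second step is the Hilbert–Schmidt estimate for $A$: since $|\lambda-w|\ge\delta$ on $\mathbb{R}\times\Gamma$,
$$
\|A\|_{\mathrm{HS}}^2=\frac{1}{(2\pi)^2}\int_{\mathbb{R}}\int_{\Gamma}\frac{\e^{-\frac12\lambda^2}}{|\lambda-w|^2}\,|\d w|\,\d\lambda
\le\frac{1}{(2\pi)^2}\Big(\int_{\mathbb{R}}\e^{-\frac12\lambda^2}\,\d\lambda\Big)\Big(\int_{\Gamma}\frac{|\d w|}{\delta^2}\Big)?
$$
— this crude bound fails because $\int_\Gamma|\d w|=\infty$, so instead one must keep the $|\lambda-w|^{-2}$ decay in $w$ at infinity: for $|w|$ large, $|\lambda-w|^{-2}$ is integrable in $\lambda\in\mathbb{R}$ with $\int_{\mathbb{R}}|\lambda-w|^{-2}\,\d\lambda\le \pi/\Im w\le\pi/\delta$, hence $\|A\|_{\mathrm{HS}}^2\le\frac{1}{4\pi\delta}\int_{\Gamma}|\d w|<\infty$ — this still diverges, so the correct route is to use the Gaussian $\e^{-\frac14 w^2}$ which I should move into $A$ after all: redefine $A(\lambda,w)=\frac{1}{2\pi}\frac{\e^{-\frac14\lambda^2}\e^{-\frac18 w^2}}{\lambda-w}$ and $B(w,\mu)=\frac{1}{2\pi}\frac{\e^{-\frac18 w^2+\im tw}\e^{-\im t\mu}}{w-\mu}$; then each Hilbert–Schmidt norm squared is bounded by a product of a convergent Gaussian integral over $\mathbb{R}$ (or over $\Gamma$, using $\Re(\frac18 w^2)\to+\infty$) and the uniformly bounded factor $\sup|\lambda-w|^{-2}\le\delta^{-2}$, which is finite. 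So the key estimates are $\|A\|_{\mathrm{HS}}^2\le\frac{1}{(2\pi)^2\delta^2}\big(\int_{\mathbb{R}}\e^{-\frac12\lambda^2}\d\lambda\big)\big(\int_{\Gamma}\e^{-\frac14\Re w^2}|\d w|\big)<\infty$ and similarly for $B$, using that along $\Gamma$ the quantity $\Re w^2\to+\infty$ makes the contour integral converge.

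The third step is to verify that the product $AB$ genuinely reproduces $E_{\textnormal{ext}}$: this is just Fubini, legitimate because the double integral $\int_{\mathbb{R}}\int_\Gamma |A(\lambda,w)B(w,\mu)|$ is finite by the same Gaussian bounds, and it also justifies the interchange of integration order implicit in writing $E$ as a $w$-integral. Finally, since the trace-class operators form a two-sided ideal and the product of two Hilbert–Schmidt operators is trace-class with $\|AB\|_1\le\|A\|_{\mathrm{HS}}\|B\|_{\mathrm{HS}}$, we conclude $E_{\textnormal{ext}}\in\mathcal{S}_1$. The main obstacle — and the only genuinely delicate point — is the bookkeeping of the Gaussian weights: one must split the single factor $\e^{-\frac14 w^2}$ present in $E$ into two pieces $\e^{-\frac18 w^2}\cdot\e^{-\frac18 w^2}$ so that \emph{both} $A$ and $B$ inherit enough decay along the unbounded contour $\Gamma$ for their Hilbert–Schmidt norms to be finite; the positive separation $\delta=\mathrm{dist}(\mathbb{R},\Gamma)>0$ then handles the poles trivially. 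This is also exactly the place where the choice of the extended space $\Omega=\mathbb{R}\sqcup\Gamma$ in Convention \ref{id:1} pays off, since it lets us regard $A$ and $B$ as honest operators between $L^2$ spaces on disjoint contours.
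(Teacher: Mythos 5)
Your proposal is correct and, after the self-correcting detour, lands on exactly the factorization the paper uses: $E_{\mathrm{ext}} = E_1 E_2$ with the Gaussian $\e^{-\frac14 w^2}$ split evenly as $\e^{-\frac18 w^2}\cdot\e^{-\frac18 w^2}$ between the two Hilbert–Schmidt factors. The paper states the Hilbert–Schmidt estimate more tersely, simply asserting $\int_\Omega\int_\Omega|E_j(z_1,z_2)|^2\,|\d z_1||\d z_2|<\infty$, whereas you spell out why the split is genuinely necessary (the pole alone gives only $|w-\lambda|^{-2}$ decay, insufficient along the unbounded contour $\Gamma$) — but the underlying argument is the same.
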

\begin{proof} Observe the factorization $E_{\textnormal{ext}}=E_1E_2$ where $E_1:L^2(\Omega,|\d\lambda|)\rightarrow L^2(\Omega,|\d\lambda|)$ has kernel
\begin{equation*}
	E_1(\lambda,w)=\frac{1}{2\pi}\frac{\e^{-\frac{1}{4}\lambda^2-\frac{1}{8}w^2}}{\lambda-w}\chi_{\mathbb{R}}(\lambda)\chi_{\Gamma}(w),
\end{equation*}
and $E_2:L^2(\Omega,|\d\lambda|)\rightarrow L^2(\Omega,|\d\lambda|)$ has kernel
\begin{equation*}
	E_2(w,\mu)=\frac{1}{2\pi}\frac{\e^{-\frac{1}{8}w^2-\im t(\mu-w)}}{w-\mu}\chi_{\Gamma}(w)\chi_{\mathbb{R}}(\mu).
\end{equation*}
But both, $E_1$ and $E_2$, are Hilbert-Schmidt integral operators on $L^2(\Omega,|\d\lambda|)$ since
\begin{equation*}
	\int_{\Omega}\int_{\Omega}|E_j(z_1,z_2)|^2|\d z_1|\,|\d z_2|<\infty,\ \ j=1,2,
\end{equation*}
so $E_{\textnormal{ext}}=E_1E_2$ is trace-class on $L^2(\Omega,|\d\lambda|)$.
\end{proof}
Observe that $E_{\textnormal{ext}}$ is already of integrable type in the sense of \cite{IIKS} once we use partial fractions. Still, it is preferable to massage $E_{\textnormal{ext}}$ a bit further: Introduce $M:L^2(\Omega,|\d\lambda|)\rightarrow L^2(\Omega,|\d\lambda|)$ as multiplication by $\e^{-\frac{1}{8}\lambda^2}\chi_{\mathbb{R}}(\lambda)$ and note that both operators $E_{\textnormal{ext}}=(E_{\textnormal{ext}}M)M^{-1}$ and $N:=M^{-1}E_{\textnormal{ext}}M$ are trace-class on $L^2(\Omega,|\d\lambda|)$. 
\begin{rem}\label{tc} The operator $N$ has kernel
\begin{equation*}
	N(\lambda,\mu)=\frac{1}{(2\pi)^2}\int_{\Gamma}\frac{\e^{-\frac{1}{8}(\lambda^2+\mu^2)-\frac{1}{4}w^2+\im t(w-\mu)}}{(\lambda-w)(w-\mu)}\,\d w\,\chi_{\mathbb{R}}(\lambda)\chi_{\mathbb{R}}(\mu),
\end{equation*}
and is therefore trace-class since it can be factorized into $N=N_1N_2$ where $N_j:L^2(\Omega,|\d\lambda|)\rightarrow L^2(\Omega,|\d\lambda|)$ have Hilbert-Schmidt kernels
\begin{equation}\label{l:14}
	N_1(\lambda,w)=\frac{1}{2\pi}\frac{\e^{-\frac{1}{8}(\lambda^2+w^2)}}{\lambda-w}\chi_{\mathbb{R}}(\lambda)\chi_{\Gamma}(w),\ \ \ \ N_2(w,\mu)=\frac{1}{2\pi}\frac{\e^{-\frac{1}{8}(w^2+\mu^2)+\im t(w-\mu)}}{w-\mu}\chi_{\Gamma}(w)\chi_{\mathbb{R}}(\mu).
\end{equation}
\end{rem}\bigskip
Concluding our preliminary steps, we note that by the conjugation invariance of the Fredholm determinant and Sylvester's determinant identity \cite[Chapter IV, $(5.9)$]{GK},
\begin{eqnarray}
	\det(1-\gamma T\chi_t\upharpoonright_{L^2(\mathbb{R})})&\stackrel{\eqref{Fid:1}}{=}&\det(1-\gamma E_{\textnormal{ext}}\upharpoonright_{L^2(\Omega)})=\det(1-\gamma(E_{\textnormal{ext}}M)M^{-1}\upharpoonright_{L^2(\Omega)})\nonumber\\
	&=&\det(1-\gamma M^{-1}E_{\textnormal{ext}}M\upharpoonright_{L^2(\Omega)})
	=\det(1-\gamma N\upharpoonright_{L^2(\Omega)}).\label{Fid:2}
\end{eqnarray}
\section{Riemann-Hilbert problem and proof of Theorem \ref{main1}, part 1}\label{contin}
The seemingly quickest way to derive the first parts of \eqref{e:6} makes use of the factorization $N=N_1N_2$ in \eqref{l:14} and a subsequent operator identity that further simplifies our above $1-N\upharpoonright_{L^2(\Omega)}$. 
\subsection{Fredholm determinant identities}
We begin with the following Lemma which improves the statement of Remark \ref{tc} at the cost of further extending $\Omega$.
\begin{lem}\label{alltrace} The integral operators $N_j:L^2(\mathring{\Omega},|\d\lambda|)\rightarrow L^2(\mathring{\Omega},|\d\lambda|)$ with kernels \eqref{l:14} defined on the extended space $\mathring{\Omega}=\Omega\sqcup(\mathbb{R}+\im\delta)$ for some sufficiently small $\delta>0$ are trace-class.
\end{lem}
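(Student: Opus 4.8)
The plan is to strengthen the Hilbert--Schmidt statement of Remark~\ref{tc} for the kernels \eqref{l:14} by exploiting the extra component $\mathbb{R}+\im\delta$ of $\mathring{\Omega}$ as an intermediate ``landing space'' into which each $N_j$ factors as a product of two Hilbert--Schmidt operators; since such a product is trace-class, this yields the claim. The factorization rests on a Cauchy-type resolution of the factor $(\lambda-w)^{-1}$ present in \eqref{l:14}. Recall that the contour $\Gamma$ in \eqref{l:9}--\eqref{l:14} may be taken to be \emph{any} smooth non self-intersecting curve in the open upper half-plane running to infinity in directions $\e^{\im\alpha}$, $\e^{\im\beta}$ with $\alpha\in(\frac{3\pi}{4},\frac{5\pi}{4})$, $\beta\in(-\frac{\pi}{4},\frac{\pi}{4})$ as in \eqref{l:5}; I first fix such a $\Gamma$ with the additional property $d:=\inf_{w\in\Gamma}\Im w>0$ (an admissible choice, for instance a curve asymptotic to the rays $\arg w=\frac{5\pi}{6}$ and $\arg w=\frac{\pi}{6}$), and then pick $\delta\in(0,d)$, so that $\mathbb{R}+\im\delta$ lies strictly between $\mathbb{R}$ and $\Gamma$.

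The elementary fact driving the argument is that for $\lambda\in\mathbb{R}$ and $w\in\Gamma$ the rational function $s\mapsto[(\lambda-s)(s-w)]^{-1}$ is $O(|s|^{-2})$ and has, strictly above the line $\mathbb{R}+\im\delta$, only the simple pole $s=w$, so the residue theorem gives
\begin{equation*}
	\frac{1}{\lambda-w}=\frac{1}{2\pi\im}\int_{\mathbb{R}+\im\delta}\frac{\d s}{(\lambda-s)(s-w)},
\end{equation*}
together with a companion identity of the same type for $(w-\mu)^{-1}$ with $w\in\Gamma$, $\mu\in\mathbb{R}$. Substituting these into \eqref{l:14} and distributing the Gaussian and the $\e^{\pm\im t(\cdot)}$ weights, one obtains $N_1=A_1A_2$ and $N_2=A_3A_4$ as operators on $L^2(\mathring{\Omega})$, where $A_1,A_3\colon L^2(\mathbb{R}+\im\delta)\to L^2(\mathbb{R})$ have kernels of the shape $c\,\e^{-\frac18\lambda^2\mp\im t\lambda}(\lambda-s)^{-1}$ and $A_2,A_4\colon L^2(\Gamma)\to L^2(\mathbb{R}+\im\delta)$ have kernels of the shape $c\,\e^{-\frac18 w^2\pm\im tw}(s-w)^{-1}$, all extended by zero to the remaining components of $\mathring{\Omega}$; interchanging the $\d s$-integration with the operator action is legitimate since, pointwise, the $\d s$-integral of the product of the two relevant kernel moduli is finite by Cauchy--Schwarz.

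It then remains to verify that each $A_i$ is Hilbert--Schmidt on $L^2(\mathring{\Omega})$. For the ``outer'' factors $A_1,A_3$ one has $|\lambda-s|\ge\delta$ for $\lambda\in\mathbb{R}$, $s\in\mathbb{R}+\im\delta$, and $\int_{\mathbb{R}+\im\delta}|\lambda-s|^{-2}\,|\d s|=\pi/\delta$, hence $\iint|A_i|^2\lesssim\delta^{-1}\int_{\mathbb{R}}\e^{-\frac14\lambda^2}\,\d\lambda<\infty$ (the weight $\e^{\mp\im t\lambda}$ is unimodular on $\mathbb{R}$). For the ``inner'' factors $A_2,A_4$ one has $|s-w|\ge d-\delta>0$ for $s\in\mathbb{R}+\im\delta$, $w\in\Gamma$, with $\int_{\mathbb{R}+\im\delta}|s-w|^{-2}\,|\d s|\le\pi/(d-\delta)$ uniformly in $w$, while along $\Gamma$ one has $\Re(w^2)\ge c_0|w|^2$ for some $c_0>0$ and $|\Im w|\le|w|$ for $|w|$ large, so that $\bigl|\e^{-\frac18 w^2\pm\im tw}\bigr|\le\e^{-\frac{c_0}{8}|w|^2+|t|\,|w|}$ is integrable for arc length on $\Gamma$; hence $\iint|A_i|^2\lesssim(d-\delta)^{-1}\int_{\Gamma}\e^{-\frac14\Re(w^2)+2|t|\,|\Im w|}\,|\d w|<\infty$. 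Therefore each $N_j$ is a composition of two Hilbert--Schmidt operators, hence trace-class on $L^2(\mathring{\Omega})$. I expect no genuinely hard analytic step; the real content is the observation that on $L^2(\Omega)$ each of $N_1,N_2$ sends one component of the contour into another and annihilates the rest, so no Hilbert--Schmidt$\,\times\,$Hilbert--Schmidt splitting is possible there --- this is precisely why one must pass to $\mathring{\Omega}$, and the only thing to watch is that $\delta$ be taken with $\mathbb{R}+\im\delta$ strictly separating $\mathbb{R}$ from $\Gamma$, which is exactly what keeps the Cauchy kernels $(\lambda-s)^{-1}$, $(s-w)^{-1}$ bounded in the estimates above.
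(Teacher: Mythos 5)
Your proof is correct and follows essentially the same route as the paper: resolve $(\lambda-w)^{-1}$ by a Cauchy integral over the intermediate line $\mathbb{R}+\im\delta$, push that line into the kernel to factor each $N_j$ as a product of two operators whose kernels pair $\mathbb{R}\leftrightarrow\mathbb{R}+\im\delta$ and $\mathbb{R}+\im\delta\leftrightarrow\Gamma$, split the Gaussian weights between the factors, and check each factor is Hilbert--Schmidt. The only difference is cosmetic: you make explicit the (implicitly assumed in the paper) choice of $\Gamma$ with $\inf_{w\in\Gamma}\Im w>0$ so that a single $\delta$ works uniformly, and you spell out the Hilbert--Schmidt integrals, which the paper leaves to the reader.
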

\begin{proof} We keep $\lambda\in\mathbb{R}$ and $w\in\Gamma$ (as before with $\Gamma$ any smooth non self-intersecting contour in the upper half-plane) fixed throughout. Now choose $0<\delta<\Im w$ and observe that 
\begin{equation}\label{l:16}
	\frac{1}{2\pi\im}\int_{\mathbb{R}+\im\delta}\frac{\d\nu}{(\lambda-\nu)(\nu-w)}=\frac{1}{\lambda-w}.
\end{equation}
Indeed, the integral in the left-hand side of \eqref{l:16} is well-defined and with $\Sigma(R)$ as shown in Figure \ref{fig4} below, where $R>|w|+1$, we find from residue theorem
\begin{equation}\label{l:17}
	\frac{1}{2\pi\im}\ointctrclockwise_{\Sigma(R)}\frac{\d\nu}{(\lambda-\nu)(\nu-w)}=\frac{1}{\lambda-w}.
\end{equation}
But with some $C=C(w,\lambda)>0$,
\begin{equation*}
	\left|\int_{\Sigma_{1,3}}\frac{\d\nu}{(\lambda-\nu)(\nu-w)}\right|\leq\frac{C}{R}\rightarrow 0,\ \ \ \ \textnormal{as well as}\ \ \ \ \left|\int_{\Sigma_2}\frac{\d\nu}{(\lambda-\nu)(\nu-w)}\right|\leq\frac{C}{R}\rightarrow 0,\ \ \ \textnormal{as}\ \ R\rightarrow+\infty,
\end{equation*}
so identity \eqref{l:16} follows from \eqref{l:17} in the limit $R\rightarrow+\infty$.
\begin{figure}[tbh]
\begin{tikzpicture}[xscale=0.7,yscale=0.7]
\draw [->] (-4,0) -- (4,0) node[below]{{\small $\Re\nu$}};
\draw [->] (0,-1) -- (0,4) node[left]{{\small $\Im\nu$}};
\draw [thick, color=red, decoration={markings, mark=at position 0.25 with {\arrow{>}}, mark=at position 0.75 with {\arrow{>}}}, postaction={decorate}] (-2,1) -- (2,1) node[right]{{\small $R+\im\delta$}};
\draw [thick, color=red, decoration={markings, mark=at position 0.5 with {\arrow{>}}}, postaction={decorate}] (2,1) -- (2,3) node[right]{{\small $R+\im(\delta+R)$}};
\draw [thick, color=red, decoration={markings, mark=at position 0.25 with {\arrow{>}}, mark=at position 0.75 with {\arrow{>}}}, postaction={decorate}] (2,3) -- (-2,3) node[left]{{\small $-R+\im(\delta+R)$}};
\draw [thick, color=red, decoration={markings, mark=at position 0.5 with {\arrow{>}}}, postaction={decorate}] (-2,3) -- (-2,1) node[left]{{\small $-R+\im\delta$}};
\draw[fill] (1,0) circle [radius=0.045];
\node [above] at (1,0) {$\lambda$};
\draw[fill] (-1,2) circle [radius=0.045];
\node [below] at (-1,2) {$w$};
\end{tikzpicture}
\caption{Red contour $\Sigma(R)$ in the upper half plane consisting of two vertical pieces $\Sigma_{1,3}$ (left, right), top piece $\Sigma_2$ as well as $[-R+\im\delta,R+\im\delta]\subset\mathbb{R}+\im\delta$.}
\label{fig4}
\end{figure}
Now use \eqref{l:16} and rewrite \eqref{l:14} in the following way,
\begin{equation*}
	N_1(\lambda,w)=\frac{1}{2\pi\im}\int_{\mathbb{R}+\im\delta}\frac{\e^{-\frac{1}{8}(\lambda^2+w^2)}}{2\pi(\lambda-\nu)(\nu-w)}\,\d\nu\,\chi_{\mathbb{R}}(\lambda)\chi_{\Gamma}(w),
\end{equation*}
so that $N_1=N_{11}N_{12}$ where $N_{1j}:L^2(\mathring{\Omega},|\d\lambda|)\rightarrow L^2(\mathring{\Omega},|\d\lambda|)$ have Hilbert-Schmidt kernels
\begin{equation*}
	N_{11}(\lambda,\nu)=\frac{1}{2\pi\im}\frac{\e^{-\frac{1}{8}\lambda^2}}{\lambda-\nu}\chi_{\mathbb{R}}(\lambda)\chi_{\mathbb{R}+\im\delta}(\nu),\ \ \ \ \ \ N_{12}(\nu,w)=\frac{1}{2\pi}\frac{\e^{-\frac{1}{8}w^2}}{\nu-w}\chi_{\mathbb{R}+\im\delta}(\nu)\chi_{\Gamma}(w).
\end{equation*}
Similarly with \eqref{l:17} for $N_2(w,\mu)$ in \eqref{l:14},
\begin{equation*}
	N_2(w,\mu)=\frac{1}{2\pi\im}\int_{\mathbb{R}+\im\delta}\frac{\e^{-\frac{1}{8}(w^2+\mu^2)+\im t(w-\mu)}}{(\nu-\mu)(\nu-w)}\,\d\nu\,\chi_{\Gamma}(w)\chi_{\mathbb{R}}(\mu),
\end{equation*}
and thus $N_2=N_{21}N_{22}$ with $N_{2j}:L^2(\mathring{\Omega},|\d\lambda|)\rightarrow L^2(\mathring{\Omega},|\d\lambda|)$ once more Hilbert-Schmidt,
\begin{equation*}
	N_{21}(w,\nu)=\frac{1}{2\pi\im}\frac{\e^{-\frac{1}{8}w^2+\im tw}}{\nu-w}\chi_{\Gamma}(w)\chi_{\mathbb{R}+\im\delta}(\nu),\ \ \ \ \ \ N_{22}(\nu,\mu)=\frac{1}{2\pi}\frac{\e^{-\frac{1}{8}\mu^2-\im t\mu}}{\nu-\mu}\chi_{\mathbb{R}+\im\delta}(\nu)\chi_{\mathbb{R}}(\mu).
\end{equation*}
Hence, $N_1$ and $N_2$ are trace-class on $L^2(\mathring{\Omega},|\d\lambda|)$ as claimed.
\end{proof}
The last Lemma allows us to compute for $j=1,2$,
\begin{equation*}
	\tr_{L^2(\mathring{\Omega})} N_j=\int_{\mathring{\Omega}}N_j(\lambda,\lambda)\,\d\lambda=0,
\end{equation*}
i.e. $N_j$ are traceless and even more, they are nilpotent on $L^2(\mathring{\Omega},|\d\lambda|)$ with $N_j^2=0$ (simply recall that $\mathbb{R}$ is disjoint from $\Gamma$). The last observation lies at the heart of the following useful operator identity.
\begin{lem}\label{clean} We have on $L^2(\mathring{\Omega},|\d\lambda|)$,
\begin{equation}\label{neat}
	(1-\sqrt{\gamma}\,N_1)^{-1}\big(1-\sqrt{\gamma}\,(N_1+N_2)\big)(1-\sqrt{\gamma}\,N_2)^{-1}=1-\gamma N_1N_2.
\end{equation}
\end{lem}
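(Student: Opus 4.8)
The plan is to exploit the nilpotency $N_1^2=N_2^2=0$ established immediately above, which reduces the identity \eqref{neat} to a finite algebraic manipulation in the (non-commutative) ring of bounded operators on $L^2(\mathring{\Omega},|\d\lambda|)$. Since the $N_j$ are trace-class by Lemma \ref{alltrace}, hence bounded, every composition appearing in \eqref{neat} is well defined once we know the inverses exist.

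First I would observe that because $N_j^2=0$, the operator $1-\sqrt{\gamma}\,N_j$ is invertible with explicit inverse $1+\sqrt{\gamma}\,N_j$; indeed
\[
	(1-\sqrt{\gamma}\,N_j)(1+\sqrt{\gamma}\,N_j)=(1+\sqrt{\gamma}\,N_j)(1-\sqrt{\gamma}\,N_j)=1-\gamma N_j^2=1,\ \ \ \ j=1,2.
\]
This is nothing but the fact that the Neumann series for $(1-\sqrt{\gamma}\,N_j)^{-1}$ terminates after one term, and it rigorously identifies the two resolvents on the left-hand side of \eqref{neat}.

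Next I would substitute these expressions and expand the triple product
\[
	(1+\sqrt{\gamma}\,N_1)\big(1-\sqrt{\gamma}\,(N_1+N_2)\big)(1+\sqrt{\gamma}\,N_2),
\]
keeping track of the order of the factors since $N_1$ and $N_2$ do not commute. Multiplying out the first two factors, the terms $\pm\sqrt{\gamma}\,N_1$ cancel and $\gamma N_1^2=0$ drops out, leaving the middle block equal to $1-\sqrt{\gamma}\,N_2-\gamma N_1N_2$. Multiplying this on the right by $1+\sqrt{\gamma}\,N_2$, the terms $\pm\sqrt{\gamma}\,N_2$ cancel while $\gamma N_2^2=0$ and $\gamma^{3/2}N_1N_2^2=0$ both vanish, so that only $1-\gamma N_1N_2$ survives. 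This establishes \eqref{neat}.

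There is essentially no analytic obstacle: once the nilpotency $N_j^2=0$ (valid precisely because $\mathbb{R}$ and $\Gamma$ are disjoint components of $\mathring{\Omega}$) is in hand, the proof is a short formal computation. The only points worth a word of care are the boundedness of the $N_j$, which makes the compositions meaningful, and the non-commutative bookkeeping, which is what ensures that exactly the cross term $N_1N_2$ — and not $N_2N_1$ — appears, in agreement with the right-hand side of \eqref{neat}.
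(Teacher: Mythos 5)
Your proof is correct and follows the paper's own argument essentially verbatim: identify $(1-\sqrt{\gamma}\,N_j)^{-1}=1+\sqrt{\gamma}\,N_j$ via nilpotency, then expand the triple product and cancel using $N_j^2=0$. You merely carry out explicitly the bookkeeping the paper condenses into "multiply out and simplify."
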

\begin{proof} By nilpotency of $N_j$ we know that $1-N_j$ is invertible on $L^2(\mathring{\Omega},|\d\lambda|)$, in fact
\begin{equation*}
	(1-\sqrt{\gamma}\,N_j\upharpoonright_{L^2(\mathring{\Omega})})^{-1}=1+\sqrt{\gamma}\,N_j\upharpoonright_{L^2(\mathring{\Omega})},\ \ j=1,2.
\end{equation*}
Now substitute the latter into the left-hand side of \eqref{neat}, multiply out and simplify using $N_j^2=0$. The desired equality follows at once.
\end{proof}
Since all factors in \eqref{neat} are of the form identity plus trace-class (by Lemma \ref{alltrace} and the triangle inequality for the trace norm) we are allowed to use the multiplicative nature of the Fredholm determinant \cite[Chapter II, $(5.1)$]{GK} and conclude from \eqref{neat},
\begin{equation}\label{neat:2}
	\det(1+\sqrt{\gamma}N_1\upharpoonright_{L^2(\mathring{\Omega})})\det\big(1-\sqrt{\gamma}(N_1+N_2)\upharpoonright_{L^2(\mathring{\Omega})}\big)\det(1+\sqrt{\gamma}N_2\upharpoonright_{L^2(\mathring{\Omega})})=\det(1-\gamma N_1N_2\upharpoonright_{L^2(\mathring{\Omega})}).
\end{equation}
But from the Plemelj-Smithies formula \cite[Chapter II, Theorem $3.1$]{GK} and our previous comments about vanishing traces and nilpotency of $N_j$,
\begin{equation}\label{neat:3}
	\det(1-\sqrt{\gamma}N_j\upharpoonright_{L^2(\mathring{\Omega})})=\exp\left[-\sum_{k=1}^{\infty}\frac{\gamma^{\frac{k}{2}}}{k}\tr_{L^2(\mathring{\Omega})} N_j^k\right]=1,\ \ \ \ j=1,2,
\end{equation}
i.e. we have just established the following Fredholm determinant equality.
\begin{prop}\label{cool} We have 
\begin{equation}\label{neat:4}
	\det(1-\gamma T\chi_t\upharpoonright_{L^2(\mathbb{R})})=\det(1-G\upharpoonright_{L^2(\mathring{\Omega})}),
\end{equation}
where $G:L^2(\mathring{\Omega},|\d\lambda|)\rightarrow L^2(\mathring{\Omega},|\d\lambda|)$ is trace class with kernel
\begin{equation*}
	G(\lambda,\mu;\gamma)=\sqrt{\gamma}\big(N_1(\lambda,\mu)+N_2(\lambda,\mu)\big)=\frac{{\bf f}^{\intercal}(\lambda){\bf g}(\mu)}{\lambda-\mu},\ \ \ \ (\lambda,\mu)\in\mathring{\Omega}\times\mathring{\Omega},
\end{equation*}
and
\begin{equation}\label{l:21} 
	{\bf f}(\lambda)=\sqrt{\frac{\gamma}{2\pi}}\,\e^{-\frac{1}{8}\lambda^2}\begin{bmatrix}\chi_{\mathbb{R}}(\lambda)\\ \e^{\im t\lambda}\chi_{\Gamma}(\lambda)\end{bmatrix},\ {\bf g}(\mu)=\frac{\e^{-\frac{1}{8}\mu^2}}{\sqrt{2\pi}}\begin{bmatrix}\chi_{\Gamma}(\mu)\\ \e^{-\im t\mu}\chi_{\mathbb{R}}(\mu)\end{bmatrix}.
\end{equation}
\end{prop}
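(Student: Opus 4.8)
The proof will essentially be a bookkeeping exercise that chains together the identities already assembled above, so the plan is short. First I would invoke the determinant chain \eqref{Fid:2}, which gives $\det(1-\gamma T\chi_t\upharpoonright_{L^2(\mathbb{R})})=\det(1-\gamma N\upharpoonright_{L^2(\Omega)})$ via the conjugation invariance of the Fredholm determinant and the similarity $N=M^{-1}E_{\textnormal{ext}}M$. Since $N=N_1N_2$ has a kernel supported on $\mathbb{R}\times\mathbb{R}$, the operator $1-\gamma N_1N_2$ is block diagonal for the splitting $L^2(\mathbb{R})\oplus L^2(\mathring\Omega\setminus\mathbb{R})$, acting as the identity off the $\mathbb{R}$-block; hence enlarging the contour from $\Omega$ to the $\mathring\Omega$ of Lemma \ref{alltrace} does not change the determinant, and I would pass to $L^2(\mathring\Omega)$ once and for all. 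The point of this move is that on $L^2(\mathring\Omega)$ each $N_j$ is \emph{individually} trace-class (Lemma \ref{alltrace}), whereas on $L^2(\Omega)$ it is only Hilbert--Schmidt --- and trace-class is exactly what licenses the next two steps.

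Second, I would apply $\det$ to the operator identity \eqref{neat} of Lemma \ref{clean}. Every factor there is of the form ``identity plus trace-class'' --- for $(1-\sqrt{\gamma}\,N_j)^{\pm1}=1\pm\sqrt{\gamma}\,N_j$ this uses nilpotency $N_j^2=0$ --- so the multiplicative property of the Fredholm determinant yields \eqref{neat:2}. Then I would evaluate the two outer determinants by the Plemelj--Smithies formula: because $N_j(\lambda,\lambda)\equiv0$ on $\mathring\Omega$ (the line $\mathbb{R}$ being disjoint from $\Gamma$) and $N_j^2=0$, all the traces $\tr N_j^k$ vanish and hence $\det(1\pm\sqrt{\gamma}\,N_j\upharpoonright_{L^2(\mathring\Omega)})=1$, which is \eqref{neat:3}. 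Cancelling these two unit factors in \eqref{neat:2} gives $\det(1-\gamma N_1N_2\upharpoonright_{L^2(\mathring\Omega)})=\det\big(1-\sqrt{\gamma}\,(N_1+N_2)\upharpoonright_{L^2(\mathring\Omega)}\big)$, and since the left-hand side equals $\det(1-\gamma T\chi_t\upharpoonright_{L^2(\mathbb{R})})$ by the first step, identity \eqref{neat:4} follows with $G:=\sqrt{\gamma}\,(N_1+N_2)$.

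Third and last, I would identify the kernel of $G$ with the claimed integrable form by simply summing the two expressions in \eqref{l:14}: both share the Cauchy denominator $(\lambda-\mu)^{-1}$ and the symmetric Gaussian weight $\frac{1}{2\pi}\e^{-\frac18(\lambda^2+\mu^2)}$, so
\[
	G(\lambda,\mu;\gamma)=\frac{\sqrt{\gamma}}{2\pi(\lambda-\mu)}\,\e^{-\frac18(\lambda^2+\mu^2)}\Big(\chi_{\mathbb{R}}(\lambda)\chi_{\Gamma}(\mu)+\e^{\im t(\lambda-\mu)}\chi_{\Gamma}(\lambda)\chi_{\mathbb{R}}(\mu)\Big)=\frac{{\bf f}^{\intercal}(\lambda){\bf g}(\mu)}{\lambda-\mu}
\]
with ${\bf f},{\bf g}$ exactly as in \eqref{l:21}; trace-class-ness of $G$ is inherited from $N_1,N_2$ via the triangle inequality for the trace norm. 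None of this uses $\|\gamma T\chi_t\|<1$, so the statement holds uniformly for $\gamma\in[0,1]$, the case $\gamma=1$ included. The only point demanding genuine care --- and the reason Lemma \ref{alltrace} had to enlarge the contour --- is precisely the trace-class bookkeeping in the first two steps: the conceptual content (that $T\chi_t$ is of integrable type up to Fourier conjugation, and the algebraic identity \eqref{neat}) has already been discharged in Section \ref{prelim} and above, so the proposition itself is a short corollary.
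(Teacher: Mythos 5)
Your proof is correct and takes essentially the same route as the paper: the paper's proof of Proposition \ref{cool} is a one-line chaining of \eqref{Fid:2}, \eqref{neat:2}, \eqref{neat:3} followed by reading off \eqref{l:21} from \eqref{l:14}, and your three steps reproduce exactly that. The one place where you are more explicit than the paper --- correctly noting that $N=N_1N_2$ has kernel supported on $\mathbb{R}\times\mathbb{R}$ so the determinant is unchanged when passing from $L^2(\Omega)$ in \eqref{Fid:2} to $L^2(\mathring\Omega)$ in \eqref{neat:2} --- is a genuine small bookkeeping point that the paper leaves implicit, and your treatment of it is right.
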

\begin{proof} Use \eqref{Fid:2}, \eqref{neat:2} and \eqref{neat:3}. The formul\ae\,\eqref{l:21} for the kernel of $G=\sqrt{\gamma}(N_1+N_2)$ are read off from \eqref{l:14} and this completes our proof.
\end{proof}

\subsection{Riemann-Hilbert problem} The following RHP is our starting point for the derivation of the ZS system \eqref{dNLS} and in turn \eqref{e:6}. This problem is naturally associated with the integrable operator $G$ in \eqref{neat:4} by classical theory, cf. \cite{IIKS}.
\begin{problem}[Its, Izergin, Korepin, Slavnov \cite{IIKS}, 1990]\label{master} For any $t\in\mathbb{R},\gamma\in[0,1]$, determine ${\bf Y}(z)={\bf Y}(z;t,\gamma)\in\mathbb{C}^{2\times 2}$ such that
\begin{enumerate}
	\item[(1)] ${\bf Y}(z)$ is analytic for $z\in\mathbb{C}\setminus\Omega$ and we orient $\Omega=\mathbb{R}\sqcup\Gamma$ from ``left to right" as shown in Figure \ref{fig5} below.
	\item[(2)] The boundary values ${\bf Y}_{\pm}(z)$ from the left/right side of the oriented contour $\Omega$ exist and are related by the jump condition
	\begin{equation*}
		{\bf Y}_+(z)
		={\bf Y}_-(z)\begin{bmatrix}1 & -\im\sqrt{\gamma}\,\e^{-\frac{1}{4}z^2-\im tz}\chi_{\mathbb{R}}(z)\\ -\im\sqrt{\gamma}\,\e^{-\frac{1}{4}z^2+\im tz}\chi_{\Gamma}(z) & 1\end{bmatrix},\ z\in\Omega.
	\end{equation*}
	\item[(3)] As $z\rightarrow\infty$, we have
	\begin{equation*}
		{\bf Y}(z)=\mathbb{I}+{\bf Y}_1z^{-1}+\mathcal{O}\big(z^{-2}\big),\ \ \ \ \ {\bf Y}_1={\bf Y}_1(t,\gamma)=\big[Y_1^{jk}(t,\gamma)\big]_{j,k=1}^2.
	\end{equation*}
\end{enumerate}
\end{problem}
\begin{figure}[tbh]
\begin{tikzpicture}[xscale=0.7,yscale=0.7]
\draw [->] (-5,0) -- (5,0) node[below]{{\small $\Re z$}};
\draw [->] (0,-1) -- (0,4) node[left]{{\small $\Im z$}};
\draw [very thin, dashed, color=darkgray,-] (0,0) -- (3.5,3.5) node[right]{$\frac{\pi}{4}$};
\draw [very thin, dashed, color=darkgray,-] (0,0) -- (-3.5,3.5) node[left]{$\frac{3\pi}{4}$};
\draw [fill=cyan, dashed,opacity=0.7] (0,0) -- (2.828427124,2.828427124) arc (45:0:4cm) -- (0,0);
\draw [fill=cyan, dashed,opacity=0.7] (0,0) -- (-2.828427124,2.828427124) arc (135:180:4cm) -- (0,0);
\draw [thick, color=red, decoration={markings, mark=at position 0.25 with {\arrow{>}}}, decoration={markings, mark=at position 0.75 with {\arrow{>}}}, postaction={decorate}] (-4.5,0) -- (4.5,0) node[above]{{\small $\mathbb{R}$}};
\draw [thick,color=red,decoration={markings, mark=at position 0.25 with {\arrow{>}}, mark=at position 0.85 with {\arrow{>}}}, postaction={decorate}] plot [smooth, tension=0.5] coordinates { (-5,2.75) (-4,2.5) (-3.5,2.3) (-3,2.05) (-2.5,1.7) (-2,1.3) (-1.5,1) (-1,0.8) (-0.5,0.75) (0,0.8) (0.5,1) (1,1.3) (1.5,1.6) (2,1.85) (2.5,2.05) (3,2.26) (3.5,2.45) (4,2.6) (4.5,2.7)} node[right]{{\small $\Gamma$}};
\end{tikzpicture}
\caption{The oriented jump contour $\Omega=\mathbb{R}\sqcup\Gamma$ in RHP \ref{master} with an admissible choice for $\Gamma$ in the upper half-plane.}
\label{fig5}
\end{figure}
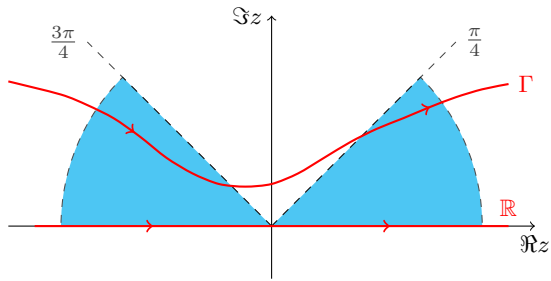
The general theory of \cite{IIKS}, see also \cite[Section $5.6$]{BDS}, asserts that for an integrable integral operator (such as our $G$ on $L^2(\mathring{\Omega},|\d\lambda|)$) its resolvent $R=(1-G)^{-1}-1$, if existent, is again of integrable type with kernel
\begin{equation}\label{l:24}
	R(\lambda,\mu;\gamma)=\frac{{\bf F}^{\intercal}(\lambda){\bf G}(\mu)}{\lambda-\mu},\ \ \ {\bf F}(\lambda)=\big((1-G\upharpoonright_{L^2(\mathring{\Omega})})^{-1}{\bf f}\big)(\lambda),\ \ {\bf G}(\mu)=\big((1-G^{\ast}\upharpoonright_{L^2(\mathring{\Omega})}{\bf g}\big)(\mu).
\end{equation}
Most importantly $R$ can be computed in terms of the solution to RHP \ref{master},
\begin{equation}\label{IIKSform}
	{\bf F}(z)={\bf Y}_{\pm}(z){\bf f}(z),\ \ \ \ {\bf G}(z)=\big({\bf Y^{\intercal}}_{\pm}(z)\big)^{-1}{\bf g}(z),\ \ \ \ z\in\Omega,
\end{equation}
where the choice of limiting values $(\pm)$ is immaterial. Thus, RHP \ref{master} is (uniquely) solvable if and only if $1-G$ is invertible on $L^2(\mathring{\Omega})$. Furthermore, the solution ${\bf Y}(z)$ to RHP \ref{master} takes the form
\begin{equation}\label{l:25}
	{\bf Y}(z)=\mathbb{I}-\int_{\Omega}{\bf F}(\lambda){\bf g}^{\intercal}(\lambda)\frac{\d\lambda}{\lambda-z},\ \ \ z\notin\Omega.
\end{equation}
We now prove solvability of RHP \ref{master} using two different arguments:\bigskip

\noindent i. {\it Argument 1}: By Proposition \ref{cool},
\begin{equation}\label{det:imp}
	\det(1-\gamma T\chi_t\upharpoonright_{L^2(\mathbb{R})})=\det(1-G\upharpoonright_{L^2(\mathring{\Omega})}),\ \ \ t\in\mathbb{R},\ \gamma\in[0,1].
\end{equation}
The left-hand side in \eqref{det:imp} is non vanishing by Lemma \ref{reg} and since $G$ is trace-class this implies that $1-G$ is invertible on $L^2(\mathring{\Omega},|\d\lambda|)$ for all $t\in\mathbb{R},\gamma\in[0,1]$, cf. \cite[Theorem $6.1$]{GK}. Thus RHP \ref{master} is solvable for all $t\in\mathbb{R},\gamma\in[0,1]$, cf. \cite{IIKS}.\bigskip

\noindent ii. {\it Argument 2}: We provide a solvability proof for RHP \ref{master} based on a vanishing lemma argument (a standard technique in Riemann-Hilbert analysis, cf. \cite{Z,FMZ,FZ}).
\begin{lem}[Vanishing lemma]\label{vanish} Let $t\in\mathbb{R},\gamma\in[0,1]$ and suppose ${\bf Y}$ satisfies conditions (1) and (2) in RHP \ref{master} above but instead of condition (3) we enforce
\begin{equation*}
	{\bf Y}(z)=\mathcal{O}(z^{-1}),\ \ \ z\rightarrow\infty.
\end{equation*}
Then ${\bf Y}\equiv 0$.
\end{lem}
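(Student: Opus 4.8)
The plan is to run the classical ``Hermitian positivity'' vanishing-lemma argument of Zhou (cf.\ \cite{Z,FMZ}), which first requires collapsing both pieces of $\Omega$ onto the single contour $\mathbb{R}$.

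\textbf{Step 1 (collapsing $\Gamma$).} Since the $(2,1)$-entry $-\im\sqrt{\gamma}\,\e^{-\frac14 z^2+\im tz}$ of the jump matrix on $\Gamma$ extends to an entire function, and since ${\bf Y}$ is analytic on the lens-shaped domain $D$ enclosed between $\mathbb{R}$ (below) and $\Gamma$ (above), I would set $\widetilde{\bf Y}:={\bf Y}$ outside $\overline{D}$ and $\widetilde{\bf Y}:={\bf Y}\,J_\Gamma$ inside $D$, where $J_\Gamma$ is the lower-triangular $\Gamma$-jump continued off $\Gamma$. Tracking the orientation of $\Omega$ in RHP \ref{master}, a short computation with the jump relations shows that $\widetilde{\bf Y}$ is analytic on $\mathbb{C}\setminus\mathbb{R}$, still satisfies $\widetilde{\bf Y}(z)=\mathcal{O}(z^{-1})$ as $z\to\infty$ (because $J_\Gamma\to\mathbb{I}$ along $D$), and across $\mathbb{R}$ obeys $\widetilde{\bf Y}_+=\widetilde{\bf Y}_-\widetilde J$, where the product of the $\mathbb{R}$- and collapsed $\Gamma$-jumps is
\begin{equation*}
	\widetilde J(x)=\begin{bmatrix}1-|a(x)|^2 & a(x)\\ -\overline{a(x)} & 1\end{bmatrix},\qquad a(x):=-\im\sqrt{\gamma}\,\e^{-\frac14 x^2-\im tx},\qquad |a(x)|^2=\gamma\,\e^{-\frac12 x^2}\leq\gamma\leq 1.
\end{equation*}
As $J_\Gamma$ is invertible, $\widetilde{\bf Y}\equiv 0$ is equivalent to ${\bf Y}\equiv 0$, so it suffices to prove the former.

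\textbf{Step 2 (positivity functional).} For $\Im z>0$ put ${\bf H}(z):=\widetilde{\bf Y}(z)\,\bigl(\overline{\widetilde{\bf Y}(\bar z)}\bigr)^{\intercal}$. This function is analytic in the open upper half-plane, equals $\mathcal{O}(z^{-2})$ at infinity, and has the $L^2$ boundary value ${\bf H}_+(x)=\widetilde{\bf Y}_-(x)\,\widetilde J(x)\,\bigl(\overline{\widetilde{\bf Y}_-(x)}\bigr)^{\intercal}$ on $\mathbb{R}$ (using $\widetilde{\bf Y}_+=\widetilde{\bf Y}_-\widetilde J$ and that $\bar z$ approaches $\mathbb{R}$ from below). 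Integrating ${\bf H}$ around a large semicircular contour in the upper half-plane and using its decay gives $\int_{\mathbb{R}}{\bf H}_+(x)\,\d x=0$; adding to this identity its conjugate transpose yields
\begin{equation*}
	\int_{\mathbb{R}}\widetilde{\bf Y}_-(x)\Bigl(\widetilde J(x)+\bigl(\overline{\widetilde J(x)}\bigr)^{\intercal}\Bigr)\bigl(\overline{\widetilde{\bf Y}_-(x)}\bigr)^{\intercal}\,\d x=0,\qquad\textnormal{where}\qquad \widetilde J(x)+\bigl(\overline{\widetilde J(x)}\bigr)^{\intercal}=2\,\textnormal{diag}\bigl(1-|a(x)|^2,\,1\bigr).
\end{equation*}

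\textbf{Step 3 (conclusion).} Because $|a(x)|^2\leq 1$, the matrix $\widetilde J+(\overline{\widetilde J})^{\intercal}$ is positive semidefinite, so the integrand above is a pointwise positive semidefinite $2\times 2$ matrix with vanishing integral and hence is zero almost everywhere. Examining its trace (a nonnegative combination of $|(\widetilde{\bf Y}_-)_{i1}|^2$ with weight $2(1-|a(x)|^2)$ and of $|(\widetilde{\bf Y}_-)_{i2}|^2$ with weight $2$) and using that $1-|a(x)|^2>0$ for almost every $x$ (all $x$ when $\gamma\in[0,1)$, and all $x\neq 0$ when $\gamma=1$), one obtains $\widetilde{\bf Y}_-\equiv 0$ a.e.\ on $\mathbb{R}$, and then $\widetilde{\bf Y}_+=\widetilde{\bf Y}_-\widetilde J\equiv 0$ a.e.\ as well. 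Finally the Cauchy representation $\widetilde{\bf Y}(z)=\frac{1}{2\pi\im}\int_{\mathbb{R}}(s-z)^{-1}\bigl(\widetilde{\bf Y}_+(s)-\widetilde{\bf Y}_-(s)\bigr)\,\d s$, valid for the $L^2$-boundary-value solutions considered here because $\widetilde{\bf Y}=\mathcal{O}(z^{-1})$, forces $\widetilde{\bf Y}\equiv 0$, and therefore ${\bf Y}\equiv 0$.

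\textbf{Expected main obstacle.} The positivity computation itself is immediate once the bound $\|r\|_{\infty}\leq 1$ (equivalently $|a(x)|^2\leq\gamma\leq 1$) is in place. What requires genuine care is the bookkeeping in Step 1 — the orientation of $\Omega$ and the verification that the collapsed problem retains the $\mathcal{O}(z^{-1})$ normalization along the lens $D$, which pinches toward $\mathbb{R}$ as $z\to\infty$ — together with the soft-analysis justifications in Steps 2--3 that $\int_{\mathbb{R}}{\bf H}_+=0$ and that the Cauchy representation applies; both of these rest on the $L^2$ Hardy-space framework underlying integrable Riemann-Hilbert problems of the type in \cite{IIKS}.
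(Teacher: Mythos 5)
Your argument is correct and matches the paper's proof almost step for step: both collapse the $\Gamma$-jump onto $\mathbb{R}$ via the same lower-triangular factor in the lens region (the paper calls it $\Delta$), compute exactly the same collapsed jump matrix (your $\widetilde J$ is the paper's \eqref{l:26}), form the same quadratic ``positivity'' quantity ${\bf H}(z)={\bf N}(z){\bf N}^{\dagger}(\bar z)$, integrate and add the Hermitian conjugate, and use $\widetilde J+\widetilde J^{\dagger}=2\,\textnormal{diag}(1-|a|^2,1)\geq 0$ to conclude that the boundary values vanish. The only difference is the final step: after obtaining $\widetilde{\bf Y}_\pm=0$ on $\mathbb{R}$, the paper extends this to $\mathbb{C}\setminus\mathbb{R}$ via Carlson's theorem (a Phragm\'en--Lindel\"of-type principle applied to the bounded function ${\bf N}$ with vanishing boundary trace), whereas you invoke the $L^2$-Cauchy representation so that $\widetilde{\bf Y}_+-\widetilde{\bf Y}_-\equiv 0$ forces $\widetilde{\bf Y}\equiv 0$. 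These are interchangeable in this setting — either can be reduced to ``entire, bounded, decaying at $\infty$, hence $\equiv 0$'' by gluing across $\mathbb{R}$ — and both are consistent with the $L^2$/continuous boundary-value framework of \cite{IIKS,Z} underlying RHP \ref{master}. No gap.
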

\begin{proof} Let $\Delta$ denote the open region in between $\mathbb{R}$ and $\Gamma$.  Introduce the auxiliary function
\begin{equation*}
	{\bf N}(z):={\bf Y}(z)\begin{cases}\Bigl[\begin{smallmatrix} 1 & 0\smallskip\\ -\im\sqrt{\gamma}\e^{-\frac{1}{4}z^2+\im tz} & 1\end{smallmatrix}\Bigr],&z\in\Delta\smallskip\\ \mathbb{I},&z\notin\mathbb{C}\setminus\overline{\Delta}\end{cases},
\end{equation*}
and note that ${\bf N}(z)$ also satisfies ${\bf N}(z)=\mathcal{O}(z^{-1}), z\rightarrow\infty$. However, ${\bf N}(z)$ is jump-free on $\Gamma$, instead we have collapsed the jumps to the real line,
\begin{equation}\label{l:26}
	{\bf N}_+(z)={\bf N}_-(z)\begin{bmatrix}1-\gamma\e^{-\frac{1}{2}z^2} & -\im\sqrt{\gamma} \e^{-\frac{1}{4}z^2-\im tz}\\ -\im\sqrt{\gamma}\e^{-\frac{1}{4}z^2+\im tz} & 1\end{bmatrix},\ \ z\in\mathbb{R}.
\end{equation}
Next define ${\bf H}(z)={\bf N}(z){\bf N}^{\dagger}(\bar{z})$ for $z\in\mathbb{C}\setminus\mathbb{R}$ where ${\bf N}^{\dagger}$ is the Hermitian conjugate of ${\bf N}$. Since ${\bf H}$ is analytic in the upper $z$-plane, continuous down to the real line and decays of $\mathcal{O}(z^{-2})$ as $z\rightarrow\infty$, we find from Cauchy's theorem $\int_{\mathbb{R}}{\bf H}_+(z)\,\d z=0$. We add to this equation its Hermitian conjugate, 
\begin{equation}\label{l:27}
	0=\int_{-\infty}^{\infty}\left\{{\bf N}_+(z){\bf N}_-^{\dagger}(z)+{\bf N}_-(z){\bf N}_+^{\dagger}(z)\right\}\,\d z\stackrel{\eqref{l:26}}{=}2\int_{-\infty}^{\infty}\left\{{\bf N}_-(z)\begin{bmatrix}1-\gamma\e^{-\frac{1}{2}z^2} & 0\\ 0 & 1\end{bmatrix}{\bf N}_-^{\dagger}(z)\right\}\,\d z.
\end{equation}
Reading off the diagonal entries in \eqref{l:27} we find in turn
\begin{equation*}
	\int_{-\infty}^{\infty}\left\{|N_-^{21}(z)|^2\left(1-\gamma\e^{-\frac{1}{2}z^2}\right)+|N_-^{22}(z)|^2\right\}\,\d z=0=\int_{-\infty}^{\infty}\left\{|N_-^{11}(z)|^2\left(1-\gamma\e^{-\frac{1}{2}z^2}\right)+|N_-^{12}(z)|^2\right\}\,\d z,
\end{equation*}
so that ${\bf N}_-(z)\equiv 0$ by continuity of ${\bf N}_-(z)$ on $\mathbb{R}$ and thus with \eqref{l:26} also ${\bf N}_+(z)\equiv 0$. In summary, ${\bf N}(z)$ is analytic for $\Im z>0$, continuous for $\Im z\geq 0$ and we have
\begin{equation*}
	|{\bf N}(z)|\leq C,\ \ \Im z>0,\ \ \ \ \ \ \ \ \ \ \sup_{z\in\mathbb{R}}|{\bf N}_+(z)|=0.
\end{equation*}
Hence, by Carlson's theorem (cf. \cite{SI}, Theorem $5.1.2$ and Corollary $5.1.3$), ${\bf N}(z)\equiv 0$ for $\Im z\geq 0$. Dealing with $\Im z\leq 0$ in a similar fashion we establish triviality of ${\bf N}(z)$ in the whole $z$-plane and thus also ${\bf Y}(z)\equiv 0$.
\end{proof}
\begin{cor}\label{solv:1} The Riemann-Hilbert problem \ref{master} for ${\bf Y}(z;t,\gamma)$ has a unique solution for every $t\in\mathbb{R},\gamma\in[0,1]$. Moreover, the coefficient ${\bf Y}_1(t,\gamma)$ is continuous in $t\in\mathbb{R}$ for every $\gamma\in[0,1]$.
\end{cor}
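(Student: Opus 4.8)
The plan is to collect unique solvability from what is already available and then to extract continuity of ${\bf Y}_1(\cdot,\gamma)$ from the Its--Izergin--Korepin--Slavnov representation \eqref{l:25}. For existence I would simply invoke Argument 1: by Proposition \ref{cool} and Lemma \ref{reg} the Fredholm determinant $\det(1-\gamma T\chi_t\upharpoonright_{L^2(\mathbb{R})})=\det(1-G\upharpoonright_{L^2(\mathring{\Omega})})$ does not vanish, so $1-G$ (identity plus trace-class) is invertible on $L^2(\mathring{\Omega},|\d\lambda|)$ for every $t\in\mathbb{R},\gamma\in[0,1]$, and the theory of \cite{IIKS} produces a solution ${\bf Y}(z)$ through \eqref{l:25}; alternatively the vanishing Lemma \ref{vanish} removes the homogeneous problem, which together with the Fredholm alternative for the singular integral equation attached to $G$ again yields existence. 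Uniqueness is elementary and contour-independent: $\det{\bf Y}(z)$ has no jump across $\Omega$ and tends to $1$ at infinity, so $\det{\bf Y}\equiv1$, ${\bf Y}$ is everywhere invertible, and any two solutions ${\bf Y},\widetilde{\bf Y}$ differ by the entire, $\mathbb{I}$-normalized function ${\bf Y}\widetilde{\bf Y}^{-1}\equiv\mathbb{I}$.

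For continuity I fix $\gamma\in[0,1]$ and expand \eqref{l:25} at $z=\infty$ to get
\begin{equation*}
	{\bf Y}_1(t,\gamma)=\int_{\mathring{\Omega}}{\bf F}(\lambda;t,\gamma)\,{\bf g}^{\intercal}(\lambda;t)\,\d\lambda,\qquad {\bf F}=\big(1-G\upharpoonright_{L^2(\mathring{\Omega})}\big)^{-1}{\bf f},
\end{equation*}
with ${\bf f},{\bf g}$ as in \eqref{l:21}; the integral converges by Cauchy--Schwarz since ${\bf f},{\bf g}\in L^2(\mathring{\Omega},|\d\lambda|)$ and $1-G$ is boundedly invertible. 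The whole $t$-dependence sits in the factors $\e^{\pm\im t\lambda}$ carried on $\Gamma$ (resp.\ $\mathbb{R}$) and multiplied by the Gaussians $\e^{-\lambda^2/8}$; for $t$ in any fixed compact interval these products are dominated in modulus by a single $\Gamma$-integrable majorant (by admissibility of $\Gamma$, cf.\ Lemma \ref{tracecl1}), so dominated convergence gives that $t\mapsto{\bf f}(\cdot;t)$ and $t\mapsto{\bf g}(\cdot;t)$ are continuous into $L^2(\mathring{\Omega},|\d\lambda|)$. Using the Hilbert--Schmidt factorizations $N_1=N_{11}N_{12}$, $N_2=N_{21}N_{22}$ from Lemma \ref{alltrace}, in which only $N_{21}$ and $N_{22}$ depend on $t$, the same reasoning gives that $t\mapsto G(\cdot;t)$ is continuous in trace norm, in particular in operator norm.

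To finish I would argue softly: since $1-G(\cdot;t)$ is invertible for every $t$ by the first part and $t\mapsto G(\cdot;t)$ is norm-continuous, a Neumann-series perturbation bound makes $t\mapsto\big(1-G(\cdot;t)\big)^{-1}$ continuous in operator norm, hence $t\mapsto{\bf F}(\cdot;t)=\big(1-G(\cdot;t)\big)^{-1}{\bf f}(\cdot;t)$ is continuous into $L^2(\mathring{\Omega},|\d\lambda|)$; feeding this and the $L^2$-continuity of ${\bf g}(\cdot;t)$ into the jointly continuous bilinear pairing $({\bf F},{\bf g})\mapsto\int_{\mathring{\Omega}}{\bf F}\,{\bf g}^{\intercal}$ yields continuity of ${\bf Y}_1(t,\gamma)$ in $t$. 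I expect the only real friction to be the non-compactness of $\Gamma$: one has to make sure the oscillatory/exponential factors $\e^{\pm\im t\lambda}$ living there do not break the $L^2$- and trace-norm continuity, which is precisely where the super-exponential Gaussian weights in ${\bf f},{\bf g},N_1,N_2$ are used; everything else is routine functional analysis built on the invertibility of $1-G$ and the IIKS formula.
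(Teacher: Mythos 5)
Your existence and uniqueness steps match the paper's (the paper's proof of Corollary~\ref{solv:1} invokes the vanishing Lemma~\ref{vanish} plus the Fredholm-alternative machinery of~\cite{Z,IIKS}, then the same Liouville argument for uniqueness; your citation of Argument~1 via the non-vanishing determinant is an equally valid alternative the paper sets up earlier). Where you genuinely diverge is the continuity of ${\bf Y}_1(\cdot,\gamma)$. The paper simply appeals to ``continuity of the jump matrix, solvability for all $(t,\gamma)$, and a standard small norm argument,'' which means the generic Riemann--Hilbert perturbation argument: form ${\bf Y}(z;t+\delta){\bf Y}(z;t)^{-1}$, observe its jump is $\mathbb{I}+\mathcal{O}(\delta)$ in $L^2\cap L^\infty$, and conclude it is close to $\mathbb{I}$. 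You instead work on the operator side: express ${\bf Y}_1$ through the IIKS pairing \eqref{l:31}, show $t\mapsto{\bf f},{\bf g}$ are $L^2$-continuous and $t\mapsto G$ is trace-norm (hence operator-norm) continuous by dominated convergence with the Gaussian weights taming the $\e^{\pm\im t\lambda}$ factors on the non-compact $\Gamma$, then run a resolvent/Neumann perturbation of $(1-G(t))^{-1}$. Both routes are sound. The paper's small-norm argument is the more generic one and transfers to RHPs without an attached integrable-operator structure; yours is more self-contained here because it reuses exactly the factorizations (Lemma~\ref{alltrace}) and the invertibility of $1-G$ that the paper has already established, at the modest cost of having to check $L^2$/trace-norm continuity explicitly. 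The one place where care is genuinely needed, and you flag it correctly, is that for $t<0$ the factor $\e^{\im t\lambda}$ grows along $\Gamma$, so the admissibility of $\Gamma$ (angle strictly inside $(\tfrac{3\pi}{4},\pi)\cup(0,\tfrac{\pi}{4})$, forcing $\Re\lambda^2\to+\infty$) is what makes the Gaussian majorant uniform on compact $t$-intervals.
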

\begin{proof} The RHP \ref{master} is equivalent to a singular integral equation, cf. \cite{IIKS}, which can be stated using a Fredholm operator of index zero. The above vanishing lemma then states that the kernel of this operator is trivial, i.e the operator itself is onto. Thus the singular integral equation, equivalently the RHP, is solvable. We refer the interested reader to \cite{Z} for more on this subject. Once solvability is established uniqueness follows from a standard Liouville argument: by RHP \ref{master}, the scalar function $\det{\bf Y}(z)$ is entire, approaching unity at infinity, i.e. for any solution of the RHP we have $\det{\bf Y}(z)\equiv 1$. Thus, given two solutions ${\bf Y}_1(z)$ and ${\bf Y}_2(z)$ to RHP \ref{master}, we consider ${\bf Q}(z)={\bf Y}_1(z){\bf Y}_2(z)^{-1}$ which is again entire. Since in addition ${\bf Q}(z)\rightarrow\mathbb{I}$ as $z\rightarrow\infty$, equality of ${\bf Y}_1$ and ${\bf Y}_2$ follows from another application of Liouville's theorem. Finally, continuity of ${\bf Y}_1(\cdot,\gamma)$ for fixed $\gamma\in[0,1]$ follows from continuity of the jump matrix, the fact that RHP \ref{master} is solvable for all $(t,\gamma)\in\mathbb{R}\times[0,1]$ and a standard small norm argument.
%
\end{proof}
\subsection{The ZS-system}
We are now prepared to take the first step in the proof of Theorem \ref{main1}, namely the derivation of a closed form expression for $\det(1-\gamma T\chi_t\upharpoonright_{L^2(\mathbb{R})})$. First we state a standard connection formula between the same determinant and RHP \ref{master}.
\begin{prop} For any fixed $t\in\mathbb{R},\gamma\in[0,1]$,
\begin{equation}\label{l:28}
	\frac{\partial}{\partial t}\ln\det(1-\gamma T\chi_t\upharpoonright_{L^2(\mathbb{R})})=-\im Y_1^{22}(t,\gamma),
\end{equation}
in terms of the matrix ${\bf Y}_1(t,\gamma)=[Y_1^{jk}(t,\gamma)]_{j,k=1}^2$ defined in condition (3) of RHP \ref{master}.
\end{prop}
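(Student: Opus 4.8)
The plan is to run the standard IIKS logarithmic‑derivative computation relating the integrable Fredholm determinant to RHP \ref{master}, exploiting that $t$ enters the kernel in a particularly simple way. By Proposition \ref{cool} it suffices to differentiate $\ln\det(1-G\upharpoonright_{L^2(\mathring{\Omega})})$ in $t$. Here $G=\sqrt{\gamma}(N_1+N_2)$ is trace‑class on $L^2(\mathring{\Omega},|\d\lambda|)$ by Lemma \ref{alltrace}; the map $t\mapsto G$ is analytic in the trace norm, the entire $t$‑dependence residing in the factor $\e^{\im t(w-\mu)}$ of the kernel \eqref{l:14} of $N_2$; and $1-G$ is invertible on $L^2(\mathring{\Omega},|\d\lambda|)$ for every $(t,\gamma)\in\mathbb{R}\times[0,1]$ by \eqref{det:imp} together with Lemma \ref{reg}. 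Hence $t\mapsto\det(1-G)$ is $C^1$ and nowhere zero, and the Jacobi variational formula will give
\[
	\frac{\partial}{\partial t}\ln\det(1-\gamma T\chi_t\upharpoonright_{L^2(\mathbb{R})})=-\tr_{L^2(\mathring{\Omega})}\big[(1-G)^{-1}\,\partial_tG\big].
\]

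The first step is to compute $\partial_tG$ from \eqref{l:21}. Writing $G(\lambda,\mu)=\big({\bf f}^{\intercal}(\lambda){\bf g}(\mu)\big)/(\lambda-\mu)$ and splitting ${\bf f}^{\intercal}(\lambda){\bf g}(\mu)=f^{(1)}(\lambda)g^{(1)}(\mu)+f^{(2)}(\lambda)g^{(2)}(\mu)$, only the second summand — supported on $\Gamma\times\mathbb{R}$ and equal to $\frac{\sqrt{\gamma}}{2\pi}\e^{-\frac{1}{8}(\lambda^2+\mu^2)}\e^{\im t(\lambda-\mu)}\chi_{\Gamma}(\lambda)\chi_{\mathbb{R}}(\mu)$ — depends on $t$, and since $\partial_t\e^{\im t(\lambda-\mu)}=\im(\lambda-\mu)\e^{\im t(\lambda-\mu)}$ the Cauchy denominator cancels, leaving the rank‑one kernel $\partial_tG(\lambda,\mu)=\im\, f^{(2)}(\lambda)g^{(2)}(\mu)$. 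Thus $\partial_tG=\im\, f^{(2)}\otimes g^{(2)}$ is trace‑class and acts by $h\mapsto\im\, f^{(2)}\int_{\Omega}g^{(2)}(\mu)h(\mu)\,\d\mu$.

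Inserting this into the trace and using $(1-G)^{-1}{\bf f}={\bf F}$ from \eqref{l:24} together with the elementary identity $\tr[A\,(u\otimes v)]=\int_{\Omega}v(\lambda)(Au)(\lambda)\,\d\lambda$ — for the bilinear pairing $\int fg$ on which \eqref{l:24}--\eqref{l:25} are built — I expect to obtain
\[
	\frac{\partial}{\partial t}\ln\det(1-\gamma T\chi_t\upharpoonright_{L^2(\mathbb{R})})=-\im\int_{\Omega}F^{(2)}(\lambda)\,g^{(2)}(\lambda)\,\d\lambda,
\]
with $F^{(2)}$ the second component of ${\bf F}=(1-G)^{-1}{\bf f}$. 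Finally, reading off the $z^{-1}$‑coefficient of the representation \eqref{l:25} of the RHP solution gives ${\bf Y}_1=\int_{\Omega}{\bf F}(\lambda){\bf g}^{\intercal}(\lambda)\,\d\lambda$, whose $(2,2)$‑entry is precisely $Y_1^{22}=\int_{\Omega}F^{(2)}(\lambda)g^{(2)}(\lambda)\,\d\lambda$; comparing the last two displays yields \eqref{l:28}.

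The computation is short, and rather than a genuine obstacle the delicate points are bookkeeping ones: one must consistently use the bilinear pairing $\int fg$ (not the Hermitian inner product) throughout, and one must check that it is the $(2,2)$‑entry of ${\bf Y}_1$ — not the $(1,1)$‑entry — that surfaces, which is forced by the $t$‑dependent block of $G$ living on $\Gamma\times\mathbb{R}$ and hence coupling the second components of ${\bf f}$ and ${\bf g}$. The one analytic ingredient, the $C^1$‑dependence of the determinant on $t$, follows at once from analyticity of $t\mapsto G$ in the trace norm and the non‑vanishing of $\det(1-\gamma T\chi_t)$ guaranteed by Lemma \ref{reg}.
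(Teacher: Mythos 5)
Your proposal is correct and follows essentially the same route as the paper: Jacobi's variational formula for $\ln\det(1-G)$, the observation that the $t$-dependence of the kernel \eqref{l:21} sits entirely in the rank-one piece $f^{(2)}\otimes g^{(2)}$ and the Cauchy denominator cancels upon differentiation, the identification of the resulting trace with $\int F_2 g_2$ via \eqref{l:24}, and finally matching this to $Y_1^{22}$ through the $z^{-1}$-coefficient of \eqref{l:25}. The only difference is expository — you spell out the trace-norm continuity and the bilinear-versus-Hermitian bookkeeping a bit more explicitly than the paper does.
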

\begin{proof} Through \eqref{det:imp} and the Jacobi variation formula,
\begin{eqnarray}
	\frac{\partial}{\partial t}\ln\det(1-\gamma T\chi_t\upharpoonright_{L^2(\mathbb{R})})&\stackrel{\eqref{det:imp}}{=}&\frac{\partial}{\partial t}\ln\det(1-G\upharpoonright_{L^2(\mathring{\Omega})})=-\tr_{L^2(\mathring{\Omega})}\left((1-G)^{-1}\frac{\partial G}{\partial t}\right)\nonumber\\
	&=&-\int_{\mathring{\Omega}}\int_{\mathring{\Omega}}(1-G)^{-1}(\lambda,\mu)\frac{\partial G}{\partial t}(\mu,\lambda)\,\d\mu\,\d\lambda.\label{l:29}
\end{eqnarray}
But from \eqref{l:21} we can compute explicitly the $t$-derivative,
\begin{equation*}
	\frac{\partial G}{\partial t}(\mu,\lambda)=\im f_2(\mu)g_2(\lambda),\ \ \ \ \ {\bf f}(\lambda)=\begin{bmatrix}f_1(\lambda)\\ f_2(\lambda)\end{bmatrix},\ \ {\bf g}(\mu)=\begin{bmatrix}g_1(\mu)\\ g_2(\mu)\end{bmatrix},
\end{equation*}
which gives back in \eqref{l:29},
\begin{equation}\label{l:30}
	\frac{\partial}{\partial t}\ln\det(1-\gamma T\chi_t\upharpoonright_{L^2(\mathbb{R})})=-\im\int_{\mathring{\Omega}}\big((1-G\upharpoonright_{L^2(\mathring{\Omega})})^{-1}f_2\big)(\lambda)g_2(\lambda)\,\d\lambda\stackrel{\eqref{l:24}}{=}-\im\int_{\mathring{\Omega}}F_2(\lambda)g_2(\lambda)\,\d\lambda.
\end{equation}
Now use \eqref{l:25} and take the limit $z\rightarrow\infty,z\notin\Omega$,
\begin{equation*}
	{\bf Y}(z)=\mathbb{I}+\frac{1}{z}\int_{\Omega}{\bf F}(\lambda){\bf g}^{\intercal}(\lambda)\,\d\lambda+\mathcal{O}\left(z^{-2}\right),
\end{equation*}
so that by comparison with RHP \ref{master}, condition (3),
\begin{equation}\label{l:31}
	{\bf Y}_1=\int_{\Omega}{\bf F}(\lambda){\bf g}^{\intercal}(\lambda)\,\d\lambda=\int_{\mathring{\Omega}}{\bf F}(\lambda){\bf g}^{\intercal}(\lambda)\,\d\lambda.
\end{equation}
Identity \eqref{l:31} used in the right-hand side of \eqref{l:30} implies \eqref{l:28} and completes our proof.
\end{proof}
Second, we derive the ZS-system \eqref{dNLS} from RHP \ref{master} as follows: Define (compare the proof of Lemma \ref{vanish})
\begin{equation}\label{l:32}
	{\bf N}(z):={\bf Y}(z)\begin{cases}\Bigl[\begin{smallmatrix} 1 & 0\smallskip\\ -\im\sqrt{\gamma}\e^{-\frac{1}{4}z^2+\im tz} & 1\end{smallmatrix}\Bigr],&z\in\Delta\smallskip\\ \mathbb{I},&z\notin\mathbb{C}\setminus\overline{\Delta}\end{cases},
\end{equation}
where $\Delta$ lies in between $\mathbb{R}$ and $\Gamma$. As noted before, ${\bf N}(z)$ solves the problem summarized below.
\begin{problem}\label{NLS} For any $t\in\mathbb{R},\gamma\in[0,1]$, the matrix-valued function ${\bf N}(z)={\bf N}(z;t,\gamma)\in\mathbb{C}^{2\times 2}$ has the following properties:
\begin{enumerate}
	\item ${\bf N}(z)$ is analytic for $z\in\mathbb{C}\setminus\mathbb{R}$ and has a continuous extension on the closed upper and lower half-planes. 
	\item The square integrable boundary values ${\bf N}_{\pm}(z)=\lim_{\epsilon\downarrow 0}{\bf N}(z\pm\im\epsilon),z\in\mathbb{R}$ obey the jump condition
	\begin{equation}\label{l:33}
		{\bf N}_+(z)={\bf N}_-(z)\begin{bmatrix}1-\gamma\e^{-\frac{1}{2}z^2} & -\im\sqrt{\gamma} \e^{-\frac{1}{4}z^2-\im tz}\\ -\im\sqrt{\gamma} \e^{-\frac{1}{4}z^2+\im tz} & 1\end{bmatrix},\ \ z\in\mathbb{R}.
	\end{equation}
	\item As $z\rightarrow\infty$, the leading order behavior of ${\bf N}(z)$ remains unchanged from condition (3) in RHP \ref{master}, 
	\begin{equation*}
		{\bf N}(z)=\mathbb{I}+{\bf Y}_1(t,\gamma)z^{-1}+\mathcal{O}\big(z^{-2}\big).
	\end{equation*}
\end{enumerate}
\end{problem}
A simple check between RHP \ref{NLS} and RHP \ref{master0} for ${\bf X}(z;x,\gamma)$ reveals their equality subject to the identifications
\begin{equation}\label{l:34}
	{\bf N}(z;t,\gamma)={\bf X}\left(z;\frac{t}{2},\gamma\right),\ \ z\in\mathbb{C}\setminus\mathbb{R},\ \ t\in\mathbb{R},\ \ \gamma\in[0,1],\ \ \ \ \ \ \textnormal{with}\ \ \ \ \ r(z)=r(z;\gamma)=-\im\sqrt{\gamma}\,\e^{-\frac{1}{4}z^2}.
\end{equation}
For this reason we now establish solvability of RHP \ref{master0} and thus, in turn, existence of $y(x;\gamma)$:
\begin{theo}\label{solv:2} The RHP \ref{master0} for ${\bf X}(z;x,\gamma)$ with $r(z;\gamma)=-\im\sqrt{\gamma}\,\e^{-\frac{1}{4}z^2}$ is uniquely solvable for every $(x,\gamma)\in\mathbb{R}\times[0,1]$.
\end{theo}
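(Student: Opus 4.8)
The plan is to descend from RHP \ref{master}, whose unique solvability for all $(t,\gamma)\in\mathbb{R}\times[0,1]$ was settled in Corollary \ref{solv:1}, to RHP \ref{master0} via the explicit algebraic transformation \eqref{l:32}. Fix $(x,\gamma)\in\mathbb{R}\times[0,1]$ and put $t:=2x$. Let ${\bf Y}(z)={\bf Y}(z;t,\gamma)$ be the unique solution of RHP \ref{master} and define ${\bf N}(z):={\bf N}(z;t,\gamma)$ by \eqref{l:32}. Since the conjugating factor $\Bigl[\begin{smallmatrix}1 & 0\\ -\im\sqrt{\gamma}\,\e^{-\frac14 z^2+\im tz} & 1\end{smallmatrix}\Bigr]$ is entire, has unit determinant and an entire inverse, and since $\Gamma$ runs off to infinity inside the sector where $\Re(z^2)>0$ (so that this factor is $\mathcal{O}(z^{-k})$ for every $k$ on the wedge $\Delta$ between $\mathbb{R}$ and $\Gamma$), the matrix ${\bf N}$ is analytic off $\Omega$, is jump-free across $\Gamma$ by construction, carries the jump \eqref{l:33} on $\mathbb{R}$, and retains ${\bf N}(z)=\mathbb{I}+{\bf Y}_1 z^{-1}+\mathcal{O}(z^{-2})$; in other words ${\bf N}$ solves RHP \ref{NLS}. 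A direct comparison of RHP \ref{NLS} with RHP \ref{master0} shows these two problems to coincide once $x=t/2$: for $r(z;\gamma)=-\im\sqrt{\gamma}\,\e^{-\frac14 z^2}$ we have $|r(z)|^2=\gamma\,\e^{-\frac12 z^2}$, $-\bar r(z)\e^{-2\im xz}=-\im\sqrt{\gamma}\,\e^{-\frac14 z^2-\im tz}$ and $r(z)\e^{2\im xz}=-\im\sqrt{\gamma}\,\e^{-\frac14 z^2+\im tz}$, which matches the contour, jump matrix and leading normalization of RHP \ref{master0}. Hence ${\bf X}(z;x,\gamma):={\bf N}(z;2x,\gamma)$ is a solution of RHP \ref{master0}, up to the point addressed next; this is the identification recorded in \eqref{l:34}.

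First I would address the only genuine gap: condition (3) of RHP \ref{master0} asks for the two-term expansion ${\bf X}(z)=\mathbb{I}+{\bf X}_1 z^{-1}+{\bf X}_2 z^{-2}+\mathcal{O}(z^{-3})$, whereas RHP \ref{master} only records ${\bf Y}_1$. This I would recover from the integral representation \eqref{l:25}. The densities ${\bf f},{\bf g}$ of \eqref{l:21} carry a Gaussian $\e^{-\frac18\lambda^2}$ on $\mathbb{R}$ and decay rapidly along $\Gamma$ as well, and the same holds for ${\bf F}=(1-G)^{-1}{\bf f}$, so $\lambda\mapsto{\bf F}(\lambda){\bf g}^{\intercal}(\lambda)$ is integrable on $\Omega$ against every power $\lambda^k$. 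Expanding $(\lambda-z)^{-1}=-\sum_{k\geq 0}\lambda^k z^{-k-1}$ for $z$ bounded away from $\Omega$ then turns \eqref{l:25} into a genuine asymptotic series for ${\bf Y}$, and since \eqref{l:32} differs from $\mathbb{I}$ only by a term that is $\mathcal{O}(z^{-k})$ for all $k$ off a neighbourhood of $\Gamma$, the same expansion persists for ${\bf N}$ and hence for ${\bf X}$; its $z^{-2}$-coefficient is ${\bf X}_2(x,\gamma)$. The boundary values of ${\bf N}$ on $\mathbb{R}$ are simultaneously continuous (from the Schwartz-class theory of \cite{IIKS}) and square-integrable (the jump \eqref{l:33} tends to $\mathbb{I}$ rapidly), so conditions (1)--(2) of RHP \ref{master0} hold as well.

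Finally, for uniqueness I would run the Liouville argument already used in the proof of Corollary \ref{solv:1}: for any solution of RHP \ref{master0}, $\det{\bf X}(z)$ is entire (the jump matrix has determinant one) and tends to $1$ at infinity, hence $\det{\bf X}\equiv 1$; given two solutions ${\bf X}^{(1)},{\bf X}^{(2)}$, the ratio ${\bf X}^{(1)}(z)\bigl({\bf X}^{(2)}(z)\bigr)^{-1}$ is entire and tends to $\mathbb{I}$, so it is identically $\mathbb{I}$. (Equivalently, undoing \eqref{l:32} reduces the uniqueness statement to Lemma \ref{vanish}.)

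I expect the only real obstacle to be bookkeeping rather than substance: one must verify that the conjugation \eqref{l:32} preserves analyticity and does not disturb the normalization at infinity — which rests on the decay of $\e^{-\frac14 z^2}$ on the wedge $\Delta$ — and one must upgrade the one-term asymptotics furnished by RHP \ref{master} to the two-term asymptotics demanded by RHP \ref{master0}, which is precisely where the Gaussian (Schwartz-class) decay of the reflection coefficient $r(z;\gamma)$ enters in an essential way.
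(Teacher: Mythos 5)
Your argument is correct, but it takes a genuinely different route from the paper's. The paper establishes Theorem \ref{solv:2} by proving a \emph{fresh} vanishing lemma directly for RHP \ref{master0} (mirroring Lemma \ref{vanish}: set ${\bf H}(z)={\bf X}(z){\bf X}^{\dagger}(\bar z)$, integrate ${\bf H}_+$ over $\mathbb{R}$, conclude ${\bf X}_\pm\equiv 0$, then Carlson's theorem) and then invokes the Fredholm index-zero machinery plus Liouville, ``applying the proof argument of Corollary \ref{solv:1} verbatim.'' You instead \emph{transfer} solvability from RHP \ref{master} through the algebraic conjugation \eqref{l:32}, exploiting the identification \eqref{l:34} (which the paper records but does not use for this proof). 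Both are valid. Your route is more economical in that it avoids re-running a vanishing lemma, but it requires the extra bookkeeping you correctly flagged: RHP \ref{master0} demands the two-term expansion $\mathbb{I}+{\bf X}_1z^{-1}+{\bf X}_2 z^{-2}+\mathcal{O}(z^{-3})$ while RHP \ref{master} and RHP \ref{NLS} only record one term, and you recover the longer expansion from the Cauchy-transform representation \eqref{l:25} together with the Gaussian decay of ${\bf F}{\bf g}^{\intercal}$ (which follows from \eqref{IIKSform} and boundedness of ${\bf Y}_\pm$) and the super-polynomial decay of $\e^{-\frac14 z^2+\im tz}$ on the wedge $\Delta$. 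The paper's Fredholm route delivers that expansion automatically from the Schwartz-class nature of the jump, which is presumably why the authors chose to keep RHP \ref{master0} logically self-contained; your argument is nevertheless a legitimate and arguably cleaner alternative given the material already in place.
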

\begin{proof} Observing the similarities between \eqref{l:26} and \eqref{djump} one first derives a vanishing lemma in the style of Lemma \ref{vanish}, i.e. assumes ${\bf X}$ satisfies conditions (1) and (2) in RHP \ref{master0} but instead ${\bf X}(z)=\mathcal{O}\big(z^{-1}\big),z\rightarrow\infty$. Now define ${\bf H}(z)={\bf X}(z){\bf X}^{\dagger}(\bar{z}),z\in\mathbb{C}\setminus\mathbb{R}$ and conclude $\int_{\mathbb{R}}{\bf H}_+(z)\,\d z=0$ so that, analogous to \eqref{l:27},
\begin{equation*}
	0=2\int_{-\infty}^{\infty}\left\{{\bf X}_-(z)\begin{bmatrix}1-|r(z;\gamma)|^2 & 0\\ 0 & 1\end{bmatrix}{\bf X}_-^{\dagger}(z)\right\}\,\d z.
\end{equation*}
This equation allows us to deduce ${\bf X}_-(z)\equiv 0$ and thus also ${\bf X}_+(z)\equiv 0$. By Carlson's theorem we then find ${\bf X}(z)\equiv 0$ in the whole $z$-plane and the vanishing lemma is proven. After that the proof argument of Corollary \ref{solv:1} applies verbatim to ${\bf X}(z;x,\gamma)$ and Theorem \ref{solv:2} follows.
\end{proof}
Next a short remark about (obvious) symmetries in RHP \ref{master0}, see also \eqref{dNLSsymm}.
\begin{prop}\label{connect} Besides \eqref{l:34} we also have ${\bf Y}_1(t,\gamma)={\bf X}_1\left(\frac{t}{2},\gamma\right),(t,\gamma)\in\mathbb{R}\times[0,1]$ 
and for any $(x,\gamma)\in\mathbb{R}\times[0,1]$,
\begin{equation}\label{l:35}
	{\bf X}(z;x,\gamma)=\sigma_1\overline{{\bf X}(\bar{z};x,\gamma)}\sigma_1,\ \ \ z\in\mathbb{C}\setminus\mathbb{R};\ \ \ \ \ \ \ \sigma_1=\begin{bmatrix}0 & 1\\ 1 & 0\end{bmatrix},
\end{equation}
from which we learn that
\begin{equation}\label{l:36}
	X_i^{11}(x,\gamma)=\overline{X_i^{22}(x,\gamma)},\ \ \ \ \ \ X_i^{21}(x,\gamma)=\overline{X_i^{12}(x,\gamma)},\ \ i\in\mathbb{Z}_{\geq 1}.
\end{equation}
Furthermore, for any $(x,\gamma)\in\mathbb{R}\times[0,1]$,
\begin{equation}\label{sym:1}
	{\bf X}(z;x,\gamma)=\sigma_2{\bf X}(-z;x,\gamma)\sigma_2,\ \ \ z\in\mathbb{C}\setminus\mathbb{R};\ \ \ \ \ \ \ \ \ \sigma_2=\begin{bmatrix}0&-\im\\ \im & 0\end{bmatrix},
\end{equation}
which leads to
\begin{equation}\label{sym:2}
	X_i^{11}(x,\gamma)=(-1)^iX_i^{22}(x,\gamma),\ \ \ \ \ \ X_i^{21}(x,\gamma)=(-1)^{i+1}X_i^{12}(x,\gamma),\ \ i\in\mathbb{Z}_{\geq 1}.
\end{equation}
\end{prop}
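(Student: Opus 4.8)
The plan is to derive all the stated relations from the uniqueness of RHP \ref{master0}: for each symmetry I would build out of ${\bf X}(z;x,\gamma)$ a new matrix, verify that it again satisfies the three conditions of RHP \ref{master0}, conclude by Theorem \ref{solv:2} that it coincides with ${\bf X}(z;x,\gamma)$, and finally read off the entry identities by matching powers of $z^{-1}$ in the expansion at $z=\infty$. The first claim needs none of this: by \eqref{l:34} and the observation preceding it one has ${\bf N}(z;t,\gamma)={\bf X}(z;\frac t2,\gamma)$, while the conjugating matrix in \eqref{l:32} reduces to $\mathbb{I}$ for $z$ near $\infty$, so ${\bf N}(z;t,\gamma)={\bf Y}(z;t,\gamma)$ there and comparing the $z^{-1}$-coefficients of the two expansions gives ${\bf Y}_1(t,\gamma)={\bf X}_1(\frac t2,\gamma)$.

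For \eqref{l:35} I would set $\widetilde{\bf X}(z):=\sigma_1\overline{{\bf X}(\bar z;x,\gamma)}\,\sigma_1$. Since $z\mapsto\overline{{\bf X}(\bar z)}$ is analytic off $\mathbb{R}$ and inherits continuous boundary values from ${\bf X}$, condition (1) holds, and because $\overline{1/\bar z}=1/z$ the expansion $\widetilde{\bf X}(z)=\mathbb{I}+\sigma_1\overline{{\bf X}_1}\sigma_1 z^{-1}+\sigma_1\overline{{\bf X}_2}\sigma_1 z^{-2}+\mathcal{O}(z^{-3})$ shows that condition (3) holds too. The only point requiring a short computation is condition (2): as $\bar z$ sits in the lower half-plane when $z$ sits in the upper one, the boundary values satisfy $\widetilde{\bf X}_{\pm}(z)=\sigma_1\overline{{\bf X}_{\mp}(z)}\sigma_1$ on $\mathbb{R}$, and using $\det v(z)=1$, $\overline{r(z)}=-r(z)$ and $\overline{\e^{-2\im xz}}=\e^{2\im xz}$ one checks that the jump matrix $v(z)$ in \eqref{djump} obeys $\sigma_1\overline{v(z)}\,\sigma_1=v(z)^{-1}$, so that $\widetilde{\bf X}_+=\widetilde{\bf X}_-v$. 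Uniqueness then forces $\widetilde{\bf X}={\bf X}$, which is \eqref{l:35}, and equating $z^{-i}$-coefficients yields ${\bf X}_i=\sigma_1\overline{{\bf X}_i}\sigma_1$, i.e. \eqref{l:36}.

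The symmetry \eqref{sym:1} would be treated in exactly the same way with $\widehat{\bf X}(z):=\sigma_2{\bf X}(-z;x,\gamma)\,\sigma_2$. Conditions (1) and (3) are immediate since $\mathbb{C}\setminus\mathbb{R}$ is invariant under $z\mapsto-z$ and conjugation by $\sigma_2$ is an involution, and one finds $\widehat{\bf X}(z)=\mathbb{I}-\sigma_2{\bf X}_1\sigma_2 z^{-1}+\sigma_2{\bf X}_2\sigma_2 z^{-2}+\mathcal{O}(z^{-3})$. For condition (2) the reflection $z\mapsto-z$ again swaps the half-planes, hence $\widehat{\bf X}_{\pm}(z)=\sigma_2{\bf X}_{\mp}(-z)\sigma_2$, and it remains to verify $\sigma_2\,v(-z)\,\sigma_2=v(z)^{-1}$; this is where I expect the computation to use the specific structure of $r$, namely that $r(z;\gamma)=-\im\sqrt{\gamma}\,\e^{-\frac14 z^2}$ is purely imaginary (so $\overline{r(z)}=-r(z)$) and even (so $r(-z)=r(z)$), which together make the off-diagonal entries line up. Uniqueness gives $\widehat{\bf X}={\bf X}$, i.e. \eqref{sym:1}, and matching $z^{-i}$-coefficients gives ${\bf X}_i=(-1)^i\sigma_2{\bf X}_i\sigma_2$, equivalently \eqref{sym:2}.

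I do not anticipate a serious obstacle here. The main things to be careful about are the orientation and boundary-value bookkeeping under the anti-holomorphic substitution $z\mapsto\bar z$ and the reflection $z\mapsto-z$ (which is why the jump matrices come out as $v^{-1}$ rather than $v$), and the fact that both jump-matrix identities $\sigma_1\overline v\sigma_1=v^{-1}$ and $\sigma_2\,v(-\cdot)\,\sigma_2=v^{-1}$ rely crucially on $r(z;\gamma)$ being purely imaginary and even in $z$; an arbitrary Schwartz-class reflection coefficient would not enjoy these symmetries.
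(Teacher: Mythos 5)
Your argument is correct and is exactly the paper's: build the candidate symmetry matrices, check they solve RHP \ref{master0}, invoke uniqueness (Theorem \ref{solv:2}), and read off \eqref{l:36}, \eqref{sym:2} from the $z^{-i}$ coefficients; the first claim is likewise read off from \eqref{l:32} and \eqref{l:34} since the lens factor is $\mathbb{I}$ near $\infty$. One small inaccuracy in your commentary: the identity $\sigma_1\overline{v(z)}\,\sigma_1=v(z)^{-1}$ does \emph{not} use $\overline{r(z)}=-r(z)$ -- it holds for an arbitrary Schwartz reflection coefficient and is the standard Schwarz symmetry of the defocusing ZS jump (a direct check with $\det v=1$ gives it, no reality assumption needed). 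It is only the second identity, $\sigma_2\,v(-z)\,\sigma_2=v(z)^{-1}$, that genuinely requires $r$ to be both even in $z$ and purely imaginary, i.e.\ it is specific to $r(z;\gamma)=-\im\sqrt{\gamma}\,\e^{-z^2/4}$.
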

\begin{proof} The connection between ${\bf Y}_1$ and ${\bf X}_1$ follows from \eqref{l:34}. For \eqref{l:35}, respectively \eqref{sym:1}, use unique solvability of RHP \ref{master0} as $\sigma_1\overline{{\bf X}(\bar{z};x,\gamma)}\sigma_1$, respectively $\sigma_2{\bf X}(-z;x,\gamma)\sigma_2$, solve the same problem.
\end{proof}
And finally, the following straightforward and standard steps, compare \eqref{ZS:0} and \eqref{ZS:1} above: Since
\begin{equation*}
	{\bf W}(z):={\bf X}(z)\e^{-\im xz\sigma_3},\ \ \ z\in\mathbb{C}\setminus\mathbb{R},
\end{equation*}
solves a RHP with an $x$-independent jump on $\mathbb{R}$, we know that $\frac{\partial{\bf W}}{\partial x}{\bf W}^{-1}$ is an entire function. In fact, using condition (3) in RHP \ref{master0} and Liouville's theorem, we find 
\begin{equation}\label{l:37}
	\frac{\partial{\bf W}}{\partial x}=\left\{-\im z\sigma_3+2\im\begin{bmatrix}0 & X_1^{12}\smallskip\\ -X_1^{21} & 0\end{bmatrix}\right\}{\bf W}.
\end{equation}
However, by definition,
\begin{equation*}
	y=y(x,t;\gamma):=2\im X_1^{12}(x,t,\gamma)\ \ \ \ \ \ \ \ \Rightarrow\ \ \ \ \ \bar{y}(x,t;\gamma)\stackrel{\eqref{l:36}}{=}-2\im X_1^{21}(x,t,\gamma).
\end{equation*}
Moreover, expanding $\frac{\partial{\bf W}}{\partial x}{\bf W}^{-1}$ up to $\mathcal{O}(z^{-2})$ as $z\rightarrow\infty$, we obtain from comparison with \eqref{l:37},
\begin{equation*}
	\frac{\partial{\bf X}_1}{\partial x}=-2\im\begin{bmatrix}-X_1^{12}X_1^{21} & X_2^{12}-X_1^{12}X_1^{22}\smallskip\\ -X_2^{21}+X_1^{21}X_1^{11} & X_1^{21}X_1^{12}\end{bmatrix},
\end{equation*}
so in the $(22)$-entry,
\begin{equation}\label{l:40}
	\frac{\partial X_1^{22}}{\partial x}=-\frac{\im}{2}|y|^2.
\end{equation}
We summarize by combining \eqref{l:28}, Proposition \ref{connect} and \eqref{l:40},
\begin{equation*}
	\frac{\partial^2}{\partial t^2}\ln\det(1-\gamma T\chi_t\upharpoonright_{L^2(\mathbb{R})})\stackrel{\eqref{l:28}}{=}-\im\frac{\partial}{\partial t}Y_1^{22}(t,\gamma)=-\frac{\im}{2}\frac{\partial}{\partial x'}X_1^{22}(x',\gamma)\Bigg|_{x'=\frac{t}{2}}\stackrel{\eqref{l:40}}{=}-\frac{1}{4}\left|y\left(\frac{t}{2};\gamma\right)\right|^2,
\end{equation*}
so that after integration
\begin{equation}\label{l:41}
	\ln\det(1-\gamma T\chi_t\upharpoonright_{L^2(\mathbb{R})})=-\frac{1}{4}\int_t^{\infty}(x-t)\left|y\left(\frac{x}{2};\gamma\right)\right|^2\d x+c_1(\gamma)t+c_2(\gamma)
\end{equation}
for some $t$-independent $c_i$. The fastest way to show $c_i=0$ follows from a comparison of the $t\rightarrow+\infty$ asymptotic expansion in the left- and right-hand side of \eqref{l:41}.
\subsection{Right tail asymptotics I}\label{right1} Begin with
\begin{equation*}
	\det(1-\gamma T\chi_t\upharpoonright_{L^2(\mathbb{R})})=\exp\left[-\sum_{k=1}^{\infty}\frac{\gamma^k}{k}\tr_{L^2(\mathbb{R})}(T\chi_t)^k\right],
\end{equation*}
and use that, as $t\rightarrow+\infty$,
\begin{equation}\label{l:42}
	\tr_{L^2(\mathbb{R})}(T\chi_t)=\int_t^{\infty}T(x,x)\,\d x=\frac{1}{2\pi}\sqrt{\frac{\pi}{2}}\int_t^{\infty}\textnormal{erfc}(\sqrt{2}\,x)\,\d x=\frac{1}{\sqrt{2\pi}}\frac{\textnormal{erfc}(\sqrt{2}\,t)}{8t}\left(1+\mathcal{O}\left(t^{-2}\right)\right),
\end{equation}
together with
\begin{equation*}
	k\in\mathbb{Z}_{\geq 2}:\ \ \ \ \ \left|\tr_{L^2(\mathbb{R})}(T\chi_t)^k\right|\leq C\big(\textnormal{erfc}(\sqrt{2}t)\big)^k,\ \ \ C>0,\ \ \ t\rightarrow+\infty.
\end{equation*}
In summary,
\begin{lem}\label{tail:1} As $t\rightarrow+\infty$,
\begin{equation*}
	\det(1-\gamma T\chi_t\upharpoonright_{L^2(\mathbb{R})})=1-\frac{\gamma}{\sqrt{2\pi}}\frac{\textnormal{erfc}(\sqrt{2}\,t)}{8t}\left(1+\mathcal{O}\left(t^{-2}\right)\right).
\end{equation*}
\end{lem}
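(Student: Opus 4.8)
The plan is to substitute the two trace estimates recorded immediately above into the exponential representation
\[
	\det(1-\gamma T\chi_t\upharpoonright_{L^2(\mathbb{R})})=\exp\bigg[-\sum_{k=1}^{\infty}\frac{\gamma^k}{k}\tr_{L^2(\mathbb{R})}(T\chi_t)^k\bigg]
\]
and to keep only the $k=1$ term. First I would justify convergence of this series uniformly for $\gamma\in[0,1]$. By Lemma \ref{reg} the operator $T\chi_t$ is positive, so its operator norm is bounded by its trace, $\|T\chi_t\|\leq\tr_{L^2(\mathbb{R})}(T\chi_t)$, and the right-hand side tends to $0$ as $t\to+\infty$ by \eqref{l:42}. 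Hence for all sufficiently large $t$ one has $\|T\chi_t\|<1$, so that $\gamma\|T\chi_t\|<1$ for every $\gamma\in[0,1]$ and the Plemelj--Smithies expansion above is absolutely convergent, uniformly in $\gamma$.

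Next I would isolate the first term: by \eqref{l:42} it equals $-\gamma\,\tr_{L^2(\mathbb{R})}(T\chi_t)=-\frac{\gamma}{\sqrt{2\pi}}\frac{\textnormal{erfc}(\sqrt{2}\,t)}{8t}(1+\mathcal{O}(t^{-2}))$ as $t\to+\infty$. For the remainder I would use the bound $|\tr_{L^2(\mathbb{R})}(T\chi_t)^k|\leq C(\textnormal{erfc}(\sqrt{2}\,t))^k$, $k\geq 2$, recorded above: since $\gamma\leq 1$ and $\textnormal{erfc}(\sqrt{2}\,t)<1$ for large $t$, summing the geometric series yields $\big|\sum_{k\geq 2}\frac{\gamma^k}{k}\tr_{L^2(\mathbb{R})}(T\chi_t)^k\big|\leq C'(\textnormal{erfc}(\sqrt{2}\,t))^2$. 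Therefore the exponent equals $-\frac{\gamma}{\sqrt{2\pi}}\frac{\textnormal{erfc}(\sqrt{2}\,t)}{8t}(1+\mathcal{O}(t^{-2}))+\mathcal{O}((\textnormal{erfc}(\sqrt{2}\,t))^2)$, which tends to $0$, and applying $\e^x=1+x+\mathcal{O}(x^2)$ as $x\to 0$ gives
\[
	\det(1-\gamma T\chi_t\upharpoonright_{L^2(\mathbb{R})})=1-\frac{\gamma}{\sqrt{2\pi}}\frac{\textnormal{erfc}(\sqrt{2}\,t)}{8t}\big(1+\mathcal{O}(t^{-2})\big)+\mathcal{O}\big((\textnormal{erfc}(\sqrt{2}\,t))^2\big).
\]

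Finally I would absorb the quadratic remainder into the leading term: by the large-argument asymptotics $\textnormal{erfc}(z)\sim\e^{-z^2}/(z\sqrt{\pi})$, cf. \cite{NIST}, the quantity $t\,\textnormal{erfc}(\sqrt{2}\,t)$ decays to $0$ faster than any power of $t$, so $(\textnormal{erfc}(\sqrt{2}\,t))^2$ is dominated by $t^{-2}\textnormal{erfc}(\sqrt{2}\,t)/(8t)$ and is swallowed by the factor $(1+\mathcal{O}(t^{-2}))$ already present; this yields the claimed expansion. The bulk of the analytic work — the trace asymptotics \eqref{l:42} and the $k\geq 2$ estimate — is already in hand, so the only point genuinely requiring attention is the uniform convergence of the exponential series near $\gamma=1$, which is handled by the decay $\|T\chi_t\|\to 0$ noted in the first step; the rest is routine bookkeeping.
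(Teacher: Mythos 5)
Your proposal is correct and follows the same route as the paper: the paper simply displays the Plemelj--Smithies series together with the two trace estimates (the asymptotics for $\tr(T\chi_t)$ and the bound $|\tr(T\chi_t)^k|\leq C(\textnormal{erfc}(\sqrt{2}\,t))^k$ for $k\geq 2$) and writes ``In summary'' before the lemma. You merely make explicit the bookkeeping the paper leaves implicit — the uniform-in-$\gamma$ convergence of the series, and why the $k\geq 2$ tail and the exponentiation error are absorbed into the factor $(1+\mathcal{O}(t^{-2}))$ — all of which is sound.
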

On the other hand we will now derive the large $t$-expansion for the right-hand side of \eqref{l:41} from a Deift-Zhou nonlinear steepest descent analysis \cite{DZ} of RHP \ref{NLS}. The results will in turn verify \eqref{easyesti} and thus, after integration in \eqref{l:41} and comparison with Lemma \ref{tail:1}, show that $c_i=0$. The detailed steps of the nonlinear steepest descent analysis are standard: assume $t>0$ throughout and first rescale while simultaneously opening lenses,
\begin{equation}\label{l:43}
	{\bf T}(z;t,\gamma):={\bf N}(zt;t,\gamma)\begin{cases}\Bigl[\begin{smallmatrix}1 & 0\smallskip\\ \im\sqrt{\gamma}\,\e^{-t^2\theta_+(z)} & 1\end{smallmatrix}\Bigr],&\Im z\in(0,\delta)\smallskip\\ \Bigl[\begin{smallmatrix}1&-\im\sqrt{\gamma}\,\e^{-t^2\theta_-(z)}\smallskip\\ 0 & 1\end{smallmatrix}\Bigr],&\Im z\in(-\delta,0)\smallskip\\ \mathbb{I},&\textnormal{else}\end{cases}\ \ \ \ \ \ \textnormal{with}\ \delta>0\ \ \textnormal{fixed}.
\end{equation}
We abbreviate $\theta_{\pm}(z):=\frac{1}{4}(z^2\mp4\im z)$ and keeping the sign charts of $\theta_{\pm}(z)$ in mind, see Figure \ref{fig6} below, the function ${\bf T}(z)$ solves the following RHP.
\begin{center}
\begin{figure}[tbh]
\resizebox{0.451\textwidth}{!}{\includegraphics{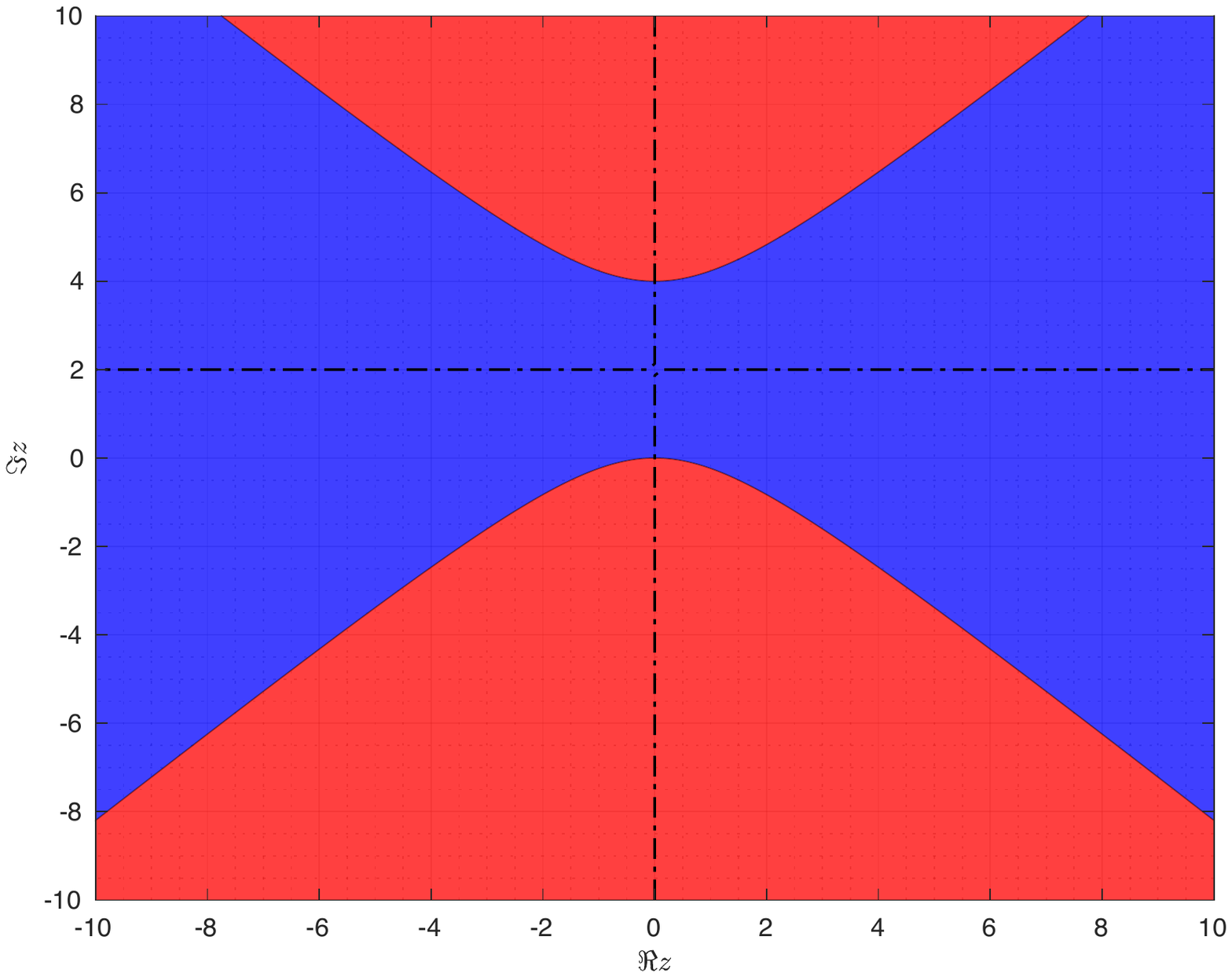}}\ \ \ \ \ \ \resizebox{0.451\textwidth}{!}{\includegraphics{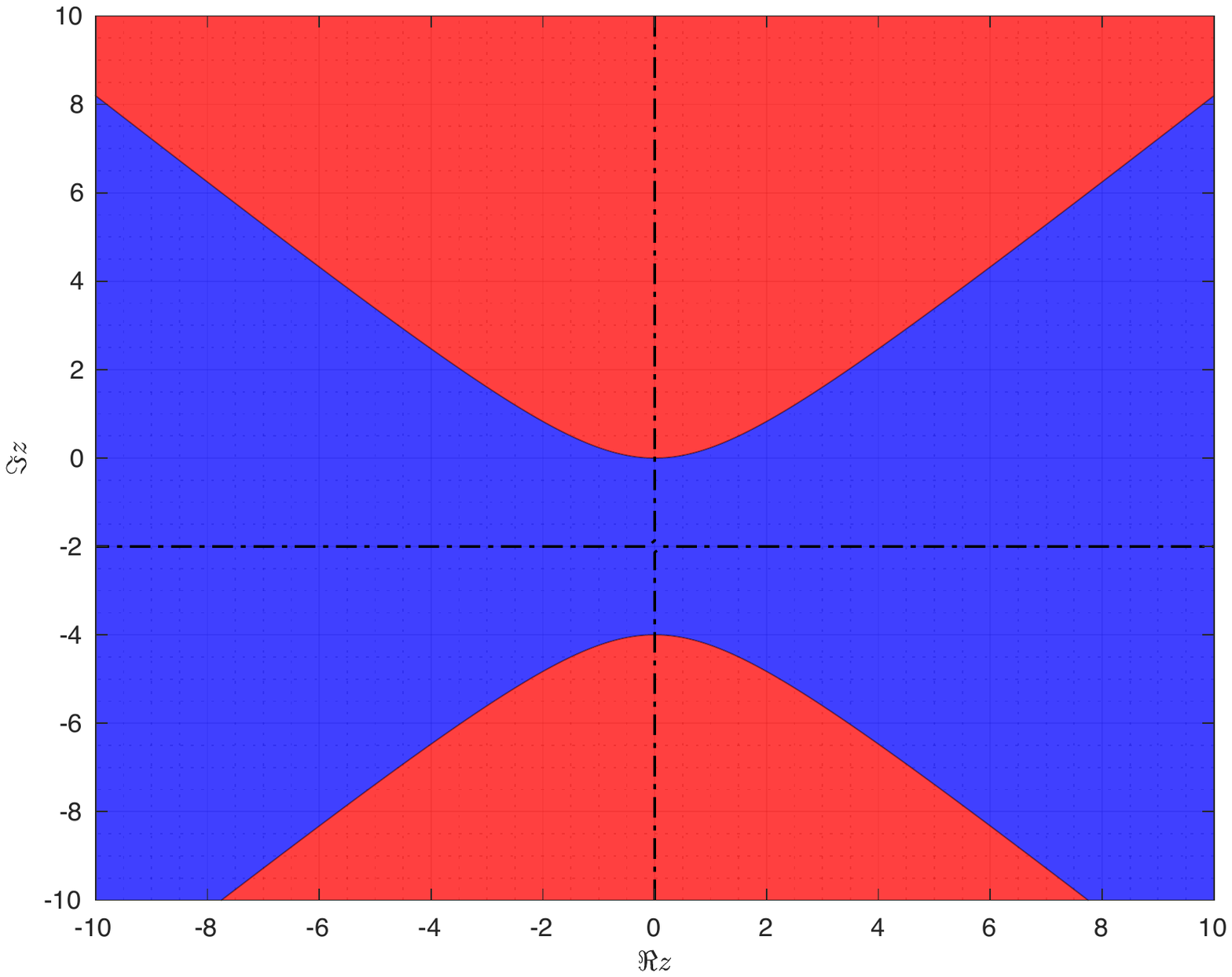}}
\caption{The sign charts of $\theta_+(z)$ on the left and $\theta_-(z)$ on the right. We indicate in red values $z\in\mathbb{C}$ where $\Re\theta_{\pm}(z)<0$ and in blue where $\Re\theta_{\pm}(z)>0$. Also $\Im\theta_{\pm}(z)=0$ is highlighted along the dash-dotted black straight lines.}
\label{fig6}
\end{figure}
\end{center}
\begin{problem}\label{easy} For any $t\in\mathbb{R}_{>0},\gamma\in[0,1]$ the transformed function ${\bf T}(z)={\bf T}(z;t,\gamma)\in\mathbb{C}^{2\times 2}$ defined in \eqref{l:43} has the following properties.
\begin{enumerate}
	\item[(1)] ${\bf T}(z)$ is analytic for $z\in\mathbb{C}\setminus\{\Im z=\pm\delta\}$ and both straight lines are oriented from left to right, see Figure \ref{fig7} below.
	\item[(2)] Along the two straight lines,
	\begin{equation*}
		{\bf T}_+(z)={\bf T}_-(z)\begin{bmatrix}1 & 0\\ -\im\sqrt{\gamma}\,\e^{-t^2\theta_+(z)} & 1\end{bmatrix},\ \ \ \Im z=\delta,
	\end{equation*}
	and
	\begin{equation*}
		{\bf T}_+(z)={\bf T}_-(z)\begin{bmatrix} 1 & -\im\sqrt{\gamma}\,\e^{-t^2\theta_-(z)} \\ 0 & 1\end{bmatrix},\ \ \ \Im z=-\delta.
	\end{equation*}
	\item[(3)] As $z\rightarrow\infty$, the leading order behavior in RHP \ref{NLS} remains formally unchanged.
	\begin{equation*}
		{\bf T}(z)=\mathbb{I}+{\bf T}_1(t,\gamma)z^{-1}+\mathcal{O}\big(z^{-2}\big),\ \ \ \ {\bf T}_1(t,\gamma)={\bf Y}_1(t,\gamma)t^{-1}.
	\end{equation*}
\end{enumerate}
\end{problem}
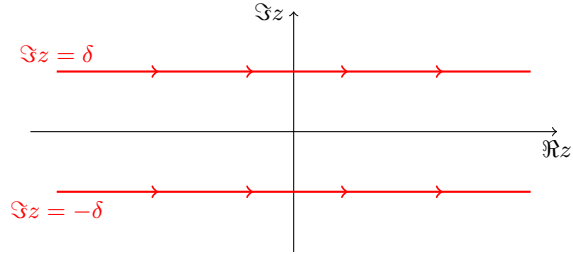
\begin{figure}[tbh]
\begin{tikzpicture}[xscale=0.7,yscale=0.4]
\draw [->] (-5,0) -- (5,0) node[below]{{\small $\Re z$}};
\draw [->] (0,-4) -- (0,4) node[left]{{\small $\Im z$}};
\draw [thick, color=red, decoration={markings, mark=at position 0.2 with {\arrow{<}}}, decoration={markings, mark=at position 0.4 with {\arrow{<}}}, decoration={markings, mark=at position 0.6 with {\arrow{<}}}, decoration={markings, mark=at position 0.8 with {\arrow{<}}}, postaction={decorate}] (4.5,2) -- (-4.5,2) node[above]{{\small $\Im z=\delta$}};
\draw [thick, color=red, decoration={markings, mark=at position 0.2 with {\arrow{<}}}, decoration={markings, mark=at position 0.4 with {\arrow{<}}}, decoration={markings, mark=at position 0.6 with {\arrow{<}}}, decoration={markings, mark=at position 0.8 with {\arrow{<}}}, postaction={decorate}] (4.5,-2) -- (-4.5,-2) node[below]{{\small $\Im z=-\delta$}};
\end{tikzpicture}
\caption{The oriented jump contour $\Im z=\pm\delta$ in RHP \ref{easy}.}
\label{fig7}
\end{figure}
The stationary points of the exponents $\theta_{\pm}(z)$ are $z_{\pm}=\pm 2\im$ and we denote by $\gamma_{\pm}$ the steepest descent contours passing through $z_{\pm}$ along which $\Im \theta_{\pm}(z)=0$. Explicitly,
\begin{equation*}
	\gamma_+:\ \ \ \Re z\in\mathbb{R},\ \Im z=2;\hspace{1.5cm}\gamma_-:\ \ \ \Re z\in\mathbb{R},\ \ \Im z=-2.
\end{equation*}
Note that both contours $\gamma_{\pm}$ lie in the domain where $\Re \theta_{\pm}(z)>0$, see Figure \ref{fig6}, so we are allowed to choose $\delta=2$ in RHP \ref{easy} to match those steepest descent contours. After that, applying standard arguments of the classical Laplace method, we derive the following estimates for the jump matrix ${\bf G}_{{\bf T}}(z;t,\gamma)$  in condition (2) of RHP \ref{easy}. 
\begin{prop}\label{DZ:es1} There exist positive constants $t_0$ and $c$ such that
\begin{equation*}
	\|{\bf G}_{\bf T}(\cdot;t,\gamma)-\mathbb{I}\|_{L^{\infty}(\Im z=\pm 2,|\d z|)}\leq c\sqrt{\gamma}\,\e^{-t^2},\ \ \ \ \|{\bf G}_{\bf T}(\cdot;t,\gamma)-\mathbb{I}\|_{L^2(\Im z=\pm 2,|\d z|)}\leq c\sqrt{\gamma}\,t^{-\frac{1}{2}}\e^{-t^2},
\end{equation*}
hold true for all $t\geq t_0$ and $\gamma\in[0,1]$.
\end{prop}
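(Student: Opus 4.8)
The plan is to reduce the proposition to an elementary Gaussian estimate, exploiting that the jump contour $\Im z=\pm2$ of RHP \ref{easy} has been chosen to coincide with the steepest descent curves $\gamma_{\pm}$ of the phases $\theta_{\pm}$. First I would observe that on each of the two lines the jump matrix ${\bf G}_{\bf T}(z;t,\gamma)$ appearing in condition (2) of RHP \ref{easy} differs from $\mathbb{I}$ in a single off-diagonal entry, namely $-\im\sqrt{\gamma}\,\e^{-t^2\theta_+(z)}$ on $\Im z=2$ and $-\im\sqrt{\gamma}\,\e^{-t^2\theta_-(z)}$ on $\Im z=-2$. Hence, for $p\in\{2,\infty\}$,
\[
	\|{\bf G}_{\bf T}(\cdot;t,\gamma)-\mathbb{I}\|_{L^p(\Im z=\pm2,|\d z|)}=\sqrt{\gamma}\,\big\|\e^{-t^2\Re\theta_{\pm}}\big\|_{L^p(\Im z=\pm2,|\d z|)},
\]
so everything comes down to understanding $\Re\theta_{\pm}$ along the two horizontal lines.

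Next I would parametrize: on $\Im z=2$ write $z=s+2\im$ with $s\in\mathbb{R}$, and compute directly from $\theta_+(z)=\frac14(z^2-4\im z)$ that $\theta_+(s+2\im)=\frac14\big((s+2\im)^2-4\im(s+2\im)\big)=1+\frac{s^2}{4}$, which is real and bounded below by $1$; symmetrically $\theta_-(s-2\im)=1+\frac{s^2}{4}$ on $\Im z=-2$. This is exactly the statement that $\gamma_{\pm}$ are the curves of stationary phase along which $\Re\theta_{\pm}$ attains its minimum at the saddle $z_{\pm}=\pm2\im$, and it is the reason one is allowed to take $\delta=2$. Consequently $|\e^{-t^2\theta_{\pm}(z)}|=\e^{-t^2}\e^{-t^2s^2/4}$ on both contours.

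The $L^\infty$ bound is then immediate from $\sup_{s\in\mathbb{R}}\e^{-t^2}\e^{-t^2s^2/4}=\e^{-t^2}$, giving $\|{\bf G}_{\bf T}(\cdot;t,\gamma)-\mathbb{I}\|_{L^{\infty}}=\sqrt{\gamma}\,\e^{-t^2}$. For the $L^2$ bound, since $|\d z|=\d s$ along each line and the Gaussian integral is exact,
\[
	\|{\bf G}_{\bf T}(\cdot;t,\gamma)-\mathbb{I}\|_{L^2(\Im z=\pm2,|\d z|)}^2=2\gamma\,\e^{-2t^2}\int_{-\infty}^{\infty}\e^{-t^2s^2/2}\,\d s=2\sqrt{2\pi}\,\gamma\,\e^{-2t^2}\,t^{-1},
\]
whence $\|{\bf G}_{\bf T}(\cdot;t,\gamma)-\mathbb{I}\|_{L^2}=(8\pi)^{1/4}\sqrt{\gamma}\,t^{-1/2}\e^{-t^2}$. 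As both identities are exact and hold for every $t>0$ and $\gamma\in[0,1]$, the proposition follows with, say, $t_0=1$ and $c=(8\pi)^{1/4}$.

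I do not expect a genuine obstacle here: because the jump contour was aligned with the steepest descent curve, the classical Laplace method degenerates to an exact Gaussian integral, and the only points requiring attention are the sign conventions in $\theta_{\pm}$ (so that $\Re\theta_{\pm}>0$, hence decay rather than growth, on the lifted line) and the uniformity of $c$ in $\gamma$, which is transparent since the whole estimate is linear in $\sqrt{\gamma}$. If for some reason one could not take $\delta$ exactly equal to $2$, one would instead use the inequality $\Re\theta_{\pm}(z)\geq1+c_0(\Re z)^2$ for a suitable $c_0>0$ on the chosen line and run the same Gaussian estimate with this lower bound in place of the identity, obtaining the stated decay up to the value of the constant.
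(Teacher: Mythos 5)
Your proof is correct and matches the approach the paper intends: the paper simply invokes ``classical Laplace method'' after aligning $\delta=2$ with the steepest descent lines, and your explicit computation that $\theta_{\pm}(s\pm2\im)=1+s^2/4$ is precisely the observation that makes this work, with the Laplace bound degenerating to an exact Gaussian integral. Your constants $c=(8\pi)^{1/4}$ and $t_0=1$ are valid, and the linearity of the estimate in $\sqrt{\gamma}$ indeed gives the stated uniformity over $\gamma\in[0,1]$.
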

These estimates show by general theory \cite{DZ} that RHP \ref{easy} is uniquely solvable in $L^2(\Im z=\pm 2,|\d z|)$ for $t\geq t_0,\gamma\in[0,1]$ and its solution can be computed from the integral equation
\begin{equation}\label{l:44}
	{\bf T}(z)=\mathbb{I}+\frac{1}{2\pi\im}\int_{\Im\lambda=\pm 2}{\bf T}_-(\lambda)\big({\bf G}_{{\bf T}}(\lambda)-\mathbb{I}\big)\frac{\d\lambda}{\lambda-z},\ \ \ \ \Im z\neq\pm 2,
\end{equation}
using that
\begin{equation}\label{l:45}
	\|{\bf T}(\cdot;t,\gamma)-\mathbb{I}\|_{L^2(\Im z=\pm 2,|\d z|)}\leq c\sqrt{\gamma}\,t^{-\frac{1}{2}}\e^{-t^2},\ \ \ \ \ \forall\,t\geq t_0,\ \ \gamma\in[0,1].
\end{equation}
In particular, see condition (3) in RHP \ref{easy} and \eqref{l:44},
\begin{equation*}
	{\bf Y}_1(t,\gamma)=t\,{\bf T}_1(t,\gamma)=\frac{\im t}{2\pi}\int_{\Im\lambda=\pm 2}{\bf T}_-(\lambda)\big({\bf G}_{\bf T}(\lambda)-\mathbb{I}\big)\,\d\lambda,
\end{equation*}
so that with \eqref{l:45} and Proposition \ref{DZ:es1}, as $t\rightarrow+\infty$,
\begin{equation}\label{asres}
	{\bf Y}_1(t,\gamma)=\frac{\im t}{2\pi}\int_{\Im\lambda=\pm 2}\big({\bf G}_{\bf T}(\lambda)-\mathbb{I}\big)\,\d\lambda+\mathcal{O}\left(\gamma\e^{-2t^2}\right)=\frac{\sigma_1\sqrt{\gamma}}{\sqrt{\pi}}\e^{-t^2}+\mathcal{O}\left(\gamma\e^{-2t^2}\right).
\end{equation}
By Proposition \ref{connect} and the identity $y(x;\gamma)=2\im X_1^{12}(x,\gamma)$ we then find \eqref{easyesti}, namely
\begin{equation*}
	y(x;\gamma)=2\im X_1^{12}(x,\gamma)=2\im Y_1^{12}(2x,\gamma)=2\im\sqrt{\frac{\gamma}{\pi}}\,\e^{-4x^2}+\mathcal{O}\left(\gamma\e^{-8x^2}\right),\ \ \ x\rightarrow+\infty,
\end{equation*}
and with this back in the right-hand side of \eqref{l:41},
\begin{equation}\label{l:46}
	-\frac{1}{4}\int_t^{\infty}(x-t)\left|y\left(\frac{x}{2};\gamma\right)\right|^2\,\d x=\mathcal{O}\left(\gamma t^{-2}\e^{-2t^2}\right),\ \ t\rightarrow+\infty.
\end{equation}
Hence, comparing Lemma \ref{tail:1} to \eqref{l:46} in \eqref{l:41} we find $c_i=0$ and have therefore shown
\begin{equation}\label{part1fin}
	\det(1-\gamma T\chi_t\upharpoonright_{L^2(\mathbb{R})})=\exp\left[-\frac{1}{4}\int_t^{\infty}(x-t)\left|y\left(\frac{x}{2};\gamma\right)\right|^2\,\d x\right],
\end{equation}
where $y=y(x;\gamma):\mathbb{R}\rightarrow\im\mathbb{R}$ is related to the inverse scattering problem for \eqref{ZS:1} with the indicated reflection coefficient.\footnote{Proposition \ref{connect} together with $y(x;\gamma)=2\im X^{12}(x,\gamma)$ shows that $y(x;\gamma)$ is purely imaginary for $(x,\gamma)\in\mathbb{R}\times[0,1]$.}  This verifies the first part in \eqref{e:6}.
\section{Proof of Theorem \ref{main1}, part 2}\label{contin2}
Our goal in this section is to show that the integral
\begin{equation}\label{l:47}
	\Gamma_{t\gamma}=1-\gamma\int_t^{\infty}G(x)\big((1-\gamma T\chi_t\upharpoonright_{L^2(\mathbb{R})})^{-1}g\big)(x)\,\d x,\ \ \ t\in\mathbb{R};\ \ \ \ \ \ g(x)=\frac{1}{\sqrt{\pi}}\e^{-x^2},
\end{equation}
equals
\begin{equation}\label{l:48}
	\Gamma_{t\gamma}=\cosh\left[-\frac{\im}{2}\int_t^{\infty}y\left(\frac{x}{2},0;\gamma\right)\,\d x\right]-\sqrt{\gamma}\sinh\left[-\frac{\im}{2}\int_t^{\infty}y\left(\frac{x}{2},0;\gamma\right)\,\d x\right],
\end{equation}
which, in turn, will complete the proof of Theorem \ref{main1}, identity \eqref{e:6}, through \eqref{e:2}. We start by introducing
\begin{equation*}
	u^{\gamma}(t):=\gamma\int_t^{\infty}G(x)\big((1-\gamma T\chi_t\upharpoonright_{L^2(\mathbb{R})})^{-1}g\big)(x)\,\d x = \int_t^{\infty}Q^{\gamma}(x;t)\int_{-\infty}^xg^{\gamma}(v)\,\d v\,\d x,\ \ \ (t,\gamma)\in\mathbb{R}\times[0,1],
\end{equation*}
where 
\begin{equation*}
	Q^{\gamma}(x;t):=\big((1-\gamma T\chi_t\upharpoonright_{L^2(\mathbb{R})})^{-1}g^{\gamma}\big)(x),\ \ \ \ \ g^{\gamma}(x):=\sqrt{\gamma}\,g(x),\ \ \ x\in\mathbb{R}.
\end{equation*}
\begin{prop} For any $(t,\gamma)\in\mathbb{R}\times[0,1]$,
\begin{equation}\label{sweet:0}
	\frac{\d}{\d t}u^{\gamma}(t)=-Y_1^{12}(t,\gamma)\big((1-\gamma T\chi_t\upharpoonright_{L^2(\mathbb{R})})^{-1}G^{\gamma}\big)(t),\ \ \ \ G^{\gamma}(t):=\int_{-\infty}^tg^{\gamma}(y)\,\d y,
\end{equation}
in terms of ${\bf Y}_1$ from RHP \ref{master}, condition (3).
\end{prop}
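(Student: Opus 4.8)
The plan is to differentiate $u^{\gamma}(t)$ directly and then reorganize the result using the self-adjointness of $T$ and a resolvent identity. Abbreviate $R_t:=(1-\gamma T\chi_t\upharpoonright_{L^2(\mathbb{R})})^{-1}$, which exists for every $(t,\gamma)\in\mathbb{R}\times[0,1]$ by Lemma \ref{reg}, so that $Q^{\gamma}(x;t)=(R_tg^{\gamma})(x)$ and $u^{\gamma}(t)=\int_t^{\infty}(R_tg^{\gamma})(x)\,G^{\gamma}(x)\,\d x$. Since $T$ has a smooth, rapidly decaying kernel and $g^{\gamma},G^{\gamma}\in\mathcal{S}(\mathbb{R})$, the maps $Q^{\gamma}(\cdot;t)$ and $(R_tG^{\gamma})(\cdot)$ are continuous and depend smoothly on $t$ by a standard Neumann-series/small-norm argument, so I may differentiate under the integral,
\[
	\frac{\d}{\d t}u^{\gamma}(t)=-Q^{\gamma}(t;t)\,G^{\gamma}(t)+\int_t^{\infty}\frac{\partial Q^{\gamma}}{\partial t}(x;t)\,G^{\gamma}(x)\,\d x .
\]
From $R_t(1-\gamma T\chi_t)=1$ one gets $\partial_tR_t=\gamma R_tT(\partial_t\chi_t)R_t$, where $\partial_t\chi_t$ is multiplication by $-\delta(\cdot-t)$, i.e.\ $(\partial_t\chi_t)h=-h(t)\delta_t$; hence $\partial_tQ^{\gamma}(x;t)=-\gamma\,Q^{\gamma}(t;t)\,(R_tT(\cdot,t))(x)$ and
\[
	\frac{\d}{\d t}u^{\gamma}(t)=-Q^{\gamma}(t;t)\Big[G^{\gamma}(t)+\gamma\int_t^{\infty}(R_tT(\cdot,t))(x)\,G^{\gamma}(x)\,\d x\Big].
\]

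The next step is to collapse the bracket. Because $T(x,y)=T(y,x)$ we have $(T\chi_t)^{\intercal}=\chi_tT$, so $R_t^{\intercal}=(1-\gamma\chi_tT)^{-1}$, together with the resolvent identity $(1-\gamma\chi_tT)^{-1}\chi_t=\chi_t(1-\gamma T\chi_t)^{-1}=\chi_tR_t$ (verified by multiplying on the left with $1-\gamma\chi_tT$). Using $R_t^{\intercal}$ and then this identity in the real bilinear pairing turns $\gamma\int_t^{\infty}(R_tT(\cdot,t))(x)G^{\gamma}(x)\,\d x$ into $\gamma\int_t^{\infty}T(x,t)(R_tG^{\gamma})(x)\,\d x$; adding $G^{\gamma}(t)$ and recognizing the right-hand side of the identity $(R_tG^{\gamma})(t)=G^{\gamma}(t)+\gamma(T\chi_tR_tG^{\gamma})(t)$ shows that the bracket equals exactly $(R_tG^{\gamma})(t)=\big((1-\gamma T\chi_t\upharpoonright_{L^2(\mathbb{R})})^{-1}G^{\gamma}\big)(t)$. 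Thus $\frac{\d}{\d t}u^{\gamma}(t)=-Q^{\gamma}(t;t)\big((1-\gamma T\chi_t)^{-1}G^{\gamma}\big)(t)$, and it remains only to identify $Q^{\gamma}(t;t)=(R_tg^{\gamma})(t)$ with $Y_1^{12}(t,\gamma)$.

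For that identification I would run $g^{\gamma}$ through the conjugation chain of Section \ref{prelim}: from \eqref{Fid:1} and the definitions of $M$ and $N=N_1N_2$ one has $(1-\gamma T\chi_t\upharpoonright_{L^2(\mathbb{R})})^{-1}=\mathcal{F}_{\textnormal{ext}}M(1-\gamma N\upharpoonright_{L^2(\mathring{\Omega})})^{-1}M^{-1}\mathcal{F}_{\textnormal{ext}}^{-1}$, with $M$ multiplication by $\e^{-\lambda^2/8}\chi_{\mathbb{R}}(\lambda)$. The Gaussian Fourier identity \eqref{l:1} gives $M^{-1}\mathcal{F}_{\textnormal{ext}}^{-1}g^{\gamma}=f_1$, the first component of ${\bf f}$ in \eqref{l:21}. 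Since $N_1$ sends $\Gamma$-supported functions to $\mathbb{R}$-supported ones we have $N_1f_1=0$, so using $(1-G)^{-1}=(1+\sqrt{\gamma}N_2)(1-\gamma N)^{-1}(1+\sqrt{\gamma}N_1)$ (a consequence of Lemma \ref{clean} and $N_j^2=0$), the function $F_1=(1-G)^{-1}f_1$ of the IIKS resolvent \eqref{l:24} agrees with $(1-\gamma N)^{-1}f_1$ on $\mathbb{R}$. Applying $\mathcal{F}_{\textnormal{ext}}M$ to $(1-\gamma N)^{-1}f_1$ and evaluating at $z=t$ produces $\int_{\mathbb{R}}\frac{1}{\sqrt{2\pi}}\e^{-\lambda^2/8-\im t\lambda}\big((1-\gamma N)^{-1}f_1\big)(\lambda)\,\d\lambda$, whose integrand is $g_2(\lambda)\big((1-\gamma N)^{-1}f_1\big)(\lambda)$ with $g_2$ as in \eqref{l:21}; comparing with the IIKS formula \eqref{l:31}, which yields $Y_1^{12}(t,\gamma)=\int_{\mathring{\Omega}}F_1(\lambda)g_2(\lambda)\,\d\lambda=\int_{\mathbb{R}}F_1(\lambda)g_2(\lambda)\,\d\lambda$ (as $g_2$ is supported on $\mathbb{R}$), gives $(R_tg^{\gamma})(t)=Y_1^{12}(t,\gamma)$ and finishes \eqref{sweet:0}.

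The main obstacle I anticipate is the support bookkeeping in this last paragraph — keeping track of which scalar pieces built from ${\bf f},{\bf g},N_1,N_2$ live on $\mathbb{R}$, on $\Gamma$, or on $\mathbb{R}+\im\delta$, and correctly discarding the terms that vanish by $N_j^2=0$ and by disjointness of these contours. A routine but necessary point is justifying the differentiation under the integral sign and the $\delta$-function manipulation in $\partial_t\chi_t$; both are legitimate because $T$ has a Schwartz kernel and $1-\gamma T\chi_t$ is boundedly invertible uniformly for $t$ in compact sets.
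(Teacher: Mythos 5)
Your proof is correct and follows essentially the same route as the paper: differentiate $u^\gamma$ under the integral, use the derivative of the resolvent kernel (the paper cites this as the "basic fact" $\partial_tQ^\gamma(x;t)=-R^\gamma(x,t)Q^\gamma(t;t)$ from \cite{TW3}), and identify $Q^\gamma(t;t)=Y_1^{12}(t,\gamma)$ via the conjugation chain \eqref{l:49} and the support bookkeeping for $N_1,N_2$. The only differences are cosmetic — you place the identification $Q^\gamma(t;t)=Y_1^{12}$ last rather than first, you make the resolvent-kernel manipulation explicit via $\partial_t\chi_t$ and the identity $\chi_tR_t=R_t^{\intercal}\chi_t$ where the paper appeals to self-adjointness, and you route through $(1-\gamma N)^{-1}f_1$ instead of writing $F_1=(1-G)^{-1}f_1$ directly — so no substantive comparison is needed.
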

\begin{proof}
Start by recalling \eqref{Fid:1}, the definition of the operator $N$, see Remark \ref{tc}, and Lemma \ref{clean}, identity \eqref{neat},
\begin{eqnarray}
	1-\gamma T\chi_t\upharpoonright_{L^2(\mathbb{R})}&=&\mathcal{F}_{\textnormal{ext}}(1-\gamma E_{\textnormal{ext}}\upharpoonright_{L^2(\Omega)})\mathcal{F}_{\textnormal{ext}}^{-1}=\mathcal{F}_{\textnormal{ext}}M(1-\gamma N\upharpoonright_{L^2(\mathring{\Omega})})M^{-1}\mathcal{F}_{\textnormal{ext}}^{-1}\nonumber\\
	&=&\mathcal{F}_{\textnormal{ext}}M(1-\sqrt{\gamma}N_1\upharpoonright_{L^2(\mathring{\Omega})})^{-1}(1-G\upharpoonright_{L^2(\mathring{\Omega})})(1-\sqrt{\gamma}N_2\upharpoonright_{L^2(\mathring{\Omega})})^{-1}M^{-1}\mathcal{F}_{\textnormal{ext}}^{-1}\label{l:49}.
\end{eqnarray}
With \eqref{l:47} we now compute
\begin{equation*}
	\big(M^{-1}\mathcal{F}_{\textnormal{ext}}^{-1}g^{\gamma}\big)(x)\stackrel{\eqref{l:1}}{=}\sqrt{\frac{\gamma}{2\pi}}\,\e^{-\frac{1}{8}x^2}\chi_{\mathbb{R}}(x)\stackrel{\eqref{l:21}}{=}f_1(x).
\end{equation*}
Since $f_1\in L^2(\mathbb{R})$ is supported on the real line only, it thus lies in the kernel of the operator $N_1:L^2(\mathring{\Omega},|d\lambda|)\rightarrow L^2(\mathring{\Omega},|\d\lambda|)$, compare \eqref{l:14}. Hence,
\begin{equation}\label{l:50}
	\big((1-G\upharpoonright_{L^2(\mathring{\Omega})})^{-1}(1-\sqrt{\gamma}N_1\upharpoonright_{L^2(\mathring{\Omega})})M^{-1}\mathcal{F}_{\textnormal{ext}}^{-1}g^{\gamma}\big)(x)=\big((1-G\upharpoonright_{L^2(\mathring{\Omega})})^{-1}f_1\big)(x)\stackrel{\eqref{l:24}}{=}F_1(x),
\end{equation}
in terms of $F_1(\lambda)=F_1(\lambda;t,\gamma)$ defined in \eqref{l:24}. But any function in the range of the operator $N_2$ will be supported on $\Gamma\neq\mathbb{R}$ only, compare again \eqref{l:14}. Hence we have $\textnormal{Ran}(N_2)\subseteq\textnormal{Ker}(\mathcal{F}_{\textnormal{ext}})$ by Convention \ref{id:1} and therefore from \eqref{l:49} and \eqref{l:50}, with $(x,t,\gamma)\in\mathbb{R}^2\times[0,1]$,
\begin{equation}\label{neeto}
	Q^{\gamma}(x;t)=\frac{1}{\sqrt{2\pi}}\int_{-\infty}^{\infty}\e^{-\im x\lambda}\e^{-\frac{1}{8}\lambda^2}F_1(\lambda)\,\d\lambda=\frac{1}{\sqrt{2\pi}}\int_{\mathring{\Omega}}\e^{-\im x\lambda}\e^{-\frac{1}{8}\lambda^2}\chi_{\mathbb{R}}(\lambda)F_1(\lambda)\,\d\lambda.
\end{equation}
As an important special case, we learn from the last equation that (compare \eqref{l:21} and \eqref{l:31})
\begin{equation}\label{sweet}
	Q^{\gamma}(t;t)=Y_1^{12}(t,\gamma),\ \ \ \ (t,\gamma)\in\mathbb{R}\times[0,1].
\end{equation}
At this point, $t$-differentiate $u^{\gamma}(t)$, using \eqref{sweet},
\begin{equation}\label{sweet:2}
	\frac{\d}{\d t}u^{\gamma}(t)=-Y_{12}(t,\gamma)\int_{-\infty}^tg^{\gamma}(v)\,\d v+\int_t^{\infty}\left[\frac{\partial}{\partial t}Q^{\gamma}(x;t)\right]\int_{-\infty}^xg^{\gamma}(v)\,\d v\,\d x
\end{equation}
and recall the following basic fact (cf. \cite[$(2.10)$]{TW3} or \cite[$(9.132)$]{F})
\begin{equation*}
	\frac{\partial}{\partial t}Q^{\gamma}(x;t)=-R^{\gamma}(x,t)Q^{\gamma}(t;t)=-R^{\gamma}(x,t)Y_1^{12}(t,\gamma),
\end{equation*}
where $R^{\gamma}(x,y)$ is the kernel of the resolvent integral operator $(1-\gamma T\chi_t\upharpoonright_{L^2(\mathbb{R})})^{-1}=1+R^{\gamma}\upharpoonright_{L^2(\mathbb{R})}$. Thus, back in \eqref{sweet:2},
\begin{align*}
	&\frac{\d}{\d t}u^{\gamma}(t)=-Y_{12}(t,\gamma)\int_{-\infty}^tg^{\gamma}(v)\,\d v-Y_1^{12}(t,\gamma)\int_t^{\infty}R^{\gamma}(x,t)\int_{-\infty}^xg^{\gamma}(v)\,\d v\,\d x\\
	=&-Y_1^{12}(t,\gamma)\int_t^{\infty}\big(1-\gamma T\chi_t\upharpoonright_{L^2(\mathbb{R})}\big)^{-1}(x,t)\int_{-\infty}^xg^{\gamma}(v)\,\d v\,\d x
	=-Y_1^{12}(t,\gamma)\big((1-\gamma T\chi_t\upharpoonright_{L^2(\mathbb{R})})^{-1}G^{\gamma}\big)(t),
\end{align*}
where we used self-adjointness of $T\chi_t$. Identity \eqref{sweet:0} is proven.
\end{proof}
\begin{prop} For any $(t,\gamma)\in\mathbb{R}\times[0,1]$,
\begin{equation}\label{sweet:3}
	\big((1-\gamma T\chi_t\upharpoonright_{L^2(\mathbb{R})})^{-1}G^{\gamma}\big)(t)=\int_{-\infty}^t\big((1-\gamma T\chi_t\upharpoonright_{L^2(\mathbb{R})})^{-1}g^{\gamma}\big)(y)\,\d y.
\end{equation}
\end{prop}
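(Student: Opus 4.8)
The plan is to treat both sides of \eqref{sweet:3} as functions of $t$, show that they satisfy a common homogeneous relation forcing their quotient to be $t$-independent, and then fix the constant by letting $t\to+\infty$. Set $\widetilde Q^\gamma(x;t):=\big((1-\gamma T\chi_t\upharpoonright_{L^2(\mathbb R)})^{-1}G^\gamma\big)(x)$ with $G^\gamma(x)=\int_{-\infty}^xg^\gamma(v)\,\d v$, so that the left side of \eqref{sweet:3} is $\widetilde Q^\gamma(t;t)$, and abbreviate the right side by $v^\gamma(t):=\int_{-\infty}^tQ^\gamma(y;t)\,\d y$. The only structural input about $T$ beyond the resolvent equation I would use is the differential identity
\begin{equation*}
	\Big(\frac{\partial}{\partial x}+\frac{\partial}{\partial y}\Big)T(x,y)=-g(x)g(y),\qquad (x,y)\in\mathbb R^2,
\end{equation*}
immediate from \eqref{e:2} since the sum of the $x$- and $y$-derivatives of $\e^{-(x+u)^2}\e^{-(y+u)^2}$ equals its $u$-derivative and $\frac1\pi\int_0^\infty\partial_u\big[\e^{-(x+u)^2}\e^{-(y+u)^2}\big]\,\d u=-\frac1\pi\e^{-x^2}\e^{-y^2}=-g(x)g(y)$. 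I would also use, exactly as in the previous proposition, the standard fact $\frac{\partial}{\partial t}Q^\gamma(x;t)=-R^\gamma(x,t)Q^\gamma(t;t)$ and its verbatim analogue for $\widetilde Q^\gamma$; differentiating the resolvent equation $Q^\gamma(x;t)=g^\gamma(x)+\gamma\int_t^\infty T(x,w)Q^\gamma(w;t)\,\d w$ in $t$ moreover identifies $R^\gamma(x,t)=\gamma\big((1-\gamma T\chi_t\upharpoonright_{L^2(\mathbb R)})^{-1}T(\cdot,t)\big)(x)$.

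The key step is to differentiate $\widetilde Q^\gamma(x;t)$ in $x$, starting from $\widetilde Q^\gamma(x;t)=G^\gamma(x)+\gamma\int_t^\infty T(x,w)\widetilde Q^\gamma(w;t)\,\d w$. Replacing $\partial_xT(x,w)$ by $-\partial_wT(x,w)-g(x)g(w)$ and integrating by parts in $w$ — the boundary term at $w=+\infty$ vanishes by the Gaussian decay of $T$, while the one at $w=t$ contributes $\gamma T(x,t)\widetilde Q^\gamma(t;t)$ — yields $(1-\gamma T\chi_t)\big[\partial_x\widetilde Q^\gamma(\cdot;t)\big]=g^\gamma+\gamma\widetilde Q^\gamma(t;t)T(\cdot,t)-\gamma\big(\int_t^\infty g\,\widetilde Q^\gamma\big)g$. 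Applying $(1-\gamma T\chi_t)^{-1}$, using $(1-\gamma T\chi_t)^{-1}g^\gamma=Q^\gamma$ and the formula for $R^\gamma(\cdot,t)$ above, and observing that by self-adjointness of $T\chi_t$ (as in the preceding proof) $\sqrt\gamma\int_t^\infty g(x)\widetilde Q^\gamma(x;t)\,\d x=\int_t^\infty g^\gamma(x)\widetilde Q^\gamma(x;t)\,\d x=\int_t^\infty G^\gamma(x)Q^\gamma(x;t)\,\d x=u^\gamma(t)$, this collapses to
\begin{equation*}
	\partial_x\widetilde Q^\gamma(x;t)=\big(1-u^\gamma(t)\big)Q^\gamma(x;t)+\widetilde Q^\gamma(t;t)\,R^\gamma(x,t),\qquad x\in\mathbb R.
\end{equation*}
From this I extract two facts. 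Integrating over $x\in(-\infty,t)$ and using $\widetilde Q^\gamma(x;t)\to0$ as $x\to-\infty$ (clear from the resolvent equation) gives $\widetilde Q^\gamma(t;t)\big(1-\int_{-\infty}^tR^\gamma(x,t)\,\d x\big)=\big(1-u^\gamma(t)\big)v^\gamma(t)$. Evaluating the displayed relation at $x=t$ and adding $\frac{\partial}{\partial t}\widetilde Q^\gamma(x;t)\big|_{x=t}=-R^\gamma(t,t)\widetilde Q^\gamma(t;t)$ gives $\frac{\d}{\d t}\widetilde Q^\gamma(t;t)=Q^\gamma(t;t)\big(1-u^\gamma(t)\big)$, while differentiating $v^\gamma(t)=\int_{-\infty}^tQ^\gamma(y;t)\,\d y$ and inserting $\partial_tQ^\gamma(y;t)=-R^\gamma(y,t)Q^\gamma(t;t)$ gives $\frac{\d}{\d t}v^\gamma(t)=Q^\gamma(t;t)\big(1-\int_{-\infty}^tR^\gamma(x,t)\,\d x\big)$.

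Combining the three displays, $\widetilde Q^\gamma(t;t)\frac{\d}{\d t}v^\gamma(t)=Q^\gamma(t;t)\big(1-u^\gamma(t)\big)v^\gamma(t)=v^\gamma(t)\frac{\d}{\d t}\widetilde Q^\gamma(t;t)$, so the Wronskian $\widetilde Q^\gamma(t;t)\frac{\d}{\d t}v^\gamma(t)-v^\gamma(t)\frac{\d}{\d t}\widetilde Q^\gamma(t;t)$ vanishes identically. For $\gamma\in(0,1]$ one has $\widetilde Q^\gamma(t;t)\ge G^\gamma(t)=\sqrt\gamma\,G(t)>0$: indeed $\|T\chi_t\|<1$ by Lemma \ref{reg} (invertibility of $1-T\chi_t$ together with compactness confines the spectrum to $[0,1)$), hence $(1-\gamma T\chi_t)^{-1}=\sum_{k\ge0}(\gamma T\chi_t)^k$ has a nonnegative kernel and sends the nonnegative $G^\gamma$ to a nonnegative function, whence $\widetilde Q^\gamma(t;t)=G^\gamma(t)+\gamma\int_t^\infty T(t,w)\widetilde Q^\gamma(w;t)\,\d w\ge G^\gamma(t)$. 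Therefore $v^\gamma(t)/\widetilde Q^\gamma(t;t)$ is $t$-independent; sending $t\to+\infty$, where $\|T\chi_t\|\to0$ and hence $\widetilde Q^\gamma(t;t)\to G^\gamma(+\infty)=\sqrt\gamma$ and $v^\gamma(t)\to\int_{\mathbb R}g^\gamma=\sqrt\gamma$, this constant equals $1$. Thus $\widetilde Q^\gamma(t;t)=v^\gamma(t)$, which is \eqref{sweet:3} (the case $\gamma=0$ being trivial). The step requiring the most care is the integration by parts together with the justification that the boundary terms at $\pm\infty$ vanish and that all interchanges of differentiation and integration are legitimate, plus the elementary positivity estimate $\widetilde Q^\gamma(t;t)>0$ that licenses forming the quotient $v^\gamma/\widetilde Q^\gamma$.
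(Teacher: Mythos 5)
Your proof is correct but follows a genuinely different route from the paper's. The paper proves \eqref{sweet:3} by an elementary, purely combinatorial argument: it establishes in Appendix \ref{appC} (Proposition \ref{Jinho:1}, proved by induction and Fubini, and Corollary \ref{Jinho:2}) the exact permutation identity $\big((K\chi_t)^k\phi_y\big)(x)=\big((K\chi_t)^k\phi_{x-t}\big)_y(t)$ for kernels of the Hankel-composition form $K(x,y)=\int_0^\infty\phi(x+s)\psi(y+s)\,\d s$, and then sums the Neumann series with $K=\gamma T$, $\phi=\psi=g^\gamma$, $x=t$, $I=(-\infty,0)$ to obtain \eqref{sweet:3} directly (passing to $\gamma=1$ by continuity). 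Your argument instead exploits the same Hankel structure analytically, through the differential identity $(\partial_x+\partial_y)T(x,y)=-g(x)g(y)$; after integrating by parts in the resolvent equation you obtain the first-order relation $\partial_x\widetilde Q^\gamma(x;t)=(1-u^\gamma(t))Q^\gamma(x;t)+\widetilde Q^\gamma(t;t)R^\gamma(x,t)$, derive from it a vanishing Wronskian for $\widetilde Q^\gamma(t;t)$ and $v^\gamma(t)$, and fix the constant at $t\to+\infty$. This is very much in the Tracy--Widom/Ferrari--Spohn spirit. What the paper's approach buys is economy: no boundary terms, no ODEs, no positivity claim needed. What your approach buys is that it stays entirely inside the Tracy--Widom differential-identity toolbox already in play in Sections \ref{contin}--\ref{contin2}, and does not require the separate Appendix \ref{appC}. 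One small technical remark: to license the Neumann series and the positivity of $\widetilde Q^\gamma(t;t)$ at $\gamma=1$ you invoke $\|T\chi_t\|<1$; what is actually at work is that $T\chi_t$ is a compact (indeed trace-class) operator whose nonzero spectrum coincides with that of the self-adjoint contraction $\chi_tT\chi_t$ and therefore lies in $[0,1)$ by Lemma \ref{reg}, so the spectral radius is strictly below $1$ and the series $\sum_k(T\chi_t)^k$ converges even if the operator norm itself is not obviously below $1$. You should also verify the decay $\widetilde Q^\gamma(x;t)\to0$ as $x\to-\infty$ and the vanishing of the boundary term at $w=+\infty$ with slightly more care, but both follow readily from the Gaussian decay of $T$ and of $g$.
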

\begin{proof} Identity \eqref{sweet:3} follows from Proposition \ref{Jinho:1} and Corollary \ref{Jinho:2} in Appendix \ref{appC} with the operator identifications $K=\gamma T$, i.e. $\phi(x)=g^{\gamma}(x)=\psi(x)$, and the choice of interval $I=(-\infty,0)\subset\mathbb{R}$. In detail, for $\gamma\in[0,1)$ we use Corollary \eqref{J:2} with $x=t$ and the Neumann series representation (recall Lemma \ref{reg}) to deduce
\begin{equation}\label{sweet:10}
	\big((1-\gamma T\upharpoonright_{L^2(\mathbb{R})})^{-1}G^{\gamma}\big)(t)=\int_{-\infty}^0\big((1-\gamma T\chi_t\upharpoonright_{L^2(\mathbb{R})})\big)^{-1}g^{\gamma}\big)(y+t)\,\d y
\end{equation}
which is the right hand side in \eqref{sweet:3} after a shift. For $\gamma=1$, we take the limit $\gamma\uparrow 1$ in \eqref{sweet:10}.
\end{proof}
\begin{rem} Identity \eqref{sweet:3} is the $\gamma$-generalization of the corresponding equality in \cite[$(2.6),(2.8),(2.10)$]{PTZ}.
\end{rem}
The strategy is now to derive a coupled system of differential equations for the auxiliary function
\begin{equation}\label{l:51}
	A^{\gamma}(t):=\int_{-\infty}^t\big((1-\gamma T\chi_t\upharpoonright_{L^2(\mathbb{R})})^{-1}g^{\gamma}\big)(x)\,\d x
	\stackrel{\eqref{neeto}}{=}\frac{1}{\sqrt{2\pi}}\int_{-\infty}^t\left[\int_{\mathring{\Omega}}\e^{-\im x\lambda}\e^{-\frac{1}{8}\lambda^2}\chi_{\mathbb{R}}(\lambda)F_1(\lambda)\,\d\lambda\right]\,\d x
\end{equation}
and the (closely related) quantity
\begin{equation}\label{l:52}
	B^{\gamma}(t):=\frac{1}{\sqrt{2\pi}}\int_{-\infty}^t\left[\int_{\mathring{\Omega}}\e^{-\im x\lambda}\e^{-\frac{1}{8}\lambda^2}\chi_{\mathbb{R}}(\lambda)F_2(\lambda)\,\d\lambda\right]\,\d x,\ \ \ (t,\gamma)\in\mathbb{R}\times[0,1],
\end{equation}
where $F_2(\lambda)=F_2(\lambda;t,\gamma)$ is given in \eqref{l:24}. Imposing boundary conditions we then compute the unique solutions $(A^{\gamma}(t),B^{\gamma}(t))$ and obtain \eqref{l:48} through \eqref{sweet:3} and integration in \eqref{sweet:0}.
\begin{lem} The functions $A^{\gamma}(t)$ and $B^{\gamma}(t)$ defined in \eqref{l:51}, \eqref{l:52} for $(t,\gamma)\in\mathbb{R}\times[0,1]$ satisfy the differential equations
\begin{equation}\label{l:53}
	\frac{\d A^{\gamma}}{\d t}=\im Y_1^{12}(t,\gamma)B^{\gamma}+Y_1^{12}(t,\gamma),\ \ \ \ \ \frac{\d B^{\gamma}}{\d t}=-\im Y_1^{21}(t,\gamma)A^{\gamma},
\end{equation}
with boundary conditions $A^{\gamma}(t)\rightarrow \sqrt{\gamma},B^{\gamma}(t)\rightarrow 0$ as $t\rightarrow+\infty$. The quantity ${\bf Y}_1(t,\gamma)=[Y_1^{jk}(t,\gamma)]_{j,k=1}^2$ occurred first in RHP \ref{master}, condition $(3)$.
\end{lem}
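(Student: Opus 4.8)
The plan is to identify $A^\gamma(t)$ and $B^\gamma(t)$ with entries of the boundary value ${\bf Y}_-(0;t,\gamma)$ of RHP \ref{master} at the spectral point $z=0$, and then to differentiate these using the Zakharov--Shabat Lax pair \eqref{ZS:1} specialized to $z=0$.

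First I would produce closed representations for $A^\gamma(t)$ and $B^\gamma(t)$ directly from \eqref{l:51} and \eqref{l:52}. The repeated integral there is not absolutely convergent in the $x$--variable, so one inserts a regulator $\e^{\epsilon x}$, $\epsilon>0$ (legitimate because $Q^\gamma(\cdot;t)=(1-\gamma T\chi_t)^{-1}g^\gamma$, and likewise $\e^{-\frac18\lambda^2}F_1$ and $\e^{-\frac18\lambda^2}F_2$, decay like Gaussians on $\mathbb{R}$), applies Fubini for fixed $\epsilon$, computes $\int_{-\infty}^t\e^{(\epsilon-\im\lambda)x}\,\d x=\im\,\e^{(\epsilon-\im\lambda)t}(\lambda+\im\epsilon)^{-1}$, and then lets $\epsilon\downarrow0$. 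Since $\lim_{\epsilon\downarrow0}(\lambda+\im\epsilon)^{-1}$ is, in the Sokhotski--Plemelj sense, the boundary value of $(\lambda-z)^{-1}$ as $z\to 0$ from the lower half-plane, the remaining $\lambda$--integral becomes $-\sqrt{2\pi}$ times the lower boundary value of $Y^{12}(z)$, respectively $\sqrt{2\pi}$ times that of $1-Y^{22}(z)$, after comparison with \eqref{l:25} and the explicit ${\bf g}$ in \eqref{l:21} (whose second entry on $\mathbb{R}$ equals $\frac{1}{\sqrt{2\pi}}\e^{-\frac18\mu^2}\e^{-\im t\mu}$). This yields
\begin{equation*}
	A^\gamma(t)=-\im\,Y_-^{12}(0;t,\gamma),\qquad B^\gamma(t)=\im\bigl(1-Y_-^{22}(0;t,\gamma)\bigr),
\end{equation*}
with the subscript ``$-$'' referring to the boundary value from the side of $\mathbb{R}$ fixed by the orientation in Figure \ref{fig5}.

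Second, I would differentiate these in $t$ through the Lax pair. By Proposition \ref{connect}, the identification \eqref{l:34}, and the fact that the lens transformation \eqref{l:32} equals $\mathbb{I}$ on $\{\Im z<0\}$ (so that ${\bf N}_-(0)={\bf Y}_-(0)$), the above may be rewritten as $X_-^{12}(0;\tfrac{t}{2},\gamma)=\im A^\gamma(t)$ and $X_-^{22}(0;\tfrac{t}{2},\gamma)=1+\im B^\gamma(t)$. The boundary value ${\bf X}_-(0;x,\gamma)$ solves the $z=0$ specialization of the ZS system \eqref{ZS:1}, namely $\partial_x{\bf X}_-(0;x)=\Bigl[\begin{smallmatrix}0 & y\\ \bar y & 0\end{smallmatrix}\Bigr]{\bf X}_-(0;x)$ with $y=y(x;\gamma)=2\im X_1^{12}(x,\gamma)$. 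Reading off the $(12)$-- and $(22)$--entries, and using $\partial_t=\tfrac12\partial_x\big|_{x=t/2}$, $y(\tfrac{t}{2};\gamma)=2\im Y_1^{12}(t,\gamma)$ and $\overline{y(\tfrac{t}{2};\gamma)}=-2\im Y_1^{21}(t,\gamma)$ (the last by \eqref{l:36}), one obtains
\begin{equation*}
	\frac{\d A^\gamma}{\d t}=-\im\cdot\tfrac12\,y(\tfrac{t}{2};\gamma)\,X_-^{22}(0;\tfrac{t}{2})=Y_1^{12}\bigl(1+\im B^\gamma\bigr),\qquad \frac{\d B^\gamma}{\d t}=-\im\cdot\tfrac12\,\overline{y(\tfrac{t}{2};\gamma)}\,X_-^{12}(0;\tfrac{t}{2})=-\im Y_1^{21}A^\gamma,
\end{equation*}
which is exactly \eqref{l:53}.

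Finally, for the boundary conditions, as $t\to+\infty$ one has $\|T\chi_t\|\to0$, hence $(1-\gamma T\chi_t)^{-1}\to1$, and therefore $A^\gamma(t)=\int_{-\infty}^t\bigl((1-\gamma T\chi_t)^{-1}g^\gamma\bigr)(x)\,\d x\to\int_{\mathbb{R}}g^\gamma(x)\,\d x=\sqrt{\gamma}$; likewise, choosing $\Gamma$ bounded away from $\mathbb{R}$, the right-tail estimates of Subsection \ref{right1} (Proposition \ref{DZ:es1}, \eqref{l:45}) force $F_2\upharpoonright_{\mathbb{R}}\to0$ with uniform Gaussian tails, so the inner integral in \eqref{l:52} tends to $0$ and $B^\gamma(t)\to0$. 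The step I expect to demand the most care is the interchange of integrations in the first paragraph together with the correct $+\im0$ prescription: the two Cauchy transforms must be evaluated from the lower side of $\mathbb{R}$, and it is precisely this choice of side that makes all the signs in the Lax-pair computation close up. Everything following the two boundary representations of $A^\gamma$ and $B^\gamma$ is then a short, mechanical computation.
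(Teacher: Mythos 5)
Your proof is correct, and it takes a genuinely different route from the paper's. The paper differentiates the integral definitions \eqref{l:51}--\eqref{l:52} directly: it first derives the system $\partial_t F_1=\im Y_1^{12}F_2$, $\partial_t F_2=\im\lambda F_2-\im Y_1^{21}F_1$ from the ZS Lax pair and the IIKS formula ${\bf F}={\bf N}_-{\bf f}$, inserts this into the $t$--derivative of the double integrals, and then handles the extra $\im\lambda F_2$ term via integration by parts in $x$ together with the Riemann--Lebesgue lemma so that the boundary contribution $Y_1^{22}$ cancels. You instead short-circuit all of this by observing that the inner $\lambda$--integral in \eqref{l:51}--\eqref{l:52} is precisely the Fourier transform appearing in the Cauchy-integral representation \eqref{l:25}, so that the $x$--integral $\int_{-\infty}^t$ collapses the Cauchy kernel to its boundary value at $z=0$ from below; this yields the closed formulas $A^\gamma(t)=-\im Y^{12}_-(0;t,\gamma)$ and $B^\gamma(t)=\im\bigl(1-Y^{22}_-(0;t,\gamma)\bigr)$, after which \eqref{l:53} is simply the ZS system \eqref{ZS:1} evaluated at the spectral point $z=0$ (noting that the Lax generator $\partial_x{\bf W}\,{\bf W}^{-1}$ is entire, hence the equation holds for the boundary values on $\mathbb{R}$, and that the lens transformation \eqref{l:32} is the identity below $\mathbb{R}$). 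Your version is shorter and arguably more transparent — it makes the structural reason for \eqref{l:53} visible, namely that $(A^\gamma,B^\gamma)$ sit inside a ZS solution at zero spectral parameter — and it produces the bonus identities for $A^\gamma,B^\gamma$ which the paper never states. Two small points: the convergence justification is slightly overcautious (on $\mathbb{R}$ one has $F_1(\lambda)=Y^{11}_-(\lambda)f_1(\lambda)$ with $Y^{11}_-$ bounded, so $\e^{-\frac18\lambda^2}F_1(\lambda)$ is Schwartz and the iterated integral is in fact absolutely convergent — but the $\epsilon$--regularization you use is of course valid and also fixes the correct $-\im0$ prescription cleanly), and the boundary conditions are handled essentially as in the paper, via $\|T\chi_t\|\to0$ for $A^\gamma$ and the right-tail small-norm estimates for $B^\gamma$.
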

\begin{proof} When $t$-differentiating \eqref{l:51} and \eqref{l:52} we require the partial derivatives $\frac{\partial F_j}{\partial t}(\lambda;t,\gamma)$ with $\lambda\in\mathbb{R}$. For these use \eqref{IIKSform} with, say, the $(-)$ limiting value in place, and \eqref{l:32},
\begin{equation}\label{l:54}
	{\bf F}(\lambda)={\bf N}_-(\lambda){\bf f}(\lambda),\ \ \ \lambda\in\mathbb{R}.
\end{equation}
But ${\bf N}(\lambda)$ obeys the rescaled Zakharov-Shabat system, compare \eqref{l:37},
\begin{equation*}
	\frac{\partial}{\partial t}\left({\bf N}(\lambda)\e^{-\frac{\im}{2}t\lambda\sigma_3}\right)=\left\{-\frac{\im}{2}\lambda\sigma_3+\im\begin{bmatrix}0 & Y_1^{12}\\ -Y_1^{21} & 0\end{bmatrix}\right\}{\bf N}(\lambda)\e^{-\frac{\im}{2}t\lambda\sigma_3},
\end{equation*}
so that in turn from \eqref{l:53},
\begin{equation*}
	\frac{\partial{\bf F}}{\partial t}(\lambda)=\left\{-\frac{\im}{2}\lambda\sigma_3+\im\begin{bmatrix}0 & Y_1^{12}\\ -Y_1^{21} & 0\end{bmatrix}\right\}{\bf F}(\lambda)+\frac{\im}{2}\lambda\,{\bf N}_-(\lambda)\sigma_3{\bf f}(\lambda)+{\bf N}_-(\lambda)\frac{\partial{\bf f}}{\partial t}(\lambda),\ \ \lambda\in\mathbb{R}.
\end{equation*}
But once we substitute \eqref{l:21} and \eqref{IIKSform} into this vector equation we find
\begin{equation}\label{l:55}
	\frac{\partial F_1}{\partial t}(\lambda)=\im Y_1^{12}F_2(\lambda),\ \ \ \ \ \ \frac{\partial F_2}{\partial t}(\lambda)=\im\lambda F_2(\lambda)-\im Y_1^{21} F_1(\lambda).
\end{equation}
Now $t$-differentiate $A^{\gamma}(t)$ first,
\begin{equation*}
	\frac{\d A^{\gamma}}{\d t}=\frac{1}{\sqrt{2\pi}}\int_{-\infty}^t\left[\int_{\mathring{\Omega}}\e^{-\im x\lambda}\e^{-\frac{1}{8}\lambda^2}\chi_{\mathbb{R}}(\lambda)\frac{\partial F_1}{\partial t}(\lambda)\,\d\lambda\right]\,\d x+\frac{1}{\sqrt{2\pi}}\int_{\mathring{\Omega}}\e^{-\im t\lambda}\e^{-\frac{1}{8}\lambda^2}\chi_{\mathbb{R}}(\lambda)F_1(\lambda)\,\d\lambda,
\end{equation*}
use \eqref{l:55}, \eqref{l:52} in the first integral and \eqref{l:21}, \eqref{l:31} in the second, i.e.
\begin{equation*}
	\frac{\d A^{\gamma}}{\d t}=\im Y_1^{12}B^{\gamma}+Y_1^{12}.
\end{equation*}
Similarly for $B^{\gamma}(t)$: differentiate
\begin{equation*}
	\frac{\d B^{\gamma}}{\d t}=\frac{1}{\sqrt{2\pi}}\int_{-\infty}^t\left[\int_{\mathring{\Omega}}\e^{-\im x\lambda}\e^{-\frac{1}{8}\lambda^2}\chi_{\mathbb{R}}(\lambda)\frac{\partial F_2}{\partial t}(\lambda)\,\d\lambda\right]\,\d x+\frac{1}{\sqrt{2\pi}}\int_{\mathring{\Omega}}\e^{-\im t\lambda}\e^{-\frac{1}{8}\lambda^2}\chi_{\mathbb{R}}(\lambda)F_2(\lambda)\,\d\lambda,
\end{equation*}
and use \eqref{l:55}, \eqref{l:51} in the first term and \eqref{l:21}, \eqref{l:31} in the second,
\begin{equation}\label{l:56}
	\frac{\d B^{\gamma}}{\d t}=-\frac{1}{\sqrt{2\pi}}\int_{-\infty}^t\frac{\partial}{\partial x}\left[\int_{\mathring{\Omega}}\e^{-\im x\lambda}\e^{-\frac{1}{8}\lambda^2}\chi_{\mathbb{R}}(\lambda)F_2(\lambda)\,\d\lambda\right]\,\d x-\im Y_1^{21}A^{\gamma}+Y_1^{22}.
\end{equation}
But from the Riemann-Lebesgue Lemma (using that $\e^{-\frac{1}{8}\lambda^2}\chi_{\mathbb{R}}(\lambda)F_2(\lambda)\in L^1(\mathbb{R})$ by Cauchy-Schwarz)
\begin{equation*}
	\lim_{|x|\rightarrow\infty}\int_{\mathring{\Omega}}\e^{-\im x\lambda}\e^{-\frac{1}{8}\lambda^2}\chi_{\mathbb{R}}(\lambda)F_2(\lambda)\,\d\lambda=0,
\end{equation*}
so we can simplify \eqref{l:56} further and obtain with \eqref{l:21}, \eqref{l:31} in the end
\begin{equation*}
	\frac{\d B^{\gamma}}{\d t}=-Y_1^{22}-\im Y_1^{21}A^{\gamma}+Y_1^{22}=-\im Y_1^{21}A^{\gamma}.
\end{equation*}
Since $Y_1^{12}=Y_1^{21}$ (compare Proposition \ref{connect}) is known, the system \eqref{l:53} fully determines $(A^{\gamma}(t),B^{\gamma}(t))$ once we impose boundary conditions. And for this we can use the asymptotic results derived in Proposition \ref{DZ:es1} and \eqref{l:45}. In detail we trace back our steps through \eqref{IIKSform}, \eqref{l:32} and \eqref{l:43},
\begin{equation*}
	z\in\mathbb{R}:\ \ F_1(z)=N_-^{11}(z)f_1(z)=T_-^{11}\left(\frac{z}{t}\right)f_1(z),
\end{equation*}
and use that from \eqref{l:44} and \eqref{l:45},
\begin{equation}\label{l:57}
	{\bf T}_-(z)=\mathbb{I}+\mathcal{O}\left(\frac{\e^{-t^2}}{1+|z|}\right),\ \ \ t\rightarrow+\infty,\ \ z\in\mathbb{R}.
\end{equation}
Thus,
\begin{equation*}
	A^{\gamma}(t)=\sqrt{\gamma}\,G(t)\left(1+\mathcal{O}\left(\e^{-t^2}\right)\right),\ \ \ t\rightarrow+\infty;\ \ \ G(t)=\frac{1}{\sqrt{\pi}}\int_{-\infty}^t\e^{-y^2}\,\d y,
\end{equation*}
which shows that $A^{\gamma}(t)\rightarrow \sqrt{\gamma}$ in the same limit. Quite similarly,
\begin{equation*}
	z\in\mathbb{R}:\ \ F_2(z)=N_-^{21}(z)f_1(z)=T_-^{21}\left(\frac{z}{t}\right)f_1(z),
\end{equation*}
so that with \eqref{l:57}, $B^{\gamma}(t)=\mathcal{O}\big(\sqrt{\gamma}\,\e^{-t^2}\big)=o(1)$ as $t\rightarrow+\infty$ and this completes our proof.
\end{proof}
By means of \eqref{asres} we now simply check that the unique solution to the system \eqref{l:53} with the imposed boundary conditions is given by
\begin{equation}\label{GAMForm}
	A^{\gamma}(t)=\sqrt{\gamma}\cosh\left[\int_t^{\infty}Y_1^{12}(x,\gamma)\,\d x\right]-\sinh\left[\int_t^{\infty}Y_1^{12}(x,\gamma)\,\d x\right],\ \ \ (t,\gamma)\in\mathbb{R}\times[0,1],
\end{equation}
and
\begin{equation*}
	B^{\gamma}(t)=\im\left(1-\cosh\left[\int_t^{\infty}Y_1^{12}(x,\gamma)\,\d x\right]+\sqrt{\gamma}\sinh\left[\int_t^{\infty}Y_1^{12}(x,\gamma)\,\d x\right]\right),\ \ \ (t,\gamma)\in\mathbb{R}\times[0,1].
\end{equation*}
Now we combine \eqref{sweet:0}, \eqref{sweet:3} and \eqref{l:51},
\begin{equation*}
	\frac{\d}{\d t}u^{\gamma}(t)=-Y_1^{12}(t,\gamma)A^{\gamma}(t),
\end{equation*}
integrate using \eqref{GAMForm} (with the normalization $u^{\gamma}(t)\rightarrow 0$ as $t\rightarrow\infty$),
\begin{equation}\label{GAMForm:2}
	u^{\gamma}(t)=1+\sqrt{\gamma}\sinh\left[\int_t^{\infty}Y_1^{12}(x,\gamma)\,\d x\right]-\cosh\left[\int_t^{\infty}Y_1^{12}(x,\gamma)\,\d x\right],
\end{equation}
and recall that $\Gamma_{t\gamma}=1-u^{\gamma}(t)$. This confirms \eqref{l:48} and thus, after combining $\Gamma_{t\gamma}$ with \eqref{part1fin} also Theorem \ref{main1}.
\subsection{Right tail asymptotics II}
Since we have already established Lemma \ref{tail:1}, we are now left with \eqref{l:48} and its large positive $t$-expansion. But since with \eqref{asres},
\begin{equation*}
	Y_1^{12}(t,\gamma)=\sqrt{\frac{\gamma}{\pi}}\,\e^{-t^2}+\mathcal{O}\left(\gamma\e^{-2t^2}\right),\ \ \ t\rightarrow+\infty,\ \ \gamma\in[0,1],
\end{equation*}
we obtain at once,
\begin{equation*}
	\int_t^{\infty}Y_1^{12}(x,\gamma)\,\d x=\frac{\sqrt{\gamma}}{2}\textnormal{erfc}(t)+\mathcal{O}\left(\gamma t^{-1}\e^{-2t^2}\right),
\end{equation*}
and thus in turn with \eqref{GAMForm:2} and the relation $\Gamma_{t\gamma}=1-u^{\gamma}(t)$,
\begin{lem}\label{tail:2} As $t\rightarrow+\infty$,
\begin{equation*}
	\Gamma_{t\gamma}=1-\frac{\gamma}{2}\textnormal{erfc}(t)+\mathcal{O}\left(\gamma^{\frac{3}{2}}t^{-1}\e^{-2t^2}\right).
\end{equation*}
\end{lem}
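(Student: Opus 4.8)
The plan is to obtain the claimed expansion of $\Gamma_{t\gamma}$ directly from the closed form \eqref{GAMForm:2} for $u^{\gamma}(t)$, the relation $\Gamma_{t\gamma}=1-u^{\gamma}(t)$, and the right tail expansion \eqref{asres} of the Riemann--Hilbert datum $Y_1^{12}(t,\gamma)$; the only work is a Taylor expansion of the hyperbolic functions appearing in \eqref{GAMForm:2}.

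First I would integrate \eqref{asres}, exactly as in the paragraph preceding the statement, to record
\[
	\rho(t;\gamma):=\int_t^{\infty}Y_1^{12}(x,\gamma)\,\d x=\frac{\sqrt{\gamma}}{2}\,\textnormal{erfc}(t)+\mathcal{O}\big(\gamma\,t^{-1}\e^{-2t^2}\big),\qquad t\rightarrow+\infty,
\]
where the remainder stems from the $\mathcal{O}(\gamma\e^{-2t^2})$ correction in \eqref{asres} together with $\int_t^{\infty}\e^{-2x^2}\,\d x=\mathcal{O}(t^{-1}\e^{-2t^2})$. Combined with the classical expansion $\textnormal{erfc}(t)=\frac{1}{\sqrt{\pi}}\,t^{-1}\e^{-t^2}\big(1+\mathcal{O}(t^{-2})\big)$, this in particular gives $\rho(t;\gamma)=\mathcal{O}\big(\sqrt{\gamma}\,t^{-1}\e^{-t^2}\big)\to 0$ as $t\rightarrow+\infty$, so the hyperbolic functions in \eqref{GAMForm:2} may be expanded about the origin.

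Next I would substitute $\rho=\rho(t;\gamma)$ into \eqref{GAMForm:2}, i.e.\ into $u^{\gamma}(t)=1+\sqrt{\gamma}\sinh\rho-\cosh\rho$, and use $\sinh\rho=\rho+\mathcal{O}(\rho^3)$ and $\cosh\rho=1+\frac{1}{2}\rho^2+\mathcal{O}(\rho^4)$ to get
\[
	u^{\gamma}(t)=\sqrt{\gamma}\,\rho-\frac{1}{2}\rho^2+\mathcal{O}\big(\sqrt{\gamma}\,\rho^3\big)=\frac{\gamma}{2}\,\textnormal{erfc}(t)+\mathcal{O}\big(\gamma^{3/2}t^{-1}\e^{-2t^2}\big).
\]
Here the leading contribution is $\sqrt{\gamma}\,\rho=\frac{\gamma}{2}\textnormal{erfc}(t)+\mathcal{O}\big(\gamma^{3/2}t^{-1}\e^{-2t^2}\big)$, while the corrections $\rho^2=\mathcal{O}\big(\gamma\,t^{-2}\e^{-2t^2}\big)$ and $\sqrt{\gamma}\,\rho^3=\mathcal{O}\big(\gamma^{2}t^{-3}\e^{-3t^2}\big)$ are of strictly smaller order than $\gamma^{3/2}t^{-1}\e^{-2t^2}$ as $t\rightarrow+\infty$ (for fixed $\gamma$) and are therefore absorbed into the stated remainder. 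The relation $\Gamma_{t\gamma}=1-u^{\gamma}(t)$ then finishes the argument.

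I do not expect a genuine obstacle here: the one substantive input, the asymptotics \eqref{asres} of $Y_1^{12}$, has already been supplied by the Deift--Zhou analysis of Subsection \ref{right1} (Proposition \ref{DZ:es1} together with the small-norm bound \eqref{l:45}), so all that remains is the bookkeeping with error terms above. The only point deserving a moment's care is checking that the $-\frac{1}{2}\rho^2$ term produced by $\cosh\rho$ is subdominant to the linear term $\sqrt{\gamma}\,\rho$; this is immediate, since their ratio is $\mathcal{O}(\rho/\sqrt{\gamma})=\mathcal{O}(t^{-1}\e^{-t^2})\to 0$.
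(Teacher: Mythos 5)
Your proposal is correct and follows exactly the route the paper takes: integrate \eqref{asres} to get $\int_t^\infty Y_1^{12}(x,\gamma)\,\d x=\tfrac{\sqrt\gamma}{2}\textnormal{erfc}(t)+\mathcal{O}(\gamma t^{-1}\e^{-2t^2})$, feed this into the closed-form \eqref{GAMForm:2} for $u^\gamma(t)$, and use $\Gamma_{t\gamma}=1-u^\gamma(t)$. The paper simply writes ``and thus in turn with \eqref{GAMForm:2} and the relation $\Gamma_{t\gamma}=1-u^{\gamma}(t)$'' and states the lemma, leaving the Taylor expansion of $\sinh$ and $\cosh$ and the bookkeeping of error terms (which you carry out carefully, noting the cosh correction $-\tfrac12\rho^2$ is subdominant for fixed $\gamma$) implicit.
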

The combination of Lemma \ref{tail:1} and \ref{tail:2} with the known asymptotic behavior of $\textnormal{erfc}(z)$, cf. \cite[$7.12.1$]{NIST} proves \eqref{e:9} once substituted into \eqref{gdef}.
\section{Left tail asymptotics - Proof of Corollary \ref{main2}, expansion \eqref{e:10}}\label{left1}
Our goal in this section is to prove expansion \eqref{e:10} for values $\gamma\in(0,1)$\footnote{We shall discard the trivial case $\gamma=0$ for technical purposes.} that are either fixed or approach $\gamma=1$ at a controlled rate. These goals will be achieved by deriving the analogue of \eqref{asres} for $t\rightarrow-\infty$ through a nonlinear steepest descent analysis of RHP \ref{NLS} and subsequent integration in \eqref{e:6}. 
\subsection{Initial steps} We fix $\gamma\in(0,1)$ and first rescale similarly to \eqref{l:43},
\begin{equation}\label{l:58}
	{\bf T}(z;t,\gamma):={\bf N}(-zt;t,\gamma),\ \ \ \ \ z\in\mathbb{C}\setminus\mathbb{R},\ \ \ t<0.
\end{equation}
so that the jump condition for ${\bf T}(z;t,\gamma)$ reads as
\begin{equation}\label{l:59}
	{\bf T}_+(z)={\bf T}_-(z)\begin{bmatrix}1-\gamma\e^{-\frac{1}{2}t^2z^2} & -\im\sqrt{\gamma}\e^{-t^2\theta_+(z)}\\ -\im\sqrt{\gamma}\e^{-t^2\theta_-(z)} & 1\end{bmatrix},\ \ \ z\in\mathbb{R}.
\end{equation}
Thus, opposed to the $t\rightarrow+\infty$ analysis, the subscripts in $\theta_{\pm}(z)$ have flipped in the exponents, i.e. transformation \eqref{l:43} is no longer helpful in view of the sign chart Figure \ref{fig6}. Instead we employ a $g$-function transformation which will swap the diagonal entries in \eqref{l:59} and modify the off-diagonal ones accordingly. In detail, the upcoming $g$-function will be defined in terms of the Cauchy transform of
\begin{equation*}
	h(x;t,\gamma):=-\ln\left(1-\gamma\e^{-\frac{1}{2}t^2x^2}\right),\ \ \ x\in\mathbb{R},\ \ \gamma\in(0,1),\ \ t<0,
\end{equation*}
and for this reason we shall collect a few of its relevant analytical properties below.
\begin{prop}\label{regProp} For any $\gamma\in(0,1)$ and $t<0$, the function $h(\cdot;t,\gamma)$ exists in $L^p(\mathbb{R})$ for all $1\leq p<\infty$ and is real-analytic. Moreover, using the principal branch of the logarithm, i.e. $\ln:\mathbb{C}\setminus(-\infty,0]\rightarrow\mathbb{C}$ with $\ln z=\ln|z|+\im\textnormal{arg}\,z$ and $\textnormal{arg}\,z\in(-\pi,\pi]$, it extends analytically into the horizontal strip
\begin{equation*}
	z\in\Pi_{t\gamma}=\left\{z\in\mathbb{C}:\,\,|\Im z|<\frac{\sqrt{-2\ln\gamma}}{|t|}\,\right\},
\end{equation*}
and we have the total integral identity
\begin{equation}\label{intform}
	\int_{-\infty}^{\infty}h(x;t,\gamma)\,\d x=\frac{\sqrt{2\pi}}{|t|}\textnormal{Li}_{\frac{3}{2}}(\gamma),
\end{equation}
in terms of the polylogarithm $\textnormal{Li}_s(z)$, cf. \cite[$25.12.10$]{NIST}.
\end{prop}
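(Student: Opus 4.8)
The plan is to establish the three assertions in turn, all resting on the elementary observation that for real $x$ the quantity $\gamma\e^{-\frac12 t^2x^2}$ lies in $(0,\gamma]$ with $\gamma<1$, so $1-\gamma\e^{-\frac12 t^2x^2}$ lies in the compact subinterval $[1-\gamma,1)$ of $(0,\infty)$; in particular $h(\cdot;t,\gamma)$ is well-defined and continuous on $\mathbb{R}$. Real-analyticity is then immediate, since $x\mapsto 1-\gamma\e^{-\frac12 t^2x^2}$ is real-analytic with values in $[1-\gamma,1)$ and $\ln$ is real-analytic on that interval, so the composition is real-analytic. For the $L^p$ bound I would split $\mathbb{R}$ into a sufficiently large compact neighbourhood of the origin, on which $h$ is bounded by continuity, and its complement, on which $\gamma\e^{-\frac12 t^2x^2}\leq\frac12$, so that $|h(x;t,\gamma)|\leq 2\gamma\e^{-\frac12 t^2x^2}$ decays Gaussianly; each piece lies in every $L^p(\mathbb{R})$ with $1\leq p<\infty$.

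For the analytic continuation I would control the modulus of the argument of the logarithm off the real axis. Writing $z=x+\im y$ one has $|\gamma\e^{-\frac12 t^2z^2}|=\gamma\e^{-\frac12 t^2x^2}\e^{\frac12 t^2 y^2}\leq\gamma\e^{\frac12 t^2 y^2}$, and the defining condition $|y|<\sqrt{-2\ln\gamma}/|t|$ of $\Pi_{t\gamma}$ is precisely equivalent to $\gamma\e^{\frac12 t^2 y^2}<1$. Hence on $\Pi_{t\gamma}$ the analytic function $z\mapsto 1-\gamma\e^{-\frac12 t^2z^2}$ stays in the open unit disk centred at $1$, which is disjoint from $(-\infty,0]$; composing with the principal branch of the logarithm therefore yields an analytic function on $\Pi_{t\gamma}$ which, by the identity theorem, extends the real-analytic $h(\cdot;t,\gamma)$.

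For \eqref{intform} I would use the power series $-\ln(1-w)=\sum_{k\geq1}w^k/k$ with $w=\gamma\e^{-\frac12 t^2x^2}$, legitimate since $|w|\leq\gamma<1$ uniformly in $x\in\mathbb{R}$, obtaining $h(x;t,\gamma)=\sum_{k\geq1}\frac{\gamma^k}{k}\e^{-\frac{k}{2}t^2x^2}$. As every summand is nonnegative, Tonelli's theorem allows term-by-term integration; inserting $\int_{\mathbb{R}}\e^{-\frac{k}{2}t^2x^2}\,\d x=\sqrt{2\pi/(kt^2)}$ gives $\int_{\mathbb{R}}h(x;t,\gamma)\,\d x=\frac{\sqrt{2\pi}}{|t|}\sum_{k\geq1}\gamma^k k^{-3/2}=\frac{\sqrt{2\pi}}{|t|}\textnormal{Li}_{\frac32}(\gamma)$, which is \eqref{intform}. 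No step presents a genuine obstacle; the only point demanding a little care is pinning down the exact half-width $\sqrt{-2\ln\gamma}/|t|$ of the strip, which falls out of the modulus estimate above (and is in fact sharp, as one checks by letting $z$ run along the imaginary axis, though sharpness is not required here).
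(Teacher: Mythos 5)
Your proof is correct, and for the analytic-continuation step it takes a genuinely cleaner route than the paper. The paper establishes analyticity in $\Pi_{t\gamma}$ by computing $\Re$ and $\Im$ of $1-\gamma\e^{-\frac{1}{2}t^2z^2}$ explicitly, identifying the level curves $t^2\,\Re z\,\Im z = n\pi$ on which the expression is real, and then checking that the portions of these hyperbolas on which it is real \emph{negative} lie outside the strip (this is the content of Figure~\ref{fig8}). You instead bound $|\gamma\e^{-\frac{1}{2}t^2z^2}|$ directly and observe that $|w|<1$ forces $1-w$ into the disk $B(1,1)$, which is disjoint from the branch cut $(-\infty,0]$; this is shorter, avoids the sign-chart picture entirely, and makes the sharpness of the half-width $\sqrt{-2\ln\gamma}/|t|$ immediately visible (set $\Re z=0$). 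What the paper's more geometric approach buys is a complete description of the boundary of the maximal domain of analyticity — the hyperbola segments where the continuation actually breaks down — which is used to produce Figure~\ref{fig8} but is not strictly needed for the proposition. For the $L^p$ membership and for the integral identity \eqref{intform} you and the paper do essentially the same thing, except that the paper dismisses the integral as "standard" while you supply the expected power-series/Tonelli computation; that level of detail is appropriate and correct.
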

\begin{proof} Integrability on $\mathbb{R}$ follows from the inequality $1-\gamma\e^{-\frac{1}{2}t^2x^2}\geq 1-\gamma>0$ and the estimate 
\begin{equation*}
	h(x;t,\gamma)=\mathcal{O}\big(\gamma\e^{-\frac{1}{2}t^2x^2}\big),\ \ \ x\rightarrow\pm\infty.
\end{equation*}
For analyticity we simply compute the pre-image of $(-\infty,0]\subset\mathbb{R}$ under the map $\mathbb{C}\ni z\mapsto 1-\gamma\e^{-\frac{1}{2}t^2z^2}$,
\begin{eqnarray*}
	\Re\left(1-\gamma\e^{-\frac{1}{2}t^2z^2}\right)&=&1-\gamma\e^{-\frac{1}{2}t^2((\Re z)^2-(\Im z)^2)}\cos(t^2\,\Re z\,\Im z),\\
	\Im\left(1-\gamma\e^{-\frac{1}{2}t^2z^2}\right)&=&\gamma\e^{-\frac{1}{2}t^2((\Re z)^2-(\Im z)^2)}\sin(t^2\,\Re z\,\Im z),
\end{eqnarray*}
i.e. $1-\gamma\e^{-\frac{1}{2}t^2z^2}$ is purely real along the family of curve $\Gamma_{t,n}:t^2\,\Re z\,\Im z=n\pi,n\in\mathbb{Z}$, so in particular real negative on the imaginary axis for $|\Im z|>\sqrt{-2\ln\gamma}/|t|$ and along the parts of the black colored hyperbolas $\Gamma_{t,n}$ shown in Figure \ref{fig8} that lie inside the red colored regions. Since $\Pi_{t\gamma}$ excludes those parts, analyticity of $h(z;t,\gamma)$ follows easily and the remaining integral \eqref{intform} is standard.
\end{proof}
\begin{center}
\begin{figure}[tbh]
\resizebox{0.451\textwidth}{!}{\includegraphics{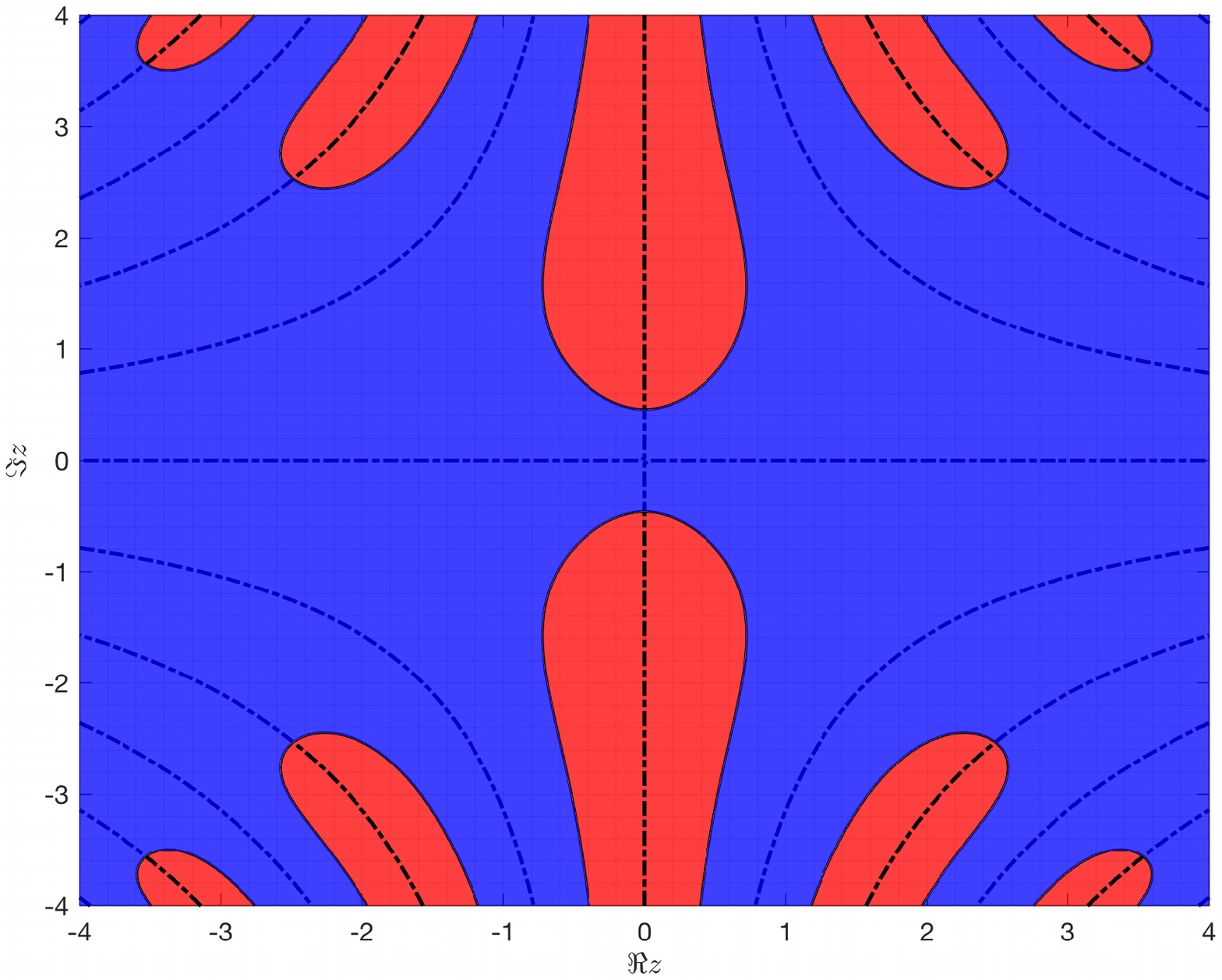}}\ \ \ \ \ \ \resizebox{0.451\textwidth}{!}{\includegraphics{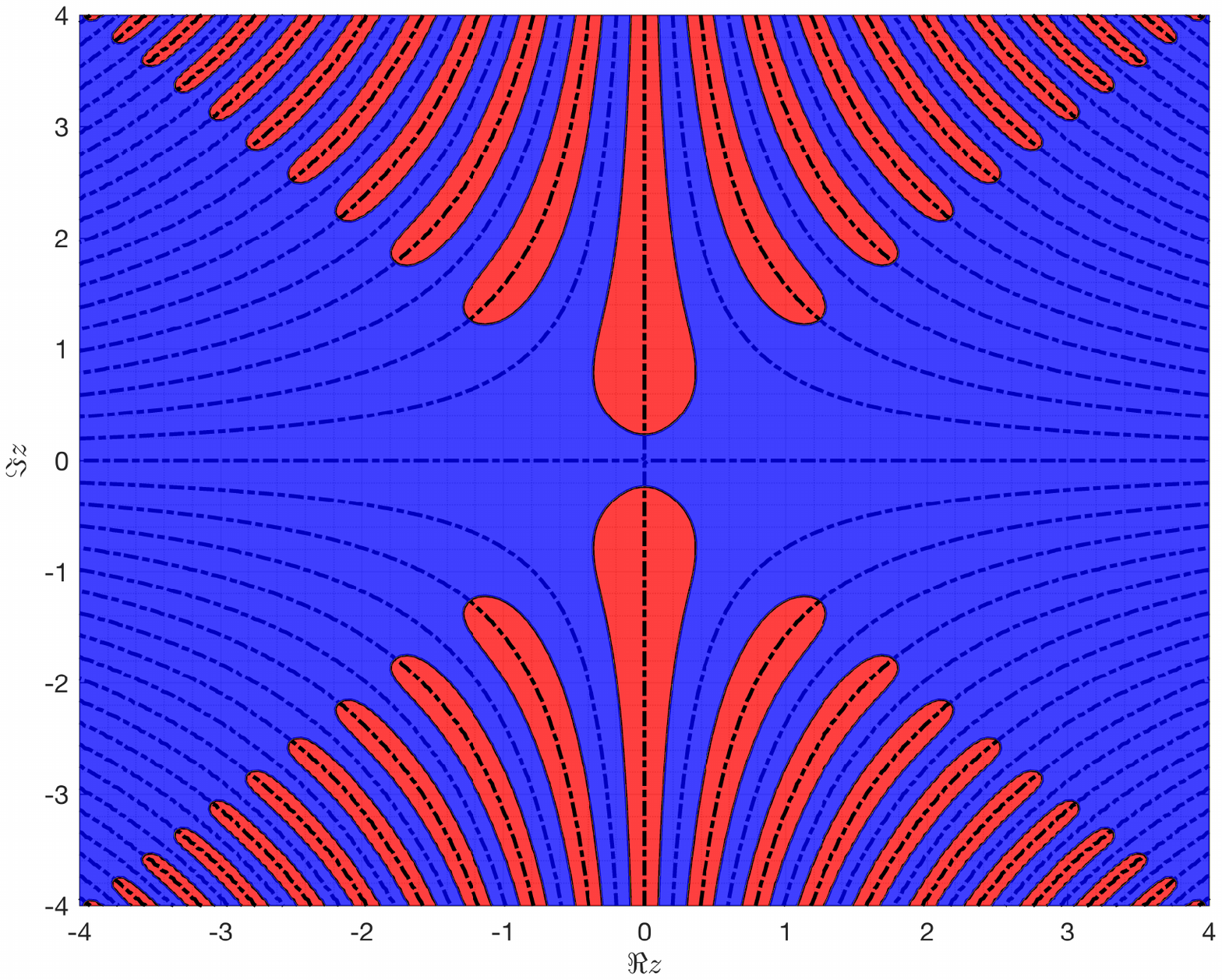}}
\caption{The sign charts of $s(z;t,\gamma):=1-\gamma\e^{-\frac{1}{2}t^2z^2}$ with $\gamma=0.9$ for $t=1$ on the left and $t=2$ on the right. In red values $z\in\mathbb{C}$ where $\Re s(z;t,\gamma)<0$ and in blue where $\Re s(z;t,\gamma)>0$. Also $\Im s(z;t,\gamma)=0$ along the dash-dotted black curves.}
\label{fig8}
\end{figure}
\end{center} 
In order to proceed we now introduce the $g$-function,
\begin{equation}\label{l:62}
	g(z)\equiv g(z;t,\gamma):=\frac{1}{2\pi\im}\int_{-\infty}^{\infty}\frac{h(s;t,\gamma)}{s-z}\,\d s,\ \ \ z\in\mathbb{C}\setminus\mathbb{R}
\end{equation}
and consider the following transformation after \eqref{l:58},
\begin{equation}\label{l:63}
	{\bf S}(z;t,\gamma):={\bf T}(z;t,\gamma)\e^{g(z)\sigma_3},\ \ \ z\in\mathbb{C}\setminus\mathbb{R}.
\end{equation}
Since $h(\cdot;t,\gamma)$ is of Hoelder class on $\mathbb{R}$ for all $\gamma\in(0,1)$ and $t<0$, the classical Plemelj-Sokhotski formula applies and we arrive at the RHP below.
\begin{problem}\label{gdeform} For any $t<0$ and $\gamma\in(0,1)$, the function ${\bf S}(z)={\bf S}(z;t,\gamma)\in\mathbb{C}^{2\times 2}$ defined in \eqref{l:63} has the following properties.
\begin{enumerate}
	\item[(1)] ${\bf S}(z)$ is analytic for $z\in\mathbb{C}\setminus\mathbb{R}$ and admits continuous boundary values on the closed upper and lower half-planes.
	\item[(2)] Along the real axis, with ${\bf S}_{\pm}(z)=\lim_{\epsilon\downarrow 0}{\bf S}(z\pm\im\epsilon),z\in\mathbb{R}$,
	\begin{equation}\label{l:64}
		{\bf S}_{+}(z)={\bf S}_{-}(z)\begin{bmatrix}1 & -\im\sqrt{\gamma}\e^{-t^2\phi_+(z)}\\ -\im\sqrt{\gamma}\e^{-t^2\phi_-(z)} & 1-\gamma\e^{-\frac{1}{2}t^2z^2}\end{bmatrix},\ \ \ \ z\in\mathbb{R},
	\end{equation}
	where 
	\begin{equation*}
		\phi_{\pm}(z)\equiv\phi_{\pm}(z;t,\gamma):=\theta_{\pm}(z)\pm\frac{1}{\im\pi}\,\textnormal{pv}\int_{-\infty}^{\infty}\frac{h(s;t,\gamma)}{t^2(s-z)}\,\d s,\ \ \ z\in\mathbb{R}.
	\end{equation*}
	\item[(3)] As $z\rightarrow\infty,z\notin\mathbb{R}$,
	\begin{equation*}
		{\bf S}(z)=\mathbb{I}+{\bf S}_{1}(t,\gamma)z^{-1}+\mathcal{O}\big(z^{-2}\big),\ \ \ \ \ {\bf S}_{1}(t,\gamma)={\bf T}_{1}(t,\gamma)-\frac{\im\sigma_3}{t\sqrt{2\pi}}\textnormal{Li}_{\frac{3}{2}}(\gamma)
	\end{equation*}
\end{enumerate}
\end{problem}
Our next step relies on the matrix factorization
\begin{equation}\label{l:65}
	\begin{bmatrix}1 & -\im\sqrt{\gamma}\e^{-t^2\phi_+(z)}\\ -\im\sqrt{\gamma}\e^{-t^2\phi_-(z)} & 1-\gamma\e^{-\frac{1}{2}t^2z^2}\end{bmatrix}=\begin{bmatrix}1 & 0\\ -\im\sqrt{\gamma}\e^{-t^2\phi_-(z)} & 1\end{bmatrix}\begin{bmatrix}1 & -\im\sqrt{\gamma}\e^{-t^2\phi_+(z)}\\ 0 & 1\end{bmatrix},\ \ z\in\mathbb{R},
\end{equation}
and the Lemma below.
\begin{lem}\label{short} The function
\begin{equation*}
	\phi(z)\equiv\phi(z;t,\gamma):=\frac{1}{4}\big(z^2-4\im z\,\textnormal{sgn}(\Im z)\big)+\frac{1}{\im\pi}\textnormal{sgn}(\Im z)\int_{-\infty}^{\infty}\frac{h(s;t,\gamma)}{t^2(s-z)}\,\d s-\frac{1}{t^2}h(z;t,\gamma),\ \ z\in\Pi_{t\gamma}\setminus\mathbb{R}
\end{equation*}
defined with the principal branch of the logarithm (see Proposition \ref{regProp}) is analytic in $\Pi_{t\gamma}\setminus\mathbb{R}$ for all $\gamma\in(0,1)$ and $t<0$. Moreover,
\begin{equation*}
	\lim_{\epsilon\downarrow 0}\phi(z\pm\im\epsilon)=\phi_{\pm}(z),\ \ \ z\in\mathbb{R}.
\end{equation*}
\end{lem}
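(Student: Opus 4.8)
The plan is to write $\phi=P+Q+R$ with $P(z):=\tfrac14\big(z^2-4\im z\,\textnormal{sgn}(\Im z)\big)$, with $Q(z):=\tfrac{2}{t^2}\,\textnormal{sgn}(\Im z)\,g(z;t,\gamma)$ expressed through the Cauchy transform $g(\cdot;t,\gamma)$ of \eqref{l:62} (note $\tfrac{1}{\im\pi}\int_{-\infty}^{\infty}\tfrac{h(s;t,\gamma)}{t^2(s-z)}\,\d s=\tfrac{2}{t^2}g(z;t,\gamma)$), and with $R(z):=-\tfrac1{t^2}h(z;t,\gamma)$, and then to analyze the three summands on each connected component of $\Pi_{t\gamma}\setminus\mathbb{R}$. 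On the upper component $P(z)=\tfrac14(z^2-4\im z)$ and on the lower component $P(z)=\tfrac14(z^2+4\im z)$, so $P$ is piecewise entire; the function $g(\cdot;t,\gamma)$ is analytic on $\mathbb{C}\setminus\mathbb{R}$ — the defining integral converges absolutely for $\Im z\neq0$ since $h(\cdot;t,\gamma)\in L^1(\mathbb{R})$ by Proposition \ref{regProp}, and analyticity follows by differentiation under the integral sign — and multiplying by the locally constant factor $\textnormal{sgn}(\Im z)$ preserves analyticity on each half; finally $h(\cdot;t,\gamma)$ is analytic on all of $\Pi_{t\gamma}$, again by Proposition \ref{regProp}. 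Hence $\phi$ is analytic in $\Pi_{t\gamma}\setminus\mathbb{R}$, which is the first assertion.

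For the boundary values I would invoke the Plemelj--Sokhotski formula. Proposition \ref{regProp} shows $h(\cdot;t,\gamma)$ is real-analytic with Gaussian decay, hence globally Lipschitz and in particular H\"older on $\mathbb{R}$, so $g(\cdot;t,\gamma)$ has continuous boundary values $g_\pm(z;t,\gamma)=\pm\tfrac12 h(z;t,\gamma)+\tfrac{1}{2\pi\im}\,\textnormal{pv}\!\int_{-\infty}^{\infty}\tfrac{h(s;t,\gamma)}{s-z}\,\d s$ for $z\in\mathbb{R}$. Consequently $\lim_{\epsilon\downarrow0}Q(z\pm\im\epsilon)=\pm\tfrac{2}{t^2}g_\pm(z;t,\gamma)=\tfrac1{t^2}h(z;t,\gamma)\pm\tfrac{1}{\im\pi}\,\textnormal{pv}\!\int_{-\infty}^{\infty}\tfrac{h(s;t,\gamma)}{t^2(s-z)}\,\d s$, where the jump term $\pm\tfrac12 h$ enters with the same overall sign after multiplication by $\pm\textnormal{sgn}(\Im z)$. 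Meanwhile $\lim_{\epsilon\downarrow0}P(z\pm\im\epsilon)=\tfrac14(z^2\mp4\im z)=\theta_\pm(z)$, and $R$ is continuous up to $\mathbb{R}$ from within $\Pi_{t\gamma}$ with $\lim_{\epsilon\downarrow0}R(z\pm\im\epsilon)=-\tfrac1{t^2}h(z;t,\gamma)$. Adding the three limits, the $\tfrac1{t^2}h$ contributions cancel and we obtain $\lim_{\epsilon\downarrow0}\phi(z\pm\im\epsilon)=\theta_\pm(z)\pm\tfrac1{\im\pi}\,\textnormal{pv}\!\int_{-\infty}^{\infty}\tfrac{h(s;t,\gamma)}{t^2(s-z)}\,\d s=\phi_\pm(z)$, exactly the exponents appearing in \eqref{l:64} of Problem \ref{gdeform}.

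There is no genuine obstacle here; the only points requiring care are the applicability of the Plemelj--Sokhotski formula (guaranteed by the smoothness and Gaussian decay of $h$ from Proposition \ref{regProp}) and the absolute convergence of the Cauchy integrals (immediate from $h\in L^1(\mathbb{R})$). The construction of $\phi$ is tailored for precisely this purpose: subtracting $\tfrac1{t^2}h(z)$, which is analytic throughout $\Pi_{t\gamma}$, removes the jump of the Cauchy transform of $h$ across $\mathbb{R}$ without changing its boundary values, so that $\phi$ becomes analytic off the real axis inside the strip while still reproducing the exponents $\phi_\pm$ — which is what makes the subsequent lens-opening along the factorization \eqref{l:65} legitimate.
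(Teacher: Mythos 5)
Your argument is correct and follows the same route the paper takes: analyticity of $h$ in $\Pi_{t\gamma}$ together with the Plemelj--Sokhotski formula applied to the Cauchy transform $g(\cdot;t,\gamma)$ of the H\"older-class density $h(\cdot;t,\gamma)$. The paper compresses this to a single sentence; your write-up simply makes explicit the cancellation of the $\tfrac{1}{t^2}h$ terms and the boundary-value bookkeeping, which is the intended content of ``the claims follow easily from the Plemelj--Sokhotski theorem.''
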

\begin{proof} Since $h(\cdot;t,\gamma)$ is of Hoelder class on $\mathbb{R}$ and analytic in $\Pi_{t\gamma}$ by Proposition \ref{regProp}, the claims follow easily from the Plemelj-Sokhotski theorem.
\end{proof}
\noindent More precisely, we introduce for any $\gamma\in(0,1)$ and $t<0$,
\begin{equation}\label{l:666}
	{\bf M}(z;t,\gamma):={\bf S}(z;t,\gamma)\begin{cases}\Bigl[\begin{smallmatrix}1&\im\sqrt{\gamma}\,\e^{-t^2\phi(z)}\\ 0 & 1\end{smallmatrix}\Bigr],&z\in\Omega_1=\Omega_1(\delta_{t\gamma})\smallskip\\ \Bigl[\begin{smallmatrix}1 & 0\\ -\im\sqrt{\gamma}\,\e^{-t^2\phi(z)} & 1\end{smallmatrix}\Bigr],&z\in\Omega_2=\Omega_2(\delta_{t\gamma})\smallskip\\ \mathbb{I},&\textnormal{else}\end{cases}\ \ \  \textnormal{with}\ \ \ 0<\delta_{t\gamma}:=\min\left\{\frac{\sqrt{-\ln\gamma}}{|t|},\frac{1}{2}\right\}
\end{equation}
which leads to the RHP formulated on the red contour $\Sigma_{{\bf M}}$ in Figure \ref{Fig9} and with the following characteristica.
\begin{figure}[tbh]
\begin{tikzpicture}[xscale=0.7,yscale=0.4]
\draw [->] (-5,0) -- (5,0) node[below]{{\small $\Re z$}};
\draw [->] (0,-4) -- (0,4) node[left]{{\small $\Im z$}};
\draw [thick, color=red, decoration={markings, mark=at position 0.2 with {\arrow{>}}}, decoration={markings, mark=at position 0.8 with {\arrow{>}}}, postaction={decorate}] (-4.5,0) -- (-2,0);
\draw [thick, color=red, decoration={markings, mark=at position 0.5 with {\arrow{>}}}, postaction={decorate}] (-2,0) -- (-1,2);
\draw [thick, color=red, decoration={markings, mark=at position 0.5 with {\arrow{>}}}, postaction={decorate}] (-2,0) -- (-1,-2);
\draw [thick, color=red, decoration={markings, mark=at position 0.2 with {\arrow{>}}}, decoration={markings, mark=at position 0.8 with {\arrow{>}}},  postaction={decorate}] (-1,2) -- (1,2); 
\node [right, color=red] at (1,2) {{\small $\Im z=\delta_{t\gamma}$}};
\node [left, color=red] at (-1,-2) {{\small $\Im z=-\delta_{t\gamma}$}};
\node [right] at (-1,1) {{\small $\Omega_1$}};
\node [right] at (-1,-1) {{\small $\Omega_2$}};
\draw [thick, color=red, decoration={markings, mark=at position 0.2 with {\arrow{>}}}, decoration={markings, mark=at position 0.8 with {\arrow{>}}},  postaction={decorate}] (-1,-2) -- (1,-2); 
\draw [thick, color=red, decoration={markings, mark=at position 0.5 with {\arrow{>}}}, postaction={decorate}] (1,2) -- (2,0);
\draw [thick, color=red, decoration={markings, mark=at position 0.5 with {\arrow{>}}}, postaction={decorate}] (1,-2) -- (2,0);
\draw [thick, color=red, decoration={markings, mark=at position 0.2 with {\arrow{>}}}, decoration={markings, mark=at position 0.8 with {\arrow{>}}}, postaction={decorate}] (2,0) -- (4.5,0);
\end{tikzpicture}
\caption{The oriented jump contour $\Sigma_{{\bf M}}$ in RHP \ref{split}. We fix $z_1=-2,z_{2,3}=-1\pm\im\delta_{t\gamma},z_{3,4}=1\pm\im\delta_{t\gamma}$ and $z_5=2$ as location of the six vertices.}
\label{Fig9}
\end{figure}
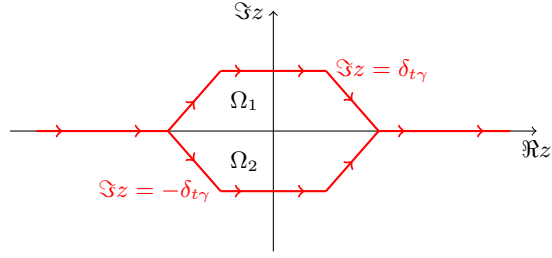
\begin{problem}\label{split} For any $\gamma\in(0,1)$ and $t<0$, the function ${\bf M}(z)={\bf M}(z;t,\gamma)\in\mathbb{C}^{2\times 2}$ has the following properties.
\begin{enumerate}
	\item[(1)] ${\bf M}(z)$ is analytic for $z\in\mathbb{C}\setminus\Sigma_{{\bf M}}$ and all eight straight lines are oriented from left to right, see Figure \ref{Fig9}.
	\item[(2)] The square integrable limiting values obey
	\begin{equation*}
		{\bf M}_{+}(z)={\bf M}_{-}(z)\begin{bmatrix}1 & -\im\sqrt{\gamma}\e^{-t^2\phi(z)}\\ 0 & 1\end{bmatrix},\ \ z\in\Sigma_{{\bf M}}\cap\{z\in\mathbb{C}:\,\Im z>0\};
	\end{equation*}
	and
	\begin{equation*}
		{\bf M}_{+}(z)={\bf M}_{-}(z)\begin{bmatrix}1 & 0\\ -\im\sqrt{\gamma}\e^{-t^2\phi(z)} & 1\end{bmatrix},\ \ z\in\Sigma_{{\bf M}}\cap\{z\in\mathbb{C}:\,\Im z<0\};
	\end{equation*}
	as well as
	\begin{equation*}
		{\bf M}_{+}(z)={\bf M}_{-}(z)\begin{bmatrix}1 & -\im\sqrt{\gamma}\e^{-t^2\phi_+(z)}\\ -\im\sqrt{\gamma}\e^{-t^2\phi_-(z)} & 1-\gamma\e^{-\frac{1}{2}t^2z^2}\end{bmatrix},\ \ z\in\mathbb{R}\setminus[-2,2].
	\end{equation*}
	By construction, see Lemma \ref{short} and equation \eqref{l:65}, there is no jump along $(-2,2)\subset\mathbb{R}$.
	\item[(3)] For large $|z|$ the leading order asymptotic behavior of RHP \ref{gdeform} is unchanged,
	\begin{equation*}
		{\bf M}(z)=\mathbb{I}+{\bf S}_{1}(t,\gamma)z^{-1}+\mathcal{O}\big(z^{-2}\big),\ \ \ z\rightarrow\infty.
	\end{equation*}
\end{enumerate}
\end{problem}
We now proceed with the necessary small norm estimates for the jump matrix ${\bf G}_{{\bf M}}(z;t,\gamma)$ in condition (2) of RHP \ref{split}. And since we are about to investigate the limit $\gamma\uparrow 1$ in the upcoming sections we shall already now indicate the $\gamma$-dependency in all error estimates.
\begin{prop}\label{smallnorm} There exists positive universal constants $t_0,c_j$ such that for any  $\gamma\in(0,1)$,
\begin{equation}\label{DZ:es}
	\|{\bf G}_{{\bf M}}(\cdot;t,\gamma)-\mathbb{I}\|_{L^{\infty}(\Sigma_{{\bf M}},|\d z|)}\leq c_1 \frac{\gamma^{\frac{1}{4}}\e^{-t^2\delta_{t\gamma}}}{1-\gamma},\ \ \ \ \ \ \  
	\|{\bf G}_{{\bf M}}(\cdot;t,\gamma)-\mathbb{I}\|_{L^2(\Sigma_{{\bf M}},|\d z|)}\leq c_2\frac{\gamma^{\frac{1}{4}}\e^{-t^2\delta_{t\gamma}}}{1-\gamma}\,|t|^{-\frac{1}{2}},
\end{equation}
hold true for all $(-t)\geq t_0$.
\end{prop}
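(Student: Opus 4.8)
The plan is to cut $\Sigma_{\bf M}$ into the two real rays $\mathbb{R}\setminus[-2,2]$ and the six ``lens'' segments contained in the strip $\{|\Im z|\le\delta_{t\gamma},\ |\Re z|\le 2\}$ (Figure \ref{Fig9}), and to reduce both estimates in \eqref{DZ:es} to lower bounds on the real parts of the exponents in ${\bf G}_{\bf M}(\cdot;t,\gamma)-\mathbb{I}$. By the construction of RHP \ref{split}, on the upper (resp.\ lower) lens segments ${\bf G}_{\bf M}-\mathbb{I}$ has the single nonzero entry $\mp\im\sqrt{\gamma}\,\e^{-t^2\phi(z)}$, while on $\mathbb{R}\setminus[-2,2]$ it is the $2\times2$ jump matrix from \eqref{l:64}, whose departure from $\mathbb{I}$ consists of the off-diagonal entries $-\im\sqrt{\gamma}\,\e^{-t^2\phi_\pm(z)}$ and the $(2,2)$-entry $-\gamma\e^{-\frac12 t^2 z^2}$. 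Hence it suffices to bound $\Re\phi(z)$ on the lens, and $\Re\phi_\pm(x)$ together with $\Re(\tfrac12 z^2)$ on the rays, from below.

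On the rays this is immediate: for real $x$ the principal-value Cauchy integral entering $\phi_\pm$ is purely imaginary, so $\Re\phi_\pm(x)=\Re\theta_\pm(x)=\tfrac14 x^2\ge1$ and $\Re(\tfrac12 x^2)\ge2$ for $|x|\ge2$; all three entries are then $\mathcal{O}\big(\sqrt{\gamma}\,\e^{-t^2}\big)$ pointwise with Gaussian decay in $x$, contributing $\mathcal{O}\big(\sqrt{\gamma}\,|t|^{-1/2}\e^{-t^2}\big)$ in $L^2$, which is dominated by the right-hand sides of \eqref{DZ:es} once $(-t)\geq t_0$. The real work is on the lens, where I would use the factorization already built into the $g$-function transformation: combining Lemma \ref{short}, the definition \eqref{l:62} of $g$, and $h(z)=-\ln\big(1-\gamma\e^{-\frac12 t^2 z^2}\big)$ (continued analytically via Proposition \ref{regProp}), for $\pm\Im z>0$ inside $\Pi_{t\gamma}$ one has the exact identity $\e^{-t^2\phi(z)}=\e^{-t^2\theta_\pm(z)}\,\e^{\mp2g(z)}\,\big(1-\gamma\e^{-\frac12 t^2 z^2}\big)^{-1}$.

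Now I estimate the three factors on each lens segment. (a) $|\e^{\mp2g(z)}|=\e^{\mp2\Re g(z)}\le1$, because $\Re g$ is harmonic off $\mathbb{R}$, decays at $\infty$, and has boundary values $\Re g_\pm(x)=\pm\tfrac12 h(x)$ — nonnegative from above and nonpositive from below since $h\ge0$ — so by the maximum/minimum principle $\Re g\ge0$ on $\{\Im z>0\}$ and $\Re g\le0$ on $\{\Im z<0\}$. (b) $\big|(1-\gamma\e^{-\frac12 t^2 z^2})^{-1}\big|\le\big(1-\gamma\e^{\frac12 t^2\delta_{t\gamma}^2}\big)^{-1}$ on the strip; splitting on the two alternatives in \eqref{l:666}: if $\delta_{t\gamma}=\sqrt{-\ln\gamma}/|t|$ this equals $(1-\sqrt{\gamma})^{-1}\le2(1-\gamma)^{-1}$, while if $\delta_{t\gamma}=\tfrac12$ then $\gamma\le\e^{-t^2/4}$, the factor is $\le(1-\e^{-t_0^2/8})^{-1}$, and $1-\gamma\ge1-\e^{-t_0^2/4}$ forces it to be $\le C(1-\gamma)^{-1}$ as well. (c) $|\e^{-t^2\theta_\pm(z)}|=\e^{-t^2\Re\theta_\pm(z)}$ with $\Re\theta_\pm(z)=\tfrac14\big((\Re z)^2-(\Im z)^2\big)+|\Im z|$, which is elementary; one checks that on the four slanted segments $\Re\theta_\pm\ge\tfrac14(1-\delta_{t\gamma}^2)+\delta_{t\gamma}\ge\delta_{t\gamma}$ automatically, and by routing the lens near the origin at a height that is a fixed multiple $<\sqrt{2}$ of $\delta_{t\gamma}$ — still inside $\Pi_{t\gamma}$, whose half-width is $\sqrt{2}\,\delta_{t\gamma}$ when $\delta_{t\gamma}=\sqrt{-\ln\gamma}/|t|$ — one arranges $\Re\theta_\pm\ge\delta_{t\gamma}$ on the remaining segments too, giving $|\e^{-t^2\theta_\pm(z)}|\le\e^{-t^2\delta_{t\gamma}}$. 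Multiplying (a)–(c) and using $\tfrac{\sqrt{\gamma}}{1-\sqrt{\gamma}}\le\tfrac{2\gamma^{1/4}}{1-\gamma}$ produces the $L^\infty$ bound; the $L^2$ bound then follows by interpolating with the $L^1$ bound over the lens, which carries an extra factor $\mathcal{O}(|t|^{-1})$ from a Laplace-type estimate of the integral of $\e^{-t^2\Re\theta_\pm(z)}$ over the lens near its nondegenerate minimum at $z=0$. Throughout, $t_0$ is chosen independent of $\gamma$, which is legitimate since $\|h\|_{L^1(\mathbb{R})}=\tfrac{\sqrt{2\pi}}{|t|}\,\mathrm{Li}_{3/2}(\gamma)=\mathcal{O}(|t|^{-1})$ uniformly in $\gamma\in(0,1)$ by \eqref{intform} together with $\mathrm{Li}_{3/2}(\gamma)\le\zeta(3/2)$, which controls the remaining lower-order terms.

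The crux — and the reason the construction above needs care — is uniformity in $\gamma$ as $\gamma\uparrow1$: the strip $\Pi_{t\gamma}$ pinches down to width $\sim|t|^{-1}\sqrt{-\ln\gamma}$, and one must verify that the lens can be kept simultaneously inside $\Pi_{t\gamma}$ and inside $\{\Re\theta_\pm\ge\delta_{t\gamma}\}$, and that the factor $|1-\gamma\e^{-\frac12 t^2 z^2}|^{-1}$ degrades only like $(1-\gamma)^{-1}$ rather than catastrophically — which is exactly what the factor $\sqrt{2}$ separating $\delta_{t\gamma}$ from the half-width of $\Pi_{t\gamma}$ guarantees. I expect this bookkeeping, together with pinning down that the decay exponent can be taken to be $t^2\delta_{t\gamma}$ rather than merely $c\,t^2\delta_{t\gamma}$ (which hinges on the shape of the lens near $z=0$), to be the main obstacle; the harmonic maximum-principle argument for $\Re g$ and the Hölder regularity and analyticity of $h$ from Proposition \ref{regProp} dispose of the remaining technicalities (in particular the degeneration of the lens onto $\mathbb{R}$ at $z=\pm2$) routinely.
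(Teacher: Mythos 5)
Your overall plan coincides with the paper's: split $\Sigma_{\bf M}$ into the real rays and the lens, and on the lens use the factorization (valid for $\pm\Im z>0$, as you correctly derive from Lemma~\ref{short} and \eqref{l:62})
\begin{equation*}
\e^{-t^2\phi(z)}=\e^{-t^2\theta_\pm(z)}\,\e^{\mp 2g(z)}\,\big(1-\gamma\e^{-\tfrac12 t^2z^2}\big)^{-1},
\end{equation*}
then estimate the three factors. Your maximum-principle argument for $|\e^{\mp2g(z)}|\le1$ is a correct alternative to the paper's computation, which simply notes that $\Re\big(\pm\tfrac{\im}{\pi}\int h(s)/(s-z)\,\d s\big)=-\tfrac{|\Im z|}{\pi}\int h(s)\,|s-z|^{-2}\d s\le0$ because $h\ge0$. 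Your interpolation of $L^2$ between $L^1$ and $L^\infty$ is also a fine substitute for the paper's direct Laplace estimate of $\int_{-1}^1\e^{-\tfrac12 t^2\lambda^2}\d\lambda=\mathcal{O}(|t|^{-1})$.

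The genuine gap is in step (c). You worry — rightly — that on the horizontal segments $\Im z=\pm\delta_{t\gamma}$ of $\Sigma_{\bf M}$ one has $\Re\theta_\pm(\lambda\pm\im\delta_{t\gamma})=\tfrac14\lambda^2-\tfrac14\delta_{t\gamma}^2+\delta_{t\gamma}$, which dips \emph{below} $\delta_{t\gamma}$ near $\lambda=0$, so $|\e^{-t^2\theta_\pm}|\le\e^{-t^2\delta_{t\gamma}}$ fails. But your proposed remedy — routing the lens at height $c\,\delta_{t\gamma}$ with $1<c<\sqrt2$ — is both unnecessary and inconsistent with your step (b), which computes $\big(1-\gamma\e^{\tfrac12 t^2\delta_{t\gamma}^2}\big)^{-1}=(1-\sqrt{\gamma})^{-1}$ for a lens at height $\delta_{t\gamma}$ (an inflated lens would give $(1-\gamma^{1-c^2/2})^{-1}$ instead). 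It would moreover modify the contour $\Sigma_{\bf M}$ of RHP~\ref{split}, which is not the statement you were asked to prove. The paper resolves the exact same deficit \emph{without} touching the contour: since $t^2\delta_{t\gamma}^2\le -\ln\gamma$ by the very definition $\delta_{t\gamma}=\min\{\sqrt{-\ln\gamma}/|t|,\tfrac12\}$, one has $\e^{\tfrac14 t^2\delta_{t\gamma}^2}\le\gamma^{-1/4}$, hence $|\e^{-t^2\theta_\pm(\lambda\pm\im\delta_{t\gamma})}|\le\gamma^{-1/4}\e^{-t^2\delta_{t\gamma}-\tfrac14 t^2\lambda^2}$. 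Multiplying by the $\sqrt{\gamma}$ prefactor of the jump and by $(1-\sqrt{\gamma})^{-1}$ yields $\gamma^{1/4}\e^{-t^2\delta_{t\gamma}}/(1-\sqrt{\gamma})\le 2\gamma^{1/4}\e^{-t^2\delta_{t\gamma}}/(1-\gamma)$. In other words, the $\gamma^{1/4}$ (rather than $\sqrt{\gamma}$) in the numerator of \eqref{DZ:es} is precisely a built-in allowance for the $\gamma^{-1/4}$ loss near $\lambda=0$; you misread it as a freebie to be spent via $\sqrt{\gamma}/(1-\sqrt{\gamma})\le 2\gamma^{1/4}/(1-\gamma)$, and then had to manufacture the clean bound $\Re\theta_\pm\ge\delta_{t\gamma}$ by deforming the contour. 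Once you drop the inflation and instead carry the $\gamma^{-1/4}$ through, your (a) and (b) combine with the correct (c) to give exactly the stated bounds, and your (tighter) slanted-segment estimate $\Re\theta_\pm\ge\tfrac14(1-\delta_{t\gamma}^2)+\delta_{t\gamma}$ does confirm those pieces are sub-leading.
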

\begin{proof} For $z\in\mathbb{R}\setminus[-2,2]$ we clearly have $\|{\bf G}_{{\bf M}}(\cdot;t,\gamma)-\mathbb{I}\|=\mathcal{O}(\e^{-t^2})$ in $L^2\cap L^{\infty}$ and thus no contribution to the leading order in \eqref{DZ:es}. Next, for $z\in\mathbb{C}\setminus\mathbb{R}$,
\begin{equation*}
	\e^{-t^2\phi(z)}=\frac{\e^{-t^2\theta_{\pm}(z)}}{1-\gamma\e^{-\frac{1}{2}t^2z^2}}\,\exp\left[\pm\frac{\im}{\pi}\int_{-\infty}^{\infty}\frac{h(s;t,\gamma)}{s-z}\,\d s\right],
\end{equation*}
so we have to estimate three factors. With the parametrizations $z=z(\lambda)=\lambda\pm\im\delta_{t\gamma},\lambda\in[-1,1]$ we find
\begin{equation*}\label{l:66}
	\left|\e^{-t^2\theta_{\pm}(z(\lambda))}\right|\leq\e^{-t^2\delta_{t\gamma}-\frac{1}{4}t^2\lambda^2}\gamma^{-\frac{1}{4}},\ \ 
	\left|1-\gamma\e^{-\frac{1}{2}t^2z^2(\lambda)}\right|=\left[1-2\gamma\e^{-\frac{1}{2}t^2(\lambda^2-\delta_{t\gamma})}\cos(t^2\lambda\delta_{t\gamma})+\gamma^2\e^{-t^2(\lambda^2-\delta_{t\gamma}^2)}\right]^{\frac{1}{2}}
\end{equation*}
and 
\begin{equation}\label{l:68}
	\left|\exp\left[\pm\frac{\im}{\pi}\int_{-\infty}^{\infty}\frac{h(s;t,\gamma)}{s-z(\lambda)}\,\d s\right]\right|=\exp\left[-\frac{\delta_{t\gamma}}{\pi}\int_{-\infty}^{\infty}\frac{h(s;t,\gamma)}{(s-\lambda)^2+\delta_{t\gamma}^2}\,\d s\right].
\end{equation}
Hence,
\begin{equation*}
	\sup_{\lambda\in[-1,1]}\left|\frac{\e^{-t^2\theta_{\pm}(z(\lambda))}}{1-\gamma\e^{-\frac{1}{2}t^2z^2(\lambda)}}\right|\leq\frac{\gamma^{-\frac{1}{4}}\e^{-t^2\delta_{t\gamma}}}{1-\sqrt{\gamma}},
\end{equation*}
and since the integrand in \eqref{l:68} is non-negative we have therefore established the above $L^{\infty}$ estimate. For the $L^2$ estimate we use Laplace's method for the factor $\exp[-t^2\theta_{\pm}(z)]$ and the same reasonings that were used in the second and third while deriving the previous $L^{\infty}$ estimate. For the four slanted straight lines consider, say, $z=z(\lambda)=\lambda-2\pm\im\lambda\delta_{t\gamma}$ with $\lambda\in[0,1]$. Then,
\begin{equation*}
	\left|\e^{-t^2\theta_{\pm}(z(\lambda))}\right|=\e^{-\frac{1}{4}t^2(\lambda^2(1-\delta_{t\gamma}^2)-4\lambda(1-\delta_{t\gamma})+4)},
\end{equation*}
as well as
\begin{equation*}
	\left|1-\gamma\e^{-\frac{1}{2}t^2z^2(\lambda)}\right|=\left[1-2\gamma\e^{-\frac{1}{2}t^2((\lambda-2)^2-\lambda^2\delta_{t\gamma}^2)}\cos\big(t^2\lambda(\lambda-2)\delta_{t\gamma}\big)+\gamma^2\e^{-t^2((\lambda-2)^2-\lambda^2\delta_{t\gamma}^2)}\right]^{\frac{1}{2}},
\end{equation*}
and
\begin{equation}\label{l:69}
	\left|\exp\left[\pm\frac{\im}{\pi}\int_{-\infty}^{\infty}\frac{h(s;t,\gamma)}{s-z(\lambda)}\,\d s\right]\right|=\exp\left[-\frac{\lambda\delta_{t\gamma}}{\pi}\int_{-\infty}^{\infty}\frac{h(s;t,\gamma)}{(s-\lambda+2)^2+\lambda^2\delta_{t\gamma}^2}\,\d s\right].
\end{equation}
Thus
\begin{equation*}
	\sup_{\lambda\in[0,1]}\left|\frac{\e^{-t^2\theta_{\pm}(z(\lambda))}}{1-\gamma\e^{-\frac{1}{2}t^2z^2(\lambda)}}\right|\leq\frac{\e^{-\frac{1}{4}t^2(1-\delta_{t\gamma}^2)}}{1-\gamma\e^{-\frac{1}{2}t^2(1-\delta_{t\gamma}^2)}},
\end{equation*}
which is of sub-leading order for $\gamma\in(0,1)$ when compared to \eqref{DZ:es}, see \eqref{l:65}. By non-negativity of the integrand in \eqref{l:69} we have therefore completed our proof.
\end{proof}
Since we are dealing with a $(t,\gamma)$-dependent contour in RHP \ref{split} (the hexagon in Figure \ref{Fig9} is collapsing to the real axis as $t\rightarrow-\infty$ or $\gamma\uparrow 1$) the general framework of \cite{DZ} is not directly applicable to Proposition \ref{smallnorm}. Still, using somewhat similar ideas as in \cite{BK}, the results of Appendix \ref{appA} below guarantee unique solvability of RHP \ref{split} in $L^2(\Sigma_{{\bf M}},|\d z|)$ for all $(-t)\geq t_0$ and either any fixed $\gamma\in(0,1)$ or $\gamma\uparrow 1$ at a certain controlled rate. 
\begin{theo}\label{collapse} For any fixed $0<\epsilon<2$, RHP \ref{split} is uniquely solvable in $L^2(\Sigma_{{\bf M}},|\d z|)$ for $(-t)$ sufficiently large and all $0<\gamma\leq 1-|t|^{-\epsilon}$. Moreover, 
\begin{equation}\label{l:70}
	{\bf S}_{1}(t,\gamma)=\lim_{z\rightarrow\infty}z\big({\bf M}(z;t,\gamma)-\mathbb{I}\big)=\mathcal{O}\left(|t|^{-1+\epsilon}\e^{-|t|^{1-\frac{\epsilon}{2}}}\right),\ \ \ \forall\ (-t)\geq t_0,\ \ 0<\gamma\leq 1-|t|^{-\epsilon}.
	%
\end{equation}
\end{theo}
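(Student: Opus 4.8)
The plan is to deduce Theorem \ref{collapse} from the standard small-norm theory for Riemann-Hilbert problems, but with the twist that the jump contour $\Sigma_{{\bf M}}$ in RHP \ref{split} depends on $(t,\gamma)$ and collapses onto $\mathbb{R}$ as $(-t)\to\infty$ or $\gamma\uparrow 1$. Concretely, I would first feed the parameter restriction $0<\gamma\le 1-|t|^{-\epsilon}$ into the bounds of Proposition \ref{smallnorm}. Under this restriction one has $1-\gamma\ge|t|^{-\epsilon}$, hence $(1-\gamma)^{-1}\le|t|^{\epsilon}$, while $\gamma^{1/4}\le 1$ and $\e^{-t^2\delta_{t\gamma}}$ is controlled by noting that $\delta_{t\gamma}=\min\{\sqrt{-\ln\gamma}/|t|,\tfrac12\}$. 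For $\gamma$ in the stated range, $-\ln\gamma\ge-\ln(1-|t|^{-\epsilon})\ge|t|^{-\epsilon}$, so $\sqrt{-\ln\gamma}/|t|\ge|t|^{-1-\epsilon/2}$, and therefore $t^2\delta_{t\gamma}\ge\min\{|t|^{1-\epsilon/2},\tfrac12 t^2\}\ge|t|^{1-\epsilon/2}$ for $(-t)$ large. This yields the uniform estimates
\begin{equation*}
	\|{\bf G}_{{\bf M}}(\cdot;t,\gamma)-\mathbb{I}\|_{L^{\infty}(\Sigma_{{\bf M}})}\le c_1|t|^{\epsilon}\e^{-|t|^{1-\epsilon/2}},\qquad
	\|{\bf G}_{{\bf M}}(\cdot;t,\gamma)-\mathbb{I}\|_{L^{2}(\Sigma_{{\bf M}})}\le c_2|t|^{\epsilon-\frac12}\e^{-|t|^{1-\epsilon/2}},
\end{equation*}
both of which tend to zero as $(-t)\to\infty$, uniformly over the admissible $\gamma$.

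Next I would invoke the appropriate small-norm framework. Because the contour is collapsing, I cannot cite \cite{DZ} verbatim; instead I would appeal to the Cauchy-operator estimates assembled in Appendix \ref{appA}, which (as the text already advertises) provide a bound on the norm of the relevant Cauchy projection $\mathcal{C}_{\Sigma_{{\bf M}}}$ on $L^2(\Sigma_{{\bf M}})$ that is \emph{uniform} in the collapsing parameter. With such a uniform bound $\|\mathcal{C}_{\Sigma_{{\bf M}}}\|\le c_3$, the singular integral equation $(1-\mathcal{C}_{{\bf G}_{{\bf M}}-\mathbb{I}})\mu=\mathbb{I}$ associated with RHP \ref{split} is uniquely solvable in $L^2(\Sigma_{{\bf M}})$ once $c_3\|{\bf G}_{{\bf M}}-\mathbb{I}\|_{L^\infty}<1$, which by the display above holds for all $(-t)\ge t_0$ and $0<\gamma\le 1-|t|^{-\epsilon}$ after enlarging $t_0$. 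A Neumann series then gives $\|\mu-\mathbb{I}\|_{L^2(\Sigma_{{\bf M}})}\le 2c_3\|{\bf G}_{{\bf M}}-\mathbb{I}\|_{L^2(\Sigma_{{\bf M}})}\le c_4|t|^{\epsilon-\frac12}\e^{-|t|^{1-\epsilon/2}}$.

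Finally I would extract ${\bf S}_1(t,\gamma)$ from the reconstruction formula
\begin{equation*}
	{\bf M}(z)=\mathbb{I}+\frac{1}{2\pi\im}\int_{\Sigma_{{\bf M}}}\mu(\lambda)\big({\bf G}_{{\bf M}}(\lambda)-\mathbb{I}\big)\frac{\d\lambda}{\lambda-z},
\end{equation*}
so that ${\bf S}_1(t,\gamma)=\lim_{z\to\infty}z({\bf M}(z)-\mathbb{I})=-\frac{1}{2\pi\im}\int_{\Sigma_{{\bf M}}}\mu(\lambda)({\bf G}_{{\bf M}}(\lambda)-\mathbb{I})\,\d\lambda$. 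Splitting $\mu=\mathbb{I}+(\mu-\mathbb{I})$ and applying Cauchy-Schwarz,
\begin{equation*}
	\big|{\bf S}_1(t,\gamma)\big|\le\frac{1}{2\pi}\Big(|\Sigma_{{\bf M}}|^{1/2}\,\|{\bf G}_{{\bf M}}-\mathbb{I}\|_{L^2}+\|\mu-\mathbb{I}\|_{L^2}\,\|{\bf G}_{{\bf M}}-\mathbb{I}\|_{L^2}\Big)\le c_5|t|^{-1+\epsilon}\e^{-|t|^{1-\epsilon/2}},
\end{equation*}
where I have used that the arclength $|\Sigma_{{\bf M}}|$ of the hexagon is bounded (the vertices are fixed at $\pm2$ and $\pm1\pm\im\delta_{t\gamma}$ with $\delta_{t\gamma}\le\tfrac12$), which contributes only a harmless constant, and that the $\e^{-2|t|^{1-\epsilon/2}}$ term from the second summand is subleading. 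This is exactly \eqref{l:70}. The main obstacle is the justification that the Cauchy operator norm on the collapsing contour $\Sigma_{{\bf M}}$ stays bounded uniformly as $\delta_{t\gamma}\to 0$ — ordinary $L^2$-boundedness of the Cauchy transform on a Lipschitz graph degrades as the opening angles at the six vertices flatten — and this is precisely what Appendix \ref{appA} is designed to supply; I would lean on it and only need to check that the hypotheses there (controlled vertex angles, uniformly bounded local geometry after rescaling near the collapsing edges) are met by the concrete hexagon of Figure \ref{Fig9}.
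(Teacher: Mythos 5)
Your overall strategy --- adapting the small-norm machinery to the collapsing hexagon $\Sigma_{\bf M}$ and reading off ${\bf S}_1$ from the Cauchy reconstruction formula --- coincides with the paper's. Two technical points are off, and the second constitutes a genuine gap.

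First, your appeal to Appendix \ref{appA} for a \emph{uniform} bound on the Cauchy operator misreads what that appendix actually supplies. Estimates \eqref{app:3} and \eqref{epp:4} give
\begin{equation*}
	\|C^{\pm}_{\Sigma_{\bf M}}\|_{L^2(\Sigma_{\bf M})\to L^2(\Sigma_{\bf M})}\ \leq\ \tfrac{1}{2}(K+1)\ln(\delta_{t\gamma}^{-1}),
\end{equation*}
so the operator norm \emph{grows logarithmically} as the contour collapses: each individual arc $\Sigma_j$ has a $(t,\gamma)$-independent Lipschitz constant, so the per-arc CMM bound from Proposition \ref{Coif} is uniform, but the cross-term Cauchy pairings between nearby arcs (notably $\Sigma_4$ and $\Sigma_5$, separated by $2\delta_{t\gamma}$) produce the $\ln(\delta_{t\gamma}^{-1})$ factor. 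For solvability this is harmless because $\ln(\delta_{t\gamma}^{-1})\lesssim\ln|t|$ is overwhelmed by the exponential smallness of $\|{\bf G}_{\bf M}-\mathbb{I}\|_{L^\infty}$, but the premise you state is incorrect and the log must be carried through the Neumann series, as the paper does in Proposition \ref{key}.

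Second, and this one actually breaks your proof, the final Cauchy--Schwarz step does not yield \eqref{l:70}. The leading contribution in your display is $|\Sigma_{\bf M}|^{1/2}\,\|{\bf G}_{\bf M}-\mathbb{I}\|_{L^2}$; with $|\Sigma_{\bf M}|=O(1)$ and the $L^2$ bound of Proposition \ref{smallnorm} this is $O\big(|t|^{\epsilon-\frac{1}{2}}\e^{-|t|^{1-\epsilon/2}}\big)$, a full factor $|t|^{1/2}$ worse than the claimed $O\big(|t|^{\epsilon-1}\e^{-|t|^{1-\epsilon/2}}\big)$, so your last inequality ``$\leq c_5|t|^{-1+\epsilon}\e^{-|t|^{1-\epsilon/2}}$'' is not implied by what precedes it. Cauchy--Schwarz loses here because ${\bf G}_{\bf M}-\mathbb{I}$ is not spread evenly over $\Sigma_{\bf M}$: on the horizontal arcs it carries the Gaussian factor $\e^{-\frac{1}{4}t^2\lambda^2}$, so in fact $\|{\bf G}_{\bf M}-\mathbb{I}\|_{L^1(\Sigma_{\bf M})}=O\big(|t|^{-1}\frac{\e^{-t^2\delta_{t\gamma}}}{1-\gamma}\big)$, one full power of $|t|$ better than the arclength-times-$L^2$ estimate. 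The paper bounds the $k=0$ term of the Neumann sum directly by this $L^1$ norm and the $k\geq 1$ terms by $\|\rho_k\|_{L^2}\|{\bf G}_{\bf M}-\mathbb{I}\|_{L^2}$, in which the two halves of the $|t|^{-1}$ are split between the factors; only then does inserting $1-\gamma\geq|t|^{-\epsilon}$ and $t^2\delta_{t\gamma}\geq|t|^{1-\epsilon/2}$ produce \eqref{l:70}. You would need to replace your Cauchy--Schwarz step by this $L^1$ estimate to close the argument.
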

\subsection{Left tail asymptotics} In order to complete the derivation of \eqref{e:10}, we first recall \eqref{l:28}, transformations \eqref{l:32}, \eqref{l:58}, \eqref{l:63}, \eqref{l:65}, 
\begin{equation*}
	\frac{\partial}{\partial t}\ln\det(1-\gamma T\chi_t\upharpoonright_{L^2(\mathbb{R})})=\frac{1}{\sqrt{2\pi}}\textnormal{Li}_{\frac{3}{2}}(\gamma)+\im t S_1^{22}(t,\gamma).
\end{equation*}
Thus, with Theorem \ref{collapse} and an indefinite $t$-integration,
\begin{prop}\label{intf:1} For any fixed $0<\epsilon<2$, there exist positive constants $t_0=t_0(\epsilon)$ and $c=c(\epsilon)$ such that
\begin{equation*}
	\ln\det(1-\gamma T\chi_t\upharpoonright_{L^2(\mathbb{R})})=\frac{t}{\sqrt{2\pi}}\textnormal{Li}_{\frac{3}{2}}(\gamma)+D_1(\gamma)+r_1(t,\gamma),
\end{equation*}
for all $(-t)\geq t_0$ and $0<\gamma\leq 1-|t|^{-\epsilon}$. The term $D_1(\gamma)$ is $t$-independent and we record the error estimate
\begin{equation*}
	\big|r_1(t,\gamma)\big|\leq c(\epsilon)|t|^{\frac{3}{2}\epsilon}\e^{-|t|^{1-\frac{\epsilon}{2}}}\ \ \ \forall\,(-t)\geq t_0\ \ \textnormal{and}\ \ 0<\gamma\leq 1-|t|^{-\epsilon}.
\end{equation*}
\end{prop}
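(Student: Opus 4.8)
The plan is to integrate the logarithmic-derivative identity recorded immediately above the statement, namely
\[
	\frac{\partial}{\partial t}\ln\det(1-\gamma T\chi_t\upharpoonright_{L^2(\mathbb{R})})=\frac{1}{\sqrt{2\pi}}\textnormal{Li}_{\frac{3}{2}}(\gamma)+\im t\,S_1^{22}(t,\gamma),
\]
against $t\to-\infty$, using Theorem \ref{collapse} to control the last term. First I would fix $\gamma\in(0,1)$ and observe that a pair $(t,\gamma)$ lies in the admissible region ``$(-t)\geq t_0$ and $0<\gamma\leq 1-|t|^{-\epsilon}$'' precisely when $(-t)\geq\tau_\gamma:=\max\{t_0,(1-\gamma)^{-1/\epsilon}\}$; on this whole half-line Theorem \ref{collapse} provides $|S_1^{22}(t,\gamma)|\leq c\,|t|^{-1+\epsilon}\e^{-|t|^{1-\epsilon/2}}$ with a constant $c$ independent of $\gamma$, hence $|\im t\,S_1^{22}(t,\gamma)|\leq c\,|t|^{\epsilon}\e^{-|t|^{1-\epsilon/2}}$. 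Setting $\beta(t):=\ln\det(1-\gamma T\chi_t\upharpoonright_{L^2(\mathbb{R})})-\frac{t}{\sqrt{2\pi}}\textnormal{Li}_{\frac{3}{2}}(\gamma)$, the displayed identity gives $\beta'(t)=\im t\,S_1^{22}(t,\gamma)$, which the bound shows is absolutely integrable on $(-\infty,-\tau_\gamma]$. Therefore $D_1(\gamma):=\lim_{t\to-\infty}\beta(t)$ exists and is finite; it is $t$-independent by construction, and with $r_1(t,\gamma):=\beta(t)-D_1(\gamma)$ this is exactly the decomposition claimed in the statement.

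The estimate on $r_1$ then follows from the fundamental theorem of calculus: for $(-t)\geq\tau_\gamma$,
\[
	r_1(t,\gamma)=\int_{-\infty}^{t}\beta'(s)\,\d s=\int_{-\infty}^{t}\im s\,S_1^{22}(s,\gamma)\,\d s,
\]
so that, after the substitution $u=-s$ and the above pointwise bound,
\[
	|r_1(t,\gamma)|\leq c\int_{|t|}^{\infty}u^{\epsilon}\e^{-u^{1-\epsilon/2}}\,\d u.
\]
It remains only to estimate this elementary integral. Writing $\alpha:=1-\frac{\epsilon}{2}\in(0,1)$ and substituting $v=u^{\alpha}$ identifies it, up to the factor $\alpha^{-1}$, with the upper incomplete Gamma function $\Gamma\!\left(\frac{\epsilon+1}{\alpha},|t|^{\alpha}\right)$, whose large-argument asymptotics $\Gamma(s,x)\sim x^{s-1}\e^{-x}$ give $|t|^{\alpha\left(\frac{\epsilon+1}{\alpha}-1\right)}\e^{-|t|^{\alpha}}=|t|^{3\epsilon/2}\e^{-|t|^{1-\epsilon/2}}$; alternatively a single integration by parts against $\frac{\d}{\d u}\e^{-u^{\alpha}}$ reproduces the same leading term with a manifestly smaller remainder. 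Hence $|r_1(t,\gamma)|\leq c(\epsilon)\,|t|^{3\epsilon/2}\e^{-|t|^{1-\epsilon/2}}$ for all $(-t)\geq t_0(\epsilon)$, uniformly over the admissible $\gamma$ because the bound on $|S_1^{22}|$ furnished by Theorem \ref{collapse} is uniform.

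I do not anticipate a genuine obstacle here. The only point requiring care is the bookkeeping of the $\gamma$-dependent cutoff $\tau_\gamma$: one must check that $D_1(\gamma)$ is a bona fide $t$-independent constant and that the final bound on $r_1$ holds uniformly over the coupled region rather than merely pointwise in $\gamma$ — both are handled by the observations above. All the analytic work (the small-norm analysis on the collapsing hexagonal contour of Figure \ref{Fig9}, carried out in Appendix \ref{appA}) has already been absorbed into Theorem \ref{collapse}, so what remains is essentially the integration of an exponentially small, absolutely integrable tail.
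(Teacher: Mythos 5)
Your argument is correct and is precisely the route the paper intends: integrate the logarithmic-derivative identity
\[
\frac{\partial}{\partial t}\ln\det(1-\gamma T\chi_t\upharpoonright_{L^2(\mathbb{R})})=\frac{1}{\sqrt{2\pi}}\textnormal{Li}_{\frac{3}{2}}(\gamma)+\im t\,S_1^{22}(t,\gamma)
\]
from $-\infty$ to $t$, using Theorem~\ref{collapse} to control $S_1^{22}$, exactly as the paper's one-line lead-in (``Thus, with Theorem~\ref{collapse} and an indefinite $t$-integration'') indicates. Your bookkeeping is careful where it needs to be — you correctly note that if $(t,\gamma)$ is admissible then so is $(s,\gamma)$ for every $s\leq t$ (since $|s|^{-\epsilon}\leq|t|^{-\epsilon}$), which is what makes $D_1(\gamma)=\lim_{t\to-\infty}\beta(t)$ well defined and the integral representation of $r_1$ legitimate — and your incomplete-Gamma computation reproduces the exponents $|t|^{3\epsilon/2}\e^{-|t|^{1-\epsilon/2}}$ exactly.
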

Next we use the estimate (based again on the transformations \eqref{l:32}, \eqref{l:58}, \eqref{l:63}, \eqref{l:65} and Theorem \ref{collapse})
\begin{equation*}
	\int_t^{\infty}Y_1^{12}(x,\gamma)\,\d x=I(\gamma)-\int_{-\infty}^tY_1^{12}(x,\gamma)\,\d x=I(\gamma)+\int_{-\infty}^tx S_1^{12}(x,\gamma)\,\d x=I(\gamma)+\mathcal{O}\left(|t|^{\frac{3}{2}\epsilon}\e^{-|t|^{1-\frac{\epsilon}{2}}}\right),
\end{equation*}
where $I(\gamma)$ denotes the total integral of $Y_1^{12}(x,\gamma)$ over $x\in\mathbb{R}$. From this, with \eqref{GAMForm:2} and $\Gamma_{t\gamma}=1-u^{\gamma}(t)$ we obtain in turn
\begin{prop}\label{intf:2} For any fixed $0<\epsilon<2$, there exist positive constants $t_0=t_0(\epsilon)$ and $c=c(\epsilon)$ such that
\begin{equation*}
	\ln\Gamma_{t\gamma}=D_2(\gamma)+r_2(t,\gamma)
\end{equation*}
for all $(-t)\geq t_0$ and $0<\gamma\leq 1-|t|^{-\epsilon}$. The term $D_2(\gamma)$ is $t$-independent and we record the error estimate
\begin{equation*}
	\big|r_2(t,\gamma)\big|\leq c(\epsilon)|t|^{\frac{3}{2}\epsilon}\e^{-|t|^{1-\frac{\epsilon}{2}}}\ \ \ \forall\,(-t)\geq t_0\ \ \textnormal{and}\ \ 0<\gamma\leq 1-|t|^{-\epsilon}.
\end{equation*}
\end{prop}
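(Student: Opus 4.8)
The plan is to combine the exact formula \eqref{GAMForm:2} with the exponentially small left-tail estimate for $Y_1^{12}$ supplied by Theorem \ref{collapse}. First I would record that \eqref{GAMForm:2} together with $\Gamma_{t\gamma}=1-u^\gamma(t)$ gives
\[
	\Gamma_{t\gamma}=\cosh J(t,\gamma)-\sqrt{\gamma}\,\sinh J(t,\gamma),\qquad J(t,\gamma):=\int_t^{\infty}Y_1^{12}(x,\gamma)\,\d x,
\]
where $J(t,\gamma)$ is real — by Proposition \ref{connect} and the purely imaginary nature of $y$, in fact $J(t,\gamma)=\mu(t;\gamma)$ — and where $\cosh J-\sqrt{\gamma}\sinh J=\tfrac12\big((1-\sqrt{\gamma})\e^{J}+(1+\sqrt{\gamma})\e^{-J}\big)>0$ for all $J\in\mathbb{R}$ and $\gamma\in[0,1]$, so that $\ln\Gamma_{t\gamma}$ is well-defined.

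Next I would write $J(t,\gamma)=I(\gamma)-\int_{-\infty}^{t}Y_1^{12}(x,\gamma)\,\d x$ with $I(\gamma):=\int_{\mathbb{R}}Y_1^{12}(x,\gamma)\,\d x$ the total integral — finite since $Y_1^{12}$ decays like a Gaussian as $x\to+\infty$ by \eqref{asres} and exponentially as $x\to-\infty$ by what follows — and estimate the left-tail piece. Tracking the transformations \eqref{l:32}, \eqref{l:58}, \eqref{l:63}, \eqref{l:666} and expanding the lens and $g$-function factors at $z=\infty$ identifies $Y_1^{12}(x,\gamma)=-x\,S_1^{12}(x,\gamma)$ (the diagonal $\sigma_3$-correction in condition (3) of RHP \ref{gdeform} leaves the $(12)$-entry untouched), so Theorem \ref{collapse} yields $|Y_1^{12}(x,\gamma)|\le c\,|x|^{\epsilon}\e^{-|x|^{1-\epsilon/2}}$ for $(-x)\ge t_0$ and all $0<\gamma\le 1-|x|^{-\epsilon}$. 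Since $|x|\ge|t|$ throughout the integration range $x\le t<0$, the hypothesis $0<\gamma\le 1-|t|^{-\epsilon}$ forces $0<\gamma\le 1-|x|^{-\epsilon}$ for every such $x$, so this bound applies uniformly; substituting $u=|x|^{1-\epsilon/2}$ then turns $\int_{-\infty}^{t}|x|^{\epsilon}\e^{-|x|^{1-\epsilon/2}}\,\d x$ into an incomplete Gamma integral with leading behaviour $\mathcal{O}\big(|t|^{3\epsilon/2}\e^{-|t|^{1-\epsilon/2}}\big)$. This is precisely the displayed estimate $J(t,\gamma)=I(\gamma)+\mathcal{O}\big(|t|^{3\epsilon/2}\e^{-|t|^{1-\epsilon/2}}\big)$ recorded just before the statement.

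Finally I would propagate the error through the logarithm. Set $\Phi_\gamma(J):=\ln\big(\cosh J-\sqrt{\gamma}\sinh J\big)$, so $\ln\Gamma_{t\gamma}=\Phi_\gamma\big(J(t,\gamma)\big)$, and note $\Phi_\gamma'(J)=\frac{(1-\sqrt{\gamma})\e^{J}-(1+\sqrt{\gamma})\e^{-J}}{(1-\sqrt{\gamma})\e^{J}+(1+\sqrt{\gamma})\e^{-J}}$ is the difference over the sum of two nonnegative quantities, whence $|\Phi_\gamma'(J)|\le1$ for all $J\in\mathbb{R}$ and $\gamma\in[0,1]$. The mean value theorem then gives $\big|\ln\Gamma_{t\gamma}-\Phi_\gamma(I(\gamma))\big|\le|J(t,\gamma)-I(\gamma)|$, so taking $D_2(\gamma):=\Phi_\gamma(I(\gamma))=\ln\big(\cosh I(\gamma)-\sqrt{\gamma}\sinh I(\gamma)\big)$ (which is $t$-independent and, as above, positive) and $r_2(t,\gamma):=\ln\Gamma_{t\gamma}-D_2(\gamma)$ yields the claim with $c(\epsilon)$ the constant from the Gamma-integral estimate and $t_0(\epsilon)$ that of Theorem \ref{collapse}. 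The main obstacle is the middle step: one must check that the error bound of Theorem \ref{collapse}, which is $\gamma$-dependent through its admissible range $0<\gamma\le1-|x|^{-\epsilon}$, integrates against $|x|$ over the unbounded half-line $(-\infty,t]$ in a way that preserves both the uniformity in $\gamma$ and exactly the power $|t|^{3\epsilon/2}$; the rest is elementary manipulation of the explicit formula \eqref{GAMForm:2}.
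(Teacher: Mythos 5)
Your proposal is correct and follows essentially the same route as the paper: the identity $Y_1^{12}(x,\gamma)=-x\,S_1^{12}(x,\gamma)$ via the chain of transformations, Theorem \ref{collapse} to bound the left tail of $\int Y_1^{12}$ by $\mathcal{O}(|t|^{3\epsilon/2}\e^{-|t|^{1-\epsilon/2}})$, and then \eqref{GAMForm:2} with $\Gamma_{t\gamma}=1-u^{\gamma}(t)$. The only thing you add that the paper leaves implicit is the explicit Lipschitz bound $|\Phi_\gamma'|\le 1$ on the map $J\mapsto\ln(\cosh J-\sqrt{\gamma}\sinh J)$, which cleanly propagates the error through the logarithm; your aside that $D_2(\gamma)$ is ``positive'' should read that the \emph{argument} of the logarithm is positive, so $D_2(\gamma)$ is merely well-defined, but this does not affect the argument.
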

Once we combine Propositions \ref{intf:1} and \ref{intf:2}, Corollary \ref{main2}, expansion \eqref{e:10} follows easily.

\section{Proof of Theorem \ref{main3}}\label{Spohn}
The following lines are near copies of the argument given in \cite{FS} in the derivation of the analogue of \eqref{e:11} for the GOE. First, by \eqref{e:1} and \eqref{e:3},
\begin{equation}\label{FS:1}
	\big(F(t)\big)^2=\det(1-T\chi_t\upharpoonright_{L^2(\mathbb{R})})\Gamma_{t}=\det\big(1-( T+U\otimes V)\upharpoonright_{L^2(t,\infty)}\big),
\end{equation}
where $U\otimes V$ denotes the finite rank integral operator on $L^2(t,\infty)$ with kernel
\begin{equation*}
	(U\otimes V)(x,y)=g(x)G(y),\ \ \ \ \ g(x)=\frac{1}{\sqrt{\pi}}\,\e^{-x^2},\ \ \ G(x)=\int_{-\infty}^xg(y)\,\d y,
\end{equation*}
i.e. $U$ is the operator which multiplies by $g(x)$ and $V$ the integral operator with kernel $G(y)$. Indeed, \eqref{FS:1} follows by noting that for any operator $A$ we have $A(U\otimes V)=(AU\otimes V)$, applying the factorization
\begin{equation}\label{FS:2}
	\det\big(1-(T+U\otimes V)\upharpoonright_{L^2(t,\infty)}\big)=\det(1-T\upharpoonright_{L^2(t,\infty)})\det\big(1-(1-T)^{-1}(U\otimes V)\upharpoonright_{L^2(t,\infty)}\big),
\end{equation}
and using the eigenvector/value equation (see \cite[Chapter $9.7$]{F} for a similar argument in the GOE)
\begin{equation}\label{FS:3}
	\big((1-T)^{-1}(U\otimes V)\upharpoonright_{L^2(t,\infty)}F\big)(x)=\left[\int_t^{\infty}G(y)\big((1- T\upharpoonright_{L^2(t,\infty)})^{-1}g\big)(y)\,\d y\right]F(x),
\end{equation}
where
\begin{equation*}
	F(x)=\big((1-T\upharpoonright_{L^2(t,\infty)})^{-1}g\big)(x).
\end{equation*}
Precisely, \eqref{FS:3} computes the finite rank operator determinant in \eqref{FS:2} as
\begin{equation*}
	\det\big(1-(1-T)^{-1}(U\otimes V)\upharpoonright_{L^2(t,\infty)}\big)=1-\int_t^{\infty}G(y)\big((1- T\upharpoonright_{L^2(t,\infty)})^{-1}g\big)(y)\,\d y=\Gamma_{t},
\end{equation*}
and \eqref{FS:1} follows from \eqref{FS:2} since $\det(1-T\upharpoonright_{L^2(t,\infty)})=\det(1- T\chi_t\upharpoonright_{L^2(\mathbb{R})})$. Secondly, it will be more convenient to move the $t$-dependency in the right-hand side of \eqref{FS:1} into the integral operators,
\begin{equation}\label{FS:4}
	\big(F(t)\big)^2=\det\big(1-(T_t+U_t\otimes V_t)\upharpoonright_{L^2(0,\infty)}\big),\ \ \ t\in\mathbb{R},
\end{equation}
where $T_t:L^2(0,\infty)\rightarrow L^2(0,\infty)$ has kernel $T_t(x,y)=T(x+t,y+t)$, compare \eqref{e:2}, $U_t$ is multiplcation by $g(x+t)$ and $V_t$ has kernel $G(y+t)$. Thirdly, we note that $T_t=S_tS_t$ where $S_t:L^2(0,\infty)\rightarrow L^2(0,\infty)$ has kernel
\begin{equation*}
	S_t(x,y)=\frac{1}{\sqrt{\pi}}\e^{-(x+y+t)^2},\ \ \ x,y>0.
\end{equation*}
\begin{lem}\label{reggood} For every $t\in\mathbb{R}$, the operator $S_t$ satisfies $\|S_t\|\leq 1$ and $1\mp  S_t$ are invertible on $L^2(0,\infty)$.
\end{lem}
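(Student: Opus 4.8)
We must show that the integral operator $S_t$ on $L^2(0,\infty)$ with kernel $S_t(x,y)=\frac{1}{\sqrt{\pi}}\e^{-(x+y+t)^2}$ has operator norm at most $1$ and that $1\mp S_t$ are both invertible.  First I would observe that $S_t$ is self-adjoint (its kernel is real and symmetric), so $\|S_t\|=\sup_{\|f\|=1}|\langle f,S_tf\rangle|$ and it suffices to control the quadratic form.  The natural route is the same Gaussian/Fourier trick already used in Lemma \ref{reg}: use the identity \eqref{l:1}, namely $\e^{-x^2}=\frac{1}{2\sqrt\pi}\int_{-\infty}^\infty\e^{-\frac14 y^2+\im xy}\,\d y$, to write, for $f\in L^2(0,\infty)$ (extended by zero to $\mathbb{R}$, i.e. $f=f\chi_{(0,\infty)}$),
\begin{equation*}
	\langle f,S_tf\rangle_{L^2(0,\infty)}=\frac{1}{\sqrt\pi}\int_0^\infty\int_0^\infty \e^{-(x+y+t)^2}f(x)f(y)\,\d x\,\d y=\frac{1}{2\pi}\int_{-\infty}^\infty\e^{-\frac14 s^2+\im ts}\left|\int_0^\infty\e^{\im xs}f(x)\,\d x\right|^2\d s,
\end{equation*}
after inserting \eqref{l:1} with $x\mapsto x+y+t$ and expanding the square of $\int_0^\infty\e^{\im xs}f(x)\,\d x$.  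Since $\e^{-\frac14 s^2+\im ts}$ is not real, I would instead apply \eqref{l:1} once to each factor $\e^{-(x+y+t)^2}$ written as a single Gaussian and aim directly at the factorization $S_t=S_t^{1/2}S_t^{1/2}$; concretely $S_t(x,y)=\frac{1}{\sqrt\pi}\e^{-(x+y+t)^2}$ factors through the "half" variable just as $T$ factored into $N_1N_2$, so $\langle f,S_tf\rangle = \|\,\cdot\,\|^2$ for an explicit transform of $f$, forcing $\langle f,S_tf\rangle\ge 0$.  The cleanest packaging: $\langle f,S_tf\rangle=\frac{1}{2\sqrt\pi}\int_{\mathbb{R}}\e^{-\frac14 u^2}\,|h_t(u)|^2\,\d u$ where $h_t(u)=\int_0^\infty \e^{-(x+t)^2+\im (x+t)u}f(x)\,\d x$ up to harmless constants; one then bounds $\e^{-\frac14 u^2}\le 1$ and uses Plancherel (on the function $x\mapsto \e^{-(x+t)^2}f(x)\chi_{(0,\infty)}$) to get $\langle f,S_tf\rangle\le \int_{\mathbb{R}}\e^{-2(x+t)^2}|f(x)|^2\chi_{(0,\infty)}(x)\,\d x\le \frac12\|f\|^2$, which already gives $\|S_t\|\le \tfrac12<1$.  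Alternatively, and perhaps more transparently, one recognises $T_t=S_tS_t$ and hence $S_t^2=T_t$; since $0\le T_t\le 1$ by Lemma \ref{reg} (transported to $L^2(0,\infty)$ via $\chi_t$) and $S_t\ge 0$, the spectral mapping theorem gives $\|S_t\|=\|T_t\|^{1/2}\le 1$.

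**Invertibility of $1\mp S_t$.**  With $\|S_t\|\le 1$ in hand we know $1\pm S_t\ge 0$, so the only possible obstruction is the eigenvalue $\pm 1$, i.e. a nonzero $f\in L^2(0,\infty)$ with $S_tf=\pm f$.  Such an $f$ would satisfy $S_t^2f=T_tf=f$, contradicting the strict inequality $T_t<1$, which is exactly the content of the invertibility part of Lemma \ref{reg} (take $\gamma=1$ there): the argument there shows that equality $\langle f,T\chi_tf\rangle=\langle f,f\rangle$ forces, via \eqref{l:4} and analyticity of the Gaussian, $f\equiv 0$.  So $1-T_t$ is injective, hence $1\mp S_t$ are injective; combined with self-adjointness and $\|S_t\|\le 1$ (so that $1\mp S_t$ has closed range and trivial cokernel), both $1-S_t$ and $1+S_t$ are boundedly invertible on $L^2(0,\infty)$.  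One should also note $S_t$ is trace-class (indeed Hilbert–Schmidt with a rapidly decaying kernel, and $S_t^2=T_t$ trace-class), so "invertible" can be read in the Fredholm-determinant sense and $\det(1\mp S_t)\ne 0$.

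**Main obstacle.**  None of the steps is deep; the only mild subtlety is to make the factorization $S_t^2=T_t$ completely rigorous at the level of kernels — one must check
\begin{equation*}
	\int_0^\infty S_t(x,z)S_t(z,y)\,\d z=\frac{1}{\pi}\int_0^\infty \e^{-(x+z+t)^2}\e^{-(z+y+t)^2}\,\d z=T(x+t,y+t)=T_t(x,y),
\end{equation*}
which is immediate from \eqref{e:2} after the substitution $u=z$ (or $u\mapsto u$ with the shift absorbed into $x,y$), and to confirm that $S_t$ maps $L^2(0,\infty)$ into itself with the claimed norm bound rather than merely being densely defined.  Once the identity $S_t^2=T_t$ and Lemma \ref{reg} are invoked, the statement follows, so I expect the write-up to be short.
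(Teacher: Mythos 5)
Your ``alternative, more transparent'' argument coincides with the paper's proof: self-adjointness of $S_t$ and the kernel identity $S_t^2=T_t$ give $\|S_tf\|^2=\langle f,T_tf\rangle\leq\|f\|^2$ by Lemma~\ref{reg}, so $\|S_t\|\leq 1$; and the factorization $(1-S_t)(1+S_t)=(1+S_t)(1-S_t)=1-T_t$, with $1-T_t$ invertible by Lemma~\ref{reg}, immediately forces $1\mp S_t$ to be bijective. The paper reads off invertibility directly from this two-sided factorization, which is cleaner than your detour through compactness and the Fredholm alternative, though your route is also valid once compactness is invoked.

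However, the ``direct'' Fourier computation you offer first is incorrect, and the claims drawn from it are false. The proposed identity $\langle f,S_tf\rangle=\frac{1}{2\sqrt\pi}\int_{\mathbb{R}}\e^{-u^2/4}|h_t(u)|^2\,\d u$ with $h_t(u)=\int_0^\infty\e^{-(x+t)^2+\im(x+t)u}f(x)\,\d x$ does not hold: carrying out the $u$-integration on the right reproduces the kernel $\pi^{-1/2}\e^{-(x+t)^2-(y+t)^2-(x-y)^2}$, whose exponent is $2x^2+2y^2-2xy+2t(x+y)+2t^2$, not $(x+y+t)^2$. Indeed $S_t$ is \emph{not} a positive operator and the factorization $S_t=S_t^{1/2}S_t^{1/2}$ you invoke does not exist: for $a\neq b$ the identity $2(a+b+t)^2-(2a+t)^2-(2b+t)^2=-2(a-b)^2<0$ gives
\begin{equation*}
	S_t(a,b)^2=\frac{1}{\pi}\e^{-2(a+b+t)^2}>\frac{1}{\pi}\e^{-(2a+t)^2-(2b+t)^2}=S_t(a,a)\,S_t(b,b),
\end{equation*}
which violates the pointwise Cauchy--Schwarz inequality $K(a,b)^2\leq K(a,a)K(b,b)$ obeyed by every positive-semidefinite kernel, so $\langle f,S_tf\rangle$ is negative for suitable $f$. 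The norm bound $\|S_t\|\leq\tfrac12$ extracted from the same flawed identity is likewise spurious. Since these errors sit in a branch you ultimately set aside in favour of the $S_t^2=T_t$ argument, the lemma is still proved, but the first computation should simply be deleted.
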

\begin{proof} Since $S_t$ is self-adjoint we have for any $f\in L^2(0,\infty)$,
\begin{equation*}
	\|S_tf\|^2_{L^2(0,\infty)}=\langle S_tf,S_tf\rangle_{L^2(0,\infty)}=\langle f,S_t^2f\rangle_{L^2(0,\infty)}=\langle f, T_tf\rangle_{L^2(0,\infty)}\stackrel{\eqref{l:3}}{\leq}\langle f,f\rangle_{L^2(0,\infty)},
\end{equation*}
and therefore
\begin{equation*}
	\|S_t\|=\sup_{\|f\|_{L^2(0,\infty)}=1}\|S_tf\|_{L^2(0,\infty)}\leq 1.
\end{equation*}
Also, $(1-S_t)(1+S_t)=(1+S_t)(1-S_t)=1-T_t$ on $L^2(0,\infty)$ and since $1-T_t$ is invertible by Lemma \ref{reg}, so are $1\mp S_t$.
\end{proof}
\noindent The last Lemma allows us to transform the right-hand side in \eqref{FS:4} through the following factorization,
\begin{align}
	\det&\big(1-(S_t^2+U_t\otimes V_t)\upharpoonright_{L^2(0,\infty)}\big)=\det(1-S_t\upharpoonright_{L^2(0,\infty)})\det\big(1+S_t-(1-S_t)^{-1}(U_t\otimes V_t)\upharpoonright_{L^2(0,\infty)}\big)\nonumber\\
	=&\,\det(1-S_t\upharpoonright_{L^2(0,\infty)})\det\big(1+S_t-(U_t\otimes W)\upharpoonright_{L^2(0,\infty)}\big),\label{FS:5}
\end{align}
where $W$ multiplies by the characteristic function $\chi_0(x)$.\footnote{Evidently, $\chi_0$ and the later on used $\delta_0$ are not in $L^2(0,\infty)$. Still, using regularity and decay properties of the involved integral kernels all subsequent Fredholm determinant and inner product manipulations are justifiable, we refer the interested reader to \cite[Section VIII]{TW2}.} To get to \eqref{FS:5} we have used that for any operator $A$ we have $A(U\otimes V)=(U\otimes A^{\ast}V)$ in terms of the real adjoint $A^{\ast}$ and that
\begin{equation*}
	G(x+t)=1-\int_0^{\infty}S_t(x,y)\,\d y=\big((1-S_t\upharpoonright_{L^2(0,\infty)})\chi_0\big)(x),\ \ \ \ \ \chi_0(x)=\begin{cases}1,&x\geq 0\\ 0,&x<0\end{cases}.
\end{equation*}
Continuing with \eqref{FS:5}, another factorization yields
\begin{align}
	\det\big(1+S_t&-(U_t\otimes W)\upharpoonright_{L^2(0,\infty)}\big)=\det(1+S_t\upharpoonright_{L^2(0,\infty)})\det\big(1-(1+S_t)^{-1}(U_t\otimes W)\upharpoonright_{L^2(0,\infty)}\big)\nonumber\\
	&=\det(1+S_t\upharpoonright_{L^2(0,\infty)})\left\{1-\int_0^{\infty}\chi_0(y)\big((1+S_t\upharpoonright_{L^2(0,\infty)})^{-1}g\big)(y+t)\,\d y\right\}\label{FS:6}
\end{align}
where we have used a variation of the eigenvector/value trick \eqref{FS:3} in the last equality. Since
\begin{equation*}
	g(x+t)=\int_0^{\infty}S_t(x,y)\delta_0(y)\,\d y=\big(S_t\delta_0\big)(x);\ \ \ \ \ \ \int_0^{\infty}f(x)\delta_0(x)\,\d x:=f(0),
\end{equation*}
for any test function $f$ we can simplify the second factor in \eqref{FS:6} further,
\begin{align}
	1-\int_0^{\infty}\chi_0(y)\big((1+&S_t\upharpoonright_{L^2(0,\infty)})^{-1}g\big)(y+t)\,\d y=\int_0^{\infty}\chi_0(y)\Big(\big(1-(1+S_t)^{-1}S_t\upharpoonright_{L^2(0,\infty)}\big)\delta_0\Big)(y)\,\d y\nonumber\\
	&\,=\int_0^{\infty}\chi_0(y)\big((1+S_t\upharpoonright_{L^2(0,\infty)})^{-1}\delta_0\big)(y)\,\d y=\big\langle\chi_0,(1+S_t\upharpoonright_{L^2(0,\infty)})^{-1}\delta_0\big\rangle_{L^2(0,\infty)}\label{FS:7}
\end{align}
The proof of Theorem \ref{main3} would thus be completed through \eqref{FS:5}, \eqref{FS:6}, \eqref{FS:7} and the identity $\det(1-S_t\upharpoonright_{L^2(0,\infty)})=\det(1-S\chi_t\upharpoonright_{L^2(\mathbb{R})})$ if we manage to proof the following
\begin{equation}\label{FS:8}
	\det(1-S_t\upharpoonright_{L^2(0,\infty)})=\det(1+S_t\upharpoonright_{L^2(0,\infty)})\big\langle\chi_0,(1+S_t\upharpoonright_{L^2(0,\infty)})^{-1}\delta_0\big\rangle_{L^2(0,\infty)},\ \ t\in\mathbb{R};
\end{equation}
or equivalently (taking logarithmic derivatives, then observing the unity normalization of all three factors in \eqref{FS:8} as $t\rightarrow+\infty$, using Lemma \ref{reggood} and self-adjointness),	
\begin{equation}\label{FS:9}
	\tr_{L^2(0,\infty)}\left((1-S_t^2)^{-1}\frac{\d S_t}{\d t}\right)=-\frac{1}{2}\frac{\d}{\d t}\ln\big\langle\delta_0,(1+S_t\upharpoonright_{L^2(0,\infty)})^{-1}\chi_0\big\rangle_{L^2(0,\infty)}.
\end{equation}
But integrating by parts in the left-hand side of \eqref{FS:9} with $\frac{\d}{\d t}S_t(x,y)=\frac{\d}{\d x}S_t(x,y)=\frac{\d}{\d y}S_t(x,y)$ shows that (cf. \cite[Lemma $2$]{FS}),
\begin{equation*}
	\tr_{L^2(0,\infty)}\left((1-S_t^2)^{-1}\frac{\d S_t}{\d t}\right)=-\frac{1}{2}\big\langle\delta_0,((1-S_t^2)^{-1}S_t\upharpoonright_{L^2(0,\infty)})\delta_0\big\rangle_{L^2(0,\infty)},
\end{equation*}
so we need to establish the equality
\begin{equation}\label{FS:10}
	\big\langle\delta_0,((1-S_t^2)^{-1}S_t\upharpoonright_{L^2(0,\infty)})\delta_0\big\rangle_{L^2(0,\infty)}=\frac{\d}{\d t}\ln\big\langle\delta_0,(1+S_t\upharpoonright_{L^2(0,\infty)})^{-1}\chi_0\big\rangle_{L^2(0,\infty)}.
\end{equation}
\begin{lem}{\cite[Lemma $3$]{FS}} Let $\Delta_0$ denote multiplication by $\delta_0(x)$, and $D$ $(=\frac{\d}{\d x})$ differentiation. Then
\begin{equation}\label{FS:11}
	\frac{\d}{\d t}(1+S_t)^{-1}=(1-S_t^2)^{-1}S_tD+(1-S_t^2)^{-1}S_t\Delta_0(1+S_t)^{-1}.
\end{equation}
\end{lem}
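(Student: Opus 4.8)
The plan is to reproduce, in the present setting, the argument of \cite[Lemma $3$]{FS}. Throughout one works on $L^2(0,\infty)$ and writes $\dot S_t:=\frac{\d}{\d t}S_t$. Since $1\pm S_t$ are boundedly invertible by Lemma \ref{reggood} and $t\mapsto S_t$ is differentiable in trace norm, I would first record the resolvent differentiation formula
\begin{equation*}
	\frac{\d}{\d t}(1+S_t)^{-1}=-(1+S_t)^{-1}\dot S_t(1+S_t)^{-1},\qquad\textnormal{hence}\qquad (1+S_t)\Big(\frac{\d}{\d t}(1+S_t)^{-1}\Big)(1+S_t)=-\dot S_t.
\end{equation*}
Multiplying \eqref{FS:11} on the left and on the right by $1+S_t$, using $(1+S_t)(1-S_t^2)^{-1}=(1-S_t)^{-1}$ (all factors being functions of $S_t$ and hence commuting), and cancelling the trailing $(1+S_t)^{-1}$ in the last term, the identity \eqref{FS:11} becomes equivalent to $-\dot S_t=(1-S_t)^{-1}\big(S_tD(1+S_t)+S_t\Delta_0\big)$, i.e., after multiplying by $1-S_t$ on the left, to the operator identity $-(1-S_t)\dot S_t=S_tD+S_tDS_t+S_t\Delta_0$ on $L^2(0,\infty)$.

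The decisive input is the translation structure of the kernel $S_t(x,y)=\frac{1}{\sqrt{\pi}}\e^{-(x+y+t)^2}$, which gives $\partial_tS_t=\partial_xS_t=\partial_yS_t$. Differentiating under the integral sign yields $\dot S_t=DS_t$; on the other hand, for smooth rapidly decreasing $f$,
\begin{equation*}
	(S_tDf)(x)=\int_0^\infty S_t(x,y)f'(y)\,\d y=-S_t(x,0)f(0)-\int_0^\infty(\partial_yS_t)(x,y)f(y)\,\d y,
\end{equation*}
with no contribution at $y=+\infty$ by the Gaussian decay of $S_t$; since $\partial_yS_t=\partial_tS_t$ and $-S_t(x,0)f(0)=-(S_t\Delta_0f)(x)$, this says $S_tD=-S_t\Delta_0-\dot S_t$, i.e. $\dot S_t=-S_tD-S_t\Delta_0$. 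Substituting $S_tD=-\dot S_t-S_t\Delta_0$ and $S_tDS_t=(S_tD)S_t=-\dot S_tS_t-S_t\Delta_0S_t$ into the right-hand side of the reduced identity turns it into $-\dot S_t-\dot S_tS_t-S_t\Delta_0S_t$, whereas its left-hand side is $-(1-S_t)\dot S_t=-\dot S_t+S_t\dot S_t$; thus the reduced identity is itself equivalent to $S_t\dot S_t+\dot S_tS_t=-S_t\Delta_0S_t$, which follows at once from the two displayed expressions for $\dot S_t$, since $S_t\dot S_t=S_tDS_t$ and $\dot S_tS_t=-S_tDS_t-S_t\Delta_0S_t$. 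This would establish \eqref{FS:11}.

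The one genuinely delicate point — and the step I expect to need the most care — is that $D$ is unbounded and that $\delta_0$ and $\chi_0$ do not belong to $L^2(0,\infty)$, so the integration by parts, the vanishing of the boundary term at $+\infty$, and the very appearance of $\Delta_0$ must be justified; this is exactly the issue flagged in the footnote following \eqref{FS:5}. I would dispatch it as in \cite[Section VIII]{TW2}: one interprets all expressions involving $\delta_0$ or $\chi_0$ through the honest Schwartz functions $S_t\delta_0=g(\cdot+t)$ and $(1-S_t)\chi_0=G(\cdot+t)$, uses that $S_t$ maps $L^2(0,\infty)$ into smooth rapidly decreasing functions and that $1-S_t^2$ is boundedly invertible (Lemma \ref{reggood}), so that every operator identity above is really an identity between integral operators with smooth, rapidly decaying kernels acting on such functions. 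Modulo this bookkeeping the computation just sketched, combined with the chain of reductions \eqref{FS:5}--\eqref{FS:10}, completes the proof of Theorem \ref{main3}.
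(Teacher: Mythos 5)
Your proof is correct and rests on exactly the same two facts as the paper's argument: the translation structure $\partial_tS_t=\partial_xS_t=\partial_yS_t$ (giving $\dot S_t=DS_t$) together with the integration-by-parts identity $DS_t+S_tD=-S_t\Delta_0$, and the fact that $1\pm S_t$ are invertible (Lemma \ref{reggood}). The paper simply runs the logic forward, rewriting $DS_t+S_tD=-S_t\Delta_0$ as $-(1-S_t)DS_t=S_tD(1+S_t)+S_t\Delta_0$ and multiplying on the left by $(1-S_t^2)^{-1}$ and on the right by $(1+S_t)^{-1}$; you run it backward and, after deriving the equivalent reduced identity $-(1-S_t)\dot S_t=S_tD+S_tDS_t+S_t\Delta_0$, take a slightly longer path than needed — inserting $\dot S_t=DS_t$ there collapses it directly to $-DS_t=S_tD+S_t\Delta_0$, without the further substitutions into $S_t\dot S_t+\dot S_tS_t=-S_t\Delta_0 S_t$.
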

\begin{proof} Integrating by parts, we easily find that $DS_t=-S_t\Delta_0-S_tD$, or equivalently
\begin{equation*}
	-(1-S_t)DS_t=S_tD(1+S_t)+S_t\Delta_0.
\end{equation*}
Multiplying this last identity by $(1-S_t^2)^{-1}$ from the left and $(1+S_t)^{-1}$ from the right (recall Lemma \ref{reggood}),
\begin{equation*}
	-(1+S_t)^{-1}DS_t(1+S_t)^{-1}=(1-S_t^2)^{-1}S_tD+(1-S_t^2)^{-1}S_t\Delta_0(1+S_t)^{-1},
\end{equation*}
and the last left-hand side equals precisely $\frac{\d}{\d t}(1+S_t)^{-1}$ because of the simple fact that 
\begin{equation*}
	\frac{\d}{\d t}(1+S_t)^{-1}=-(1+S_t)^{-1}\frac{\d S_t}{\d t}(1+S_t)^{-1},
\end{equation*}
and $\frac{\d}{\d t}S_t(x,y)=\frac{\d}{\d x}S_t(x,y)$. Identity \eqref{FS:11} is thus proven.
\end{proof}
\noindent We now use \eqref{FS:11} in the right-hand side of \eqref{FS:10} (suppressing at times the Hilbert space references for compact notation),
\begin{align*}
	\frac{\d}{\d t}\ln\big\langle\delta_0,(1+&S_t\upharpoonright_{L^2(0,\infty)})^{-1}\chi_0\big\rangle_{L^2(0,\infty)}=\frac{\langle\delta_0,\frac{\d}{\d t}(1+S_t\upharpoonright_{L^2(0,\infty)})^{-1}\chi_0\rangle_{L^2(0,\infty)}}{\langle\delta_0,(1+S_t\upharpoonright_{L^2(0,\infty)})^{-1}\chi_0\rangle_{L^2(0,\infty)}}\\
	\stackrel{D\chi_0=0}{=}&\,\,\frac{\langle\delta_0,(1-S_t^2)^{-1}S_t\Delta_0(1+S_t)^{-1}\chi_0\rangle}{\langle\delta_0,(1+S_t)^{-1}\chi_0\rangle}=\frac{\langle\delta_0,(1-S_t^2)^{-1}S_t\delta_0\rangle\langle\delta_0,(1+S_t)^{-1}\chi_0\rangle}{\langle\delta_0,(1+S_t)^{-1}\chi_0\rangle}\\
	=&\,\,\langle\delta_0,((1-S_t^2)^{-1}S_t\upharpoonright_{L^2(0,\infty)})\delta_0\rangle_{L^2(0,\infty)},
\end{align*}
which is \eqref{FS:10}. In turn the proof of Theorem \ref{main3} is thus completed.
\begin{appendix}
\section{Small norm estimates for collapsing contours}\label{appA}
The jump contour $\Sigma_{{\bf M}}$ of RHP \ref{split} collapses for large $(-t)$ to the real axis and we shall invoke ideas from \cite{BK} in the solution of the underlying singular integral equation (\cite{BK} does not apply verbatim to RHP \ref{split} as we are not dealing with contracting disk contours, for those the scaling invariance of $z^{-1}\d z$ is central).\smallskip

Write $\Sigma_{{\bf M}}=\bigcup_{j=1}^8\Sigma_j$ as union of the eight straight lines shown in Figure \ref{FigA} below and suppose $\rho\in L^2(\Sigma_{{\bf M}},|\d z|)$ is a $2\times 2$ matrix-valued function which is Lipschitz on $\Sigma_j$ and that satisfies 
\begin{equation}\label{app:1}
	\rho(z)=\mathbb{I}+\frac{1}{2\pi\im}\int_{\Sigma_{{\bf M}}}\rho(\lambda)\big({\bf G}_{{\bf M}}(\lambda)-\mathbb{I}\big)\frac{\d\lambda}{\lambda-z_-},\ \ \ z\in\Sigma_j,
\end{equation}
where $z_-$ denotes the limiting value of the integral from the right side. The explicit form of the jump matrix ${\bf G}_{\bf M}(z)={\bf G}_{\bf M}(z;t,\gamma)$ is stated in RHP \ref{split}, condition (2).
\begin{lem}\label{equal} On each $\Sigma_j$ the function
\begin{equation*}
	{\bf M}(z):=\mathbb{I}+\frac{1}{2\pi\im}\int_{\Sigma_{{\bf M}}}\rho(\lambda)\big({\bf G}_{{\bf M}}(\lambda)-\mathbb{I}\big)\frac{\d\lambda}{\lambda-z},\ \ \ z\in\mathbb{C}\setminus\Sigma_{{\bf M}}
\end{equation*}
satisfies 
\begin{equation*}
	{\bf M}_{+}(z)={\bf M}_{-}(z){\bf G}_{{\bf M}}(z),\ \ \ z\in\Sigma_j.
\end{equation*}
\end{lem}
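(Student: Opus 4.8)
The plan is to recognize \eqref{app:1} as the singular integral equation associated with RHP \ref{split} and to read off the jump relation for ${\bf M}$ from the Plemelj--Sokhotski theorem. Set $w(z):=\rho(z)\big({\bf G}_{{\bf M}}(z)-\mathbb{I}\big)$ for $z\in\Sigma_{{\bf M}}$ and let $\mathcal{C}$ denote the Cauchy operator on $\Sigma_{{\bf M}}$, i.e. $(\mathcal{C}f)(z)=\frac{1}{2\pi\im}\int_{\Sigma_{{\bf M}}}f(\lambda)\frac{\d\lambda}{\lambda-z}$ for $z\notin\Sigma_{{\bf M}}$, with non-tangential boundary values $\mathcal{C}_{\pm}$ taken from the left, respectively right, of the oriented contour. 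Then by construction ${\bf M}(z)=\mathbb{I}+(\mathcal{C}w)(z)$, so ${\bf M}$ is analytic on $\mathbb{C}\setminus\Sigma_{{\bf M}}$ and ${\bf M}(z)\to\mathbb{I}$ as $z\to\infty$.

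First I would check that $w\in L^2(\Sigma_{{\bf M}},|\d z|)$ and that ${\bf M}$ admits non-tangential boundary values ${\bf M}_{\pm}$ on the relative interior of each $\Sigma_j$. This is where the hypotheses enter: $\rho\in L^2(\Sigma_{{\bf M}},|\d z|)$ is Lipschitz on each $\Sigma_j$, and by Proposition \ref{smallnorm} the factor ${\bf G}_{{\bf M}}-\mathbb{I}$ is bounded (indeed exponentially small) and smooth along $\Sigma_{{\bf M}}$, so $w$ is Lipschitz and in $L^2\cap L^\infty$ on each arc. The classical Plemelj--Sokhotski theorem on a finite union of Lipschitz arcs then gives ${\bf M}_{\pm}(z)=\mathbb{I}+(\mathcal{C}_{\pm}w)(z)$ for $z$ in the relative interior of $\Sigma_j$, together with the jump identity $\mathcal{C}_+-\mathcal{C}_-=\mathrm{id}$ there.

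The remaining steps are purely algebraic. The jump identity yields
\begin{equation*}
{\bf M}_+(z)-{\bf M}_-(z)=(\mathcal{C}_+w)(z)-(\mathcal{C}_-w)(z)=w(z)=\rho(z)\big({\bf G}_{{\bf M}}(z)-\mathbb{I}\big),\qquad z\in\Sigma_j.
\end{equation*}
On the other hand, the integral in \eqref{app:1} is evaluated at $z_-$, the boundary value from the right of the oriented contour, i.e. on the $(-)$ side; hence \eqref{app:1} reads precisely $\rho(z)=\mathbb{I}+(\mathcal{C}_-w)(z)={\bf M}_-(z)$ for $z\in\Sigma_j$. Substituting this into the previous display gives
\begin{equation*}
{\bf M}_+(z)={\bf M}_-(z)+{\bf M}_-(z)\big({\bf G}_{{\bf M}}(z)-\mathbb{I}\big)={\bf M}_-(z){\bf G}_{{\bf M}}(z),\qquad z\in\Sigma_j,
\end{equation*}
which is the assertion.

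The only genuine subtlety concerns the six self-intersection points of $\Sigma_{{\bf M}}$, where three arcs meet and $\rho$ is not assumed globally Lipschitz, so $\mathcal{C}w$ may acquire a mild logarithmic singularity there. This is harmless for the statement, which only involves $z$ in the relative interior of each $\Sigma_j$, and it is consistent because the cyclic product of the jump matrices ${\bf G}_{{\bf M}}$ around each vertex is $\mathbb{I}$ by construction (see Lemma \ref{short} and \eqref{l:65}); the function ${\bf M}$ produced here is exactly the candidate solution of RHP \ref{split} whose actual existence is established via the small-norm bounds of Proposition \ref{smallnorm} and the contraction estimate in the remainder of this appendix. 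I expect the main technical work of the appendix to be the uniform $L^2$-boundedness of $\mathcal{C}_{\pm}$ as the contour collapses onto $\mathbb{R}$; for the present lemma, however, it suffices to invoke the standard local theory of the Cauchy integral on Lipschitz arcs.
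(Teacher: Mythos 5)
Your proof is correct and follows essentially the same route as the paper: identify $\rho(z)={\bf M}_-(z)$ from the singular integral equation \eqref{app:1} (since the Cauchy integral there is evaluated at $z_-$), then apply the Plemelj--Sokhotski jump relation to obtain ${\bf M}_+-{\bf M}_-=\rho({\bf G}_{{\bf M}}-\mathbb{I})={\bf M}_-({\bf G}_{{\bf M}}-\mathbb{I})$. The additional remarks you make about regularity and the triple points are sound and simply spell out what the paper leaves implicit (referring to \cite{BK} for the vertex analysis).
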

\begin{proof} Since ${\bf M}_{-}(z)=\rho(z)$ for $z\in\Sigma_j$ the jump behavior of the Cauchy transform implies
\begin{equation*}
	{\bf M}_{+}(z)-{\bf M}_{-}(z)=\rho(z)\big({\bf G}_{{\bf M}}(z)-\mathbb{I}\big)={\bf M}_{-}(z)\big({\bf G}_{{\bf M}}(z)-\mathbb{I}\big),\ \ z\in\Sigma_j.
\end{equation*}
\end{proof}
\begin{figure}[tbh]
\begin{tikzpicture}[xscale=0.7,yscale=0.4]
\draw [->] (-5,0) -- (5,0) node[below]{{\small $\Re z$}};
\draw [->] (0,-4) -- (0,4) node[left]{{\small $\Im z$}};
\draw [thick, color=red, decoration={markings, mark=at position 0.2 with {\arrow{>}}}, decoration={markings, mark=at position 0.8 with {\arrow{>}}}, postaction={decorate}] (-4.5,0) -- (-2,0);
\node [above,color=red] at (-3.25,0.15) {{\small $\Sigma_1$}};
\node [above, color=red] at (-1.85,1) {{\small $\Sigma_2$}};
\node [below, color=red] at (-1.85,-1) {{\small $\Sigma_3$}};
\node [above, color=red] at (1.85,1) {{\small $\Sigma_6$}};
\node [below, color=red] at (1.85,-1) {{\small $\Sigma_7$}};
\node [above, color=red] at (-0.4,2.15) {{\small $\Sigma_4$}};
\node [below, color=red] at (-0.4,-2.15) {{\small $\Sigma_5$}};
\draw [thick, color=red, decoration={markings, mark=at position 0.5 with {\arrow{>}}}, postaction={decorate}] (-2,0) -- (-1,2);
\draw [thick, color=red, decoration={markings, mark=at position 0.5 with {\arrow{>}}}, postaction={decorate}] (-2,0) -- (-1,-2);
\draw [thick, color=red, decoration={markings, mark=at position 0.2 with {\arrow{>}}}, decoration={markings, mark=at position 0.8 with {\arrow{>}}},  postaction={decorate}] (-1,2) -- (1,2); 
\draw [thick, color=red, decoration={markings, mark=at position 0.2 with {\arrow{>}}}, decoration={markings, mark=at position 0.8 with {\arrow{>}}},  postaction={decorate}] (-1,-2) -- (1,-2); 
\draw [thick, color=red, decoration={markings, mark=at position 0.5 with {\arrow{>}}}, postaction={decorate}] (1,2) -- (2,0);
\draw [thick, color=red, decoration={markings, mark=at position 0.5 with {\arrow{>}}}, postaction={decorate}] (1,-2) -- (2,0);
\draw [thick, color=red, decoration={markings, mark=at position 0.2 with {\arrow{>}}}, decoration={markings, mark=at position 0.8 with {\arrow{>}}}, postaction={decorate}] (2,0) -- (4.5,0);
\node [above,color=red] at (3.25,0.15) {{\small $\Sigma_8$}};
\end{tikzpicture}
\caption{The oriented jump contour $\Sigma_{{\bf M}}=\bigcup_{j=1}^8\Sigma_j$ in RHP \ref{split}.}
\label{FigA}
\end{figure}
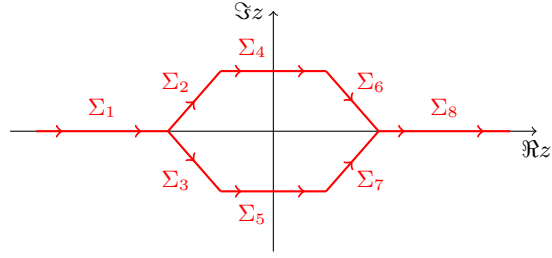
We shall now solve \eqref{app:1} in $L^2(\Sigma_{{\bf M}},|\d z|)$ by the Neumann series
\begin{eqnarray*}
	\rho(z)=\mathbb{I}+\sum_{k=1}^{\infty}\rho_k(z);\ \ \ \ \ \ \ \ \rho_k(z)&=&\frac{1}{2\pi\im}\int_{\Sigma_{{\bf M}}}\rho_{k-1}(\lambda)\big({\bf G}_{{\bf M}}(\lambda)-\mathbb{I}\big)\frac{\d\lambda}{\lambda-z_-},\ \ z\in\Sigma_{{\bf M}},\ k\in\mathbb{Z}_{\geq 1}\\
	\rho_0(z)&=&\mathbb{I},
\end{eqnarray*}
and thus need to estimate $\rho_k(z)$. Recall that $L^2(\Sigma_{{\bf M}},|\d z|)$ is the space of (matrix-valued) measurable functions such that
\begin{equation*}
	\|f\|_{L^2(\Sigma_{{\bf M}},|\d z|)}=\left\{\int_{\Sigma_{{\bf M}}}\|f(z)\|^2|\d z|\right\}^{\frac{1}{2}}<\infty.
\end{equation*}
Let $C_{\Sigma_j}^{\pm}$ denote the Cauchy operators on $L^2(\Sigma_{{\bf M}},|\d z|)$,
\begin{equation*}
	(C_{\Sigma_j}^{\pm}\rho)(z)=\lim_{\epsilon\downarrow 0}\int_{\Sigma_j}\frac{\rho(s)}{s-(z\pm\im\epsilon)}\frac{\d s}{2\pi\im},\ \ z\in\Sigma_j,
\end{equation*}
which obey (cf. \cite[Chapter II]{S} or \cite[Section $5.5$]{BDS})
\begin{equation}\label{PS}
	C_{\Sigma_j}^+-C_{\Sigma_j}^-=\textnormal{id},\ \ \ \ \ C_{\Sigma_j}^++C_{\Sigma_j}^-=\im H_{\Sigma_j},\ \ \textnormal{a.e. on}\ \Sigma_j;\ \ \ \ (H_{\Sigma_j}\rho)(z)=\frac{1}{\pi}\,\textnormal{pv}\int_{\Sigma_j}\rho(\lambda)\frac{\d s}{z-s}.
\end{equation}
\begin{prop}[\cite{CMM}, Theoreme I]\label{Coif} If an oriented contour $\Gamma\subset\mathbb{C}$, given by the parametric equations
\begin{equation*}
	\Gamma=\{z\in\mathbb{C}:\ \Re z=t\in\mathbb{R},\ \Im z=\phi(t)\}
\end{equation*}
satisfies a uniform Lipschitz condition, i.e. there exists $M\geq 0$ such that
\begin{equation*}
	\big|\phi(x)-\phi(y)\big|\leq M|x-y|,
\end{equation*}
then there exists a universal $C_0>0$ such that
\begin{equation}\label{app:2}
	\|H_{\Gamma}f\|_{L^2(\Gamma,|\d z|)}\leq C_0(1+M)^{10}\|f\|_{L^2(\Gamma,|\d z|)}.
\end{equation}
\end{prop}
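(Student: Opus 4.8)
The plan is to derive Proposition \ref{Coif} from the Cauchy-integral theorem of Coifman, McIntosh and Meyer \cite{CMM}, and to indicate how the polynomial dependence on $M$ arises. First I would replace the principal-value operator $H_{\Gamma}$ by its truncations $H_{\Gamma}^{\epsilon}$, obtained by excising $|s-z|<\epsilon$ from the integral, and note that it suffices to bound the $H_{\Gamma}^{\epsilon}$ uniformly in $\epsilon>0$; the almost-everywhere existence of the principal value and the estimate \eqref{app:2} then follow by passing to the limit $\epsilon\downarrow 0$. Parametrising $\Gamma$ by its real part, $z=x+\im\phi(x)$, the operator $H_{\Gamma}$ becomes an integral operator on $L^2(\mathbb{R},\d x)$ with kernel $b(y)\bigl((x-y)+\im(\phi(x)-\phi(y))\bigr)^{-1}$, where $b(y)=1+\im\phi'(y)$ is the complex arc-length Jacobian; since $|b|\leq\sqrt{1+M^2}$ the factor $b$ is harmless, and \eqref{app:2} is equivalent to an $L^2(\mathbb{R})$ bound, with operator norm polynomial in $M$, for the scalar Cauchy kernel $\bigl((x-y)+\im(\phi(x)-\phi(y))\bigr)^{-1}$.

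Second, I would invoke the $T(b)$ theorem of David--Journ\'e--Semmes with the accretive weight $b$ above (equivalently, for $M<1$, one may expand
\[
	\frac{1}{(x-y)+\im(\phi(x)-\phi(y))}=\frac{1}{x-y}\sum_{k\geq 0}(-\im)^k\left(\frac{\phi(x)-\phi(y)}{x-y}\right)^k
\]
into Calder\'on commutators and sum). The three hypotheses to verify are: (i) the scalar Cauchy kernel is a standard Calder\'on--Zygmund kernel with constants bounded polynomially in $M$; (ii) the associated operator is antisymmetric and satisfies the weak boundedness property, both elementary by scaling; and (iii) $T b$ lies in $\mathrm{BMO}$ with norm controlled by $M$ — in fact, pairing $b\,\d y$ against the Cauchy kernel and deforming the contour by Cauchy's theorem shows that $T b$ is, in the appropriate weak sense, locally constant, so its $\mathrm{BMO}$ norm vanishes, and similarly for $T^{\ast}b$. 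The $T(b)$ theorem then yields the desired $L^2$ bound with a norm polynomial in the accretivity and kernel constants, hence polynomial in $M$. Tracking these constants through the proof — or, in the commutator formulation, combining the sharp bound $\|C_k\|_{L^2\to L^2}\leq C_0(1+k)^{N}M^k$ with a splitting of the parameter interval into a number of pieces proportional to $M$, on each of which the local Lipschitz constant is less than $\tfrac12$, and resumming — produces a bound of the shape $C_0(1+M)^{10}$ as stated.

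The hard part will be step (iii): making rigorous that the Cauchy integral of the weight $b$ essentially vanishes in $\mathrm{BMO}$, i.e. the Carleson-measure estimate underlying it, with constants growing only polynomially. In the commutator picture this is exactly the assertion that the square function of $C_k(1)$ satisfies a Carleson bound polynomial in $k$, which a naive induction on $k$ fails to provide; circumventing this loss is the genuine analytic content of \cite{CMM} and of its later streamlinings. For the present paper it suffices to cite the result, the sketch above serving only as a road map showing in particular how the universal exponent $10$ in \eqref{app:2} emerges.
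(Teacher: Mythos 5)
The paper does not prove this proposition at all: it is stated verbatim as Th\'eor\`eme~I of \cite{CMM} and the reference serves as the proof. Your proposal is therefore not an alternative to the paper's argument so much as a road map for the result being cited, and you do correctly acknowledge this in your final sentence, so the write-up is not wrong. A few cautions nonetheless. The route you sketch through the $T(b)$ theorem of David--Journ\'e--Semmes is anachronistic relative to \cite{CMM} itself: $T(b)$ postdates that paper, and the original Coifman--McIntosh--Meyer argument instead proves sharp $L^2$ bounds for the Calder\'on commutators $C_k$ and then sums the Neumann-type series, using a rotation/good-$\lambda$ device to beat the factorial loss in $k$. Your step (iii) is also more delicate than the phrase ``locally constant, so its $\mathrm{BMO}$ norm vanishes'' suggests: the cancellation of $Tb$ relies on the fact that $b(y)\,\bigl((x-y)+\im(\phi(x)-\phi(y))\bigr)^{-1}$ is an exact derivative in $y$, not on a contour deformation of a principal-value singular integral, and making that precise (in particular the Carleson-measure input) is the real content. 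Finally, you assert that tracking constants ``produces a bound of the shape $C_0(1+M)^{10}$'' without deriving it; the exponent $10$ is not canonical but is simply what \cite{CMM} records, and later refinements give much smaller powers, so if you wanted to justify $10$ you would genuinely have to reproduce their bookkeeping. For the purposes of this paper, none of this matters: citing \cite{CMM} is the correct and complete proof, which is what the authors do.
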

Observe that our eight pieces $\Sigma_j$ fit into the context of Proposition \ref{Coif} with a $(t,\gamma)$-independent constant $M$, thus we are now prepared to the derive our central estimate.
\begin{theo} For any $\gamma\in(0,1)$ and $t<0$, let $\Sigma_{{\bf M}}=\bigcup_{j=1}^8\Sigma_j$ denote the $(t,\gamma)$-dependent contour of RHP \ref{split}. Then there exists a universal constant $K>0$ such that
\begin{equation}\label{app:3}
	\|H_{\Sigma_{{\bf M}}}f\|_{L^2(\Sigma_{{\bf M}},|\d z|)}\leq K\ln(\delta_{t\gamma}^{-1})\,\|f\|_{L^2(\Sigma_{{\bf M}},|\d z|)}.
\end{equation}
\end{theo}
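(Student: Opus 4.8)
The plan is to split $\Sigma_{{\bf M}}=\bigcup_{j=1}^{8}\Sigma_j$ into its eight straight pieces and write $H_{\Sigma_{{\bf M}}}=\sum_{i,j=1}^{8}H_{ij}$, where $H_{ij}:L^2(\Sigma_j,|\d z|)\rightarrow L^2(\Sigma_i,|\d z|)$ is the integral operator with kernel $\frac{1}{\pi(z-s)}$ (a principal value when $i=j$). Since there are only finitely many ($64$) summands it is enough to bound each $H_{ij}$ by a universal constant times $\ln(\delta_{t\gamma}^{-1})$, and because $\delta_{t\gamma}\leq\frac12$ forces $\ln(\delta_{t\gamma}^{-1})\geq\ln 2>0$, every $\mathcal{O}(1)$ contribution is automatically of the required form. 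The pieces should be regrouped according to geometry: the two curves $\Gamma^{+}:=\Sigma_1\cup\Sigma_2\cup\Sigma_4\cup\Sigma_6\cup\Sigma_8$ and $\Gamma^{-}:=\Sigma_1\cup\Sigma_3\cup\Sigma_5\cup\Sigma_7\cup\Sigma_8$ are each the graph of a piecewise linear function whose slopes lie in $\{0,\pm\delta_{t\gamma}\}$, hence Lipschitz with constant $\leq\frac12$ uniformly in $(t,\gamma)$. Proposition \ref{Coif} therefore applies to $\Gamma^{\pm}$, and since $H_{ij}=\chi_{\Sigma_i}H_{\Gamma^{\pm}}\chi_{\Sigma_j}$ whenever both $\Sigma_i,\Sigma_j$ are subarcs of $\Gamma^{+}$ or of $\Gamma^{-}$, we obtain $\|H_{ij}\|\leq C_0(\tfrac32)^{10}$ for all such pairs. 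This disposes of every pair of indices lying in $\{1,2,4,6,8\}$ or in $\{1,3,5,7,8\}$; in particular all interactions involving the two semi-infinite rays $\Sigma_1,\Sigma_8$ are handled.

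The pairs that remain are exactly those with $\{i,j\}$ meeting both $\{2,4,6\}$ and $\{3,5,7\}$, and these I would treat by Schur-type kernel estimates. If $\Sigma_i$ and $\Sigma_j$ stay a bounded distance apart as $\delta_{t\gamma}\downarrow 0$ (the pairs $\{2,7\}$ and $\{3,6\}$), the kernel is bounded and supported on contours of finite length, so $H_{ij}$ is Hilbert--Schmidt with norm $\mathcal{O}(1)$. If $\Sigma_i$ and $\Sigma_j$ approach one another along segments or at distinct points without sharing a vertex — the pairs $\{4,5\},\{4,3\},\{4,7\},\{2,5\},\{6,5\}$, where the minimal gap shrinks proportionally to $\delta_{t\gamma}$ — the elementary lower bound $|z-s|^2\geq c\,\big((\ell_i-\ell_j)^2+\delta_{t\gamma}^2\big)$ in suitable arclength coordinates $\ell_i,\ell_j$ reduces the plain Schur integral $\sup\int|z-s|^{-1}\,|\d z|$ to one of the type $\int_0^{\mathcal{O}(1)}(\ell^2+\delta_{t\gamma}^2)^{-1/2}\,\d\ell=\mathcal{O}(\ln\delta_{t\gamma}^{-1})$, hence $\|H_{ij}\|=\mathcal{O}(\ln\delta_{t\gamma}^{-1})$. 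The only delicate pairs are the two ``collapsing wedges'' $\{2,3\}$ and $\{6,7\}$: here $\Sigma_i$ and $\Sigma_j$ share a vertex $v\in\{-2,2\}$ at which the opening angle is $\asymp\delta_{t\gamma}$ and the kernel $\frac{1}{\pi(z-s)}$ is genuinely singular, so the plain Schur test diverges at $v$. For these I would run a \emph{weighted} Schur test with weight $w(z)=|z-v|^{-1/2}$, which exploits the exact $(-1)$-homogeneity of the kernel about $v$; after the substitution $\rho=r u$ the weighted Schur integral becomes $\int_0^{\infty}u^{-1/2}\big((1-u)^2+4u\sin^2\alpha\big)^{-1/2}\,\d u$, which is uniformly $\mathcal{O}(\ln(1/\sin\alpha))=\mathcal{O}(\ln\delta_{t\gamma}^{-1})$. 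Summing the $64$ estimates then yields $\|H_{\Sigma_{{\bf M}}}f\|_{L^2(\Sigma_{{\bf M}},|\d z|)}\leq K\ln(\delta_{t\gamma}^{-1})\,\|f\|_{L^2(\Sigma_{{\bf M}},|\d z|)}$.

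I expect the main obstacle to be precisely these collapsing-wedge terms $\{2,3\}$ and $\{6,7\}$: one must choose the homogeneity-adapted weight carefully (or, as an alternative, compare $H_{ij}$ near $v$ to the exactly solvable model Cauchy operator between two rays meeting at angle $\asymp\delta_{t\gamma}$, which is diagonalized by the Mellin transform, and verify that its norm remains $\mathcal{O}(\ln\delta_{t\gamma}^{-1})$ as the angle degenerates). A secondary, purely technical point is establishing the uniform-in-$(t,\gamma)$ geometric lower bounds $|z-s|^2\geq c\,((\ell_i-\ell_j)^2+\delta_{t\gamma}^2)$ that feed the Schur arguments for the intermediate pairs, together with the routine identification $H_{ij}=\chi_{\Sigma_i}H_{\Gamma^{\pm}}\chi_{\Sigma_j}$ that legitimizes invoking Proposition \ref{Coif} on the two auxiliary Lipschitz graphs.
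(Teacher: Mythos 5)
Your argument is correct, and its two main ingredients -- Coifman--McIntosh--Meyer on uniformly Lipschitz graphs, plus Schur-type kernel estimates for the near-touching pieces -- are exactly the ones the paper uses. The paper proceeds slightly differently in organization: it reduces the claim to bounding all 64 bilinear forms $\langle H_{\Sigma_{\bf M}}(\chi_{\Sigma_j}f),\chi_{\Sigma_k}g\rangle$, invokes Proposition~\ref{Coif} only for the diagonal pairs $j=k$ (one segment at a time), and then handles all $j\neq k$ by three representative Cauchy--Schwarz/Schur model estimates: the parallel-segments kernel $((x-y)^2+4\delta_{t\gamma}^2)^{-1/2}$ on $[-1,1]^2$, the collapsing-wedge kernel $((x-y)^2+\delta_{t\gamma}^2x^2)^{-1/2}$ on $(0,\infty)^2$, and the Hilbert kernel $(x+y)^{-1}$ for transversal pieces. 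Your regrouping of the eight arcs into the two piecewise-linear graphs $\Gamma^{\pm}$ is a nice simplification: it makes one application of Proposition~\ref{Coif} dispose of all pairs inside $\{1,2,4,6,8\}$ or $\{1,3,5,7,8\}$ (including the Hilbert-inequality-type pairs), leaving exactly the nine mixed pairs from $\{2,4,6\}\times\{3,5,7\}$, which you then treat with the same Schur estimates the paper writes down. So the two proofs use the same toolbox, with your version doing a bit more bookkeeping up front to reduce the number of pairs that need a hands-on estimate.

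One small remark on emphasis: you flag the shared-vertex wedges $\{2,3\},\{6,7\}$ as ``the only delicate pairs,'' but these are actually the benign ones. Your own Mellin observation shows the Cauchy operator between two rays from a common vertex at opening angle $2\alpha$ is \emph{uniformly} bounded as $\alpha\downarrow 0$ (the multiplier $e^{-\im s(\pi-2\alpha)}/\sin\pi s$ has modulus $\le 2$ on $\Re s=\tfrac12$), and truncating to the finite segments $\Sigma_2,\Sigma_3$ can only decrease the norm. The genuinely logarithmic contributions come from the non-adjacent nearly-touching pairs -- $\{4,5\}$ (parallel at distance $2\delta_{t\gamma}$) and the ``approaching'' pairs $\{2,5\},\{4,3\},\{4,7\},\{6,5\}$ -- as one sees by testing with $f=g\equiv 1$. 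Your weighted Schur estimate for the wedge is still correct (it just returns $\mathcal{O}(\ln\delta_{t\gamma}^{-1})$ where $\mathcal{O}(1)$ holds), so the final conclusion is unaffected; I mention it only because the heuristic about where the difficulty sits is inverted.
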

\begin{proof} We show that for some $\widehat{K}>0$,
\begin{equation*}
	\big|\langle H_{\Sigma_{{\bf M}}}(\chi_{\Sigma_j}f),\chi_{\Sigma_k}g\rangle_{L^2(\Sigma_{{\bf M}},|\d z|)}\big|\leq\widehat{K}\ln(\delta_{t\gamma}^{-1})\,\|f\|_{L^2(\Sigma_{{\bf M}},|\d z|)}\|g\|_{L^2(\Sigma_{{\bf M}},|\d z|)},\ \ \ \ 1\leq j,k\leq 8.
\end{equation*}
Indeed, for $j=k$ this follows at once from \eqref{app:2} and for $j\neq k$ we use the following estimates (derived from the Cauchy-Schwarz inequality while using polar coordinates and standard manipulations)
\begin{align*}
	\int_{-1}^{1}\int_{-1}^{1}\frac{|f(x+\im\delta_{\gamma t})g(y-\im\delta_{t\gamma})|}{\sqrt{(x-y)^2+4\delta_{t\gamma}^2}}\,\d x\,\d y\,\leq&\,\,C_1\ln(\delta_{t\gamma}^{-1})\,\|f\|_{L^2(\Sigma_4,|\d z|)}\|g\|_{L^2(\Sigma_5,|\d z|)};\\
	\int_0^{\infty}\int_0^{\infty}\frac{|f(x)g(y)|}{\sqrt{(x-y)^2+\delta_{t\gamma}^2x^2}}\,\d x\,\d y\,\leq&\,\,C_2\ln(\delta_{t\gamma}^{-1})\|f\|_{L^2(0,\infty)}\|g\|_{L^2(0,\infty)},\ \ \ \ C_j>0;\\
	\int_0^{\infty}\int_0^{\infty}\frac{|f(x)g(y)|}{x+y}\,\d x\,\d y\,\leq&\,\,C_3\|f\|_{L^2(0,\infty)}\|g\|_{L^2(0,\infty)}.
\end{align*}
\end{proof}
From estimate \eqref{app:3} we derive the operator norm estimate
\begin{equation*}
	\|H_{\Sigma_{{\bf M}}}\|_{L^2(\Sigma_{{\bf M}},|\d z|)}\leq K\ln(\delta_{t\gamma}^{-1}),\ \ \ \ K>0\ \ \textnormal{universal}
\end{equation*}
and then in turn, with \eqref{PS},
\begin{equation}\label{epp:4}
	\|C_{\Sigma_{{\bf M}}}^{\pm}\|_{L^2(\Sigma_{{\bf M}},|\d z|)}\leq \frac{1}{2}(K+1)\ln(\delta_{t\gamma}^{-1}),\ \ \forall\,(-t)\geq t_0,\ \ \ \gamma\in(0,1).
\end{equation}
Returning now to the iterates $\{\rho_k(z)\}_{k=0}^{\infty}$ introduced above, we have 
\begin{equation*}
	\rho_k(z)=\Big(C^-_{\Sigma_{{\bf M}}}\big[\rho_{k-1}({\bf G}_{{\bf M}}-\mathbb{I})\big]\Big)(z),\ \ \ z\in\Sigma_{{\bf M}}.
\end{equation*}
But
\begin{equation*}
	\|\rho_{k-1}({\bf G}_{{\bf M}}-\mathbb{I})\|_{L^2(\Sigma_{{\bf M}},|\d z|)}\leq\|{\bf G}_{{\bf M}}(\cdot;t,\gamma)-\mathbb{I}\|_{L^{\infty}(\Sigma_{{\bf M}})}\cdot\|\rho_{k-1}\|_{L^2(\Sigma_{{\bf M}},|\d z|)},
\end{equation*}
and for all $(-t)\geq t_0,\gamma\in(0,1)$,
\begin{equation*}
	\|\rho_1\|_{L^2(\Sigma_{{\bf M}},|\d z|)}=\|C^-_{\Sigma_{{\bf M}}}({\bf G}_{{\bf M}}-\mathbb{I})\|_{L^2(\Sigma_{{\bf M}},|\d z|)}\leq\frac{1}{2}(K+1)\ln(\delta_{t\gamma}^{-1})\cdot\|{\bf G}_{{\bf M}}(\cdot;t,\gamma)-\mathbb{I}\|_{L^2(\Sigma_{{\bf M}},|\d z|)},
\end{equation*}
so that with Proposition \ref{smallnorm},
\begin{prop}\label{key} There exist positive universal constants $t_0,c$ such that for any $\gamma\in(0,1)$,
\begin{equation*}
	\|\rho_k\|_{L^2(\Sigma_{{\bf M}_{\gamma}},|\d z|)}\leq \left(c\ln(\delta_{t\gamma}^{-1})\frac{\e^{-t^2\delta_{t\gamma}}}{1-\gamma}\right)^k|t|^{-\frac{1}{2}},\ \ \ \forall\,(-t)\geq t_0,\ \ k\in\mathbb{Z}_{\geq 1}.
\end{equation*}
\end{prop}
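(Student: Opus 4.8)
\textbf{Proof strategy for Proposition \ref{key}.} The plan is to prove the estimate by induction on $k$, reading the iteration $\rho_k = C^-_{\Sigma_{\bf M}}\big[\rho_{k-1}({\bf G}_{\bf M}-\mathbb{I})\big]$ as a Neumann-type argument whose ``small parameter'' is the combination $\ln(\delta_{t\gamma}^{-1})(1-\gamma)^{-1}\e^{-t^2\delta_{t\gamma}}$ rather than a genuine smallness of the jump. The two ingredients are the uniform Cauchy operator bound \eqref{epp:4}, namely $\|C^-_{\Sigma_{\bf M}}\|_{L^2(\Sigma_{\bf M})}\le \frac{1}{2}(K+1)\ln(\delta_{t\gamma}^{-1})$ valid for $(-t)\ge t_0$ and all $\gamma\in(0,1)$, and the small-norm estimates of Proposition \ref{smallnorm}.

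For the base case $k=1$ I would start from the already-recorded identity $\rho_1 = C^-_{\Sigma_{\bf M}}({\bf G}_{\bf M}-\mathbb{I})$, apply \eqref{epp:4} together with the $L^2$ bound of Proposition \ref{smallnorm}, and use $\gamma^{1/4}\le 1$ to discard the factor $\gamma^{1/4}$; this already gives $\|\rho_1\|_{L^2(\Sigma_{\bf M})}\le c\,\ln(\delta_{t\gamma}^{-1})(1-\gamma)^{-1}\e^{-t^2\delta_{t\gamma}}|t|^{-1/2}$ for any $c\ge\frac{1}{2}(K+1)c_2$. For the inductive step I would estimate in $L^2$ by pulling out the supremum norm of the jump,
\[
\|\rho_k\|_{L^2(\Sigma_{\bf M})}\le \|C^-_{\Sigma_{\bf M}}\|_{L^2(\Sigma_{\bf M})}\,\|{\bf G}_{\bf M}(\cdot;t,\gamma)-\mathbb{I}\|_{L^\infty(\Sigma_{\bf M})}\,\|\rho_{k-1}\|_{L^2(\Sigma_{\bf M})},
\]
insert \eqref{epp:4} and the $L^\infty$ bound of Proposition \ref{smallnorm} (again dropping $\gamma^{1/4}\le 1$), and feed in the inductive hypothesis; the product then collapses to the $k$-th power of $c\,\ln(\delta_{t\gamma}^{-1})(1-\gamma)^{-1}\e^{-t^2\delta_{t\gamma}}$, with the single surviving trailing factor $|t|^{-1/2}$ inherited from the base case, provided $c\ge\max\{\frac{1}{2}(K+1)c_1,\frac{1}{2}(K+1)c_2\}$.

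The only point that needs care is the uniformity of the constants: $t_0$ is fixed by Proposition \ref{smallnorm} together with the hypothesis under which \eqref{epp:4} holds, $K$ is the universal constant of \eqref{app:3}, and $c_1,c_2$ come from Proposition \ref{smallnorm}, so none of these depends on $t$ or $\gamma$ and the choice of $c$ above is genuinely universal. I do not expect a real obstacle here: all of the analytic content — the collapse of $\Sigma_{\bf M}$ onto $\mathbb{R}$, the logarithmic blow-up of $\|C^-_{\Sigma_{\bf M}}\|$ as $\delta_{t\gamma}\downarrow 0$, and the $(1-\gamma)^{-1}$ growth of the jump — has already been absorbed into \eqref{epp:4} and Proposition \ref{smallnorm}, so what remains is bookkeeping of constants in the Neumann iteration. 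Note that the per-$k$ bound holds for \emph{every} $\gamma\in(0,1)$; it is only in Theorem \ref{collapse} that the extra restriction $\gamma\le 1-|t|^{-\epsilon}$ is imposed, to force the geometric factor to be small and hence make the series $\mathbb{I}+\sum_{k\ge1}\rho_k$ converge, which then yields unique solvability of RHP \ref{split} and the estimate \eqref{l:70}.
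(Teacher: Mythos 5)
Your proposal is correct and takes essentially the same route as the paper: bound $\rho_1$ via $\|C^-_{\Sigma_{\bf M}}\|$ and the $L^2$ jump estimate to capture the single $|t|^{-1/2}$, then iterate with $\|C^-_{\Sigma_{\bf M}}\|\cdot\|{\bf G}_{\bf M}-\mathbb{I}\|_{L^\infty}$ to produce the geometric factor, dropping $\gamma^{1/4}\le 1$ and taking $c\ge\tfrac12(K+1)\max\{c_1,c_2\}$. The paper states the same recursion and base-case inequality without the word "induction," and your closing observation — that the restriction $\gamma\le 1-|t|^{-\epsilon}$ enters only in Theorem \ref{collapse} to make the geometric ratio small — matches the paper's logic.
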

Thus, given any $0<\epsilon<2$, Proposition \ref{key} implies convergence of the Neumann series $\mathbb{I}+\sum_{k=1}^{\infty}\rho_k(z)$ in $L^2(\Sigma_{{\bf M}},|\d z|)$ for sufficiently large $(-t)\geq t_0$ and any $\gamma\in(0,1)$ such that $0<\gamma\leq 1-|t|^{-\epsilon}$. 
At this point, using $\rho(z)$, we define
\begin{equation}\label{app:5}
	{\bf M}(z):=\mathbb{I}+\frac{1}{2\pi\im}\int_{\Sigma_{{\bf M}}}\rho(\lambda)\big({\bf G}_{{\bf M}}(\lambda)-\mathbb{I}\big)\frac{\d\lambda}{\lambda-z},\ \ \ z\in\mathbb{C}\setminus\Sigma_{{\bf M}}
\end{equation}
which coincides with the function ${\bf M}(z)$ defined by \eqref{l:65} (compare Lemma \ref{equal} and the argument in \cite[(A.37)-(A.39)]{BK} near a triple point, a point on $\Sigma_{{\bf M}}$ where three arcs meet). Thus, compare RHP \ref{split},
\begin{equation*}
	{\bf S}_{1}(t,\gamma)=\lim_{z\rightarrow\infty}z\big({\bf M}(z)-\mathbb{I}\big)=\frac{\im}{2\pi}\int_{\Sigma_{{\bf M}}}\rho(\lambda)\big({\bf G}_{{\bf M}}(\lambda)-\mathbb{I}\big)\,\d\lambda=\sum_{k=0}^{\infty}{\bf S}_{1k}(t,\gamma),
\end{equation*}
where
\begin{equation*}
	{\bf S}_{1k}(t,\gamma)=\frac{\im}{2\pi}\int_{\Sigma_{{\bf M}}}\rho_k(\lambda)\big({\bf G}_{{\bf M}}(\lambda)-\mathbb{I}\big)\,\d\lambda.
\end{equation*}
Since for $(-t)\geq t_0$ and $\gamma\in(0,1)$, with $C_j>0$,
\begin{align*}
	k\geq 1:&\,\,\,\,\,\,\|{\bf S}_{1k}(t,\gamma)\|\leq C_1\|\rho_k\|_{L^2(\Sigma_{{\bf M}},|\d z|)}\|{\bf G}_{{\bf M}}(\cdot;t,\gamma)-\mathbb{I}\|_{L^2(\Sigma_{{\bf M}},|\d z|)}\leq C_1\left(c\ln(\delta_{t\gamma}^{-1})\frac{\e^{-t^2\delta_{t\gamma}}}{1-\gamma}\right)^k\frac{\e^{-t^2\delta_{t\gamma}}}{1-\gamma}|t|^{-1},\\
	k=0:&\,\,\,\,\,\,\|{\bf S}_{1k}(t,\gamma)\|\leq C_2\|{\bf G}_{{\bf M}}(\cdot;t,\gamma)-\mathbb{I}\|_{L^1(\Sigma_{{\bf M}},|\d z|)}\leq C_2\frac{\e^{-t^2\delta_{t\gamma}}}{1-\gamma}|t|^{-1},
\end{align*}
we can now sum all inequalities from $k=0$ to $k=+\infty$: for any fixed $\epsilon\in(0,2)$,
\begin{equation*}
	{\bf S}_{1}(t,\gamma)=\mathcal{O}\left(|t|^{-1+\epsilon}\e^{-|t|^{1-\frac{\epsilon}{2}}}\right),\ \ \ \forall\ (-t)\geq t_0,\ \ 0<\gamma\leq 1-|t|^{-\epsilon}.
\end{equation*}
This completes the proof of Theorem \ref{collapse}.
\section{Permuting resolvent and integration}\label{appC}
Let $\phi$ and $\psi$ be two functions on $\mathbb{R}$ that decay exponentially fast at $+\infty$. Introduce
\begin{equation*}
	K(x,y):=\int_0^{\infty}\phi(x+s)\psi(y+s)\,\d s,\ \ \ \ x,y\in\mathbb{R},
\end{equation*}
and the corresponding integral operator $K$ on $L^2(\mathbb{R})$ with kernel $K(x,y)$. For any function $f$, we shall denote by $f_y(x):=f(x+y)$ the horizontal translation of $f$ by $-y$. Then,
\begin{prop}\label{Jinho:1} For any $x,y,t\in\mathbb{R}$ and $k\in\mathbb{Z}_{\geq 1}$, we have
\begin{equation}\label{J:1}
	\big((K\chi_t\upharpoonright_{L^2(\mathbb{R})})^k\phi_y\big)(x)=\big((K\chi_t\upharpoonright_{L^2(\mathbb{R})})^k\phi_{x-t}\big)_y(t).
\end{equation}
\end{prop}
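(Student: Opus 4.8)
The plan is to unfold both sides of \eqref{J:1} into explicit $2k$-fold iterated integrals and then to exhibit a measure-preserving relabeling of the integration variables that carries one into the other. First I would use the definition of $K$ and the kernel of an iterated composition to write
\begin{equation*}
	\big((K\chi_t\upharpoonright_{L^2(\mathbb{R})})^k\phi_y\big)(x)=\int_{(t,\infty)^k}\Big[\prod_{j=1}^{k}K(u_{j-1},u_j)\Big]\phi(u_k+y)\,\d u_1\cdots\d u_k,\qquad u_0:=x,
\end{equation*}
and then insert $K(a,b)=\int_0^\infty\phi(a+s)\psi(b+s)\,\d s$ into each factor, attaching an inner variable $s_{j-1}\in(0,\infty)$ to $K(u_{j-1},u_j)$. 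Since $\phi$ and $\psi$ decay exponentially at $+\infty$, the resulting $2k$-fold integral is absolutely convergent and Fubini applies. Performing the shifts $u_j\mapsto u_j+t$ for $j=1,\dots,k$ moves all outer integrations onto $(0,\infty)$, and after grouping by $u_1,\dots,u_k$ the integrand over $(0,\infty)^{2k}$ becomes
\begin{equation*}
	\phi(x+s_0)\Big[\prod_{j=1}^{k-1}\psi(u_j+t+s_{j-1})\,\phi(u_j+t+s_j)\Big]\psi(u_k+t+s_{k-1})\,\phi(u_k+t+y).\tag{$\ast$}
\end{equation*}
Running the identical two steps on $\big((K\chi_t)^k\phi_{x-t}\big)(t+y)$ — where the $t$ produced by shifting $u_k$ cancels the $-t$ inside $\phi_{x-t}$ — gives the integrand over $(0,\infty)^{2k}$
\begin{equation*}
	\phi(t+y+s_0)\Big[\prod_{j=1}^{k-1}\psi(u_j+t+s_{j-1})\,\phi(u_j+t+s_j)\Big]\psi(u_k+t+s_{k-1})\,\phi(u_k+x).\tag{$\ast\ast$}
\end{equation*}

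The last step is to apply to $(\ast)$ the ``chain reversal'' change of variables $s_j=u'_{k-j}$ $(j=0,\dots,k-1)$ and $u_\ell=s'_{k-\ell}$ $(\ell=1,\dots,k)$, which is a permutation of the $2k$ coordinates on $(0,\infty)^{2k}$ and hence measure-preserving. A direct check — using only $\phi(a+b)=\phi(b+a)$, $\psi(a+b)=\psi(b+a)$ and the reindexing $j\mapsto k-j$ in the bracketed product — shows that it carries $(\ast)$ term by term onto $(\ast\ast)$: the boundary factor $\phi(x+s_0)$ turns into $\phi(u'_k+x)$, the boundary factor $\phi(u_k+t+y)$ turns into $\phi(t+y+s'_0)$, and the $k$ interior factors $\psi(\,\cdot+t+\,\cdot\,)$ together with the remaining $k-1$ interior factors $\phi(\,\cdot+t+\,\cdot\,)$ are merely permuted amongst themselves. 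Hence the two integrals coincide, which is \eqref{J:1}. As a warm-up I would first record the case $k=1$: there the bracketed product is empty and the change of variables is simply the transposition $s_0\leftrightarrow u_1$, so the whole mechanism is already transparent.

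The only genuine work lies in the combinatorial bookkeeping of that last step — checking that, after the $t$-shift, the alternating string of $\phi$'s and $\psi$'s in $(\ast)$ is genuinely palindromic, equivalently that the reversal pairs the factors of $(\ast)$ with those of $(\ast\ast)$ one at a time. It is cleanest to picture the integrand of $(\ast)$ as a path on the vertices $x,s_0,u_1,s_1,\dots,s_{k-1},u_k,y$, with the $\phi$-factor $\phi(x+s_0)$ at one end carrying no $t$ and the $\phi$-factor $\phi(u_k+t+y)$ at the other end carrying a $t$; reversing the path simply swaps these two ends, which is precisely the $x\leftrightarrow y$ (and $t\leftrightarrow-t$) bookkeeping in \eqref{J:1}. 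Everything else is automatic: the exponential decay of $\phi$ and $\psi$ at $+\infty$ guarantees absolute convergence of all the multiple integrals and legitimizes the Fubini rearrangements, and in the situation where \eqref{J:1} is invoked $\phi$ and $\psi$ are Schwartz, so nothing further needs to be said on the operator-theoretic side.
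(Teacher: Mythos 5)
Your argument is correct, but it takes a genuinely different route from the paper. The paper proves \eqref{J:1} by induction on $k$: the base case $k=1$ is established by rewriting the kernel as $K(x,y)=\int_t^\infty\phi(x+u-t)\psi(y+u-t)\,\d u$ and performing a single Fubini exchange, after which the inner integral visibly recombines into $K(y+t,u)$; the inductive step then applies the $(k-1)$-hypothesis followed by the base case to peel off one factor of $K\chi_t$ at a time, so every step stays at the level of double integrals. You instead unfold $(K\chi_t\upharpoonright_{L^2(\mathbb{R})})^k$ completely into a $2k$-fold integral over $(0,\infty)^{2k}$ and produce a single measure-preserving permutation of the coordinates (the chain reversal $s_j\mapsto u'_{k-j}$, $u_\ell\mapsto s'_{k-\ell}$) that carries the integrand $(\ast)$ factor by factor onto $(\ast\ast)$. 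I checked the reindexing: with $j'=k-j+1$ the $k$ factors $\psi(u_j+t+s_{j-1})$, $j=1,\dots,k$, go to $\psi(u'_{j'}+t+s'_{j'-1})$, $j'=1,\dots,k$, and with $j'=k-j$ the $k-1$ interior $\phi$'s go to $\phi(u'_{j'}+t+s'_{j'})$, while the two endpoint $\phi$'s swap as you describe; this is exactly $(\ast\ast)$. Both routes rest on the same ingredients (exponential decay at $+\infty$ for absolute convergence, Fubini, and the built-in symmetry $K(a,b)=\int_0^\infty\phi(a+s)\psi(b+s)\,\d s$), but yours exposes the palindromic structure of the iterated kernel in one stroke at the cost of heavier index bookkeeping, whereas the paper's induction avoids the bookkeeping by never writing down more than a double integral. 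Either is a legitimate proof of the same statement.
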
	
\begin{proof} We prove \eqref{J:1} inductively using 
\begin{equation*}
	K(x,y)=\int_t^{\infty}\phi(x+u-t)\psi(y+u-t)\,\d u.
\end{equation*}
Indeed, for $k=1$, we have
\begin{equation*}
	\big(K\chi_t\upharpoonright_{L^2(\mathbb{R})}\phi_y\big)(x)=\int_t^{\infty}K(x,s)\phi(s+y)\,\d s=\int_s^{\infty}\int_s^{\infty}\phi(x+u-t)\psi(s+u-t)\phi(s+y)\,\d u\,\d s,
\end{equation*}
so that by Fubini's theorem,
\begin{equation*}
	\big(K\chi_t\upharpoonright_{L^2(\mathbb{R})}\phi_y\big)(x)=\int_t^{\infty}K(y+t,u)\phi(u+x-t)\,\d u=\big(K\chi_t\upharpoonright_{L^2(\mathbb{R})}\phi_{x-t}\big)_y(t).
\end{equation*}
For general $k$, assuming that \eqref{J:1} is true for $k-1$,
\begin{align*}
	&\hspace{1cm}\big((K\chi_t\upharpoonright_{L^2(\mathbb{R})})^k\phi_y\big)(x)=\int_t^{\infty}K(x,s)\big((K\chi_t\upharpoonright_{L^2(\mathbb{R})})^{k-1}\phi_y\big)(s)\,\d s\\
	=&\,\int_t^{\infty}\!K(x,s)\big((K\chi_t\upharpoonright_{L^2(\mathbb{R})})^{k-1}\phi_{s-t}\big)_y(t)\,\d s=\int_t^{\infty}\!\!\int_t^{\infty}\!\!K(x,s)\big(K\chi_t\upharpoonright_{L^2(\mathbb{R})}\big)^{k-1}(t+y,u)\phi(u+s-t)\,\d u\,\d s,
\end{align*}
and thus again by Fubini's theorem
\begin{equation*}
	\big((K\chi_t\upharpoonright_{L^2(\mathbb{R})})^k\phi_y\big)(x)=\int_t^{\infty}\big(K\chi_t\upharpoonright_{L^2(\mathbb{R})}\big)^{k-1}(t+y,u)\big(K\chi_t\upharpoonright_{L^2(\mathbb{R})}\phi_{u-t}\big)(x)\,\d u.
\end{equation*}
Now apply the base case $k=1$ result and derive
\begin{align*}
	\big((K\chi_t\upharpoonright_{L^2(\mathbb{R})})^k\phi_y\big)(x)=&\,\int_t^{\infty}\big(K\chi_t\upharpoonright_{L^2(\mathbb{R})}\big)^{k-1}(t+y,u)\big(K\chi_t\upharpoonright_{L^2(\mathbb{R})}\phi_{x-t}\big)(u)\,\d u\\
	=&\,\big((K\chi_t\upharpoonright_{L^2(\mathbb{R})})^k\phi_{x-t}\big)(t+y)=\big((K\chi_t\upharpoonright_{L^2(\mathbb{R})})^k\phi_{x-t}\big)_y(t),
\end{align*}
which completes the proof.
\end{proof}
The above Proposition \ref{Jinho:1} leads to the following useful Corollary
\begin{cor}\label{Jinho:2} Let
\begin{equation*}
	\Phi(x):=\int_I\phi_y(x)\,\d y,
\end{equation*} 
where $I$ is a subset of $\mathbb{R}$. Then, for any $x,t\in\mathbb{R}$ and $k\in\mathbb{Z}_{\geq 1}$,
\begin{equation}\label{J:2}
	\big((K\chi_t\upharpoonright_{L^2(\mathbb{R})})^k\Phi\big)(x)=\int_I\big((K\chi_t\upharpoonright_{L^2(\mathbb{R})})^k\phi_{x-t}\big)_y(t)\,\d y.
\end{equation}
\end{cor}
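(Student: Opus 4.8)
The plan is to deduce \eqref{J:2} from Proposition \ref{Jinho:1} by linearity of $(K\chi_t\upharpoonright_{L^2(\mathbb{R})})^k$ together with an interchange of the operator with the $y$-integration over $I$. First I would write out $(K\chi_t\upharpoonright_{L^2(\mathbb{R})})^k$ explicitly as an iterated integral operator with kernel
\begin{equation*}
	\big(K\chi_t\upharpoonright_{L^2(\mathbb{R})}\big)^k(x,s)=\int_t^{\infty}\cdots\int_t^{\infty}K(x,s_1)K(s_1,s_2)\cdots K(s_{k-1},s)\,\d s_1\cdots\d s_{k-1},
\end{equation*}
so that, by definition of $\Phi$,
\begin{equation*}
	\big((K\chi_t\upharpoonright_{L^2(\mathbb{R})})^k\Phi\big)(x)=\int_t^{\infty}\big(K\chi_t\upharpoonright_{L^2(\mathbb{R})}\big)^k(x,s)\left[\int_I\phi(s+y)\,\d y\right]\d s.
\end{equation*}

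Second, I would justify swapping the order of the $s$-integration (together with the hidden $s_1,\dots,s_{k-1}$-integrations inside the kernel) and the $y$-integration over $I$ by Fubini-Tonelli: here the hypothesis that $\phi$ and $\psi$ decay exponentially fast at $+\infty$ guarantees that $K(x,y)=\int_0^\infty\phi(x+u)\psi(y+u)\,\d u$ is well defined and decays in the relevant variables along $(t,\infty)$, so the full multiple integral of absolute values is finite (if $I$ is unbounded one uses in addition that $\phi_y(x)=\phi(x+y)$ is integrable in $y$ over $I\cap(c,\infty)$ for the same decay reason; over $I\cap(-\infty,c)$ the kernel factor provides the needed decay). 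After the interchange,
\begin{equation*}
	\big((K\chi_t\upharpoonright_{L^2(\mathbb{R})})^k\Phi\big)(x)=\int_I\left[\int_t^{\infty}\big(K\chi_t\upharpoonright_{L^2(\mathbb{R})}\big)^k(x,s)\phi(s+y)\,\d s\right]\d y=\int_I\big((K\chi_t\upharpoonright_{L^2(\mathbb{R})})^k\phi_y\big)(x)\,\d y.
\end{equation*}

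Third, I would apply Proposition \ref{Jinho:1} inside the $y$-integral, replacing $\big((K\chi_t\upharpoonright_{L^2(\mathbb{R})})^k\phi_y\big)(x)$ with $\big((K\chi_t\upharpoonright_{L^2(\mathbb{R})})^k\phi_{x-t}\big)_y(t)$ for each fixed $y\in I$, which yields \eqref{J:2} at once. The only genuine point requiring care is the Fubini justification in the second step; everything else is bookkeeping, and the decay assumptions on $\phi,\psi$ are tailored precisely so that this causes no trouble. I expect the write-up to be short once the iterated-kernel notation is fixed.
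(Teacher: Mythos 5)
Your proposal is correct and follows the same route as the paper: interchange the operator $(K\chi_t)^k$ with the $y$-integration over $I$ (the paper invokes "linearity" where you spell out the Fubini justification), then apply Proposition \ref{Jinho:1} under the integral sign. The extra care you take with Fubini is sound but not a different argument.
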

\begin{proof} We have by linearity and \eqref{J:1},
\begin{align*}
	\big((K\chi_t\upharpoonright_{L^2(\mathbb{R})})^k\Phi\big)(x)=&\,(K\chi_t\upharpoonright_{L^2(\mathbb{R})})^k\left[\int_I\phi_y(\cdot)\,\d y\right](x)=\int_I\big((K\chi_t\upharpoonright_{L^2(\mathbb{R})})^k\phi_y\big)(x)\,\d y\\
	&\,\stackrel{\eqref{J:1}}{=}\int_I\big((K\chi_t\upharpoonright_{L^2(\mathbb{R})})^k\phi_{x-t}\big)_y(t)\,\d y.
\end{align*}
\end{proof}	

\end{appendix}


\begin{thebibliography}{100}
\bibitem{AC} M. Ablowitz, P. Clarkson, {\it Solitons, nonlinear evolution equations and inverse scattering}. Cambridge University Press, Cambridge, UK, 1991
\bibitem{BDS} J. Baik, P. Deift, T. Suidan, {\it Combinatorics and random matrix theory}. American Mathematical Soc. {\bf 172} (2016)
\bibitem{BC0} R. Beals, R. Coifman, Scattering and inverse scattering for first order systems, {\it Comm. Pure Appl. Math.} {\bf 37} (1984), no. 1, 39-90
\bibitem{BDT} R. Beals, P. Deift, C. Tomei, {\it Direct and inverse scattering on the line}. American Mathematical Soc. {\bf 28} (2015)
\bibitem{BCe} M. Bertola, M. Cafasso, The transition between the gap probabilities from the Pearcey to the Airy process - a Riemann-Hilbert approach, {\it International Mathematics Research Notices} (2011), doi:10.1093/imrn/rnr066.
\bibitem{BC} M. Bertola, M. Cafasso, Riemann-Hilbert approach to multi-time processes: The Airy and the Pearcey cases, {\it Physica D} {\bf 241}, 2237-2245 (2012)
\bibitem{BK} P. Bleher, A. Kuijlaars, Large n limit of Gaussian random matrices with external source, part III, {\it Commun. Math. Phys.} {\bf 270}, 481-517 (2007)
\bibitem{B} F. Bornemann, On the numerical evaluation of Fredholm determinants, {\it Mathematics of computation}, Volume 79, Number 270, April 2010, Pages 871-915
\bibitem{BD} A. Borodin, P. Deift, Fredholm determinants, Jimbo-Miwa-Ueno $\tau$-functions, and representation theory, {\it Comm. Pure Appl. Math.} {\bf 55} (2002), no. 9, 1160-1230
\bibitem{BB} T. Bothner, R. Buckingham, Large deformations of the Tracy-Widom distribution I. Non-oscillatory asymptotics, {\it Commun. Math. Phys.} {\bf 359}, 223-263 (2018)
\bibitem{CMM} R. Coifman, A. McIntosh, Y. Meyer, L'int\'egrale de Cauchy d\'efinit un op\'erateur born\'e sur $L^2$ pour les courbes Lipschitziennes, {\it Ann. Math.} {\bf 116}, 361-387 (1982)
\bibitem{D} P. Deift, {\it Orthogonal polynomials and random matrices: a Riemann-Hilbert approach}. Courant Lecture Notes in Mathematics, 3. New York University, Courant Institute of Mathematical Sciences, New York; American Mathematical Society, Providence, RI, 1999.
\bibitem{DZ} P. Deift, X. Zhou, A steepest descent method for oscillatory Riemann-Hilbert problems. Asymptotics for the MKdV equation, {\it Ann. of Math.} (2) {\bf 137} (1993), no. 2, 295-368.
\bibitem{E0} A. Edelman, Eigenvalues and condition numbers of random matrices, {\it SIAM J. Matrix Anal. Appl.} {\bf 9} 543-560 (1988)
\bibitem{FS} P. Ferrari, H. Spohn, A determinantal formula for the GOE Tracy-Widom distribution, {\it J. Phys. A: Math. Gen.} {\bf 38} (2005) L557-L561
\bibitem{FMZ} A. Fokas, U. Mugan, X. Zhou, On the solvability of Painlev\'e I, III and V, {\it Inverse Problems} {\bf 8} (1992), 757-785
\bibitem{FZ} A. Fokas, X. Zhou, On the solvability of Painlev\'e II and IV, {\it Commun. Math. Phys.} {\bf 144} (1992), 601-622.
\bibitem{F} P. Forrester, Log-Gases and Random Matrices, Princeton University Press, Princeton, NJ, 2010.
\bibitem{F2} P. Forrester, Diffusion processes and the asymptotic bulk gap probability for the real Ginibre ensemble, {\it J. Phys. A: Math. Theor.} {\bf 48} 324001 (2015)
\bibitem{FN} P. Forrester, T. Nagao, Eigenvalue statistics of the real Ginibre ensemble, {\it Physical Review Letters}, Vol. 99, (2007)
\bibitem{G} J. Ginibre, Statistical ensembles of complex, quaternion, and real matrices, Journal of Mathematical Physics, {\bf 6}, 440 (1965)
\bibitem{GK} I. Gohberg, S. Goldberg, N. Krupnik, {\it Traces and determinants of linear operators}, Operator Theory Advances and Applications, Vol. 116, Birkh\"auser Verlag, Basel, (2000)
\bibitem{IIKS} A. Its, A. Izergin, V. Korepin, N. Slavnov, Differential equations for quantum correlation functions. {\it Int. J. Mod. Phys. B} {\bf 4}, 1003-1037 (1990)
\bibitem{NIST} NIST, Digital Library of Mathematical Functions, http://dlmf.nist.gov
\bibitem{PTZ} M. Poplavskyi, R. Tribe, O. Zaboronski, On the distribution of the largest real eigenvalue for the real Ginibre ensemble, {\it The Annals of Applied Probability}, 2017, Vol. 27, No. 3, 1395-1413
\bibitem{RS} B. Rider, C. Sinclair, Extremal laws for the real Ginibre ensemble, {\it The Annals of Applied Probability}, 2014, Vol. 24, No. 4, 1621-1651
\bibitem{SI} B. Simon, {\it Basic complex analysis. A comprehensive course in analysis}, Part 2A. American Mathematical Society, Providence, RI, 2015.
\bibitem{S} E. Stein, {\it Singular integrals and differentiability properties of functions}, Princeton Mathematical Series, 30. Princeton University Press, Princeton, N.J., 1970
\bibitem{TV} T. Tao, V. Vu, Random matrices: universality of ESDs and the circular law, {\it The Annals of Probability}, Vol. 38, No. 5, 2023-2065 (2010)
\bibitem{TW3} C. Tracy, H. Widom, Fredholm determinants, differential equations and matrix models, {\it Commun. Math. Phys.} {\bf 163}, 33-72 (1994)
\bibitem{TW2} C. Tracy, H. Widom, On orthogonal and symplectic matrix ensembles, {\it Commun. Math. Phys.} {\bf 177}, 727-754 (1996)
\bibitem{ZS} V. Zakharov , A. Shabat, Exact theory of two-dimensional self-focusing and one-dimensional self-modulation of waves in nonlinear media, {\it Soviet Physics JETP}, {\bf 34}(1):62-69, 1972.
\bibitem{Z} X. Zhou, The Riemann-Hilbert problem and inverse scattering, {\it SIAM J. Math. Anal.} {\bf 20} (1989), 966-986
\end{thebibliography}
\end{document}